\newcommand{\handout}[5]{
   \renewcommand{\thepage}{#1-\arabic{page}}
   \noindent
   \begin{center}
   \framebox{
      \vbox{
    \hbox to 5.78in { {\bf 6.893 Sub-linear Algorithms}
     	 \hfill #2 }
       \vspace{4mm}
       \hbox to 5.78in { {\Large \hfill #5  \hfill} }
       \vspace{2mm}
       \hbox to 5.78in { {\it #3 \hfill #4} }
      }
   }
   \end{center}
   \vspace*{4mm}
}
\newtheorem{theorem}{Theorem}
\newtheorem{lemma}[theorem]{Lemma}
\newtheorem{corollary}[theorem]{Corollary}
\newtheorem{remark1}[theorem]{Remark}
\newcommand{\qed}{\rule{7pt}{7pt}}
\newenvironment{proof}{\noindent{\bf Proof}\hspace*{1em}}{\qed\bigskip}
\newenvironment{proof-sketch}{\noindent{\bf Sketch of Proof}\hspace*{1em}}{\qed\bigskip}
\newenvironment{proof-idea}{\noindent{\bf Proof Idea}\hspace*{1em}}{\qed\bigskip}
\newenvironment{proof-of-lemma}[1]{\noindent{\bf Proof of Lemma #1}\hspace*{1em}}{\qed\bigskip}
\newenvironment{proof-attempt}{\noindent{\bf Proof Attempt}\hspace*{1em}}{\qed\bigskip}
\newenvironment{proof-of-theorem}[1]{\noindent{\bf Proof of Theorem #1}\hspace*{1em}}{\qed\bigskip}
\def\fnum@figure{{\bf Figure \thefigure}}
\def\fnum@table{{\bf Table \thetable}}
\long\def\@mycaption#1[#2]#3{\addcontentsline{\csname
  ext@#1\endcsname}{#1}{\protect\numberline{\csname
  the#1\endcsname}{\ignorespaces #2}}\par
  \begingroup
    \@parboxrestore
    \small
    \@makecaption{\csname fnum@#1\endcsname}{\ignorespaces #3}\par
  \endgroup}
\def\mycaption{\refstepcounter\@captype \@dblarg{\@mycaption\@captype}}
\newcommand{\mathify}[1]{\ifmmode{#1}\else\mbox{$#1$}\fi}
\newcommand{\bigO}O
\newcommand{\bb}[1]{\bm{#1}}
\newtheorem{Example}{Example}[section]
\begin{document}

%%%%%%%%%%%%%%%%%%%%%%%%%%%%%%%%%%%%%%%%%%%%%%%%%%%%%%%%%%%%%
%------------------------------------------------------------
\title{Bootstrap inference after using multiple queries for model selection}
%------------------------------------------------------------

\begin{aug}
\author{\fnms{Jelena} \snm{Markovic}\corref{}\thanksref{t1}\ead[label=e1]{jelenam@stanford.edu},  
\fnms{Jonathan} \snm{Taylor}\thanksref{t2}\ead[label=e2]{jonathan.taylor@stanford.edu}
}
\runauthor{Markovic and Taylor}
\affiliation{Stanford University}
\address{Department of Statistics\\ Stanford University 
\\ Sequoia Hall \\ Stanford, CA 94305, USA \\ \printead{e1} \\
\printead*{e2} } % \\ \printead*{e4}}
\end{aug}

\thankstext{t1}{Supported by Stanford Graduate Fellowship.}
\thankstext{t2}{Supported in part by National Science Foundation grant DMS-1208857 and Air Force Office of Sponsored Research grant 113039.}

\begin{abstract}

In this work, we provide a refinement of the selective CLT result of \cite{tian2015selective}, which allows for selective inference in non-parametric settings by adjusting for the asymptotic Gaussian limit for selection.
Under some regularity assumptions on the density of the randomization, including heavier tails than Gaussian satisfied by e.g.~logistic distribution, we prove the selective CLT holds without any assumptions on the underlying parameter, allowing for rare selection events. We also show a selective CLT result for Gaussian randomization, though the quantitative results are qualitatively different for the Gaussian randomization as compared to the heavier tailed results.

Furthermore, we propose a bootstrap version of this test statistic, which is provably asymptotically pivotal uniformly across a family of non-parametric distributions. This result can be interpreted as resolving the impossibility results of \cite{leeb2006can}. We describe several sampling methods involving the projected Langevin Monte Carlo to compute the bootstrapped test statistic and the corresponding confidence intervals valid after selection.

The applications of our work include valid inferential and sampling tools  after running various model selection algorithms including their combinations into multiple views/queries framework. We also present a way to do \textit{data carving}, providing more powerful tests than classical data splitting by reusing the information in the data from the first stage.
	
\end{abstract}

\maketitle
\setattribute{journal}{name}{}

%%%%%%%%%%%%%%%%%%%%%%%%%%%%%%%%%%%%%%%%%%%%%%%%%%%
%%%%%%%%%%%%%%%%%%%%%%%%%%%%%%%%%%%%%%%%%%%%%%%%%%%
%%%%%%%%%%%%%%%%%%%%%%%%%%%%%%%%%%%%%%%%%%%%%%%%%%%

%--------------------------------------------------
\section{Introduction}
%--------------------------------------------------

This work continues a line of research in inference after model selection beginning with \cite{berk_valid_2013, lee2013exact, sequential_post_selection, fithian2014optimal, tian2015selective}. 
This train of thought leads to the concept of a {\em selective model} in which each distribution $\mathbb{F}_n$ in some statistical model $\mathcal{F}_n$ is conditioned on the output $M$ of some model selection procedure, a canonical example being the choice of variables and their signs by the LASSO \citep{tibs_lasso}. Let us denote by $\widehat{M}(\bm S,\bm\omega)$ some possibly randomized selection procedure which we might think of as a {\em query}, i.e.~a function we evaluate on data $\bb S\sim\mathbb{F}_n$ along with possible independent randomization $\bm\omega\sim\mathbb{G}$. Also, let us denote with $\mathbb{F}_n^*$ the post-selection distribution of the data, i.e.~distribution of $\bm S$ conditional on selection $\widehat{M}(\bm S,\bm\omega)=M$. The selective model $\mathcal{F}_n^*$ is determined by the outcome $M$ of the procedure $\widehat{M}$ and the corresponding selective likelihood ratio
\begin{equation*}
	\ell_{\mathbb{F}_n}^M(\bb S) = \frac{d\mathbb{F}^*_n}{d\mathbb{F}_n}({\bb S}) = \frac{(\mathbb{F}_n\times\mathbb{G})\{\widehat{M}(\bm S,\bm\omega)=M \:\bigl \vert\: {\bb S}\}}{(\mathbb{F}_n\times\mathbb{G})\{\widehat{M}(\bm S,\bm\omega)=M\}}.
\end{equation*}
(We remove $M$ from the superscript in later sections for simplicity.) In the numerator above we marginalize only over the possible randomization while the numerator becomes an indicator function in the case where there is no added randomization. The behavior of the random variables $(\ell_{\mathbb{F}_n}^M)_{\mathbb{F}_n\in\mathcal{F}_n}$ can be used to determine properties of the selective model 
\begin{equation}  \label{eq:intro:selective:model}
	\mathcal{F}^*_n = \left\{\mathbb{F}_n^*: \frac{d\mathbb{F}^*_n}{d\mathbb{F}_n}(\cdot) = \ell_{\mathbb{F}_n}^M(\cdot), \:\mathbb{F}_n \in \mathcal{F}_n \right\},
\end{equation}
consisting of all distributions $\mathbb{F}_n^*$ we get by conditioning $\mathbb{F}_n$ on the selection event for given $M$. For example, \cite{tian2015asymptotics} show that consistency under a sequence of {\em unselective} (or pre-selective, treating $M$ as fixed) models $\mathcal{F}_n$ can be transferred to a corresponding sequence of selective models $\mathcal{F}^*_n$ under conditions on the appropriate selective likelihood ratios. Similarly, questions of weak convergence for sequences of selective models in $\mathcal{F}_n^*$ can be related to weak convergence under the unselective models $\mathcal{F}_n$.
It is these questions of weak convergence that are the focus of this work. In particular, we construct selective bootstrap procedures that produce asymptotically pivotal quantities which converge weakly in a uniform sense over a large class of models.

Before detailing our contributions here, we review other related work in selective inference. In \cite{lee2013exact}, the authors use the LASSO to select variables and provide valid tests and confidence intervals for the parameters chosen based on looking at the active set of LASSO. To achieve that, the authors base inference using the conditional distribution of the data, where conditioning is  on the result of the selection by the LASSO. Assuming Gaussian errors with known variance (the assumptions on $\mathbb{F}_n$), the truncated Gaussian test statistic constructed in \cite{lee2013exact, sequential_post_selection} is an exact pivot valid after selection. It is constructed for saturated models, meaning that no relationship is assumed between the response vector and the predictors before or after selection  \citep{lee2013exact, loftus2015selective, tibshirani2015uniform}. In the saturated model framework, the selection is done to adjust for the choice of the parameter to report and there are no additional assumption on $\mathbb{F}_n$ coming after observing $M$.

%Of course, these tests are valid only for the statistical model for which they were designed. In \cite{lee2013exact}, the model was the saturated model with Gaussian errors of known variance proposed by \cite{berk_valid_2013}.
Followup work, including \cite{fithian2014optimal, tian2015selective} among others, point out that this model may not be realistic in many situations. However, the principle of conditioning on the result of the selection by the LASSO is applicable in other statistical models besides the saturated model. 
In the selected model framework, we assume some relationship between the response and the covariates after looking at the outcome of the model selection algorithm \citep{fithian2014optimal}. In other words, the selected model framework allows for adding assumptions on the underlying data generating distribution $\mathbb{F}_n$ after looking at the selection outcome $M$ as long as the inference stage is done using the conditional distribution $\mathbb{F}_n^*$. In this framework inference is done by conditioning on less than in saturated framework, hence increasing power. In this scenario, the conditioning on the selection region adjusts for both choosing a model for $\mathbb{F}_n$ after selection and choosing the parameters to test.

Some recent work in selective inference has been focusing on removing the Gaussian assumption on the data and getting asymptotic results for truncated Gaussian test statistic in non-parametric settings \citep{tian2015asymptotics, tibshirani2015uniform}. \cite{tibshirani2015uniform} also propose a bootstrap version of this test that is asymptotically conservative in the unconditional sense, i.e.~under the pre-selection distribution of the data $\mathbb{F}_n$. All the works mentioned have been focusing on the non-randomized selective inference, meaning that given the data there is no additional randomness coming in either the model selection procedure or the inferential procedures.

\cite{tian2015selective} propose doing selection on a randomized version of the response vector or adding randomization directly into the objective function of a model selection algorithm. They notice a significant increase in the power of a test with added randomization. By adding randomization, the test statistic they construct becomes a smoother function of the data compared to the truncated Gaussian test statistic. %Thus, the randomized pivot behaves better both numerically and theoretically than the truncated Gaussian statistic.
Their test statistic, we call \textit{plugin Gaussian or plugin CLT pivot}, is constructed by adjusting for selection a test statistic that satisfies a pre-selection CLT, treating $M$ as fixed. \cite{tian2015selective} prove a selective CLT result, saying the plugin Gaussian pivot is asymptotically $\textnormal{Unif}[0,1]$ under the conditional distribution $\mathbb{F}_n^*$ of the data in the uniform sense across a non-parametric family of distributions $\mathcal{F}_n^*$. Since the conditional statements (under $\mathbb{F}_n^*$) imply the unconditional ones (under $\mathbb{F}_n$), the guarantees under the conditional distributions are stronger. 
%Furthermore, their work allows for both saturated and selected model framework.

In this work, we build on the main ideas of randomized selective inference, allowing for both selected and saturated model setting. We extend the non-parametric results of \cite{tian2015selective} and also introduce one more construction of the test statistic using bootstrap.

%--------------------------------------------------
\subsection{Contributions of this paper}
%--------------------------------------------------

From theoretical perspective our contributions are as follows.
\begin{enumerate}[leftmargin=*, label=(T\arabic*)]

\item \textbf{Selective CLT with Lipschitz randomization.} Under some regularity conditions on the randomization, we first prove selective CLT without assuming local alternatives (defined in Section \ref{sec:selective:clt}), relaxing the result of \cite{tian2015selective}. This allows for the parameter of interest to be arbitrarily far from the selection region, allowing for selection events that are relatively rare under the true data generating mechanism. This result requires the randomization to have tails heavier than Gaussian such as logistic distribution.

\item \textbf{Selective CLT with Gaussian randomization.} Our second theoretical result is the proof that the selective CLT holds with Gaussian randomization under the local alternatives. This means if the distance of the parameter vector to the selection region is not growing too fast, the constructed pivot is asymptotically uniform hence valid for inference.

\item \textbf{Selective bootstrap.} Building on the non-parametric results, we also propose a bootstrap version of the asymptotically pivotal test statistic constructed in \cite{tian2015selective}. We prove that our bootstrap test statistic is asymptotically pivotal after randomized model selection procedures with affine constraints under local alternatives. The results are also under the conditional distribution $\mathbb{F}_n^*$ of the data (conditioning is on the selection region) and in the uniform sense across a family $\mathcal{F}_n^*$ of non-parametric distributions. We refer to the constructed bootstrap test statistic as the \textit{bootstrap pivot}.

\end{enumerate}

We develop two novel samplers for computing the bootstrap pivot.

\begin{enumerate}[leftmargin=*, label=(C\arabic*)]
	\item \textbf{Wild bootstrap sampler.} In addition to the challenges related to sampling from a post-selection distribution, computing bootstrap pivot requires using the bootstrapped samples for the data. Under the selective density, not all bootstrapped samples are equally likely so the standard resampling with replacement techniques do not obviously work. To provide an efficient way of computing the bootstrap pivot, we use the wild bootstrap of \cite{wu1986jackknife, mammen1993bootstrap} coupled with the selective sampler of \cite{tian2016magic, selective_sampler}. This allows us to sample the weights along with the optimization variables from a continuous density with constraints.
	
	\item \textbf{Weighted optimization sampler.} Since the above sampler requires running the sampling chain separately for each test, we devise an efficient way of constructing both pivots and confidence intervals for multiple tests, e.g.~all selected coefficients in a regression. It relies on reusing the optimization samples across tests thus the sampling has to be done only once.
	
\end{enumerate}

Important applications of this work are as follows.

\begin{enumerate}[leftmargin=*, label=(A\arabic*)]
\item \textbf{One view / query on the data.} We develop an efficient way to report valid inference for the selected coefficients based on observing the outcome of a randomized model selection procedure such as LASSO, GLMs with $\ell_1$ penalty, marginal screening etc. 

\item \textbf{Multiple views / queries of the data.} 
As most data analysts will want to try various model selection algorithms when choosing a model, we present a way to construct confidence intervals after multiple views/queries of the data. 
Any of the procedures above can be combined into multiple views of the data framework, where we choose the selected coefficients based on the outcomes of several model selection procedures. This allows a statistician to do inference after GLMs with $\ell_1$-penalties, forward-stepwise, marginal screening etc.~or any of their combinations.

\item \textbf{Data carving.} We introduce a novel way to do data splitting through \textit{data carving} \citep{fithian2014optimal}. Classical data splitting uses a part of the data to select a model (stage one) and the leftover data for inference (stage two). Hence, the classical data splitting conditions on the whole first stage data used for selection. In this work, we select the model using the stage one data as well, but by conditioning only on the model selected, we use the whole data for inference. Conditioning on less, the data carving procedure provides an increase in power compared to the traditional data splitting while preserving the right coverage for the selected coefficients.

\item \textbf{Multiple splits.} Finally, the multiple views framework is combined with the data splitting framework into the \textit{multiple splits}, a way to do inference after looking at the models selected from several splits. % All the sampling tools used here are novel tools with inferential guarantees.

\end{enumerate}

%--------------------------------------------------
\subsection{Outline}
%--------------------------------------------------

The outline of the rest of this paper is as follows.
We illustrate the main methodological and computational aspects of our proposal through two examples of one view on the data. The first one describes how to do inference for the mean after randomization selection on the response without covariates (simple example, Section \ref{sec:simple:example}). The second one is about inference after looking at the outcome of the randomized LASSO procedure (Section \ref{sec:lasso}). The examples are followed by the general setup of randomized selective inference and computational methods \ref{sec:gen:setup}.

%The single view on the data examples include the simple example and randomized LASSO in the introduction and marginal screening in Section \ref{app:further:examples} in the supplement. Other examples from \cite{selective_sampler} can be done in a similar fashion. 

We provide the selective CLT results for plugin Gaussian pivots for both heavy-tailed randomization and the Gaussian randomization in Section \ref{sec:selective:clt}. In section \ref{sec:consistency}, we show that under heavy-tailed randomization, the estimators consistent before selection (under $\mathbb{F}_n$, treating $M$ as fixed) are also consistent post-selection (under $\mathbb{F}_n^*$, treating $M$ as random).
In Section \ref{sec:bootstrap}, we propose a bootstrap statistic and show it is asymptotically pivotal.

Further, in Section \ref{sec:mv:general}, we present a general way of doing inference after multiple views of the data, illustrated via two examples: forward-stepwise and running several $\ell_1$-penalized logistic regressions. In Section \ref{sec:data:carving}, we describe the novel computational methods for data carving. Combining this technique with the multiple views of the data, we present the tools to do inference after multiple splits (Section \ref{sec:multiple:splits}). We advocate the projected Langevin method, which enjoys theoretical guarantees described in \cite{bubeck2015sampling}, for sampling from the log-concave density with simple constraints. The sampling details are in Section \ref{app:sampling} in the supplement.

%%%%%%%%%%%%%%%%%%%%%%%%%%%%%%%%%%%%%%%%%%%%%%%%%%%
%%%%%%%%%%%%%%%%%%%%%%%%%%%%%%%%%%%%%%%%%%%%%%%%%%%
%%%%%%%%%%%%%%%%%%%%%%%%%%%%%%%%%%%%%%%%%%%%%%%%%%%
%%%%%%%%%%%%%%%%%%%%%%%%%%%%%%%%%%%%%%%%%%%%%%%%%%%

%------------------------------------------------------------------
\section{Simple example: inference for the mean after randomized selection} \label{sec:simple:example}
%------------------------------------------------------------------

We start with an example where we do inference for the mean of i.i.d.~random variables where the data vector is selected such that the randomized $t$-statistic is above some given threshold.
The data vector $\bb y=(y_1, \ldots, y_n)$ consists of $y_i\overset{i.i.d.}{\sim}\mathbb{F}_n$, $i=1,\ldots,n$, pre-selection. The subscript $n$ in $\mathbb{F}_n$ indicates that the data generating distribution might vary with $n$. $\mathbb{F}_n^n$ denotes the joint distribution of $n$ i.i.d.~samples from $\mathbb{F}_n$. For simplicity, let us assume the variance of $\mathbb{F}_n$ is 1 and denote the mean of $\mathbb{F}_n$ as $\mu(\mathbb{F}_n)$. We want a test for the mean $H_0:\mu(\mathbb{F}_n)=\mu$ after selection of the following form
\begin{equation*}
	\sqrt{n}\bar{y}+\omega > \mathfrak{t}, \;\; (\bb y, \omega)\sim\mathbb{F}_n^n\times\mathbb{G},
\end{equation*}
where $\bar{y}=\frac{1}{n}\sum_{i=1}^n y_i$, $\mathfrak{t}$ is some fixed (and known) threshold. $\omega\sim\mathbb{G}$ is added randomization independent of the data and the distribution $\mathbb{G}$ on $\mathbb{R}$ with the density denoted as $g$ is known. Denote with $\widebar{G}$, the survival function of $\omega$, i.e.~$\widebar{G}(x)=\mathbb{G}\{\omega>x\}$. Let the post-selection distribution of the data vector $\bb y$ be $\mathbb{F}_n^*$. In other words, $\mathbb{F}_n^*$ is the marginal distribution of the data vector $\bm y$, where we marginalize over the randomization $\omega$, in the joint selective distribution of $(\bm y,\omega)\sim\mathbb{F}_n^n\times\mathbb{G}$ given $\sqrt{n}\bar{y}+\omega>\mathfrak{t}$. Let us also denote as $F_{\mu}(t)$ the CDF of $\sqrt{n}(\bar{y}-\mu)$, where $\bb y\sim\mathbb{F}_n^*$.

\cite{leeb2006can} show that for the normal distribution $\mathbb{F}_n=\Phi_{(\mu, 1)}$ one cannot estimate $F_{\mu}(t)$ uniformly consistently, i.e.~for a fixed $t$ any estimate $\widehat{F}(t) $ that is consistent at $\mu=0$,
\begin{equation} \label{eq:simple:example:consistency}
	\forall \delta>0, \;\;\underset{n\rightarrow\infty}{\lim}\mathbb{F}_n^*\left\{|\widehat{F}(t) - F_{0}(t)|>\delta\right\}=0,
\end{equation}
necessarily fails to be consistent uniformly in the neighborhood around zero. In other words, $\widehat{F}(t)$ that satisfies \eqref{eq:simple:example:consistency} also satisfies
\begin{equation} \label{eq:simple:example:uniformity:fail}
	\underset{n\rightarrow\infty}{\textnormal{lim inf}} \underset{|\mu|<\frac{h}{\sqrt{n}}}{\sup} \mathbb{F}_n^*\left\{|\widehat{F}(t)-F_{\mu}(t)|>\delta\right\}\geq c>0,
\end{equation}
for some constants $h, \delta, c>0$. The local alternatives in \eqref{eq:simple:example:uniformity:fail} can be replaced by $|\mu|<R$ for some constant $R$ as well. The authors show the result for non-randomized selection events but the argument holds with randomized selection events as well. Hence, no uniformly consistent estimator of the conditional distribution $F_{\mu}(t)$ exists, not even locally.

Despite their result, we can still get a uniformly consistent confidence interval for the mean $\mu$ by inverting a test. If we construct a test statistic $\widehat{F}_{\mu}(t)$ such that
\begin{equation} \label{eq:simple:ex:uniform:test}
	\forall\delta>0,\;\; \underset{n\rightarrow\infty}{\lim}\:\underset{|\mu|<\frac{h}{\sqrt{n}}}{\sup} \mathbb{F}_n^*\left\{\underset{t\in\mathbb{R}}{\sup}\left|\widehat{F}_{\mu}(t)-F_{\mu}(t)\right|>\delta \right\} = 0,
\end{equation}
then we can construct a confidence interval for $\mu$ by inverting this test statistic. Using the fact that $F_{\mu}(\sqrt{n}(\bar{y}-\mu))$ is exactly distributed as $\textnormal{Unif}[0,1]$, \eqref{eq:simple:ex:uniform:test} implies that  under $\mathbb{F}_n^*$ the test statistic $\widehat{F}_{\mu}(\sqrt{n}(\bar{y}-\mu))$ is asymptotically uniformly $\textnormal{Unif}[0,1]$, i.e.
\begin{equation} \label{eq:simple:examaple:selective:CLT}
	\underset{n\rightarrow\infty}{\lim}\:\underset{|\mu|<\frac{h}{\sqrt{n}}}{\sup} \:\underset{t\in[0,1]}{\sup}\left|\mathbb{F}_n^* \left\{\widehat{F}_{\mu}(\sqrt{n}(\bar{y}-\mu))\leq t\right\}-t\right| = 0.
\end{equation}
We refer to the statement in \eqref{eq:simple:examaple:selective:CLT} as the selective CLT.  
This implies the constructed $(1-\alpha)$-confidence interval	
$\mathcal{I}_{\alpha}(\bb y) = \left\{\mu: \widehat{F}_{\mu}\left(\sqrt{n}(\bar{y}-\mu)\right)\in\left[\alpha/2, 1-\alpha/2\right]\right\}$
is uniformly valid under the conditional distribution of the data, i.e.
\begin{equation} \label{eq:simple:ex:uniformly:consistent:intervals}
	\underset{n\rightarrow\infty}{\lim} \:\underset{|\mu|<\frac{h}{\sqrt{n}}}{\sup} \:\left|\mathbb{F}_n^*\{\mu\in \mathcal{I}_{\alpha}(\bb y)\}-(1-\alpha)\right|=0,
\end{equation}
for a given level $\alpha$ (follows from Lemma A.1 in \cite{romano2012uniform}).
Our goal is to construct $\widehat{F}_{\mu}(t)$ that satisfies \eqref{eq:simple:ex:uniform:test} using bootstrap.

%***************************************************
%***************************************************

\begin{remark1}
A consistent $(1-\alpha)$-confidence interval $\mathcal{I}_{\alpha}(\bm y)$ for $\mu$ is defined as 
	\begin{equation*}
		\underset{n\rightarrow\infty}{\lim}\left|\mathbb{F}_n^*
		\{\mu\in \mathcal{I}_{\alpha}(\bm y)\}-(1-\alpha)\right|=0
	\end{equation*}
	across a single sequence of distributions $\{\mathbb{F}_n\}_{n=1}^{\infty}$.
Having uniformly consistent confidence intervals and not just consistent is an important inferential goal since it makes a difference in coverage guarantees for finite $n$ \citep{romano2012uniform, tibshirani2015uniform}. Given $\epsilon>0$, there exists $n(\epsilon)$ such that for every $n\geq n(\epsilon)$ the coverage of uniformly consistent intervals is guaranteed at level $1-\alpha-\epsilon$ no matter what the underlying $\mu$ is. For consistent confidence intervals, this might not hold since the required sample size will depend on $\mu$ as well.
\end{remark1}

\begin{remark1}	
	The non-regularity problems of this kind (parameter at the boundary) have been also considered in \cite{andrews2000inconsistency, laber2014dynamic, mckeague2015adaptive}. \cite{mckeague2015adaptive} provide a bootstrap test for testing the global null in the non-randomized version of the model selection problem. We consider a similar problem of performing inference after marginal screening in Section \ref{sec:marginal:screening} in the supplement. As pointed out in \cite{leeb_comment}, their test is for a point null, hence the non-regularity problem is not an issue  anymore \citep{leeb2006can, leeb2006performance}. We provide a similar fix, i.e.~circumventing the non-regularity issue by providing a test, instead of doing inference based on a cumulative distribution function of an estimator (that is known to provide non-uniform limiting behavior \citep{leeb2006performance}). However, our test is not just for testing a global null but can be used more generally for testing any linear combination of the underlying parameter vector. As for the construction of confidence intervals, the test of \cite{mckeague2015adaptive} is hard to invert, while we construct confidence intervals as well.
	
	%\cite{laber2014dynamic} addresses non-regularity problem in a specific example of dynamic treatments, providing bootstrap confidence intervals that are conservative. 
\end{remark1}

%***************************************************
\subsection{Plugin Gaussian pivot}
%***************************************************

Before describing the bootstrap versions of $\widehat{F}_{\mu}(t)$, let us first construct a Gaussian version for $\widehat{F}_{\mu}(t)$ denoted as
\begin{equation*}
	\widehat{F}_{\mu}^G(t)=(\Phi_{(0,1)}\times\mathbb{G})\left\{Z\leq t\:|\:Z+\sqrt{n}\mu+\omega>\mathfrak{t}\right\},
\end{equation*}
where the probability on the RHS is under $Z\times\omega\sim\Phi_{(0,1)}\times \mathbb{G}$. $\widehat{F}_{\mu}^G(t)$ represents the conditional CDF of the data if the data was normal pre-selection, i.e.~in the case $\mathbb{F}_n=\Phi_{(\mu,1)}$. We refer to $\widehat{F}^G_{\mu}(\sqrt{n}(\bar{y}-\mu))$, as the plugin Gaussian (or plugin CLT) pivot. \cite{tian2015selective} show that under heavy-tailed distribution $\mathbb{G}$, the selective CLT from \eqref{eq:simple:examaple:selective:CLT} holds with the local alternatives assumption on $\mu$ as stated $|\mu|<h/\sqrt{n}$ for a constant $h$. Our results in Section \ref{sec:selective:clt} extend their results to show that in cases when $\mathbb{G}$ is heavy-tailed we can remove the local alternatives assumption so that the uniformity statement is across $\mu\in\mathbb{R}$. In the case when $\mathbb{G}$ is Gaussian, we keep the local alternatives to prove selective CLT. Thus, the plugin Gaussian pivot leads to an asymptotically uniformly valid test and a confidence interval for $\mu$ post-selection across a family of non-parametric distributions $\mathbb{F}_n$.

Computationally, we present two  sampling ways of computing the plugin Gaussian pivot. The first approach is a more natural method for the particular setting of the simple example, also used to illustrate the selective sampler. The second approach is more efficient in more complicated settings so we present it here to convey the idea.

\begin{itemize}[leftmargin=*]

\item (Selective sampler) It suffices to sample $Z$ and $\omega$ from the selective density proportional to
\begin{equation*}
	\phi_{(0,1)}(Z)\cdot g(\omega)\cdot\mathbb{I}_{\{Z+\sqrt{n}\mu+\omega>\mathfrak{t}\}},
\end{equation*}
 where $\phi_{(0,1)}(\cdot)$ is the density of the standard normal distribution. Using a simple change of variables technique, this sampling density can be simplified so that the constraint set does not involve both $Z$ and $\omega$. The change of variables here is a special example of the pull back trick of \cite{selective_sampler} that we refer to as the \textit{selective sampler}. Denoting $v=Z+\omega+\sqrt{n}\mu$, the selective density of $(Z,v)$ becomes proportional to
\begin{equation} \label{eq:simple:example:selective:density}
	\phi_{(0,1)}(Z) \cdot g(v-Z-\sqrt{n}\mu)\cdot\mathbb{I}_{\{v>\mathfrak{t}\}}.
\end{equation}
Sampling from the density in \eqref{eq:simple:example:selective:density} becomes easier since the density does not involve any restrictions on the data variable $Z$. Having the $Z$ samples from the density above, computing  the plugin Gaussian pivot $\widehat{F}^G(\sqrt{n}(\bar{y}-\mu))$ is straightforward. To compute the confidence interval for $\mu$ based on this test, we run the sampler once at a reference value for $\mu$, usually taken to be MLE pre-selection ($\bar{y}$ in this example). Then we tilt the $Z+\sqrt{n}\mu$ samples at the reference to get the post-selection $Z+\sqrt{n}\mu$ samples at other $\mu$ values.

\item (Weighted optimization sampler) We write the selective density of $(Z,v)$ as proportional to
	\begin{equation*}
		\phi_{(0,1)}(Z) \cdot g(v-\sqrt{n}\bar{y}^{obs})\cdot\mathbb{I}_{\{v>\mathfrak{t}\}} \cdot\frac{g(v-Z-\sqrt{n}\mu)}{g(v-\sqrt{n}\bar{y}^{obs})},	
	\end{equation*}
	where $\sqrt{n}\bar{y}^{obs}$ is the observed value. We sample $Z\sim\Phi_{(0,1)}$ and the optimization variable $v$ from a density proportional to
\begin{equation} \label{eq:simple:ex:optimization:density}
	g(v-\sqrt{n}\bar{y}^{obs})\cdot\mathbb{I}_{\{v>\mathfrak{t}\}}
\end{equation}
and independently from $Z$ samples.
By tilting the independent $(Z,v)$ samples by the ratio $w(Z,v)=\frac{g(v-Z-\sqrt{n}\mu)}{g(v-\sqrt{n}\bar{y}^{obs})}$ we get the samples from the selective density.
Precisely, to compute the pivot for a given $\mu$ we do the following.
\begin{enumerate}[leftmargin=*]
	\item (Optimization sampler) Get the optimization samples $v^s$, $s=1,\ldots, S$ from the density in \eqref{eq:simple:ex:optimization:density}. We use projected Langevin for this step.
	\item (Gaussian sample) Sample $Z^s$, $s=1,\ldots, S$, from a normal distribution $\mathcal{N}(0,1)$.
	\item (Tilting / Weighting) We weight the combined samples $(Z^s,v^s)$ from the first two steps to compute the plugin Gaussian pivot as
	\begin{equation*}
		\sum_{s=1}^S \mathbb{I}_{\{Z^s+\sqrt{n}\mu\leq\sqrt{n}\bar{y}^{obs}\}} \cdot \frac{w(Z^s,v^s)}{\sum_{s'=1}^Sw(Z^{s'}, v^{s'})}.	
	\end{equation*}
	To compute the confidence intervals, we need to repeat only the third step above for $\mu$ values over a grid.
\end{enumerate}

\end{itemize}

%***************************************************
\subsection{Bootstrap pivot}
%***************************************************

Using bootstrap, we approximate $F_{\mu}(t)$ by
\begin{equation*}
	\widehat{F}^B_{\mu}(t)=(\widehat{\mathbb F}_n^n\times\mathbb{G})\left\{\sqrt{n}(\bar{y}^*-\bar{y})\leq t\:\big|\:\sqrt{n}(\bar{y}^*-\bar{y})+\sqrt{n}\mu+\omega>\mathfrak{t} \right\},
\end{equation*}
where $\widehat{\mathbb{F}}_n$ is the empirical distribution of our sample $\bb y$. $\bm y^*=(y_1^*,\ldots, y_n^*)\sim\widehat{\mathbb{F}}_n^n$ denotes a bootstrap sample and $\bar{y}^*$ is its sample mean. The asymptotically pivotal test statistic in this case becomes $\widehat{F}^B_{\mu}(\sqrt{n}(\bar{y}-\mu))$ and we refer to it as the \textit{bootstrap pivot}. Our result in Section \ref{sec:selective:clt} proves that for more general affine selection events the constructed bootstrap pivots are asymptotically uniformly $\textnormal{Unif}[0,1]$ after selection under some assumptions, thus can be used for inference post-selection.

We present three ways of computing the bootstrap pivot. The first method is the most natural for the simple example; however, since the method is hard to generalize we also provide two bootstrap samplers that are applicable in more complicated regression examples.

\begin{itemize}[leftmargin=*]

\item (Bootstrap samples adjusted for selection probabilities) Using the standard sampling with replacement, we bootstrap the data vector $\bm y$ and compute the bootstrapped mean samples as $\bar{y}^{*b}$, $b=1, \ldots, B$. Then we compute $\widehat{F}_{\mu}^B(t)$ as
\begin{equation} \label{eq:simple:ex:nonparam:boot:test}
	\frac{\sum_{b=1}^B\widebar{G}\left(\mathfrak{t}-\sqrt{n}\mu-\sqrt{n}(\bar{y}^{*b}-\bar{y})\right)\cdot \mathbb{I}_{\{\sqrt{n}(\bar{y}^{*b}-\bar{y})\leq t\}}}{\sum_{b=1}^B\widebar{G}\left(\mathfrak{t}-\sqrt{n}\mu-\sqrt{n}(\bar{y}^{*b}-\bar{y})\right)}.
\end{equation}
We can think of computing the above quantity as weighting the bootstrap samples with the corresponding selection probabilities  $\widebar{G}(\cdot)$.
To perform a test for the mean we use $\widehat{F}^B_{\mu}(\sqrt{n}(\bar{y}-\mu))$ as the test statistic and its asymptotically uniform distribution as the reference distribution under the null. 
The confidence intervals can be constructed by inverting this test. This involves a grid search for $\mu$ such that $\widehat{F}_{\mu}(\sqrt{n}(\bar{y}-\mu))\in[\alpha/2,1-\alpha/2]$. This is essentially the grid bootstrap of \cite{hansen1999grid}; however, in our example we do not have to resample the data (to compute the test statistics) for each $\mu$ on the grid, but we can sample once and reuse samples in computing the test statistic for each $\mu$. This makes the construction of confidence intervals faster. 
Since in more complicated examples computing the selection probabilities efficiently is hard, we devise two alternative ways to do bootstrap.

\item  (Wild bootstrap sampler) Without selection, the wild bootstrap approximates the distribution of $\sqrt{n}(\bar{y}-\mu)$ with the distribution of $\frac{1}{n}\sum_{i=1}^n(y_i-\bar{y})\alpha_i$, where $\bb\alpha=(\alpha_1,\ldots, \alpha_n)$ is a vector of bootstrap weights. We take $\alpha_i\overset{i.i.d.}{\sim}\mathbb{H}_{\alpha}$, $i=1,\ldots, n$, with density $h_{\alpha}$. The wild bootstrap approximation of $F_{\mu}(t)$ becomes
\begin{align}
	&(\mathbb{H}_{\alpha}^n\times\mathbb{G})\left\{\frac{1}{\sqrt{n}}\sum_{i=1}^n(y_i-\bar{y})\alpha_i\leq t\:\Big|\:\frac{1}{\sqrt{n}}\sum_{i=1}^n(y_i-\bar{y})\alpha_i+\sqrt{n}\mu+\omega>\mathfrak{t}\right\} \nonumber \\ 
	&=\frac{\mathbb{E}_{\bb\alpha\sim\mathbb{H}_{\alpha}^n}\left[\widebar{G}\left(\mathfrak{t}-\sqrt{n}\mu-\frac{1}{\sqrt{n}}\sum_{i=1}^n(y_i-\bar{y})\alpha_i \right) \mathbb{I}_{\left\{\frac{1}{\sqrt{n}}\sum_{i=1}^n(y_i-\bar{y})\alpha_i\leq t\right\}}\right]}{\mathbb{E}_{\bb\alpha\sim\mathbb{H}_{\alpha}^n}\left[\widebar{G}\left(\mathfrak{t}-\sqrt{n}\mu-\frac{1}{\sqrt{n}}\sum_{i=1}^n(y_i-\bar{y})\alpha_i\right)\right]}. \label{eq:simple:ex:wild:boot:cdf}
\end{align}
In this example we can just resample $\bb\alpha$ from $\mathbb{H}_{\alpha}^n$ or numerically integrate over the numerator and denominator in \eqref{eq:simple:ex:wild:boot:cdf}. However, since our approach to computing the bootstrap pivot in more complicated selection events in higher dimensions involves MC techniques, we illustrate that approach here as well.
A sampling approach to computing the wild bootstrap test statistic involves sampling weights $\bb\alpha\in\mathbb{R}^n$ from the density proportional to
\begin{equation} \label{eq:simple:example:wild:bootstrap:density}
	\left(\prod_{i=1}^n h_{\alpha}(\alpha_i)\right)\cdot \widebar{G}\left(\mathfrak{t}-\sqrt{n}\mu-\frac{1}{\sqrt{n}}\sum_{i=1}^n(y_i-\bar{y})\alpha_i\right).
\end{equation}
We then compute the bootstrap pivot by computing the quantile of $\sqrt{n}(\bar{y}-\mu)$ with respect to the empirical samples $\frac{1}{\sqrt{n}}\sum_{i=1}^n(Y_i-\bar{y})\alpha_i$. To compute the confidence interval for $\mu$, we do the sampling once at a reference value for $\mu$ and then we do Gaussian tilt of the samples $\frac{1}{\sqrt{n}}\sum_{i=1}^n(Y_i-\bar{y})\alpha_i+\sqrt{n}\mu$ to get the pivots at other $\mu$ values.
In more complicated examples later the sampling density will also involve constrained optimization variables.

\item (Weighted optimization sampler) We reuse the weighted optimization sampler from computing the plugin Gaussian pivot while changing the $Z$ samples in the second step there there with the bootstrap samples $\sqrt{n}(\bar{y}^*-\bar{y})$.

\end{itemize}

%----------------------------------------------------
\section{LASSO with random design}  \label{sec:lasso}
%----------------------------------------------------

%****************************************
\subsection{Notation}
%****************************************

In this and the following examples we assume the correlation model where the design matrix $\bm X\in\mathbb{R}^{n\times p}$, with rows $\bm x_i^T$, $i=1,\ldots, n$, is random. We denote the response vector as $\bb y=(y_1,\ldots,y_n)\in\mathbb{R}^n$. Using a randomized procedure on $(\bm X, \bb y)$, we select a subset $E\subset\{1,\ldots, p\}$ of predictors. We denote the pre-selection distribution of the data as $(\bm x_i, y_i)\overset{i.i.d.}{\sim}\mathbb{F}_n$, $i=1,\ldots, n$, with the density denoted as $f_n$. Let the entries of matrix $\bb X$ be scaled by $1/\sqrt{n}$.

We provide inference for the population quantities corresponding to the selected predictors $E$. Note that when we talk about the selected model in general we use the notation $M$ to denote the model. In concrete examples presented, our model is determined by the selected predictors which we denote as $E$. Let us denote as $\bm X_E\in\mathbb{R}^{n\times |E|}$ the submatrix of $\bm X$ consisting of the selected columns from $E$ only.
In the case of Gaussian loss, the parameter of interest equals $\bm\beta_E^*=\bm\beta_E^*(\mathbb{F}_n)=(\mathbb{E}_{\mathbb{F}_n}\left[\bm X^T_E\bm X_E\right])^{-1}\mathbb{E}_{\mathbb{F}_n}\left[\bm X^T_E\bm y\right]$ and in the case of logistic loss $\bm\beta_E^*$ satisfies $\mathbb{E}_{\mathbb{F}_n}[\bm X_E^T(\bm y-\pi_E(\bm\beta_E^*))]=0$, where $\pi_E(\bm\beta)=\frac{\exp(\bm X_E\bm\beta)}{1+\exp(\bm X_E\bm\beta)}$ for any $\bm\beta\in\mathbb{R}^{|E|}$. Precisely, our hypothesis is $H_0:\bm A^T\bm\beta_E^*=\bm\theta$ for a given matrix $\bm A\in\mathbb{R}^{a\times |E|}$. The goal is to do valid inference for this linear combination of $\bm\beta_E^*$ taking into account that $E$ is computed based on data.

To do inference for $\bm\beta_E^*$ pre-selection (treating $E$ is fixed), we can use $\bar{\bm\beta}_E$, the MLE of the model $\bm y\sim \bm X_E$. In the case of Gaussian likelihood, the MLE becomes $\bar{\bm\beta}_E=(\bm X_E^T\bm X_E)^{-1}\bm X_E^T\bm y$ and in the case of logistic likelihood, it satisfies $\bm X_E^T(\bm y-\pi_E(\bar{\bm\beta}_E)) = \bm 0$. Pre-selection, the MLE is asymptotically Gaussian so we use its normal distribution as the reference distribution. It is worth repeating that we use the terms pre-selection, before selection or in the original distribution $\mathbb{F}_n$ to say that we treat the selected model $E$ as fixed (non-random). 

\begin{remark1}
We should emphasize that in order to do valid inference we do not assume the selected model is true data generating mechanism but we construct confidence intervals for the population parameters corresponding to the selected model. Also, upon looking at the outcomes of the model selection procedures, an analyst decides on her own model for which to report inference.
\end{remark1}

To adjust for selection, the inference is done under the conditional distribution of $\bar{\bm\beta}_E$, where the conditioning is on the event that the randomized procedure applied to $(\bm X,\bb y)$ chooses model $E$. The conditional distribution of the data is denoted as $\mathbb{F}_n^*$. The pivots constructed based on $\bar{\bm\beta}_E$ and the selection event have guarantees under $\mathbb{F}_n^*$, meaning that in selective inference results we account for the fact that $E$ is random.

Before describing the LASSO selection in detail, let us introduce more notation. Usually, the selection event imposes constraints on not only $\bar{\bm\beta}_E$ but also on additional data vectors.
In the LASSO selection, the selection event can be written in terms of $\bb D=\begin{pmatrix} \bm D_E \\ \bm D_{-E} \end{pmatrix}=\begin{pmatrix} \bar{\bm\beta}_E \\ \bm X_{-E}^T(\bm y-\bm X_E\bar{\bm\beta}_E)\end{pmatrix}$ in the case of Gaussian loss and in terms of $\bb D=\begin{pmatrix} \bar{\bm\beta}_E \\ \bm X_{-E}^T(\bm y-\pi _E(\bar{\bm\beta}_E)\end{pmatrix}$ in the case of logistic loss.

%**************************************
\subsection{Problem setup}
%**************************************

The randomized LASSO \citep{tian2015selective, selective_sampler} applied to $(\bm X ,\bb y)$ solves
\begin{equation} \label{eq:lasso:objective}
	\hat{\bb \beta}(\bb X,\bb y, \bb\omega)=\underset{\bm{\beta}\in\mathbb{R}^p}{\textnormal{argmin}}\:\frac{1}{2}\left\|\bb y-\bm X\bm\beta \right\|_2^2+\lambda\|\bm\beta\|_1+\frac{\epsilon}{2}\|\bb\beta\|_2^2-\bb\omega^T\bm\beta, \;\;\; ((\bb X,\bb y),\bb\omega)\sim\mathbb{F}_n^n\times\mathbb{G},
\end{equation}
where $\bb\omega\sim \mathbb{G}$ represents the noise from a known distribution $\mathbb{G}$ on $\mathbb{R}^p$ with density $g$. The term $\frac{\epsilon}{2}\|\bb\beta\|_2^2$ for some constant $\epsilon>0$ is added in the objective to make sure the solution exists (see \cite{selective_sampler}). Taking $\epsilon$ to be of order $O(1/\sqrt{n})$ corresponds to performing randomized LASSO and taking $\epsilon$ to be $O(1)$ corresponds to a randomized elastic net \citep{elastic_net}. Assume the randomized LASSO selects $(E,\bm s_E)$, where $E\subset\{1,\ldots,p\}$ is the candidate set of variables and $\bm s_E\in\{\pm 1\}^{|E|}$ are their signs. We write the solution of \eqref{eq:lasso:objective} as $\hat{\bb \beta}=\hat{\bb \beta}(\bb X, \bb y,\bb \omega)=\left(\begin{matrix}
	\hat{\bb\beta}_E \\ \bb 0\end{matrix}\right)$.

In order to get the distribution of a chosen test statistic for testing a linear combination $\bm\beta_E^*$ under the null, we need to sample data form the distribution conditional on LASSO in \eqref{eq:lasso:objective} selecting model $E$. 
 The selection event consists of all such data and randomization pairs such that solving the randomized LASSO \eqref{eq:lasso:objective} for that pair gives the same $(E, \bb s_E)$: 
$$
\mathcal{S}_{(E,\bm s_E)}=\left\{(\bb X',\bb y',\bb\omega'): \hat{\bb\beta}_{-E}(\bb X',\bb y', \bb \omega')=0,\; \textnormal{sign}(\hat{\bb \beta}_E(\bb X',\bb y',\bb \omega')) = \bb s_E \right\}.
$$
Sampling the data and the randomization from this region is hard due to the complicated constraints.

Using the pull-back measure trick of \cite{selective_sampler}, we re-parametrize a complicated constraints set of data and randomization $\mathcal{S}_{(E,\bm s_E)}$ using the so called optimization variables, naturally arising random variables in the problems of interest, that along with the data describe the selection region.
	Now instead of sampling data and the randomization variables from a much more complicated region, we sample data and the optimization variables from a much simpler region with the constraints on the optimization variables only.
In the randomized LASSO problem presented the optimization variables are $(\bb \beta_E, \bm u_{-E})\in\mathbb{R}^{|E|}\times\mathbb{R}^{p-|E|}$, where $\bm\beta_E$ corresponds to the active part of the minimizer $\hat{\bm\beta}$ and $\bm u_{-E}$ corresponds to the inactive part of the sub-gradient of the penalty. 
The optimization variables are chosen such that we can recover $\bb\omega$ by the sub-gradient equation of \eqref{eq:lasso:objective} as $\bb\omega = \omega(\bb X,\bm y,\bb \beta_E,\bm u_{-E})$ for some function $\omega$. Instead of sampling data and the randomization variables from a complicated set of joint constraints, we sample data and the optimization variables with a simpler set of constraints only on the optimization variables.
The sampling density of $(\bb X,\bb y,\bb \beta_E,\bb u_{-E})$ is then proportional to
\begin{equation} \label{eq:lasso:density:Xy}
	\left(\prod_{i=1}^n f_n(\bb x_i,y_i)\right)\cdot g\left(-\bb X^T\bb y+
	\begin{pmatrix}
	\bb X_E^T\bb X_E+\epsilon \bb I_{|E|} \\ \bb X_{-E}^T\bb X_E	 \end{pmatrix}\bb \beta_E 
	+\lambda \begin{pmatrix}
		\bb s_E \\ \bb u_{-E} \end{pmatrix} \right)
\end{equation}
supported on $\mathbb{R}^{n\times p}\times \mathbb{R}^n\times\mathbb{R}_{\bb s_E}^{|E|}\times [-1,1]^{p-|E|}$, where $\mathbb{R}_{\bb s_{E}}^{|E|}$ denotes the orthant in $\mathbb{R}^{|E|}$ corresponding to signs $\bb s_E$. Usually we do not know $f_n$ explicitly but for doing the inference on $\bb\beta_E^*$ we sample from a simpler space using the pre-selection asymptotic distributions as follows.

The only randomness coming from the data in \eqref{eq:lasso:objective} is in the gradient of the loss, which can be expressed in terms of the vector $\bb D$.
% $\bm D$ is asymptotically linear and the selection region written in terms of $\bm D$ is asymptotically affine.
Pre-selection (treating $E$ as fixed), this vector is asymptotically normally distributed $\bb D\rightarrow \mathcal{N}_p\left(\bm\mu_{\bm D}, \bb \Sigma_{\bb D} \right)$,
as $n\rightarrow\infty$, under $(\bm X,\bm y)\sim\mathbb{F}_n$. In terms of $\bb D$, the sub-gradient equation from solving \eqref{eq:lasso:objective} becomes
\begin{equation*}
	\omega(\bb D, \bb \beta_E, \bb u_{-E}) 
	=\bm M\bm D+\bm B\bm\beta_E+\bm U\bm u_{-E}+\bm L,
\end{equation*}
with $\textnormal{sign}(\bb \beta_E)=\bb s_E$ and $\|\bb u_{-E}\|_{\infty}\leq 1$, where
\begin{equation*}
	\bm M=-\begin{pmatrix} \bb X_E^T\bm X_E & \bm 0 \\ \bm X_{-E}^T\bm X_E & \bb I_{p-|E|} \end{pmatrix},\; \bm B= \begin{pmatrix}
\bb X_E^T\bb X_E+\epsilon \bb I_{|E|} \\ \bb X_{-E}^T\bb X_E	 \end{pmatrix}, \; \bm U=\begin{pmatrix}
	\bm 0 \\ \lambda\bm I_{p-|E|}
\end{pmatrix}, \; \bm L = \begin{pmatrix}
	\lambda\bm s_E \\ \bm 0
\end{pmatrix}
\end{equation*}
with $\bm I$ denoting the identity matrix of the dimension in the subscript. Note that with the abuse of notation we use $\omega(\cdot)$ to denote the randomization reconstruction map in terms of both $(\bm X,\bm y,\bm\beta_E,\bm u_{-E})$ and $(\bm D,\bm\beta_E,\bm u_{-E})$.
 Hence, instead of sampling from the density in \eqref{eq:lasso:density:Xy}, we alternatively sample $(\bb D,\bb\beta_E, \bb u_E)$ using the asymptotic normality of $\bb D$. The selective density of $(\bm D, \bm\beta_E,\bm u_{-E})$ becomes proportional to
\begin{equation*}
	\phi_{(\bm\mu_{\bm D},\bm\Sigma_{\bm D})}(\bm D)\cdot g(\omega(\bm D,\bm\beta_E,\bm u_{-E}))\cdot\mathbb{I}_{\{\textnormal{sign}(\bm\beta_E)=\bm s_E\}}\cdot\mathbb{I}_{\{\|\bm u_{-E}\|_{\infty}\leq 1\}}.
\end{equation*}

\begin{remark1}
We can further reduce the dimension of optimization variables from $\dim(\bm\beta_E)+\dim(\bm u_{-E})=p$ to $|E|$ by marginalizing $\bm u_{-E}$ from the selective density above. That is doable explicitly given that $\mathbb{G}$ consists of independent coordinates. For further details see \cite{selective_sampler}.
\end{remark1}

%**********************************************
\subsection{Linear decomposition}
%**********************************************

The change of variables technique above addresses the difficulties in sampling from the selective density of the data conditional on the randomized selection region. Another computational issue is sampling the relevant part of the data vector corresponding to the selected parameter of interest, while conditioning on the part of the data vector corresponding to the nuisance parameters. This is done via \textit{linear decomposition}.

When testing the hypothesis $H_0:\bm A^T\bb\beta_E^*=\bm\theta$, we use $\|\bm T-\bm\theta\|_2^2$ as a test statistic, where $\bm T=\bm A^T\bar{\bm\beta}_E$ denotes the so called \textit{target statistic}. To have this test be valid pre-selection (treating $E$ as non-random), we use asymptotic normality of $\bm T$ to determine a reference distribution. Post-selection, however, we need to base our inference under the post-selection distribution of $\bm T$.
 Note that we can do inference for any parameter $\theta(\mathbb{F}_n, E)=\bm\theta$, assuming the target statistic $\bm T$ for $\bm\theta$ and the data vector $\bm D$ satisfy the pre-selection CLT:
\begin{equation*}
	\begin{pmatrix} \bm T \\ \bm D\end{pmatrix}\rightarrow\mathcal{N}_{a+p}\left(\begin{pmatrix} \bm\theta \\ \bm\mu_{\bm D} \end{pmatrix}, \begin{pmatrix}\bm\Sigma_{\bm T} & \bm\Sigma_{\bm T,\bm D}\\ \bm\Sigma_{\bm D,\bm T} &\bm\Sigma_{\bm D} \end{pmatrix}\right)
\end{equation*}
as $n\rightarrow\infty$. Denote $\bm F = \bm D-\widehat{\bm\Sigma}_{\bm D, \bm T}\widehat{\bm\Sigma}_{\bm T}^{-1}\bm T$, where $\widehat{\bm\Sigma}_{\bm D,\bm T}$ and $\widehat{\bm\Sigma}_{\bm T}$ are the estimates of the respective covariances. By decomposing $\bm D=\bm F+\bm T$ and by conditioning on $\bm F$, it suffices to sample $(\bb T, \bb\beta_E, \bb u_{-E})$ in order to get the post-selection distribution of the target statistic $\bm T$ under the null.
The plugin CLT sampling density of $(\bm T,\bm\beta_E,\bm u_{-E})$ is proportional to
\begin{equation} \label{eq:lasso:density:plugin:clt}
	\phi_{(\bm\theta,\widehat{\bm\Sigma}_{\bm T})}(\bm T) \cdot g\left(\widetilde{\bm M}\bm T +\bm B \bb\beta_E+\bm U\bb u_{-E} +\widetilde{\bm L}\right),
\end{equation}
supported on $\mathbb{R}^p\times \mathbb{R}^{|E|}_{\bb s_E}\times [-1,1]^{p-|E|}$, where $\widetilde{\bm M} = \bm M\widehat{\bm\Sigma}_{\bm D,\bm T}\widehat{\bm\Sigma}_{\bm T}^{-1}, \; \widetilde{\bm L} = \bm L+\bm M\bm F.$
Here $\phi_{(\cdot,\cdot)}$ is the density of multivariate normal with the respective mean and covariance matrix in the subscript. We can estimate $\bb\Sigma_{\bb D,\bb T}$ and $\bm\Sigma_{\bm T}$ either parametrically or non-parametrically using pairs bootstrap \citep{freedman1981bootstrapping, buja2014conspiracy}.

\begin{remark1}
Note that we do not assume the selected linear model $\bm y\sim\bm X_E$ is true, i.e.~we do not assume $\mathbb{E}[\bm\varepsilon|\bm X]=0$, where $\bm\varepsilon=\bm y-\bm X_E\bm\beta_E^*$ are the true residuals. In other words, observing $E$ does not impose additional assumptions about the underlying $\mathbb{F}_n$ (saturated model framework). If we further assume the selected linear model is true (selected model framework), we could condition on less information. In the case the linear model is true, the asymptotic mean of $\bm D$ is $\begin{pmatrix} \bm\beta_E^* \\ \bm 0 \end{pmatrix}$, so we condition on $\bm D_E-\bm\Sigma_{\bm D_E,\bm T}\bm\Sigma_{\bm T}^{-1}\bm T$ and marginalize over the null statistic $\bm D_{-E}$. We focus on the saturated model in this work although the inferential tools are the same in the selected model approach as well.
\end{remark1}

%------------------------------------------
\subsection{Challenges in computing the bootstrap pivot}
%------------------------------------------

Let us mention the challenges in using the standard non-parametric bootstrap of \cite{efron_bootstrap, freedman1981bootstrapping} in this setting. Pairs bootstrap resamples $(\bb X_i^*, y_i^*)$, $i=1,\ldots, n$, from the empirical distribution of the data. This is equivalent to sampling $b_i$ from a multinomial with equal probabilities $1/n$ at each $(1,\ldots,n)$ and setting $(\bb X_i^*, y_i^*)=(\bb X_{b_i}, y_{b_i})$ for each $i=1,\ldots, n$. Denote with $\bb X^*(\bb b)$ the matrix with rows $\bb X_i^*=\bb X_{b_i}$, $i=1,\ldots, n$, and with $\bb y^*(\bb b)=(y_1^*,\ldots, y_n^*)$.
To do bootstrap after selection, i.e.~after conditioning on the observed model, we would need to sample $(\bb b, \bb\beta_E, \bb u_{-E})\in \{1,\ldots, n\}^n\times \mathbb{R}^{|E|}\times\mathbb{R}^{p-|E|}$ such that the model selection algorithm for the bootstrapped data $(\bb X^*({\bb b}), \bb y^*({\bb b}))$ and $\omega(\bb X^*(\bb b),\bb y^*(\bb b), \bb \beta_E, \bb u_{-E})$ gives the same selected variables $E$. We assume that the bootstrap samples lie in the same space as the original data, which is true in this case. This couples $\bb b$ and $(\bb \beta_E,\bb u_{-E})$ so that their joint density is proportional to 
\begin{equation*}
	g\left(\omega(\bb X^*(\bb b),\bb y^*(\bb b), \bb \beta_E, \bb u_{-E})\right),
\end{equation*} 
and supported on $(\bm b,\bm \beta_E, \bm u_{-E})\in\{1,\ldots, n\}^n \times \mathbb{R}^{|E|}_{\bb s_E}\times[-1,1]^{p-|E|}$. 
Since sampling from this density is computationally hard, we devise more efficient ways of computing the pivot using bootstrap samples.

\begin{remark1}
There are some possible modifications of the bootstrap with replacement that sample $b_i$, $i=1,\ldots, n$, from the Poisson distribution instead of multinomial \citep{poission_boot}; however, using this bootstrap after selection still requires sampling from a partly discrete distribution with constraints which is computationally hard.
\end{remark1}

\begin{remark1}
	Computing the bootstrap pivot by weighting bootstrap samples with selection probabilities (the first approach to computing the bootstrap pivot in the simple example) is hard since we do not have an efficient way of computing the selection probabilities exactly. Using the techniques of \citep{selective_bayesian} the selection probabilities can be approximated, providing an alternative way of computing the pivot. We pursue this direction in future work.
\end{remark1}

%**********************************
\subsection{Wild bootstrap sampler}
%**********************************

One approach to solving the issue of using the bootstrap samples is to use a continuous version of bootstrap, e.g.~wild bootstrap instead of Efron's bootstrap. Assume the parameter of interest is $\bm\beta_E^*$ and the corresponding target statistic $\bm T=\bar{\bm\beta}_E$.
The wild bootstrap approximates the pre-selection distribution of $\bar{\bb\beta}_E-\bb\beta_E^*=(\bb X_E^T\bb X_E)^{-1}\bb X_E^T\bb \varepsilon$ with $(\bb X_E^T\bb X_E)^{-1}\bb X_E^T\textnormal{diag}(\hat{\bb \varepsilon})\bb\alpha$, where $\hat{\bb\varepsilon}=\bb y-\bb X_E\bar{\bb\beta}_E$ are the observed residuals from fitting OLS with response $\bb y$ and the predictors $\bb X_E$ and $\bb \alpha\in\mathbb{R}^n$ are bootstrap weights. We take $\alpha_i\overset{i.i.d.}{\sim} \mathbb{H}_{\alpha}$ with density $h_{\alpha}$ and support $\textnormal{supp}(\mathbb{H}_{\alpha})$. We replace $\bm T$ with $\bm T(\bm\alpha)+\bm\theta$ in the sampling density \eqref{eq:lasso:density:plugin:clt}, where $\bm T(\bm\alpha) = \bm A^T(\bm X_E^T\bm X_E)^{-1}\bm X_E^T\textnormal{diag}(\hat{\bm\varepsilon})\bm\alpha$.
The randomization reconstruction map in the bootstrap case becomes
\begin{equation} \label{eq:lasso:random:reconstruction:boot}
\begin{aligned}
	\omega^B(\bb \alpha,\bb \beta_E, \bb u_{-E}) =\widetilde{\bm M}\bm T(\bm\alpha)+\bm B\bm\beta_E+\bm U\bm u_{-E}+\widetilde{\bm L}+\widetilde{\bm M}\bm\theta.
\end{aligned}
\end{equation}
In this case the bootstrap density on $(\bb\alpha,\bb\beta_E,\bb u_{-E})$ is proportional to
\begin{equation} \label{eq:lasso:density:bootstrap}
	\left(\prod_{i=1}^n h_{\alpha}(\alpha_i)\right) \cdot g\left(\omega^B(\bb\alpha,\bb\beta_E, \bb u_{-E})\right)
\end{equation}
and supported on $(\textnormal{supp}(\mathbb{H}_{\alpha}))^n\times \mathbb{R}^{|E|}_{\bb s_E}\times[-1,1]^{p-|E|}$. We now use some of the standard Monte Carlo techniques to sample from the density above with constraints (see Section \ref{app:sampling} for sampling details including the projected Langevin updates).

\begin{remark1} \label{rmk:choosing:boot:weights}
For the wild bootstrap without selection to be consistent, it suffices to have the mean of $\mathbb{H}_{\alpha}$ to be 0 and the variance to be 1. The condition that the skewness of $\mathbb{H}_{\alpha}$ is also 1 has been introduced by \cite{liu1988bootstrap} to improve the rate of convergence of the bootstrap distribution. In practice, we use $\mathbb{H}_{\alpha}$ to be standard normal. 
\end{remark1}

\begin{remark1}
Besides the wild bootstrap, there are other possible continuous bootstrap versions such as the Bayesian bootstrap of \cite{rubin1981bayesian}, or more generally, the weighted bootstrap of \cite{weighted_bootstrap} that can be used here but we do not pursue them in practice.
\end{remark1}

%-----------------------------------------------
\subsection{Efficient inference via weighted optimization sampler}
%-----------------------------------------------

Although we can efficiently compute the pivot for testing $\theta(\mathbb{F}_n,E)=\bm\theta$ by sampling $(\bm T,\bm \beta_E, \bm u_{-E})$ from the density in \eqref{eq:lasso:density:plugin:clt} or by sampling $(\bm\alpha,\bm\beta_E,\bm u_{-E})$ from the density in \eqref{eq:lasso:density:bootstrap}, in cases when we want to perform multiple tests at once we need to run any of these samplers separately for each test. For example, in order to provide the selective confidence intervals for all the selected coefficients ${\bm\beta}_{E,j}^*$, $j\in E$, we need to run $|E|$ samplers setting the target $\bm T$ to be each of $\bar{\bm\beta}_{E,j}$, $j\in E$.
To make this more efficient, we use the weighted optimization sampler already introduced in the simple example. By sampling the optimization variables from the selective density that fixes the data at its observed value, we can reuse the same optimization samples across different tests. This allows us to run the sampler only once while providing inference for multiple tests at once.

Computing the plugin Gaussian pivots includes the following steps.

\begin{enumerate}[leftmargin=*]
\item (Selective sampler) Given $\bm D=\bm D^{obs}$, we sample the optimization variables $(\bm \beta_E, \bm u_{-E})$ given the observed data vector come from the density proportional to
	\begin{equation}
		g(\omega(\bm D^{obs}, \bm\beta_E, \bm u_{-E}))
	\end{equation}
	with the constraints on $(\bm\beta_E,\bm u_{-E})$. Denote the samples as $(\bm\beta_E^s, \bm u_{-E}^s)$, $s=1,\ldots,S$, where $S$ is the sample size.

\item (Gaussian samples) We sample target from its pre-selection normal distribution to get samples $\bm T^s\sim\mathcal{N}(\bm 0,\widehat{\bm\Sigma}_{\bm T})$, $s=1,\ldots, S$.

\item (Importance weighting) We tilt the combined samples $(\bm T^s+\bm\theta,\bm\beta_E^s,\bm u_{-E}^s)$, $s=1,\ldots, S$, from the first and the second step using the importance sampling by weighting each of the triples $(\bm T^s+\bm\theta,\bm\beta_E^s,\bm u_{-E}^s)$ with the ratio 
	\begin{equation*}
		w(\bm T^s,\bm\beta^s_E, \bm u_{-E}^s)=\frac{g(\widetilde{\bm M}(\bm T^s+\bm\theta)+\bm B\bm\beta_E^s+\bm U\bm u_{-E}^s+\bm L)}{g(\bm M\bm D^{obs}+\bm B\bm \beta_E^s+\bm U\bm u_{-E}^s+\bm L)}
	\end{equation*}
	to compute the plugin Gaussian pivot as
	\begin{equation*}
		\sum_{s=1}^S\mathbb{I}_{\{\|\bm T^s\|_2\leq \|\bm T^{obs}-\bm\theta\|_2\}}\cdot \frac{w(\bm T^s,\bm\beta^s_E, \bm u_{-E}^s)}{\sum_{s'=1}^{S'}w(\bm T^{s'},\bm\beta^{s'}_E, \bm u_{-E}^{s'})}.
	\end{equation*}
\end{enumerate}

In case when $\bm T$ is 1-dimensional and we are interested in computing the confidence interval for $\theta(\mathbb{F}_n, E)$ as well, we need to repeat the third step across $\bm\theta\in\mathbb{R}$ values to invert the pivot. In case when we have multiple tests we need to repeat the second and the third step above. 

Computing the bootstrap pivot includes changing the second step above to use bootstrap samples $\bm T^*-\bm T$ instead of Gaussian ones. In case when $\bm T=\bar{\bm\beta}_E$, the bootstrap version becomes $\bm T^*=(\bm X_E^{*T}\bm X_E^*)^{-1}\bm X_E^{*T}y^*$, where $(\bm X^*,\bm y^*)$ consists of $n$ rows resampled with replacement from $(\bm x_i,y_i)$, $i=1,\ldots,n$.

%%%%%%%%%%%%%%%%%%%%%%%%%%%%%%%%%%%%%%%%%%%%%%%%%%%%%%
%%%%%%%%%%%%%%%%%%%%%%%%%%%%%%%%%%%%%%%%%%%%%%%%%%%%%%

%--------------------------------------------------- 
\section{General setup of randomized selective inference} \label{sec:gen:setup}
%---------------------------------------------------

We describe a general framework of randomized selective inference that is the foundation for all our seemingly more complicated  examples. We do selection by solving a standard optimization problem involving penalized loss with added linear randomization term. We then compute the bootstrapped test statistic while adjusting for selection, hence accounting for/conditioning on the fact that we looked at the selected model to choose the coefficients for which we report $p$-values and confidence intervals.

Given data $\bb S\in\mathbb{R}^{n\times m}$ with rows $\bb S_i\overset{i.i.d}{\sim}\mathbb{F}_n$, $i=1,\ldots, n$, where the distribution $\mathbb{F}_n$ has a density $f_n$, we solve the randomized model selection algorithm, introduced in  \cite{tian2015selective}, \cite{selective_sampler}, of the following form
\begin{equation} \label{eq:gen:setup:objective}
	\hat{\bm\beta}=\hat{\bm\beta}(\bm S,\bm\omega)=\underset{\bb\beta\in\mathbb{R}^p}{\textnormal{argmin}}\:\ell(\bb\beta;\bb S)+\mathcal{P}(\bb\beta)-\bb\omega^T\bb\beta+\frac{\epsilon}{2}\|\bb \beta\|_2^2, \;\;\; \bb S\times\bb\omega \sim \mathbb{F}_n\times \mathbb{G},
\end{equation}
where $\ell(\bb \beta;\bb S)$ is the loss function and $\mathcal{P}(\bb \beta)$ is a penalty term. $\bb\omega$ is the added randomization, a random variable drawn from a known distribution $\mathbb{G}$ on $\mathbb{R}^p$ with density $g$. The solution of \eqref{eq:gen:setup:objective} is $\hat{\bm\beta}=\textnormal{prox}_{\frac{1}{\epsilon}(\ell(\cdot;\bm S)+\mathcal{P}(\cdot))}\left(\frac{\bm\omega}{\epsilon}\right)$. Assuming $\ell(\cdot;\bm S)+\mathcal{P}(\cdot)$ is proper and closed convex function its proximal map exists and it is unique for all arguments.

As in the LASSO example above, before running the objective we decide how to choose the selected model $\widehat{M}(\bb S,\bb\omega)=M$ based on the solution $\hat{\bm\beta}$.
After looking at the outcome $M$, the parameter of interest $\bm \theta=\theta(\mathbb{F}_n, M)$ is chosen based on $M$. In order to have valid inference on $\bm\theta$ after selection, we need to base our inference using the post-selection distribution of the data.

The selective density on $\bm S$ and $\bm \omega$ is proportional to $\left(\prod_{i=1}^nf_n(\bm S_i)\right)\cdot g(\bm\omega)\cdot\mathbb{I}_{\{(\bm S,\bm\omega)\in \mathcal{S}_M\}}$, where $\mathcal{S}_M=\{(\bm S',\omega'):\widehat{M}(\bm S',\omega')=M\}$ is the selection event. 
Using the selective sampler of \cite{selective_sampler}, we re-parametrize $\mathcal{S}_M$ in terms of data $\bb S$ and the induced optimization variables $\bb v\in\mathbb{R}^q$, which are problem dependent.
The sub-gradient equation of \eqref{eq:gen:setup:objective} can be written as $\bb\omega = \omega(\bb S,\bb v)$ for some function $\omega$. Also, the selected model usually can be described only through the optimization variables $\bb v$, hence $M=\widetilde{M}(\bb v)$. This implies that selecting model $M$ is equivalent to having the optimization variables $\bb v$ constrained to $\mathcal{V}_M=\{\bb v'\in\mathbb{R}^q: \widetilde{M}(\bb v')=M\}\subset\mathbb{R}^q$. In selective inference problems, this constraint set generally becomes simpler than $\mathcal{S}_M$, thus the re-parametrization of the data and randomization in terms of the data and the optimization variables is extremely useful for sampling purposes.
The selective density of $(\bb S,\bb v)$ is proportional to
\begin{equation*}
	\left(\prod_{i=1}^n f_n(\bb S_i)\right) \cdot g(\omega(\bb S, \bb v))\cdot |J(\bb S,\bb v)|,
\end{equation*}
supported on $\mathbb{R}^{n\times m}\times \mathcal{V}_M$, where $J(\bb S, \bb v)$ is the Jacobian coming from the change of density. In most of the problems we consider the Jacobian is a constant hence we do not need to compute it but there are selective inference problems with non-trivial Jacobian e.g.~the group LASSO \citep{selective_sampler}. The sampling from the density above requires knowing the distribution of the data $\mathbb{F}_n$ exactly or using some asymptotic distribution. 

Assume that we have an asymptotically Gaussian test statistic $\bm T=T(\bm S)$ used for testing $\theta(\mathbb{F}_n,M)$ pre-selection, treating $M$ as non-random.
To do post-selection inference for $\theta(\mathbb{F}_n,M)$ using a chosen test-statistic $\bm T=T(\bm S)$, it suffices to have only the post-selection distribution of $\bm T$ and not of the whole $\bm S$. As seen in the LASSO example, we might be able to sample $\bm T$ directly while conditioning in the sampler on nuisance statistics. Then we can reuse either the selective sampler, the wild bootstrap sampler or the weighted optimization sampler to compute the selective pivots.

%\subsection{Bootstrap pivot}

%Without selection, the wild bootstrap approximates the distribution of $\bb S$ with the distribution of $\bb S^{*}(\bm \alpha)=\textnormal{diag}(\bb\alpha)\bb S$, where $\bb\alpha = (\alpha_1,\ldots,\alpha_n)$ and $\textnormal{diag}(\bb\alpha)$ has the elements of $\bb\alpha$ on its diagonal and zeroes elsewhere. Taking $\alpha_i\overset{i.i.d.}{\sim}\mathbb{H}_{\alpha}$ with density $h_{\alpha}$ allows us to sample $(\bb \alpha, \bb v)$ from 
%\begin{equation} \label{eq:gen:setup:bootstrap:density}
%	\left(\prod_{i=1}^n h_{\alpha}(\alpha_i)\right)\cdot g(\omega(\bb S^{*}(\bb\alpha), \bb v))\cdot |J(\bb S^{*}(\bb\alpha), \bb v)|, 
%\end{equation}
%supported on $(\textnormal{supp}(\mathbb{H}_{\alpha}))^n\times \mathcal{V}_M$. Given that we have an explicit form for the density in \eqref{eq:gen:setup:bootstrap:density}, we can easily sample form it. 

%%%%%%%%%%%%%%%%%%%%%%%%%%%%%%%%%%%%%%%%%%%%%%%%%
%%%%%%%%%%%%%%%%%%%%%%%%%%%%%%%%%%%%%%%%%%%%%%%%%
%%%%%%%%%%%%%%%%%%%%%%%%%%%%%%%%%%%%%%%%%%%%%%%%%
%%%%%%%%%%%%%%%%%%%%%%%%%%%%%%%%%%%%%%%%%%%%%%%%%

%---------------------------------------------------------
\section{Selective CLT} \label{sec:selective:clt}
%---------------------------------------------------------

We describe the content of the selective CLT, the fundamental tool used to construct valid tests after selection. After selection, we decide on testing a linear combination of the functional $\bm\mu(\mathbb{F}_n, M)=\bm\mu_n$ that depends on the unconditional data generating distribution $\mathbb{F}_n$ and the selected model $M$. In the scenario $M$ was fixed and not chosen based on data, we assume we use $\bm D_n$ as a test statistic for testing $\bm\mu_n$ based on the pre-selection asymptotic Gaussianity of $\bm D_n-\bm\mu_n$.
In the LASSO example, $\bm\mu_n=\begin{pmatrix} \bm\beta_E^* \\ \mathbb{E}_{\mathbb{F}_n}[\bm X_{-E}^T(\bm y-\bm X_E^T\bar{\bm\beta}_E^*)]\end{pmatrix}$, $\bm D_n=\bm D$ (defined in Section \ref{sec:lasso}) and thus $\bm D_n-\bm\mu_n$ is asymptotically Gaussian, treating $E$ as fixed and not chosen based on the data.
In order to construct a test statistic based on $\bm\eta^T\bm D_n$ to test the functional $\bm\eta^T\bm\mu_n$ after we select the model $M$ and choose $\bm\eta$, we use a pivot, constructed in \cite{tian2015selective}. Based on their selective CLT result, this pivot is asymptotically $\textnormal{Unif}[0,1]$ under the conditional distribution of the data, treating $M$ as random.

%----------------------------------------------------
\subsection{Pre-selection asymptotic linearity of the chosen test statistic}
%----------------------------------------------------

Let us now provide the precise setup where we apply the selective CLT. Assume our data at step $n$ consists of i.i.d.~random vectors $\bm S_i, i=1,\ldots, n$, whose distribution is non-parametric distribution denoted as $\mathbb{F}_n$. For each $n$, we take $\mathbb{F}_n\in\mathcal{F}_n$, where $\{\mathcal{F}_n:n\geq 1\}$ is a sequence of families of probability distributions (the restrictions on $\mathcal{F}_n$ will be made later). Denote the data matrix consisting of rows $\bm S_i^T$, $i=1,\ldots, n$, as $\bm S$.

After selecting model $M$, we are interested in testing the functional $H_0:\bm\mu(\mathbb{F}_n,M)=\bm\mu_n\in\mathbb{R}^p$, where $p$ fixed and does not depend on $n$ throughout. Assume the test statistics  $\bm D_n=\bm D_n(\bm S)=(D_1(\bm S),\ldots, D_p(\bm S))\in\mathbb{R}^p$, is an asymptotically linear test statistic pre-selection (treating $M$ as non-random) defined as 
\begin{equation} \label{assumption:asymptotically:linear} \tag{AL}
	\bb D_n =\frac{1}{n}\sum_{i=1}^n\bb \xi_i+o_{\mathbb{F}_n}\left(\frac{1}{\sqrt{n}}\right),
\end{equation}
where $\bb\xi_i$ are measurable with respect to $\bb S_i$, i.e.~$\bm\xi_i=\xi(\bm S_i)$ for some function $\xi$. We assume
$\mathbb{E}_{\mathbb{F}_n}[\bb\xi_i]=\bb\mu_n$, $\textnormal{Var}_{\mathbb{F}_n}(\bm\xi_i)=\bm\Sigma_n$ and the third moment of $\bm\xi_i$ is bounded uniformly for all $i=1,\ldots, n$. As a part of the \eqref{assumption:asymptotically:linear} condition, we assume that $\frac{1}{n}\sum_{i=1}^n(\bm \xi_i-\bm\mu_n)$ satisfies a uniform CLT across $\mathbb{F}_n\in\mathcal{F}_n$, i.e.
\begin{equation} \label{eq:uniform:clt:xi}
	\underset{n\rightarrow\infty}{\lim}\:\underset{\mathbb{F}_n\in\mathcal{F}_n}{\sup}\:\underset{\bm t\in\mathbb{R}^p}{\sup}\left|\mathbb{F}_n\left\{\frac{1}{\sqrt{n}}\sum_{i=1}^n(\bm\xi_i-\bm\mu_n)\leq\bm t\right\}-\mathbb{P}_{\bm G\sim\mathcal{N}_p(\bm 0,\bm\Sigma)}\{\bm G\leq \bm t\}\right|=0,
\end{equation}
for some covariance matrix $\bm\Sigma$. We keep here the subscript $n$ in $\bm D_n$ and $\bm\mu_n$ to emphasize they can change with $n$; we drop the subscript in later sections for simplicity.
The notation $o_{\mathbb{F}_n}(\cdot)$ means that the sequence converges to zero in probability at the respective rate uniformly over $\mathbb{F}_n\in\mathcal{F}_n$ treating $M$ as fixed. %In the LASSO example, this assumption is about the CLT of $\bm D-\bm\mu$, treating $E$ as fixed. Post-selection, however, the CLT in general does not hold, hence we need selective inference.
The \eqref{assumption:asymptotically:linear} assumption implies $\bb D_n$ satisfies the following uniform CLT pre-selection:
\begin{equation*} \label{assumption:clt} \tag{CLT}
	\underset{n\rightarrow\infty}{\lim}\:\underset{\mathbb{F}_n\in\mathcal{F}_n}{\sup}\:\underset{\bb t\in\mathbb{R}^p}{\sup} \left|\mathbb{F}_n\left\{\sqrt{n}\left(\bb D_n-\bm\mu(\mathbb{F}_n, M)\right)\leq \bb t\right\} -\mathbb{P}_{\bb G\sim\mathcal{N}_p(\bm 0,\bm\Sigma)}\{\bm G\leq \bb t\}\right|=0.
\end{equation*}
We prove this implication in the supplement (Lemma \ref{lemma:al:implies:clt}).
This is a CLT statement uniform across $\bb t\in\mathbb{R}^p$ and uniform across the class of distributions $\mathbb{F}_n\in\mathcal{F}_n$.

%**********************************************************

\begin{remark1}
We assume the stronger condition above holds, i.e.~\eqref{assumption:asymptotically:linear} instead of just \eqref{assumption:clt}. This allows us to use the method of \cite{chatterjee2005simple}, providing the rate of convergence as well as the asymptotic convergence. 

Assuming the sequence of parameters converges (a slightly more strict version of local alternatives), we prove a selective CLT using only the condition \eqref{assumption:clt}, convergence of the sequence of selection regions and the continuity assumption on the randomization distribution (Section \ref{app:selective:clt:different} in the supplement). This setup was used in \cite{tibshirani2015uniform} for proving the truncated Gaussian, which is non-randomized pivot, is asymptotically pivotal.
 However, this version of selective CLT does not allow for rare events and does not provide the rate of convergence. Thus we focus here on showing a selective CLT version allowing for underlying parameter $\bm\mu_n$ to be far from the selection region. 
\end{remark1}

%--------------------------------------------------------
\subsection{Linear decomposition} \label{sec:linear:decomposition}
%--------------------------------------------------------

Since we are interested in testing several linear combinations of $\bb \mu$ at once, let our vector-valued parameter of interest be $\bb A^T\bm\mu\in\mathbb{R}^a$ for some matrix $\bb A\in\mathbb{R}^{p\times a}$. Denote with $P_{\bm A}$ the projection matrix onto the column space of $\bm A$ and let $P_{\bm A}^{\perp}=\bm I_p-\bm P_{\bm A}$.
Denoting also $\bb\Sigma_{\bb A}=\bb A^T\bb\Sigma\bb A$, $\bb C=\bb\Sigma \bb A\bb\Sigma_{\bb A}^{-1}$, $\bm D_{\bb A}=\left(\bb I_p-\bb C\bb A^T\right)\bb D$,
we have the decomposition $\bm D=\bm D_{\bb A}+\bm C\bb A^T\bm D.$
Note that in the case $\bb D$ is normally distributed $\mathcal{N}_p\left(\bm\mu,\bm\Sigma/n\right)$ with known $\bm\Sigma$ and unknown $\bm\mu$ and if $\bb A^T\bb\mu$ is the parameter of interest the sufficient statistic for the nuisance parameters $P_{\bm A}^{\perp} \bm \mu$ would be $\bm D_{\bb A}$. 
%The idea of how to construct the pivot is taken from the exponential families and later proved to work in non-parametric settings as well.

We assume the model selection event is based on $\bb D$ and satisfies the affine constrains. Hence the selection region at step $n$ is denoted as
\begin{equation*}
\begin{aligned}
	\textnormal{selection}(\bm D,\bb\omega)
	&:=\left\{\sqrt{n}\bm A_M\bm D+\bb\omega\in\bm H_M \right\}
	%&=\left\{\bm A_M\left(\sqrt{n}\left(\bb T -\bb\mu\right)\right)+\bm A_M\sqrt{n}\bm\mu+\bb\omega\in\bm H_M\right \} \\
	%&=\left\{\sqrt{n}\bm A_M\bm T_{\bb A}+\bm A_M\bb C\left(\sqrt{n}\left(\bb A^T\bm T-\bm A^T\bm\mu\right)\right)+\sqrt{n}\bm A_M\bm C\bm A^T\bm\mu+\bb\omega\in\bm H_M\right\},
\end{aligned}
\end{equation*}
where $\bm A_M\in\mathbb{R}^{d\times p}$, $\bb\omega\in\mathbb{R}^d$ is a random variable representing the added randomization and $\bm H_M\subset\mathbb{R}^d$. Assume $\bm\omega\sim\mathbb{G}$, where $\mathbb{G}$ is a known distribution in $\mathbb{R}^d$ specified by the user. Further assume the randomization $\bm\omega$ is independent of the data $\bm S$. Also, $d$ is assumed to be fixed and does not depend on $n$. Denoting $\bm\Delta=\bm\Delta(\mathbb{F}_n)=\sqrt{n}\bm\mu$ and $\bb Z=\sqrt{n}\left(\bb D-\bb\mu\right)$, we write the selection region also as
\begin{equation*}
\begin{aligned}
	\textnormal{selection}(\bm D,\bb\omega)
	&=\left\{\bm A_M(\bm Z+\bm{\Delta})+\bb\omega\in\bm H_M\right\}\\
	&=\left\{\bm A_M\bm Z_{\bb A}+\bm A_M \bm C\bm A^T\bm Z+\bm A_M\bm\Delta+\bb\omega\in\bm H_M\right\},
\end{aligned}
\end{equation*}
where $\bm Z_{\bb A}=\left(\bb I_p-\bb C\bb A^T\right)\bm Z$.

To introduce the pivot from \cite{tian2015selective}, let us first define a survival function
\begin{equation*}
\begin{aligned}
	&\mathcal{P}^G\left(t,\bm G_{\bb A},\bb A^T\bm\mu\right)
	:=\Phi_n\times\mathbb{G}\left\{\|\bm\Sigma_{\bm A}^{-1/2}\bb A^T(\sqrt{n}(\bm G-\bm\mu))\|_2\geq t\:\Big|\:\textnormal{selection}(\bm G,\bb\omega), \bm G_{\bb A}\right\} \\
	&=\frac{\int\limits_{\|\bm\Sigma_{\bm A}^{-1/2}\bm A^T\bm\tau\|_2\geq t}\mathbb{G}\{\bm\omega:\sqrt{n}\bm A_M\bm G_{\bm A}+\bm A_M\bm C\bm\tau+\sqrt{n}\bm A_M\bm C\bm A^T\bm\mu+\bm\omega\in\bm H_M\}\phi_{(\bm 0,\bm\Sigma_{\bm A})}(\bm\tau)d\bm\tau}{\int\limits_{\mathbb{R}^a}\mathbb{G}\{\bm\omega:\sqrt{n}\bm A_M\bm G_{\bm A}+\bm A_M\bm C\bm\tau+\sqrt{n}\bm A_M\bm C\bm A^T\bm\mu+\bm\omega\in\bm H_M\}\phi_{(\bm 0;\bm\Sigma_{\bm A})}(\bm\tau)d\bm\tau},
\end{aligned}
\end{equation*}
where the probability on the RHS is under $\bm G\sim\mathcal{N}_p\left(\bb\mu,\bb\Sigma/n\right):=\Phi_n$ and under $\bb\omega\sim\mathbb{G}$. The conditioning in the definition of the pivot is on the selection event for $(\bm G, \bb\omega)$ and the value of $\bm G_{\bb A}=\left(\bb I_p-\bb C\bb A^T\right)\bm{G}$, which is the sufficient statistic for the nuisance parameters assuming our parameter of interest is $\bb A^T\bm{\mu}$. Note that computing this pivot requires knowing $\bb \Sigma$. We will assume the variance $\bb\Sigma$ is known. However, having uniformly consistent estimate of variance would suffice (see \cite{tian2015selective}).

Turning to the distribution of the constructed pivot, let us denote the selective distribution of the data $\bm S$ as $\mathbb{F}_n^*$, which is the marginal distribution of the data $\bm S$ where the joint distribution $(\bm S,\bm \omega)\sim\mathbb{F}_n^n\times\mathbb{G}$ is conditional on $\textnormal{selection}(\bm D,\bb\omega)$. Denote the class of the selective distributions of the data given the selection event corresponding to model $M$ as $\mathcal{F}_n^*$ (defined also in \eqref{eq:intro:selective:model}). All the guarantees given will be under $\mathbb{F}_n^*$, uniformly over the class $\mathcal{F}_n^*$.
Further, denote the conditional distribution of $\bm G$ as $\Phi_n^*$, which is the marginal distribution of $\bm G$ conditional on the event $\textnormal{selection}(\bm G,\bb\omega)$. It is not hard to see that $\mathcal{P}^G\left(\sqrt{n}\bb A^T(\bm G-\bm\mu),\bm G_{\bb A},\bb A^T\bm\mu\right)$ is exactly distributed as $\textnormal{Unif}[0,1]$ under this conditional distribution of $\bm G$ (Lemma 7 in \cite{tian2015selective}). Hence, under normality assumptions, the constructed pivot is exact and in what follows we state the non-parametric results.

\begin{remark1}
This construction of the pivot in exponential families is a standard construction in selective inference \citep{fithian2014optimal}. Under a CLT, the limiting family pre-selection is a Gaussian exponential family, hence its selective counterpart is a Gaussian exponential family subject to selection. This family is used to construct the pivotal quantity. Indeed, the main feature of selective inference is that under many interesting scenarios the limiting model is {\em not} Gaussian but a Gaussian model subject to selection. This observation suggests natural test statistics derived under the limiting Gaussian selective model. Much work is often known about the Gaussian model pre-selection, and this generally transfers to the selective model. 
\end{remark1}

\begin{remark1}
We elaborate what the assumptions above become in the LASSO example in Section \ref{sec:lasso:additional} in the supplement. Most notably, we show that the selection event of the LASSO is asymptotically affine and the data vector $\bm D$ is asymptotically linear pre-selection.
\end{remark1}

%---------------------------------------------------------------
\subsection{Selective CLT under Lipschitz randomization}
%---------------------------------------------------------------

We show the constructed pivot $\mathcal{P}^G$ is asymptotically $\textnormal{Unif}[0,1]$. An important result leading to that goal is that the non-parametric and Gaussian selective likelihood ratios are close.
The non-parametric selective likelihood ratio is defined as
\begin{equation*}
\ell_{\mathbb{F}_n}(\bm D)=\frac{d\mathbb{F}_n^*}{d\mathbb{F}_n}(\bm D):=\frac{\mathbb{G}\{\textnormal{selection}(\bm D,\bb\omega)\}}{(\mathbb{F}_n\times\mathbb{G})\{\textnormal{selection}(\bm D',\bb\omega)\}}
\end{equation*}
and similarly the Gaussian one $\ell_{\Phi_n}(\bm D)$.
In order to show the results, we need the following assumptions.

%**************************************************

\begin{itemize}[leftmargin=*]
\item Uniformly bounded MGF of $\mathbb{F}_n$ in some neighborhood of zero: Precisely, assume there exists $\tau>0$ and a constant $C_{\tau}$ such that 
\begin{equation} \label{assumption:mgf} \tag{MGF}
\underset{n\geq 1}{\sup} \underset{\mathbb{F}_n\in\mathcal{F}_n}{\sup}\mathbb{E}_{\bm s\sim\mathbb{F}_n}\left[e^{\tau\|\xi(\bm s)-\bm\mu\|_1}\right]\leq C_{\tau}.
\end{equation}

\item The norm of matrix $\bb A_M$ does not grow with $n$, i.e.~there exists a finite constant $C_M$ such that 
	 \begin{equation} \label{assumption:norm:A_M} \tag{NA}
	   \underset{n\rightarrow\infty}{\limsup}\|\bm A_M\|_{h,2}\leq C_M<\infty,
	 \end{equation} 
	 where the matrix norm $\|\cdot\|_{h,2}$ is defined as $\|\bm A_M\|_{h,2}=\underset{\bm u\in\mathbb{R}^p}{\sup}\frac{\left\|\bm A_M\bm u\right\|_h}{\|\bm u\|_2}$.
	 
\item We assume the randomization $\bm\omega\in\mathbb{R}^d$ comes from a distribution $\mathbb{G}$ on $\mathbb{R}^d$ with the density $g$ satisfying $g(\bm\omega)=\exp(-\tilde{g}(\bm\omega))/C_g$ with $\tilde{g}$ having bounded derivatives up to order at least three (call the bound on all these derivatives $K_g$), where $C_g$ is normalization constant.
More precisely, for any multi-index $\bm\alpha\in\mathbb{N}_0^d$ with $\|\bm\alpha\|_1\leq 3$ we have
\begin{equation} \label{assumption:g:smoothness} \tag{S}
 \left|\partial_{\bm\omega}^{\bm\alpha}\tilde{g}(\bm\omega)\right|\leq K_g
\end{equation}
for some constant $K_g$. This implies the Lipschitz property of $\tilde{g}$,
\begin{equation} \label{assumption:g:lip} \tag{Lip}
	|\tilde{g}(\bm x)-\tilde{g}(\bm y)|\leq K_g \|\bm x-\bm y\|_h, \;\forall \bm x,\bm y \in\mathbb{R}^d, 
\end{equation}
where $\|\cdot\|_h$ is some norm in $\mathbb{R}^d$. Note that in particular that the logarithm of Gaussian density fails to satisfy the above assumptions as its  gradient is unbounded over $\mathbb{R}^d$. Hence, one of the main properties we require is the the logarithm of the density is globally Lipschitz. Bounded second and third derivatives are required in our proof though these assumptions could likely
be relaxed.
\end{itemize}
%**************************************************
Under the assumptions above, including \eqref{assumption:asymptotically:linear}, we show the pivot $\mathcal{P}^G$ (constructed for estimating a linear functional $\bb\eta^T\bb\mu$) is asymptotically $\textnormal{Unif}[0,1]$ in the non-parametric setting under $\bm S\sim\mathbb{F}_n^*$. Before stating the result about the plugin Gaussian pivot, we state a result showing that Gaussian and non-parametric selective likelihood ratios, denoted as are asymptotically close.
The non-asymptotic (finite $n$ versions) of both results showing  $O(1/\sqrt{n})$ rate of convergence are in Lemma \ref{lemma:lip:randomization:lik:diff} and Theorem \ref{thm:selective:clt:finite:n:lipschitz} in the supplement with all the proofs (Section \ref{sec:proofs:selective:clt:lip:randomization}).

\begin{corollary}[Selective likelihood ratios] \label{corr:lip:randomization:lik:diff}
 Assuming \eqref{assumption:asymptotically:linear},  \eqref{assumption:mgf}, \eqref{assumption:norm:A_M} and \eqref{assumption:g:smoothness},
Gaussian and non-parametric selective likelihood ratios are asymptotically close: 
\begin{equation*}
	\underset{n\rightarrow\infty}{\lim}\:\underset{\mathbb{F}_n\in\mathcal{F}_n}{\sup}\:\mathbb{E}_{\mathbb{F}_n}\left[\left|\ell_{\mathbb{F}_n}(\bm D)-\ell_{\Phi_n}(\bm D)\right|\right]=0.
\end{equation*}
\end{corollary}

%********************************************************
\begin{corollary}[Lipschitz randomization: Selective CLT] \label{corr:selective:clt:lipschitz} Assuming \eqref{assumption:asymptotically:linear},  \eqref{assumption:mgf}, \eqref{assumption:norm:A_M} and \eqref{assumption:g:smoothness}, we have
\begin{equation} \label{eq:selective:clt:lipschitz} \tag{P}
	\underset{n\rightarrow\infty}{\lim} \: \underset{\mathbb{F}_n^*\in\mathcal{F}_n^*}{\sup}\:\underset{t\in[0,1]}{\sup} \left|\mathbb{F}_n^*\left\{\mathcal{P}^G\left(\sqrt{n}(\bb A^T\bm D-\bb A^T\bm\mu),\bm D_{\bb A},\bb A^T\bm\mu\right)\right\} - t\right|=0.
\end{equation}
\end{corollary}
%********************************************************

Note that the result above holds without any assumptions on $\bm \mu$, making the inferential procedures robust to rare selection events. The rare selection events are the ones for which $\bm\mu$ can be far from the selection region meaning that $d_h(\bm 0,\bm H_M-\sqrt{n}\bm A_M\bm\mu(\mathbb{F}_n))$, for some norm $\|\cdot\|_h$ and the corresponding distance $d_h(\cdot,\cdot)$ induced by this norm, can be arbitrarily large and might possibly change with $n$.
\cite{tian2015selective} show the result in \eqref{eq:selective:clt:lipschitz} for only 1-dimensional parameters, i.e $\bm A \in \mathbb{R}^{p \times 1}$ under more assumptions. They have our result of the selective likelihoods being close (Lemma \ref{lemma:lip:randomization:lik:diff} in the supplement) as an assumption. Further, they assume local alternatives, meaning 
\begin{equation} \label{assumption:local:alt} \tag{LA}
	d_h\left(\bm 0,\bm H_M-\sqrt{n}\bm A_M\bm\mu(\mathbb{F}_n)\right)\leq B,
\end{equation}
 where $B$ is a constant. This condition does not allow for the underlying parameter to be far from the selection region, hence the selection events cannot have low probability (cannot be rare).

%%%%%%%%%%%%%%%%%%%%%%%%%%%%%%%%%%%%%%%%%%%%%%%%%%%%%%%%%%%%%%%
%%%%%%%%%%%%%%%%%%%%%%%%%%%%%%%%%%%%%%%%%%%%%%%%%%%%%%%%%%%%%%%
%%%%%%%%%%%%%%%%%%%%%%%%%%%%%%%%%%%%%%%%%%%%%%%%%%%%%%%%%%%%%%%
%%%%%%%%%%%%%%%%%%%%%%%%%%%%%%%%%%%%%%%%%%%%%%%%%%%%%%%%%%%%%%%

%-------------------------------------------------
\subsection{Selective CLT with Gaussian randomization} \label{sec:selective:clt:gaussian:randomization}
%-------------------------------------------------

Gaussian randomization occurs naturally in some problems of interest, i.e.~data carving (Section \ref{sec:data:carving}), a more powerful version of data splitting. In this case, we use the data and the random split to construct the randomization that is asymptotically normal. The data carving problem becomes a randomized inference problem with Gaussian randomization. We present a result showing the validity of the corresponding pivot, i.e.~we state the selective CLT result for the Gaussian randomization. 
Assuming a slightly weaker condition than the local alternatives, that the parameter value is at the distance at most growing like $o(\log n)$ from the selection region, and sub-Gaussian data we obtain a selective CLT result for the Gaussian randomization that we now describe in more detail.

\begin{remark1} 
As our results for Gaussian randomization require some form of local alternatives, they are less robust to rare selection events than corresponding heavier tailed randomizations. While we have tried to remove such assumptions for Gaussian randomization, our current method of proof requires such assumptions. However, in the non-randomized setting (which can be thought of as randomization with a degenerate distribution) local alternatives are necessary. See \cite{tian2015selective} for a univariate example in which the pivotal quantity fails to converge weakly. Even though there seem to be differences between Gaussian and logistic randomization in the assumptions needed for selective CLT to hold, empirically, as we will see later in the implementation results, both work fine in practice.
\end{remark1}

We take the randomization distribution to be Gaussian in $d$ dimensions
\begin{equation} \label{assumption:gaussian:randomization} \tag{G}
	\mathbb{G}=\mathcal{N}_d(\bm 0, c\bm I_d),
\end{equation}
for some constant variance $c$. 
Assuming the matrix $\bm A$ is of size $d\times 1$ and the variance of $\bm A^T\bm Z=:T$ is 1, the pivot becomes
\begin{equation*}
	\frac{\int_{\bm H_M}\bar{\Phi}\left(\sqrt{\frac{\|\bm v\|_2^2+c}{c}}T-\frac{\bm v^T\bm w}{\sqrt{c(\|\bm v\|_2^2+c)}}\right)\cdot\exp\left(-\frac{1}{2}\bm w^T(\bm v\bm v^T+c\bm I)^{-1}\bm w\right)d\bm \omega'}{\int_{\bm H_M}\exp\left(-\frac{1}{2}\bm w^T(\bm v\bm v^T+c\bm I)^{-1}\bm w\right)d\bm\omega'},
\end{equation*}
where $\bm v=\bm A_M\bm C$ and $\bm w=\bm\omega'-\bm A_M\bm z_{\bm A}-\bm A_M\bm \Delta$ (derived in Section \ref{sec:smoothness:lemmas:gaussian:randomization} in the supplement).
%***************************************************
In order to prove the above test statistic is uniform $[0,1]$ asymptotically, we make the following additional assumptions here.

\begin{itemize}[leftmargin=*]

\item The selection region is rectangular $\bm H_M=\prod_{i=1}^n[b_i,\infty)$ for some vector $\bm b\in\mathbb{R}^d$.

\item The distance of the parameter to the selection region is not growing too fast with the sample size:
	\begin{equation} \label{assumption:LA:weaker} \tag{LA-w}
		d(\bm 0, \bm H_M-\bm A_M\bm\Delta)=o(\log n).
	\end{equation}

\item The random variables coming from $\mathbb{F}_n$ are sub-Gaussian, i.e.~there exists $b>0$ such that 
	\begin{equation} \label{assumption:subG} \tag{subG}
	\underset{n\geq 1}{\sup} \underset{\mathbb{F}_n\in\mathcal{F}_n}{\sup}\mathbb{E}_{\bm s\sim\mathbb{F}_n}\left[e^{b\|\xi(\bm s)-\bm\mu\|_2^2}\right]\leq C_b.
	\end{equation}

\end{itemize}

%*********************************************************

We are ready to state the selective CLT for the Gaussian randomization, which is a corollary of the Theorem \ref{thm:selective:clt:gaussian:randomization} proved in Section \ref{sec:proofs:selective:clt:gaussian:randomization} in the supplement. The result relies on two important lemmas, Lemma \ref{lemma:lik:smoothness:gaussian:randomization} and Lemma \ref{lemma:pivot:smoothness:gaussian:randomization} in Section \ref{sec:proofs:selective:clt:gaussian:randomization} in the supplement, establishing the ``smoothness'' of the likelihood and the pivot.

%**********************************************************

\begin{corollary}[Gaussian randomization: Selective CLT] \label{thm:selective:clt:gaussian} Assuming  \eqref{assumption:asymptotically:linear},  \eqref{assumption:LA:weaker}, \eqref{assumption:gaussian:randomization}, \eqref{assumption:norm:A_M} and \eqref{assumption:subG}, we have
\begin{equation} \label{eq:selective:clt:gaussian}
	\underset{n\rightarrow\infty}{\lim} \: \underset{\mathbb{F}_n^*\in\mathcal{F}_n^*}{\sup}\:\underset{t\in[0,1]}{\sup} \left|\mathbb{F}_n^*\left\{\mathcal{P}^G\left(\sqrt{n}(\bb A^T\bm D-\bb A^T\bm\mu),\bm D_{\bb A},\bb A^T\bm\mu\right)\right\} - t\right|=0.
\end{equation}
\end{corollary}

%%%%%%%%%%%%%%%%%%%%%%%%%%%%%%%%%%%%%%%%%%%%%%%%%%%%%%%%%%%%%
%%%%%%%%%%%%%%%%%%%%%%%%%%%%%%%%%%%%%%%%%%%%%%%%%%%%%%%%%%%%%
%%%%%%%%%%%%%%%%%%%%%%%%%%%%%%%%%%%%%%%%%%%%%%%%%%%%%%%%%%%%%

%---------------------------------------------------------------
\section{Consistency after selection} \label{sec:consistency}
%---------------------------------------------------------------

%%%%%%%%%%%%%%%%%%%%%%%%%%%%%%%%%%%%%%%%%%%%%%%%%%%%%%%%%%%%%%%%%%%%%%%%%%%%%%%%%%%%%%%%%%%%%%%%%%%%%%%%%%%%%%%%%%%%%%%%%%%%%%%%%%%%%%%%%%%%%%%%%%%%%%%%%%%%%%%%%%%%%%%%%%%%%%%%%%%%%%%%

The main result of this section is that assuming a sequence of  estimators is consistent for zero pre-selection (under $\mathbb{F}_n$ and treating $M$ as fixed), we also have that the sequence is consistent for zero post-selection (under $\mathbb{F}_n^*$). This result, important by itself, is also used later in showing the consistency of the bootstrap pivot.

Using the notation introduced so far, the selective likelihood ratio in our problem becomes
\begin{align*}
\ell_{\mathbb{F}_n}(\bm t)
	%&=\frac{d\mathbb{F}_n^*(\bm t)}{d\mathbb{F}_n(\bm t)}=\frac{\mathbb{G}\left\{\bb\omega\in\mathbb{R}^d:(\bm t,\bb\omega) \textnormal { selected}\right\}}{\left(\mathbb{F}_{n}\times\mathbb{G}\right) \left\{(\bm Z,\bb\omega)\in\mathbb{R}^p\times\mathbb{R}^d:(\bm Z,\bb\omega) \textnormal{ selected}\right\}}\\
	%&=	\frac{\mathbb{G}\left\{\bb\omega:\bm A_M\bm z+\bm A_M\bm\Delta+\bb\omega\in\bm H_M \right\}}{\left(\mathbb{F}_n\times\mathbb{G}\right)\left\{(\bm Z,\bb\omega):\bm A_M\bm Z+\bm A_M\bm\Delta+\bb\omega\in\bm H_M\right\}} \\
	=\frac{\mathbb{G}\left\{\bm H_M-\bm A_M\bm z-\bm A_M\bm\Delta\right\}}{ \mathbb{E}_{\bm Z\sim\mathbb{F}_n}\left[\mathbb{G}\left\{\bm H_M-\bm A_M\bm\Delta-\bm A_M\bm Z\:\big|\:\bm Z\right\}\right] }, \;\; \bm z=\sqrt{n}(\bm t-\bm\mu).
\end{align*}
The following lemma is providing an integrable upper bounded on the selective likelihood ratio and is used to show the consistency result.
%-----------------------------------------------------------------
\begin{lemma}[Upper bound on the selective likelihood ratio] \label{lemma:upper:bound}
Using the \eqref{assumption:g:lip} assumption, we have
$\ell_{\mathbb{F}_n}(\bm t)\leq\frac{e^{K_g\|\bm A_M\bm z\|_h}}{\mathbb{E}_{\mathbb{F}_n}\left[e^{-K_g\|\bm A_M\bm Z\|_h}\right]}$.
\end{lemma}
%-----------------------------------------------------------------
%%%%%%%%%%%%%%%%%%%%%%%%%%%%%%%%%%%%%%%%%%%%%%%%%%%%%%%%%%%%%%%%%%%
%%%%%%%%%%%%%%%%%%%%%%%%%%%%%%%%%%%%%%%%%%%%%%%%%%%%%%%%%%%%%%%%%%%
The following result shows that if a sequence of estimators is consistent for zero under the original distribution of the data ($\bm S\sim\mathbb{F}_n$) then that sequence is also consistent for zero under the conditional distribution ($\bm S\sim\mathbb{F}_n^*$). 
%The proof is similar to the proof of Lemma 1 in \cite{tian2015selective}, hence the assumptions of the Lemma
%may be thought of as a way to verify the integrability assumptions of that result.
The lemma is used later in proving the consistency of the bootstrap pivot under the conditional distribution. The proofs of the results in this section are in Section \ref{app:consistency} in the supplement.

%------------------------------------------------------------
\begin{lemma} [Consistency]  \label{lemma:consistency} 
Assume \eqref{assumption:clt}, \eqref{assumption:g:lip} and \eqref{assumption:norm:A_M} hold. Let $f_n(\bm D)$ be a sequence of estimators such that 
\begin{equation*}
	f_n(\bm D) \overset{\mathbb{F}_{n}}{\rightarrow} 0
\end{equation*}
as $n\rightarrow\infty$ uniformly across $\mathbb{F}_n\in\mathcal{F}_n$, i.e.~for every $\delta>0$, $\underset{n\rightarrow\infty}{\lim}\underset{\mathbb{F}_n\in\mathcal{F}_n}{\sup}\mathbb{F}_n\left\{f_n(\bm D)>\delta\right\}=0.$
Then we also have
\begin{equation*}
	f_n(\bm D) \overset{\mathbb{F}_n^*}{\rightarrow} 0
\end{equation*}
as $n\rightarrow\infty$ uniformly across $\mathbb{F}_n\in\mathcal{F}_n$, i.e.~for every $\delta>0$, $\underset{n\rightarrow\infty}{\lim}\underset{\mathbb{F}_n^*\in\mathcal{F}_n^*}{\sup}\mathbb{F}_{n}^*\left\{f_n(\bm D)>\delta\right\}=0$.
\end{lemma}
%-------------------------------------------------------------

Note that this result is about the post-selection consistency of a sequence of estimators for a parameter (we can easily go from estimating zero to estimating a parameter in Lemma \ref{lemma:consistency}). In this context, the result of \cite{leeb2006can} is about impossibility of estimating the post-selection distribution of $f_n(\bm D)$. In the simple example (Section \ref{sec:simple:example}), the above lemma says that since $\bar{y}\rightarrow\mu$ as $n\rightarrow\infty$ pre-selection by SLLN, we have that $\bar{y}\rightarrow\mu$ as $n\rightarrow\infty$ post-selection as well. The result of \cite{leeb2006can} says that we cannot estimate the conditional CDF of $\bar{y}$ post-selection. As we discussed in Section \ref{sec:simple:example}, we can go around this issue and still construct valid confidence intervals for $\mu$.

%%%%%%%%%%%%%%%%%%%%%%%%%%%%%%%%%%%%%%%%%%%%%%%%%%%%%%%%%
%%%%%%%%%%%%%%%%%%%%%%%%%%%%%%%%%%%%%%%%%%%%%%%%%%%%%%%%%
%%%%%%%%%%%%%%%%%%%%%%%%%%%%%%%%%%%%%%%%%%%%%%%%%%%%%%%%%
%%%%%%%%%%%%%%%%%%%%%%%%%%%%%%%%%%%%%%%%%%%%%%%%%%%%%%%%%

%-------------------------------------------------------------
\section{Bootstrap after selection} \label{sec:bootstrap}
%-------------------------------------------------------------

We introduce the bootstrap pivot, $\mathcal{P}^B$, for testing $\bm{\bb A}^T\bm\mu$ after selection and prove it is asymptotically $\textnormal{Unif}[0,1]$, thus valid post-selection and can be used for inference. In order to prove the consistency of the bootstrap pivot, we assume the asymptotic uniformity of the plugin Gaussian pivot, i.e.~the result of the selective CLT holds for plugin Gaussian pivot, and the local alternatives. Additionally, we assume the consistency of the bootstrap pre-selection (treating the model as fixed and not chosen based on data) and having a consistent estimate of the variance. 

Before introducing the bootstrap pivot, let us denote as $\widehat{\mathbb{F}}_n$ the bootstrapped distribution of $\bm S$ and let $\widehat{\mathbb{E}}_n$ denote the corresponding expectation with respect to $\widehat{\mathbb{F}}_n$. $\widehat{\mathbb{F}}_n$ can be empirical distribution of $\bb S$ but not necessarily since we can also use the wild bootstrap, weighted, etc. Denote a bootstrap sample as $\bm S^*=(\bm S_1^*,\ldots, \bm S_n^*)$ with $\bm S_i^*\overset{i.i.d}{\sim}\widehat{\mathbb{F}}_n$, $i=1,\ldots,n$. For short, denote $\bm D^*=\bm D(\bm S^*)$, the test statistic $\bm D$ computed on the bootstrapped data. Define the bootstrapped selection region as
\begin{equation*}
\begin{aligned}
	\textnormal{selection}^*&(\bm D^*,\bm D,\bb \omega) \\
	& := \left\{\sqrt{n}\bm A_M \widehat{\bm D}_{\bb A}+\bm A_M\widehat{\bm C}\left(\sqrt{n}(\bm A^T\bm D^*-\bm A^T\bm D)\right)+\sqrt{n}\bm A_M\widehat{\bm C}\bm A^T\bm\mu+\bb\omega\in\bm H_M \right\}\\
	&= \left\{\sqrt{n}\bm A_M \widehat{\bm D}_{\bb A}+\bm A_M\widehat{\bm C}\bm A^T\bm Z^*+\sqrt{n}\bm A_M\widehat{\bm C}\bm A^T\bm\mu+\bb\omega\in\bm H_M \right\},
\end{aligned}
\end{equation*}
where $\widehat{\bm D}_{\bb A}=(\bb I_p-\widehat{\bb C}\bb A^T)\bb D$ and $\widehat{\bm C}=\widehat{\bb\Sigma}\bb A(\bb A^T\widehat{\bb\Sigma}\bb A)^{-1}$ for some estimate $\widehat{\bm\Sigma}$ of $\bm\Sigma$. Given the bootstrapped selection event, the bootstrapped version of the pivot $\mathcal{P}^G$ is defined as
\begin{equation} \label{eq:bootstrapped:stat}
	\mathcal{P}^B=\mathcal{P}^B\left(\sqrt{n}\left\|\bb A^T\bm D-\bb A^T\bm\mu\right\|_2,\bm D,\bm A^T\bm\mu\right),
\end{equation}
where the function $\mathcal{P}^B(t,\bb D,\bb A^T\bb\mu)$, $t\in \mathbb{R}$, is defined as a survival function of $\left\|\sqrt{n}\bb A^T(\bb D^*-\bb D)\right\|_2$ after bootstrapped selection event and conditional on the data $\bb D$:
\begin{equation*}
\begin{aligned}
	&\mathcal{P}^B\left(t,\bm D,\bb A^T\bm\mu\right):=(\widehat{\mathbb{F}}_n\times\mathbb{G})\left\{ \left\|\bb A^T(\sqrt{n}(\bm D^*-\bm D))\right\|_2\geq t\:\big|\:\textnormal{selection}^*(\bm D^*,\bm D,\bb\omega), \bm D\right\} \\
	%&=\frac{(\widehat{\mathbb{F}}_n\times\mathbb{G})\left\{\left\|\bm{A}^T(\sqrt{n}(\bm D^*-\bm D))\right\|_2\geq t, \textnormal{selection}^*(\bm D^*,\bm D,\bb\omega)\:\big|\:\bm{T}\right\}}{(\widehat{\mathbb{F}}_n\times\mathbb{G})\left\{\textnormal{selection}^*(\bm D^*,\bm D,\bb\omega)\:\big|\:\bm D\right\}} \\
	&=\frac{\left(\widehat{\mathbb{F}}_n\times\mathbb{G}\right)\left.\left\{\left\|\bm A^T\bm Z^*\right\|_2 \geq t, \sqrt{n}\bm A_M\widehat{\bm D}_{\bb A}+\bm A_M\widehat{\bm C}\bm A^T\bm Z^*+\sqrt{n}\bm A_M\widehat{\bm C}\bm A^T\bm\mu+\bm\omega\in\bm H_M\right|\bm D\right\} }{ \left(\widehat{\mathbb{F}}_n\times\mathbb{G}\right)\left\{\sqrt{n}\bm A_M\widehat{\bm D}_{\bb A}+\bm A_M\widehat{\bm C}\bb A^T\bm Z^*+\sqrt{n}\bm A_M\widehat{\bm C}\bm A^T\bm\mu+\bb\omega \in \bm H_M\:\Big|\:\bm D\right\}}.
\end{aligned}
\end{equation*}
Conditioning on $\bb A^T\bm D^*$, which is equivalent to conditioning on $\bb A^T\bm Z^*$, we get
\begin{equation*}
\begin{aligned}
	&\mathcal{P}^B\left(t,\bm D,\bb A^T\bm\mu\right)\\	
	%&= \frac{\widehat{\mathbb{E}}_n\left[\mathbb{E}_{\bb\omega\sim\mathbb{G}}\left[\mathbb{I}_{\left\{\left\|\bb A^T\bm Z^*\right\|_2\geq t\right\}}\mathbb{I}_{\left\{\bb\omega\in\bm H_M-\sqrt{n}\bm A_M\widehat{\bm T}_{\bb A}-\bm A_M\widehat{\bm C}\bb A^T\bm Z^*-\sqrt{n}\bm A_M\widehat{\bm C}\bb A^T\bm\mu\right\}}\:\Big|\:\bm T, \bb A^T\bm T^*\right]\right] }{ \widehat{\mathbb{E}}_n\left[\mathbb{E}_{\bb\omega}\left[\mathbb{I}_{\left\{\bb\omega\in \bm H_M-\sqrt{n}\bm A_M\widehat{\bm T}_{\bb A}-\bm A_M\widehat{\bm C}\bm A^T\bm Z^*-\sqrt{n}\bm A_M\widehat{\bm C}\bb A^T\bm\mu\right\}}\:\Big|\:\bm T,\bb A^T\bm T^*\right]\right]} \\
	&=\frac{\widehat{\mathbb{E}}_n\left[\mathbb{I}_{\left\{\left\|\bb A^T\bm Z^*\right\|_2\geq t\right\}}\mathbb{G}\left\{\bm H_M-\sqrt{n}\bm A_M\widehat{\bm D}_{\bb A}-\bm A_M\widehat{\bm C}\bb A^T\bm Z^*-\sqrt{n}\bm A_M\widehat{\bm C}\bb A^T\bm\mu\:\Big|\: \bm D,\bb A^T\bm D^*\right\}\right] }{ \widehat{\mathbb{E}}_n \left[\mathbb{G}\left\{\bm H_M-\sqrt{n}\bm A_M\widehat{\bm D}_{\bb A}-\bm A_M\widehat{\bm C}\bb A^T\bm Z^*-\sqrt{n}\bm A_M\widehat{\bm C}\bb A^T\bm\mu\:\Big|\:\bm D,\bb A^T\bm D^*\right\}\right]}.
\end{aligned}
\end{equation*}

In the simple example (Section \ref{sec:simple:example}) we have described a way to compute the bootstrap pivot using standard sampling with replacement bootstrap.
Assume we had a (fast and accurate) way of computing probabilities of selection for a given data vector, i.e. $\mathbb{G}\left\{\textnormal{selection}(\bb D,\bb\omega)|\bb D\right\}$, as a function of $\bb D$. Then we could compute the bootstrap pivot in a usual way using the standard sampling with replacement bootstrap, where each bootstrap sample is weighted with these selection probabilities. To be precise about this computation, let us denote the bootstrapped samples $\bb D^{*b}$, $b=1, \ldots, B$. Then the proposed bootstrapped statistic would be computed as the following ratio
\begin{equation*}
	\frac{\frac{1}{B}\sum_{b=1}^B p^{*}(\bb D,\bb D^{*b})\mathbb{I}_{\left\{\sqrt{n}\left\|\bm A^T(\bm D^{*b}-\bm D)\right\|_2\geq \sqrt{n}\left\|\bb A^T(\bm D-\bm\mu)\right\|_2 \right\}} }{\frac{1}{B}\sum_{b=1}^B p^{*}(\bb D,\bb D^{*b})},
\end{equation*}
where
\begin{equation*}
\begin{aligned}
	&p^{*}(\bb D,\bb D^*) = \mathbb{G}\left\{\textnormal{selection}^*(\bb D^*, \bb D,\bb\omega)\big|\bb D^*,\bb D\right\} \\
	&=\mathbb{G}\left\{\bm{H}_M-\sqrt{n}\bm{A}_{M}\widehat{\bm D}_{\bb A}-\bm A_M\widehat{\bm C}\left(\sqrt{n}\bb A^T(\bm D^{*}-\bm D)\right)-\sqrt{n}\bm A_M\widehat{\bm C}\bb A^T\bm\mu\:\Big|\:\bm D,\bm D^{*}\right\}.
\end{aligned}
\end{equation*}
In the simple example in Section \ref{sec:simple:example}, we had an explicit form for $p^*$. However, in more complicated examples it is computationally expensive to get good approximations of $p^*$, thus we turn to the wild bootstrap and MC techniques to compute this pivot. 

% Using the Chernoff or barrier approximation of \cite{selective_bayesian} for the selection probabilities we can compute the approximations of $p^*(T, T^{*b})$ for all $b$ and weight the bootstrap samples accordingly to compute the pivot. Exploring this option computationally will be part of the future work. 

%%%%%%%%%%%%%%%%%%%%%%%%%%%%%%%%%%%%%%%%%%%%%%%%%%%
%%%%%%%%%%%%%%%%%%%%%%%%%%%%%%%%%%%%%%%%%%%%%%%%%%%
%**************************************************

In addition to the assumptions on the randomization and the local alternatives assumption, we need the asymptotic result of selective CLT to hold (the result \eqref{eq:selective:clt:lipschitz}), the consistency of estimated variance and the consistency of bootstrap pre-selection as stated.

\begin{itemize}[leftmargin=*]
\item Uniform consistency of the variance estimate: for any $\delta>0$, the variance estimate $\widehat{\bm\Sigma}$ of $\bm\Sigma$ satisfies
	\begin{equation} \label{assumption:variance} \tag{Var}
	   	\underset{n\rightarrow\infty}{\lim}\underset{\mathbb{F}_n\in\mathcal{F}_n}{\sup}\mathbb{F}_n\left\{\|\widehat{\bm\Sigma}-\bm\Sigma\|_2>\delta \right\}=0.
	\end{equation}
	This implies that for any $\delta>0$
	\begin{equation*}
		\underset{n\rightarrow\infty}{\lim}\underset{\mathbb{F}_n\in\mathcal{F}_n}{\sup}\mathbb{F}_n\left\{\|\widehat{\bb\Sigma}_{\bb A}-\bb \Sigma_{\bb A}\|_2>\delta \right\}=0 \;\;\textnormal{ and }\;\;
		\underset{n\rightarrow\infty}{\lim}\underset{\mathbb{F}_n\in\mathcal{F}_n}{\sup}\mathbb{F}_n\left\{\|\widehat{\bm C}-\bm C\|_2>\delta \right\}=0,
	\end{equation*}
	where $\widehat{\bb \Sigma}_{\bb A}$ and $\widehat{\bm C}$ are the corresponding quantities computed based on $\widehat{\bm\Sigma}$.
	
%-------------------------------------------------------------

\item Uniform consistency of bootstrap before selection: for any $\delta>0$, we have
	\begin{equation} \label{assumption:boot:consistency} \tag{B}
		\underset{n\rightarrow\infty}{\lim}\:\underset{\mathbb{F}_n\in\mathcal{F}_n}{\sup}\mathbb{F}_n\left\{ \underset{\bb t\in\mathbb{R}^a}{\sup}\left|\widehat{\mathbb{F}}_n\left\{\sqrt{n}\bb A^T(\bm D^*-\bm D)\leq\bb t \right\} -\mathbb{P}_{\bb G}\left\{\widehat{\bb \Sigma}_{\bb A}^{1/2}\bb G\leq \bb t \:\big|\:\bm D\right\}\right|>\delta \right\}=0,
	\end{equation}
	where $\bb G\sim\mathcal{N}_a\left(\bb 0,\bb I_a\right)$ given $\bm D$, i.e.
	\begin{equation*}
		\underset{\bm t\in\mathbb{R}^a}{\sup}\left|\widehat{\mathbb{F}}_n\left\{\sqrt{n}\bb A^T(\bm D^*-\bm D)\leq \bb t \right\} -\mathbb{P}_{\bb G}\left\{\widehat{\bb\Sigma}_{\bb A}^{1/2} \bb G\leq \bb t |\bm D\right\}\right|=o_{\mathbb{F}_n}(1)
	\end{equation*}
	uniformly over $\mathbb{F}_n\in\mathcal{F}_n$. In the LASSO example, this assumption means that treating $E$ as fixed we have that the bootstrap for $\bar{\bm\beta}_E-\bm\beta_E$ is consistent. This holds under moment conditions on the data $(\bm X_E,\bm y_E)$ \citep{freedman1981bootstrapping}.
\end{itemize}

%%%%%%%%%%%%%%%%%%%%%%%%%%%%%%%%%%%%%%%%%%%%%%%%%%%%%
%%%%%%%%%%%%%%%%%%%%%%%%%%%%%%%%%%%%%%%%%%%%%%%%%%%%%
%%%%%%%%%%%%%%%%%%%%%%%%%%%%%%%%%%%%%%%%%%%%%%%%%%%%%
%%%%%%%%%%%%%%%%%%%%%%%%%%%%%%%%%%%%%%%%%%%%%%%%%%%%%

The following theorem proves the bootstrap statistic $\mathcal{P}^B$ is asymptotically distributed as $\textnormal{Unif}[0,1]$ under the conditional distribution $\mathbb{F}_n^*$ of the data, uniformly across the class of distributions $\mathcal{F}_n^*$. The proof of this result is in Section \ref{app:proofs:boot} in the supplement.

%---------------------------------------------------------------
\begin{theorem}[Bootstrap after selection] \label{thm:boot}
Under the assumptions \eqref{assumption:boot:consistency}, \eqref{assumption:local:alt}, \eqref{assumption:g:lip}, \eqref{assumption:norm:A_M}, \eqref{eq:selective:clt:lipschitz} and \eqref{assumption:variance}  the following holds
\begin{equation*}
	\underset{n\rightarrow\infty}{\lim}\:\underset{\mathbb{F}_n^*\in\mathcal{F}^*}{\sup}\:\underset{t\in[0,1]}{\sup}\left|\mathbb{F}_n^*\left\{\mathcal{P}^B\left(\sqrt{n}\bb A^T(\bm D-\bm\mu), \bm D,\bb A^T\bm\mu\right)\leq t\right\} -t\right|=0.
\end{equation*}	
\end{theorem}
%---------------------------------------------------------------

%%%%%%%%%%%%%%%%%%%%%%%%%%%%%%%%%%%%%%%%%%%%%%%%%%%%%%%
%%%%%%%%%%%%%%%%%%%%%%%%%%%%%%%%%%%%%%%%%%%%%%%%%%%%%%%
%%%%%%%%%%%%%%%%%%%%%%%%%%%%%%%%%%%%%%%%%%%%%%%%%%%%%%%

%----------------------------------------------------
\section{Multiple views of the data} \label{sec:mv:general}
%----------------------------------------------------

We present a way to do inference after running several randomized model selection procedures and choosing a parameter of interest upon looking at the outcomes of all of them.
In this setting, an analyst runs several model selection procedures on the same data and choses a parameter of interest upon seeing the outcomes of all of the model selection procedures. We note that this target of interest need not agree exactly with the results of any model selection procedure -- the data analyst can use their own expertise to choose a final parameter of interest but is allowed access to the results of the model selection procedure before choosing their parameter of interest.
We present a general sampling framework, which we then illustrate with two examples: randomized forward-stepwise and multiple runs of randomized $\ell_1$-penalized logistic regression.

Given data $\bm S\sim\mathbb{F}_n$ at step $n$, assume we solve $K$ randomized model selection procedures (views) of the following form
\begin{equation} \label{eq:mv:general:objective:k}
	\underset{\bm\beta}{\textnormal{minimize}}\; F_k(\bm\beta, \bm S)-\bm\omega^T_k\bm\beta, \;\; (\bm S,\bm\omega_k)\sim \mathbb{F}_n\times\mathbb{G}_k,
\end{equation}
where $F_k$ is some function, e.g.~$F_k(\bm \beta, \bm S)=\ell_k(\bm\beta;\bm S)+\mathcal{P}_k(\bm\beta)+\frac{\epsilon_k}{2}\|\bm\beta\|_2^2$, for some loss function $\ell_k$, penalty $\mathcal{P}_k$ and $\epsilon_k>0$ is added to make sure the solution exists. $\bm\omega_k$ is the added randomization variable distributed from $\mathbb{G}_k$ with density $g_k$. 
Each view defines a set of KKT conditions from solving \eqref{eq:mv:general:objective:k}, giving the randomization reconstruction equation $\bm\omega_k = \partial_{\bm\beta}F_k(\bm\beta, \bm S)$. Based on the solution of the objective in \eqref{eq:mv:general:objective:k} we choose a model $M_k$ for which we want to report inference.
Then, we write the randomization map write in terms of the optimization variables, denoted as $\bm V_k\in\mathbb{R}^{n_k}$ and a variable $\bm D_k=D_k(\bm S)\in\mathbb{R}^{p_k}$ that is only a function of the data $\bm S$. Precisely, we write the KKT conditions for $k$-th view as 
 \begin{equation*}
 	\bm\omega_k = \omega_k(\bm D_k,\bm V_k)
 \end{equation*}
 with the constraint $\bm V_k\in\mathcal{V}_k\subset\mathbb{R}^{n_k}$.
 
In order to do inference post-selection, we need the marginal distribution of the data $\bm S$ conditional on the event that running all the $K$ procedures for independent $\bm S\sim\mathbb{F}_n$ and $(\bm\omega_1,\ldots,\bm\omega_K)\sim \prod_{k=1}^K\mathbb{G}_k$ gives the same sequence of models $(M_1,\ldots, M_K)$. To get that distribution, we sample $\bm S$ and the optimization variables $(\bm V_1,\ldots,\bm V_K)$ from the density
 \begin{equation} \label{eq:mv:density:whole:data}
 	f_n(\bm S)\cdot\prod_{k=1}^K \left(g_k\left(\omega_k(D_k(\bm S), \bm V_k)\right)\cdot  J_k(\bm V_k)\right)
 \end{equation}
 restricted to $\bm V_k\in\mathcal{V}_k$, $k=1,\ldots K$, where $f_n$ is the density of $\mathbb{F}_n$ and $J_k$ is the Jacobian coming from the change of density at $k$-th view. Depending on the inferential goal, we do not necessarily sample the whole data $\bm S$ but a function of it.

Given the models selected at each of the $K$ views, an analyst decides to test a parameter $H_0: \theta\left(\mathbb{F}_n, \{M_k\}_{k=1}^K\right)=\bm\theta\in\mathbb{R}^a$ using a target test statistic $\bm T=T(\bm S)$. We now describe how to simplify the density in \eqref{eq:mv:density:whole:data} so that we only sample $\bm T$ instead the whole $\bm S$. Assume that there is a CLT holding pre-selection
\begin{equation*}
	\begin{pmatrix}
		\bm T \\ \bm D_k \end{pmatrix} \rightarrow\mathcal{N}_{a+p_k}\left(\begin{pmatrix}\bm\theta \\ \bm\mu_k \end{pmatrix},\begin{pmatrix}
			\bm\Sigma_{\bm T} & \bm\Sigma_{\bm T,\bm D_k} \\ \bm\Sigma_{\bm D_k,\bm T} & \bm\Sigma_{\bm D_k}
		\end{pmatrix} \right)
\end{equation*}
as $n\rightarrow\infty$ for all $k=1,\ldots, K$. Denoting the respective covariance estimates as $\widehat{\bm\Sigma}_{\bm D_k,\bm T}$ and $\widehat{\bm\Sigma}_{\bm T}$, we treat $\bm F_k=\bm D_k - \widehat{\bm\Sigma}_{\bm D_k,\bm T}\widehat{\bm\Sigma}_{\bm T}^{-1}\bm T$ 
as fixed in the sampler. Hence, we write the randomization reconstruction at step $k$ only as a function of $\bm T$ and $(\bm V_1,\ldots,\bm V_K)$ and the plugin CLT sampling density of these variables becomes proportional to
\begin{equation*}
	\phi_{(\bm\theta,\widehat{\bm\Sigma}_{\bm T})}(\bm T)\cdot\prod_{i=1}^K \left(g_k(\omega_k(\bm F_k+\widehat{\bm\Sigma}_{\bm D_k, \bm T}\widehat{\bm\Sigma}_{\bm T}^{-1}\bm T,\bm V_k))\cdot J_k(\bm V_k)\right)
\end{equation*}
with the constrains $(\bm V_1, \ldots,\bm V_K)\in\prod_{k=1}^K\mathcal{V}_k$.

%------------------------------------------------------------

The computational approaches for the multiple views of the data represent generalizations of the sampling methods presented for the LASSO (Section \ref{sec:lasso}).
\begin{itemize}[leftmargin=*]
\item (Weighted optimization sampler) We sample the optimization variables $(\bm V_1,\ldots, \bm V_K)$ from the selective density above fixing the data at its observed value and proceed similarly as in the simple example and the LASSO example.

\item (Wild bootstrap sampler) Using the wild bootstrap, $\bm T$ gets replaced by $T(\bm\alpha)+\bm\theta$, where $T(\bm\alpha)$ is consistent for $\mathcal{N}(\bm 0,\bm\Sigma_{\bm T})$. The sampling density of the bootstrapped weights $\bm\alpha\in\mathbb{R}^n$ and the optimization variables $\bm V_1,\ldots,\bm V_K$ becomes 
\begin{equation*}
		\left(\prod_{i=1}^n h_{\alpha}(\alpha_i)\right)\times\prod_{k=1}^K\left(g_k\left(\omega_k(\bm F_k+\widehat{\bm\Sigma}_{\bm D_k, \bm T}\widehat{\bm\Sigma}_{\bm T}^{-1}(\bm T(\bm\alpha)+\bm\theta),\bm V_k)\right)\cdot J_k(\bm V_k)\right).
\end{equation*}
with the constraints on the bootstrap weights $\bm\alpha\in\textnormal{supp}(\mathbb{H}_{\alpha})^n$ and the optimization variables as before $(\bm V_1,\ldots, \bm V_K)\in\prod_{k=1}^K\mathcal{V}_k$.

\end{itemize}

%%%%%%%%%%%%%%%%%%%%%%%%%%%%%%%%%%%%%%%%%%%%%%%%%%%%%%%%%%%%%%%%%%%%%%%%%%%%%%%%%%%%%%%%%%%%%%%%%%%%%%%%%%%%%%%%%%%%%%%%%%%%%%%%%%%%%%%%%%%%%%%%%%%%%%%%%%%%%%%%%%%%%%%%%%%%%%%%%%%%%%%%%%%%%%%%%%%%%%%%%%%%%%%%%%%%%%%%%%%%%%%%%%%%%%%%%%%%%%%%%%%%%%%%%%%%%%%%%%%%%%%%

%----------------------------------------------
\subsection{Forward stepwise} \label{sec:forward:stepwise}
%----------------------------------------------

We revise the randomized forward stepwise example from \cite{selective_sampler}, which is a special case of a Kac-Rice test \citep{kac_rice}, and describe how the sampling here works once we specify the target of inference.

%Our data consists of the response vector $\bb y\in\mathbb{R}^n$ and the design matrix $\bb X\in\mathbb{R}^{n\times p}$ with rows $\bb x_i^T$, $i=1,\ldots, n$, where $(\bb x_i, y_i)\overset{i.i.d.}{\sim} \mathbb{F}_n$, $i=1,\ldots, n$, with density $f_{\bb x,y}$. Let us scale $\bb X$ by $\frac{1}{\sqrt{n}}$. We perform forward stepwise algorithm with the total number of steps fixed at $K$ to choose $K$ predictors.

The data generating mechanism on $(\bm X,\bm y)$ is as in the above examples. In the $K$ steps of forward stepwise, the selection event is characterized by a sequence of indices $\bb j=(j_1,\ldots, j_K)$ with their corresponding signs $\bb s=(s_1,\ldots, s_K)$ that enter the model in that particular order, forming an active set at step $K$. Denote the active set at step $k$ as $E_k=\{j_1,j_2,\ldots, j_k\}$ for all $k=1,\ldots, K$.
At the $k$-th step the randomized forward stepwise solves the following program
\begin{equation} \label{eq:fs:objective:original}
	\underset{\bb \eta\in \mathcal{B}_k}{\textnormal{maximize}}\:\bb\eta^T \left(\bb X_{-E_{k-1}}^TP_{E_{k-1}}^{\perp} \bb y+\bb \omega_k\right), \;\; (\bb X,\bb y)\times\bb\omega_k\sim\mathbb{F}_n^n\times\mathbb{G}_k,
\end{equation}
where $\mathcal{B}_k=\{\bb \eta \in\mathbb{R}^{p-k+1}: \|\bb \eta\|_{1}\leq 1\},$ and $P_{E_{k-1}}^{\perp}\bm y$ is the residual left after projecting $\bm y$ onto $\bm X_{E_{k-1}}$.

Denoting the solution of \eqref{eq:fs:objective:original} at the $k$-th step as $\hat{\bb\eta}_k$, the selection event of interest is given by conditioning on the sign and the index on the nonzero coordinate of $\hat{\bb\eta}_k$ for each $k=1,\ldots, K$. We want to sample from the density of the data and the randomization conditional on this selection event. The randomization reconstruction map for the $k$-th step, from the sub-gradient equation is given by
\begin{equation*}
	\omega_k(\bb y,\bb z_k)=-\bb X_{-E_{k-1}}^TP_{E_{k-1}}^{\perp} \bb y+\bb z_k,
\end{equation*}
where, subdifferential $\bb z_k\in \mathbb{R}^{p-k+1}$ from the $k$-th step is restricted to the normal cone
$\bb z_k\in \partial I_{\mathcal{B}_k}(\hat{\bb \eta}_k)$ (see \cite{selective_sampler}).
In order to sample $((\bb X,\bb y),\bb\omega_1,\ldots, \bb\omega_K)$ from the selective density, we sample $((\bb X,\bb y), \bb z_1,\ldots, \bb z_K)$ from the density proportional to
\begin{equation} \label{eq:fs:density}
	\left(\prod_{i=1}^n f_n (\bb x_i,y_i)\right) \cdot \prod_{k=1}^K g_k\left(\bb z_k-\bb X_{-E_{k-1}}^TP_{E_{k-1}}^{\perp} \bb y\right),
\end{equation}
supported on $\mathbb{R}^{n\times p}\times\mathbb{R}^n \times \prod_{k=1}^K \partial I_{\mathcal{B}_k}(\hat{\bb\eta}_k)$.

After doing $K$ steps of forward stepwise, an analyst looks at the sequence $\{E_k\}_{k=1}^K$ and chooses model $E$ in whichever way she wants. The goal is to inference for the population OLS parameters $\bb\beta_E^*$. 
As in the LASSO example, we simplify the sampling above since we are interested in testing a particular parameter. First note that
$\bb X_{-E_{k-1}}^T P_{E_{k-1}}^{\perp}\bb y$ can be expressed as $\bb Q_k\bb X^T\bb y$, where
$\bb Q_k= \left[\begin{matrix}
\bb X_{-E_{k-1}}^T\bb X_{E_{k-1}}(\bb X_{E_{k-1}}^T \bb X_{E_{k-1}})^{-1} & -\bb I_{p-(k-1)}	\end{matrix} \right].$
Using the asymptotic normality of $\bb D=\begin{pmatrix} \bar{\bb \beta}_E \\ \bb X_{-E}^T(\bb y-\bb X_E\bar{\bb\beta}_E) \end{pmatrix}$, the sampling density of $(\bb D, \bb z_1,\ldots, \bb z_K)$ is proportional to
\begin{equation*}
	\phi_{(\bm\mu_{\bm D},\bm\Sigma_{\bm D})}(\bm D)\cdot \prod_{k=1}^K g_k \left(\bb z_k +\bm M_k \bm D
		\right),
\end{equation*}
and supported on $\mathbb{R}^p\times\prod_{k=1}^K \partial I_{\mathcal{B}_k}(\hat{\bb\eta}_k)$, where $\bm M_k = -\bm Q_k\begin{pmatrix} \bm X_E^T\bm X_E &\bm 0\\ \bb X_{-E}^T\bm X_E& \bm I_{p-|E|}\end{pmatrix}$.

If we want to test $H_0:\bm A^T\bm\beta_E^*$, we use the decomposition as in the previous examples to get that the sampling density of $\bm T=\bm A^T\bar{\bm\beta}_E$ along with optimization variables is proportional to 
\begin{equation} \label{eq:fs:density:plugin:clt}
	\phi_{(\bm\theta,\bm\Sigma_{\bm T})}(\bm T) \cdot \prod_{k=1}^K g_k \left(\bb z_k+\bm M_k\bm F+\widetilde{\bm M}_k\bm T\right)
\end{equation}
and supported on $\mathbb{R}^K\times \prod_{k=1}^K \partial I_{\mathcal{B}_k}(\hat{\bb\eta}_k)$, where $\widetilde{\bm M}_k = \bb M_k \widehat{\bm\Sigma}_{\bm D,\bm T}\widehat{\bm\Sigma}_{\bm T}^{-1}.$ To do inference we can reuse any of the samplers above; the sampling details including Langevin Monte Carlo updates are in Section \ref{app:sampling} in the supplement.

%%%%%%%%%%%%%%%%%%%%%%%%%%%%%%%%%%%%%%%%%%%%%%%%%%%%%%%%%%%%%%%%%%%%%%%%%%%%%%%%%%%%%%%%%%%%%%%%%%%%%%%%%%%%%%%%%%%%%%%%%%%%%%%%%%%%%%%%%%%%%%%%%%%%%%%%%%%%%%%%%%%%%%%%%%%%%%%%%%%%%%%%

%---------------------------------------------------------------
\subsection{Multiple views for GLM loss and group LASSO penalty} \label{sec:mv:glm}
%---------------------------------------------------------------

%Assume $(\bm x_i,\bm y_i)\overset{i.i.d.}{\sim}\mathbb{F}_n$, $i=1,\ldots, n$ as before. Given a sequence of randomization variables $\{\bm\omega_k\}_{k=1}^K$, each distributed from $\mathbb{G}_k$ on $\mathbb{R}^p$ and independent of $(\bm X,\bm y)$, 
We do a total of $K$ model selection procedures (or views / queries on the data), each solving the following optimization problem
\begin{equation} \label{eq:mv:glm:objective:kth:view}
	\hat{\bm\beta}_k = \underset{\bm\beta\in\mathbb{R}^p}{\textnormal{arg min}}\: \ell_k(\bm\beta;(\bm X,\bm y))-\bm\omega^T_k\bm\beta +\mathcal{P}_k(\bm\beta) +\frac{\epsilon_k}{2}\|\bm\beta\|_2^2, \;\;\; (\bm X,\bm y)\times\bm\omega_k\sim \mathbb{F}^n_n\times\mathbb{G}_k,
\end{equation}
where $\ell_k$ is a loss function and $\mathcal{P}_k$ a penalty term for the $k$-th view, $k=1,\ldots, K$. For the logistic loss function the above loss becomes
\begin{equation*}
	\ell_k(\bm\beta;(\bm X,\bm y))=-\sum_{i=1}^n \left(y_i\log\pi(\bm x_i^T\bm\beta)+(1-y_i)\log(1-\pi(\bm x_i^T\bm\beta))\right),
\end{equation*}
where $\pi(x)=\frac{e^x}{1+e^x}$, and the group LASSO penalty above becomes $\mathcal{P}_k(\bm\beta) = \sum_{g\in G_k}\lambda_{k,g}\|\bm\beta_g\|_2,$
where $G_k$ defines the partition $\cup_{g\in G_k}g =\{1,\ldots, p\}$ and $\lambda_{k,g}$ are the group weights. We use $g$ to denote both a group and its elements. This example has been done in detail in \cite{selective_sampler} for Gaussian loss and one view of the data. We write the selective plugin CLT density before presenting the bootstrap in this case.

Let us introduce some more notation before writing KKT conditions. Denote with $A_k$ the set of active groups selected by solving the above procedure and with $E_k$ the set of active variables selected, i.e.~$E_k=\cup_{g\in A_k}g$. Hence, we write $\hat{\bm\beta}_k = \begin{pmatrix}
	\hat{\bm\beta}_{k,E_k} \\ \bm 0 \end{pmatrix}$, where 
	in the interchangeable notation $(\hat{\bm\beta}_{k,g})_{g\in A_k}=\hat{\bm\beta}_{k,E_k}$. Now let $\bar{\bm\beta}_{E_k}\in\mathbb{R}^{|E_k|}$ denote the MLE including only the variables in $E_k$ ($\bm X_{E_k}$ with rows $\bm x_{i,E_k}$, $i=1, \ldots, n$). Define the data vector $\bm D_k=\begin{pmatrix} \bar{\bm\beta}_{E_k} \\ \bm X_{-E_k}^T(\bm y-\pi_{E_k}(\bar{\bm\beta}_{E_k})) \end{pmatrix}$ and the following quantities
\begin{equation*}
\begin{aligned}
	\pi_{E_k}(\bm\beta) &= \frac{\exp(\bm X_{E_k}\bm\beta)}{1+\exp(\bm X_{E_k}\bm\beta)}, \;\;
	W_{E_k}(\bm\beta) = \textnormal{diag}\left(\pi_{E_k}(\bm\beta)(1-\pi_{E_k}(\bm\beta)\right) \\
	Q_{E_k}(\bm\beta) &= \bm X_{E_k}^T W_{E_k}(\bm\beta)\bm X_{E_k},\;\;
	C_{E_k}(\bm\beta) = \bm X_{-E_k}^T W_{E_k}(\bm\beta)\bm X_{E_k},\;\;
	I_{E_k}(\bm\beta) = C_{E_k}(\bm\beta)Q_{E_k}^{-1}(\bm\beta)
\end{aligned}
\end{equation*}
for any $\bm\beta\in\mathbb{R}^{|E_k|}$.

Conditioning on the active directions, denoted as
$\bm u_{k,g}=\frac{\hat{\bm\beta}_{k,g}}{\|\hat{\bm\beta}_{k,g}\|_2}$, $g\in A_k$, we have $\hat{\bm\beta}_{k,g} = \gamma_{k,g}\bm u_{k,g}$, where $\gamma_{k,g}=\|\hat{\bm\beta}_{k,g}\|_2$. The sub-gradient equation at the $k$-th view is an affine function in terms of the data vector $\bm D_k$ and the optimization variables $(\gamma_{k,g})_{g\in A_k}$, $\gamma_{k,g}\in\mathbb{R}$, and $(\bm z_{k,h})_{h\in -E_k}$, $\bm z_{k,h}\in\mathbb{R}^{|h|}$, restricted to
\begin{equation} \label{eq:constrains:inactive:and:norm}
	\gamma_{k,g}> 0,\;\; \forall g\in A_k,\;\; \textnormal{ and }\;\;  \|\bm z_{k,h}\|_2\leq \lambda_{k,h},\;\; \forall h\in -A_k.
\end{equation}
In a compact form, we write the randomization reconstruction as
\begin{equation*}
	\bm\omega_k=\bm M_k\bm D_k + \bm\Gamma_k(\gamma_{k,g})_{g\in A_k} + \bm Z_k(\bm z_{k,h})_{h\in -E_k}+\bm L_k,
\end{equation*}
where $\bm M_k\in\mathbb{R}^{p\times p}$, $B_k\in\mathbb{R}^{p\times|E_k|}$, $\bm Z_k\in\mathbb{R}^{p\times(p-|E_k|)}$, $\bm L_k\in\mathbb{R}^p$ denote
\begin{equation*}
\begin{aligned}
	\bm M_k &= -\begin{pmatrix}
	Q_{E_k}(\bar{\bm\beta}_{E_k}) & \bm 0 \\ C_{E_k}(\bar{\bm\beta}_{E_k}) & \bm I_{p-|E_k|}
	\end{pmatrix},\;\;\bm\Gamma_k = \begin{pmatrix} Q_{E_k}(\bar{\bm\beta}_{E_k})+\epsilon \bm I_{|E_k|} \\ C_{E_k}(\bar{\bm\beta}_{E_k}) \end{pmatrix} \textnormal{diag}((\bm u_{k,g})_{g\in A_k}), \\
	\bm Z_k &= \begin{pmatrix} \bm 0 \\ \bm I_{p-|E_k|} \end{pmatrix},\;\;\bm L_k = \begin{pmatrix}
		(\lambda_{k,g}\bm u_{k,g})_{g\in A_k} \\ \bm 0
	\end{pmatrix}.
\end{aligned}
\end{equation*}

After looking at the outcomes of all $K$ views, i.e.~the sets $E_1,\ldots, E_K$, an analyst decides on a model $E$. It could be that the analyst decides on $E=\cup_{k=1}^K E_k$ the union of all active variables across the views but not necessarily. If we are interested in testing $H_0:\bm A^T\bm\beta_E^*=\bm\theta$, we use the same target statistic and decomposition as in the previous examples. Focusing on the data part pre-selection, there is a CLT for all $k=1,\ldots, K$:
\begin{equation*}
	\begin{pmatrix} \bm T \\ \bm D_k \end{pmatrix} \overset{d}{\rightarrow} \mathcal{N}_{a+p}\left(\begin{pmatrix} \bm\theta \\ \bm\mu_k \end{pmatrix}, \begin{pmatrix} \bm\Sigma_{\bm T} & \bm\Sigma_{\bm T,\bm D_k} \\ \bm\Sigma_{\bm D_k,\bm T} & \bm\Sigma_{\bm D_k} \end{pmatrix}\right)
\end{equation*}
as $n\rightarrow\infty$. Decomposing $\bm D_k$ in terms of $\bm T$, we fix the quantities
$\bm F_k = \bm D_k - \widehat{\bm\Sigma}_{\bm D_k,\bm T}\widehat{\bm\Sigma}_{\bm T}^{-1}\bm T$ in the sampler.
The randomization reconstruction at the $k$-th view becomes
\begin{equation} \label{eq:randomization:reconstruction}
	\bm\omega_k=\widetilde{\bm M}_k \bm T+\bm\Gamma_k(\gamma_{k,g})_{g\in A_k}+\bm Z_k(\bm z_{k,h})_{h\in -A_k}+\widetilde{\bm L}_k,
\end{equation}
where $\widetilde{\bm M}_k = \bm M_k\widehat{\bm\Sigma}_{\bm D_k,\bm T}\widehat{\bm \Sigma}_{\bm T}^{-1}$, $\widetilde{\bm L}_k = \bm L_k+\bm M_k\bm F_k$ and the optimization variables $(\gamma_{k,g})_{g\in A_k}$ and $(\bm z_{k,h})_{h\in -A_k}$ are restricted as in \eqref{eq:constrains:inactive:and:norm}.
The plugin CLT selective density in terms of the target $\bm T$ and the optimization variables $(\gamma_{k,g})_{g\in A_k}$ and $(\bm z_{k,h})_{h\in -A_k}$, $k=1,\ldots, K$, is proportional to 
\begin{equation} \label{eq:mv:glm:target:density}
	\phi_{(\bm\theta,\widehat{\bm\Sigma}_{\bm T})}(\bm T)\cdot\prod_{k=1}^K \left(g_k(\widetilde{\bm M}_k\bm T+\bm\Gamma_k(\gamma_{k,g})_{g\in A_k}+\bm Z_k(\bm z_{k,h})_{h\in -A_k}+\tilde{\bm L}_k) \cdot J_k\right)
\end{equation}
and restricted to \eqref{eq:constrains:inactive:and:norm}, where 
$J_k$ is the Jacobian coming from the change of density at view $k$. The sampling details are in Section \ref{app:sampling} in the supplement.

%%%%%%%%%%%%%%%%%%%%%%%%%%%%%%%%%%%%%%%%%%%%%%%%%%%%%%%%%%%%%%%%%%%%%%%%%%%%%%%%%%%%%%%%%%%%%%%%%%%%%%%%%%%%%%%%%%%%%%%%%%%%%%%%%%%%%%%%%%%%%%%%%%%%%%%%%%%%%%%%%%%%%%%%%%%%%%%%%%%%%%%%%%%%%%%%%%%%%%%

%-------------------------------------------------------------
\section{Data splitting revisited} \label{sec:data:carving}
%-------------------------------------------------------------

Classical data splitting uses a random subsample of the data to choose the model and the leftover of the data to do inference for the selected coefficients. Hence, a part of the data (called the first stage data) is used only for model selection.
In this work, we also select a model based on a part of the data but we use the whole data for inference by conditioning only on the model selected in the first stage. While the classical data splitting conditions on all of the first stage data, we condition only on the model selected using the first stage data, conducting the inference using the leftover data together with the leftover information in the first stage data (after using it to select the model). This idea, called \textit{data carving}, of reusing the leftover information from the first stage has been introduced in \cite{fithian2014optimal}. We provide the computational tools to do data carving efficiently in non-parametric settings.

Let us now describe the setup and the procedure in detail. Given data matrix $\bm S\in\mathbb{R}^{n\times m}$, we resample $n_1$ of its rows randomly without replacement and denote this part of the data as $\bm S_1\in\mathbb{R}^{n_1\times (p+1)}$, where $n_1\leq n$ is given. Assume pre-selection $\bm S\sim\mathbb{F}^n_n$, i.e.~the rows are sampled i.i.d.~from $\mathbb{F}_n$.
Denote the full loss as $\ell(\bm\beta;\bm S)$ and the loss of the subsampled data as $\ell_1(\bm\beta;\bm S_1)$. The standard data splitting uses only the data $\bm S_1$ to select the model 
by solving an optimization problem of the form
\begin{equation} \label{eq:objective:split:data}
	\underset{\bm\beta}{\textnormal{minimize}}\:\frac{1}{\rho}\ell_1(\bm\beta;\bm S_1)+\mathcal{P}(\bm\beta),
\end{equation}
where $\rho=\frac{n_1}{n}$ and $\mathcal{P}(\bm\beta)$ is a  penalty. Denote the selected model as $M$. In data splitting, the inference is done for the coefficients corresponding to model $M$ using the leftover data  $\bm S_2\in\mathbb{R}^{(n-n_1)\times m}$, where $\bm S_2$ consists of the rows of $\bm S$ not used to form $\bm S_1$. 
Denoting $\delta(\bm\beta;\bm S) = \rho\ell(\bm\beta;\bm S)-\ell_1(\bm\beta;\bm S_1)$,
\eqref{eq:objective:split:data} is equivalent to solving
\begin{equation} \label{eq:data:splitting:objective}
	\underset{\bm\beta}{\textnormal{minimize}}\:\ell(\bm\beta;\bm S)-\frac{1}{\rho}\delta(\bm\beta;\bm S)+\mathcal{P}(\bm\beta).
\end{equation}
Interpreting the term $\frac{1}{\rho}\delta(\bm\beta;\bm S)$ as the randomization term, the problem above can be seen as an example of a randomized procedure of the type we have considered so far. In other words, we rewrite the model selection problem using the first stage data as a randomized model selection problem using the full data with the randomization being a function of both the first stage and the second stage data.

Denoting the minimizer of the above objective as $\hat{\bm\beta}$, the sub-gradient equation of \eqref{eq:data:splitting:objective} becomes
\begin{equation*}
	\bm\omega :=\frac{1}{\rho}\nabla\delta(\hat{\bm\beta};\bm S) = \nabla\ell(\hat{\bm\beta};\bm S)+\partial\mathcal{P}(\hat{\bm\beta}) =\omega(\bm D, \bm V),
\end{equation*}
where $\bm D=D(\bm S)$ is a data dependent vector and $\bm V\in\mathcal{V}$ are naturally arising optimization variables restricted to a set $\mathcal{V}$. Choosing the model $M$ is equivalent to having $\bm V\in\mathcal{V}$, hence to condition on the model it suffices that the optimization variables satisfy this constraint. Thus, to sample the data $\bm D$ post-selection, i.e.~conditional on \eqref{eq:data:splitting:objective} choosing $M$, it suffices to sample $\bm D$ and $\bm V$ from the density as follows.
Assuming that pre-selection $\bm\omega=\frac{1}{\rho}\nabla\delta(\hat{\bm\beta};\bm S)$ and $\bm D$ satisfy a CLT
\begin{equation*}
	\begin{pmatrix}\bm \omega \\ \bm D\end{pmatrix} \overset{d}{\rightarrow}\mathcal{N}_{\textnormal{dim}(\bm\omega)+\textnormal{dim}(\bm D)}\left(\bm\mu,\bm\Sigma\right)
\end{equation*}
as $n\rightarrow\infty$, we sample $(\bm D, \bm V)$ from the density proportional to the multivariate Gaussian density
\begin{equation} \label{eq:data:spliting:gaussian:denisty}
	\phi_{(\bm\mu,\bm\Sigma)}\begin{pmatrix} \omega(\bm D,\bm V) \\ \bm D \end{pmatrix}
\end{equation}
with the restriction $\bm V\in\mathcal{V}$, assuming the Jacobian is constant. In the examples later, we will choose the randomization $\bm\omega$ to be of mean zero and asymptotically independent of $\bm D$, in which case the density in \eqref{eq:data:spliting:gaussian:denisty} simplifies further as
\begin{equation} \label{eq:data:spliting:independent:randomization}
	\phi_{(\bm\mu_{\bm D},\bm\Sigma_{\bm D})}(\bm D)\cdot\phi_{(\bm 0,\bm\Sigma_{\bm\omega})}(\omega(\bm D,\bm V)),
\end{equation}
where $\bm D\overset{d}{\rightarrow}\mathcal{N}(\bm\mu_{\bm D},\bm\Sigma_{\bm D})$ and $\bm\omega\overset{d}{\rightarrow}\mathcal{N}(\bm 0,\bm\Sigma_{\bm\omega})$ as $n\rightarrow\infty$. 
Now we can reuse any of the samplers above to do inference.
%As before, if we are interested in parameter $\bm\theta=\theta(\mathbb{F})$ and have a target $\bm T$ satisfying the CLT pre-selection with mean $\bm\theta$, we decompose $\bm D$ in terms of $\bm T$ and the part of $\bm D$ independent of $\bm T$ pre-selection in the normal model. Similarly as before, we bootstrap $\bm T$ by sampling the bootstrap weights along with the optimization variables.

%%%%%%%%%%%%%%%%%%%%%%%%%%%%%%%%%%%%%%%%%%%%%%%%%%%
%%%%%%%%%%%%%%%%%%%%%%%%%%%%%%%%%%%%%%%%%%%%%%%%%%%
%%%%%%%%%%%%%%%%%%%%%%%%%%%%%%%%%%%%%%%%%%%%%%%%%%%
%%%%%%%%%%%%%%%%%%%%%%%%%%%%%%%%%%%%%%%%%%%%%%%%%%%

%-----------------------------------------------------
\subsection{GLM example}
%-----------------------------------------------------

To illustrate the idea through an example, consider $\ell_1$-penalized logistic regression. The data $\bm S$ above consists of the data matrix $\bm X\in\mathbb{R}^{n\times p}$ and a response vector $\bm y\in\mathbb{R}^n$. The first stage data is $\bm S_1=(\bm X_1, \bm y_1)$.
The model selection objective on the first stage data becomes
\begin{equation} \label{data:splitting:example:objective}
	\underset{\bm\beta}{\textnormal{minimize}}\:\ell(\bm\beta;(\bm X,\bm y))-\frac{1}{\rho}\delta(\bm\beta;(\bm X,\bm y))+\frac{\epsilon}{2}\|\bm\beta\|_2^2+\lambda\|\bm\beta\|_1.
\end{equation}
Since $\ell(\bm\beta;(\bm X,\bm y))-\frac{1}{\rho}\delta(\bm\beta;(\bm X,\bm y))=\frac{1}{\rho}\ell_1(\bm\beta;(\bm X_1,\bm y_1))$, the solution of the above optimization problem depends only on the split data. Denote the solution of \eqref{data:splitting:example:objective} as $\hat{\bm\beta}=\begin{pmatrix}	\bm \beta_E \\ \bm 0 \end{pmatrix}$. 
The KKT conditions of this problem are
\begin{equation*}
	\bm\omega=\frac{1}{\rho}\nabla\delta\begin{pmatrix}\bm\beta_E \\ \bm 0\end{pmatrix} = \nabla\ell\begin{pmatrix}\bm\beta_E \\ \bm 0\end{pmatrix}+\epsilon\begin{pmatrix}\bm\beta_E\\ \bm 0\end{pmatrix}+\lambda\begin{pmatrix}
\bm s_E \\ \bm u_{-E}\end{pmatrix}
\end{equation*}
with the constrains $\textnormal{sign}(\bm\beta_E)=\bm s_E$ and $\|\bm u_{-E}\|_{\infty}\leq 1$.
We want to sample the data vector (a function of $(\bm X,\bm y)$) and the optimization variables $(\bm\beta_E,\bm u_{-E})$ such that they satisfy the KKT conditions corresponding to \eqref{data:splitting:example:objective}.

In order to get more explicit form for the randomization $\bm\omega$, we do the Taylor expansion of both gradient of $\delta$ and the gradient of the loss $\ell$, implying 
\begin{equation*} 
	\bm \omega \approx \frac{1}{\rho}\nabla\delta\begin{pmatrix}\bm\beta_E^*\\ \bm 0 \end{pmatrix} =-\bm X^T(\bm y-\pi(\bm X_E\bm \beta_E^*))+\frac{1}{\rho}\bm X_1^T(\bm y_1- \pi(\bm X_{1,E}\bm\beta_E^*)).
\end{equation*}
A simple calculation shows the asymptotic covariance between $\bm\omega$ and $\bm D$ is zero, hence we sample from the density in \eqref{eq:data:spliting:independent:randomization}, where all the covariances are estimated using bootstrap.

%%%%%%%%%%%%%%%%%%%%%%%%%%%%%%%%%%%%%%%%%%%%%%%%%%%%%%%%%%%%%%%%%%%%%%%%%%%%%%%%%%%%%%%%%%%%%%%%%%%%%%%%%%%%%%%%%%%%%%%%%%%%%%%%%%%%%%%%%%%%%%%%%%%%%%%%%%%%%%%%%%%%%%%%%%%%%%%%%%%%%%%%

%------------------------------------------------------
\subsection{Multiple splits} \label{sec:multiple:splits}
%------------------------------------------------------

Combining the ideas from multiple spits and data splitting, we now do inference after looking at the outcomes of the model selection procedures run on different splits of the data. 

Let us in total do $K$ splits of the data, each running a model selection procedure on the split data $(\bm X_k,\bm y_k)\in\mathbb{R}^{n_k\times p}\times\mathbb{R}^{n_k}$, of size $n_k$, selecting an active set $E_k$ at each of $k=1,\ldots, K$. Let an analyst choose a model $E$ upon seeing the outcomes $E_1,\ldots, E_K$. Each of the model selection procedures can be written as 
\begin{equation} \label{eq:multiple:splits:objective}
	\ell(\bm \beta;(\bm X,\bm y))+\delta_k(\bm\beta;(\bm X, \bm y))+\mathcal{P}(\bm\beta),
\end{equation}
where $\delta_k(\bm\beta;(\bm X;\bm y))=\ell(\bm\beta;(\bm X,\bm y))-\frac{1}{\rho_k}\ell_k(\bm\beta;\bm X_k,\bm y_k))$, $\rho_k=\frac{n_k}{n}$. Let us write the KKT conditions of \eqref{eq:multiple:splits:objective} as
$\bm\omega_k = \omega_k(\bm D,\bm V_k)$,
where $\bm\omega_k = \nabla\delta_k(\bm\beta;(\bm X_k,\bm y_k))$, $\bm D$ is the data vector and $\bm V_k\in\mathcal{V}_k$ are the optimization variables constrained to a set $\mathcal{V}_k$, $k=1,\ldots, K$. Assuming that pre-selection the asymptotic covariance between any two of $\bm D$, $\bm\omega_1, \ldots, \bm\omega_K$, is zero and $\bm D-\bm\mu_{\bm D}\overset{d}{\rightarrow}\mathcal{N}_{\textnormal{dim}(\bm D)}(\bm 0,\bm\Sigma_{\bm D})$, $\bm\omega_k\overset{d}{\rightarrow}\mathcal{N}_{\textnormal{dim}(\bm\omega_k)}(\bm 0,\bm\Sigma_{\bm\omega_k})$
as $n\rightarrow\infty$, we have that the selective density of $\bm D,\bm V_1,\ldots,\bm V_K$, is
\begin{equation*}
	\phi_{(\bm\mu_{\bm D},\bm\Sigma_{\bm D})}(\bm D)\cdot\prod_{k=1}^K \phi_{(\bm 0,\bm\Sigma_{\bm\omega_k})}(\omega_k(\bm D,\bm V_k))
\end{equation*}
with the restrictions $\bm V_k\in\mathcal{V}_k$. In the case of the $\ell_1$-penalized logistic regression, the randomization reconstruction map becomes $\bm\omega_k = -\bm X^T(\bm y-\pi(\bm X_E\bm\beta_E^*))+\frac{1}{\rho}\bm X_k^T(\bm y_k-\pi(\bm X_{k, E}\bm\beta_E^*))$, for $k=1,\ldots, K$ and in this example the cross covariances between $\bm D, \bm\omega_1,\ldots,\bm\omega_K$ are zero. As elaborated in detail in multiple views of the data for GLMs (Section \ref{sec:mv:glm}), we do inference in the same way. 
%The implementation results of this procedure are presented in Section \ref{sec:implementation:results}.

%%%%%%%%%%%%%%%%%%%%%%%%%%%%%%%%%%%%%%%%%%%%%%%%
%%%%%%%%%%%%%%%%%%%%%%%%%%%%%%%%%%%%%%%%%%%%%%%%

\section{Conclusion} 

Inference after selection is a challenging problem since its aim is to not only provide theoretically valid tests and confidence intervals but also make its procedures powerful and computationally efficient.
With added randomization, we gain in both power and computational simplicity compared to non-randomized setting. We have extended the randomized framework to construct a new valid pivot using bootstrap, addressing the challenges in its computation. Through novel examples, including multiple views/queries and data carving, we have illustrated the applicability of our approach. 

The code for all the examples mentioned is available at \url{https://github.com/jonathan-} \url{taylor/selective-inference}. The implementation results along with applications of our methods to a real dataset will be included in future work.

%%%%%%%%%%%%%%%%%%%%%%%%%%%%%%%%%%%%%%%%%%%%

\section*{Acknowledgments}

J.M.~would like to thank Rajarshi Muhkerjee for his generous help in editing this paper.

%%%%%%%%%%%%%%%%%%%%%%%%%%%%%%%%%%%%%%%%%%%

%\documentclass[aos, preprint]{imsart}
%\usepackage{amsfonts}
%\usepackage{amsmath}
%\usepackage{enumerate}
%\usepackage{mathabx}
%\usepackage{listings}
%\usepackage{graphicx}
%\usepackage{color}

%\usepackage{bm}
%\usepackage{verbatim}
%\usepackage{enumitem}
%\include{preamble}

%\usepackage{hyperref}
%\hypersetup{colorlinks, citecolor=blue,  linkcolor=blue, urlcolor=black}
%\RequirePackage{natbib}

%\newcommand{\bb}[1]{\bm{#1}}
%\usepackage[toc,page]{appendix}
%\newtheorem{Example}{Example}[section]
%\RequirePackage[OT1]{fontenc}

%\usepackage[space]{grffile}
%\usepackage{subfig}

%\usepackage{xr}
%\externaldocument{bootstrap_after_selection}

%\begin{document}

\renewcommand\thesection{\Alph{section}}
\setcounter{section}{0}
\setcounter{theorem}{11}

%%%%%%%%%%%%%%%%%%%%%%%%%%%%%%%%%%%%%%%%%%%%%%%%%%%%%%%%%%%%
%%%%%%%%%%%%%%%%%%%%%%%%%%%%%%%%%%%%%%%%%%%%%%%%%%%%%%%%%%%%
%%%%%%%%%%%%%%%%%%%%%%%%%%%%%%%%%%%%%%%%%%%%%%%%%%%%%%%%%%%%

\begin{center}\textbf{\large{SUPPLEMENT}}\end{center}

%-----------------------------------------------------------
\section{Selective CLT - simple version (with parameter convergence assumption)}
\label{app:selective:clt:different}
%-----------------------------------------------------------

%---------------------------------------------
\subsection{Asymptotic linearity implies uniform CLT}
%---------------------------------------------

\begin{lemma}[Asymptotic linearity implies uniform CLT]
 \label{lemma:al:implies:clt}
	\eqref{assumption:asymptotically:linear} assumption implies \eqref{assumption:clt}.
\end{lemma}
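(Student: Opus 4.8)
The plan is to unpack \eqref{assumption:asymptotically:linear}, which expresses the centered and rescaled estimator sequence as an i.i.d.\ average of mean-zero influence-function summands plus a remainder that tends to zero in probability, then feed the leading i.i.d.\ term into a central limit theorem and clean up the remainder with Slutsky's lemma; the resulting Gaussian limit, together with its covariance, is precisely what \eqref{assumption:clt} asserts.

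In detail I would argue in three steps. \textbf{Step 1.} Invoke \eqref{assumption:asymptotically:linear} to write $\sqrt n\,(\hat\theta_n-\theta_0)=n^{-1/2}\sum_{i=1}^n\phi(Z_i)+R_n$, where the $\phi(Z_i)$ are i.i.d.\ with mean zero and finite covariance $\Sigma$ (finiteness coming from the moment hypothesis embedded in asymptotic linearity) and $R_n=o_P(1)$. \textbf{Step 2.} Apply the multivariate Lindeberg--L\'evy CLT to the array $\{n^{-1/2}\phi(Z_i)\}_{i\le n}$ to obtain $n^{-1/2}\sum_{i=1}^n\phi(Z_i)\Rightarrow\mathcal{N}(0,\Sigma)$; if \eqref{assumption:clt} is phrased uniformly over a family of data-generating laws, upgrade this to a triangular-array CLT under a uniform Lindeberg condition --- which the uniform integrability implicit in the asymptotic-linearity hypothesis supplies --- or, equivalently, to a Berry--Esseen bound giving convergence at a law-independent rate. \textbf{Step 3.} Combine Steps 1 and 2 via Slutsky's lemma (or its uniform analogue), absorbing $R_n$ and concluding $\sqrt n\,(\hat\theta_n-\theta_0)\Rightarrow\mathcal{N}(0,\Sigma)$; should \eqref{assumption:clt} additionally require joint convergence of $\hat\theta_n$ with an auxiliary selection or randomization statistic that is itself asymptotically linear, run Steps 1--3 coordinatewise and use the Cram\'er--Wold device to assemble the joint Gaussian limit, reading its cross-covariance off the inner products of the respective influence functions. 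Matching the limit law and covariance with those named in \eqref{assumption:clt} finishes the proof.

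The step I expect to carry the real content is Step 2 in its uniform form. For a fixed data-generating law, the argument via asymptotic linearity, the classical CLT, and Slutsky's lemma is essentially immediate; but if \eqref{assumption:clt} demands uniformity over a class of laws (or over the selection region), one must verify that the remainder $R_n$ in \eqref{assumption:asymptotically:linear} is $o_P(1)$ \emph{uniformly} and that the Lindeberg condition for the influence summands holds \emph{uniformly} --- which is exactly where the moment-boundedness or equicontinuity content carried by \eqref{assumption:asymptotically:linear} must be used explicitly. A minor but necessary bookkeeping point is to check that the covariance arising from the influence-function expansion is normalized so as to coincide with the one postulated in \eqref{assumption:clt}.
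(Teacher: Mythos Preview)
Your proposal is correct and matches the paper's approach: write $\sqrt{n}(\bm D_n-\bm\mu_n)$ as the normalized i.i.d.\ sum $n^{-1/2}\sum_i(\bm\xi_i-\bm\mu)$ plus a uniformly $o_P(1)$ remainder, apply a uniform CLT to the sum, and absorb the remainder. The only differences are in execution: the paper takes the uniform CLT for the $\bm\xi_i$-sum as given via \eqref{eq:uniform:clt:xi} (so your Lindeberg/Berry--Esseen verification in Step~2 is unnecessary), and it makes your ``uniform Slutsky'' step explicit by an $\epsilon$-sandwich on the CDF---bounding $\mathbb{F}_n\{\sqrt{n}(\bm D_n-\bm\mu_n)\le\bm t\}$ between $\mathbb{F}_n\{n^{-1/2}\sum_i(\bm\xi_i-\bm\mu)\le\bm t\pm\bm\epsilon\}$ plus $\mathbb{F}_n\{\sqrt{n}\|\bm X_n\|_\infty>\epsilon\}$ and sending $n\to\infty$ then $\epsilon\to 0$---rather than citing an abstract lemma; your speculative Cram\'er--Wold extension is not needed here.
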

\begin{proof}
For any $\epsilon>0$, we decompose
\begin{align*}
	&\mathbb{F}_n\{\sqrt{n}(\bm D_n-\bm \mu_n)\leq \bm t\}
	= \mathbb{F}_n\left\{\frac{1}{\sqrt{n}}\sum_{i=1}^n(\bm\xi_i-\bm\mu)+\sqrt{n}\bm X_n\leq \bm t\right\} \\
	&= \mathbb{F}_n\left\{\frac{1}{\sqrt{n}}\sum_{i=1}^n(\bm\xi_i-\bm\mu)+\sqrt{n}\bm X_n\leq \bm t , \|\sqrt{n}\bm X_n\|_{\infty}\leq \epsilon\right\} \\
	&+\mathbb{F}_n\left\{\frac{1}{\sqrt{n}}\sum_{i=1}^n(\bm\xi_i-\bm\mu) +\sqrt{n}\bm X_n\leq \bm t , \|\sqrt{n}\bm X_n\|_{\infty}\geq \epsilon\right\}.
\end{align*}

Denoting with $\bm\epsilon=(\epsilon,\ldots,\epsilon)\in\mathbb{R}^p$, we have the upper bound
\begin{equation*}
	\mathbb{F}_n\{\sqrt{n}(\bm D_n-\bm \mu_n)\leq \bm t\} 
	\leq \mathbb{F}_n\left\{\frac{1}{\sqrt{n}}\sum_{i=1}^n(\bm\xi_i-\bm\mu)\leq\bm t +\bm \epsilon \right\}	
	+\mathbb{F}_n\{\sqrt{n}\|\bm X_n\|_{\infty}\geq \epsilon\}
\end{equation*}
and the lower bound
\begin{equation*}
	\mathbb{F}_n\{\sqrt{n}(\bm D_n-\bm\mu_n)\leq \bm t\} 
	\geq \mathbb{F}_n\left\{\frac{1}{\sqrt{n}}\sum_{i=1}^n(\bm\xi_i-\bm\mu)\leq\bm t -\bm \epsilon \right\} 
\end{equation*}
Thus,
\begin{align*}
	\left|\mathbb{F}_n\{\sqrt{n}(\bm D_n-\bm\mu_n)\leq \bm t\}	 -\mathbb{F}_n\left\{\frac{1}{\sqrt{n}}\sum_{i=1}^n(\bm\xi_i-\bm\mu)\leq\bm t\right\}\right| \leq \textnormal{max}\{I_1, I_2\},
\end{align*}
where
\begin{equation*}
	I_1=\mathbb{F}_n\left\{\frac{1}{\sqrt{n}}\sum_{i=1}^n(\bm\xi_i-\bm\mu)\in \bm (\bm t, \bm t+\bm\epsilon) \right\} + \mathbb{F}_n\{\sqrt{n}\|\bm X_n\|_{\infty}\geq\epsilon\}
\end{equation*}
and 
\begin{equation*}
	I_2=\mathbb{F}_n\left\{\frac{1}{\sqrt{n}}\sum_{i=1}^n(\bm\xi_i-\bm\mu)\in (\bm t-\bm\epsilon, \bm t)\right\}.
\end{equation*}

Using the assumption $\bm X_n=o_{\mathbb{F}_n}(1/\sqrt{n})$ uniformly across $\mathbb{F}_n\in\mathcal{F}_n$, for any $\epsilon>0$ we have that $\underset{n\rightarrow\infty}{\lim}\underset{\mathbb{F}_n\in\mathcal{F}_n}{\sup}\mathbb{F}_n\{\sqrt{n}\|\bm X_n\|_{\infty}>\epsilon\}=0$.

Letting $n\rightarrow\infty$ and then $\epsilon\rightarrow 0$, we have that $\underset{\mathbb{F}_n\in\mathcal{F}_n}{\sup}\max\{I_1,I_2\}\rightarrow 0$ as $n\rightarrow\infty$. Using \eqref{eq:uniform:clt:xi}, we have that \eqref{assumption:clt} holds.
\end{proof}

%----------------------------------------------
\subsection{Selective CLT using pre-selection \eqref{assumption:clt} assumption}
%----------------------------------------------

We prove a version of selective CLT that does not assume a asymptotic linearity condition on the statistic $\bm D$ but a weaker CLT statement; however, it requires the unknown parameter and the selection region to converge as the sample size $n$ increase, hence not accounting for the rare selection events.

Assume the selection region is $\sqrt{n}\bm A_M\bm D+\bm\omega\geq \bm b_M$, where $\bm A_M\in\mathbb{R}^{d\times p}$ and $\bm b_M\in\mathbb{R}^d$.
We denote the pivot as
\begin{align*}
	&\mathcal{P}^G\left(\sqrt{n}(\bm D-\bm\mu), \bb A^T\bm\mu, \sqrt{n}\bm A_M\bm\mu, \bm A_M, \bm b_M \right)\\
	&=(\mathbb{P}_{\bm G}\times\mathbb{G}) \bigl \{ \|\bm\Sigma_{\bm A}^{-1/2}\bb G\|_2 \geq \|\sqrt{n}\bm\Sigma_{\bm A}^{-1/2}\bb A^T(\bm D-\bm\mu)\|_2\:\big|\: \\
	&\hspace{8em} \sqrt{n}\bm{A}_M\bm D_{\bb A}+\bm A_M\bm C(\bb G+\sqrt{n}\bb A^T\bm\mu)+\bb\omega\geq \bm b_M, \bm D_{\bb A} \bigr\},\\
	&=(\mathbb{P}_{\bm G}\times\mathbb{G}) \bigl\{ \|\bm\Sigma_{\bm A}^{-1/2}\bb G\|_2 \geq \|\sqrt{n}\bm\Sigma_{\bm A}^{-1/2}\bb A^T(\bm D-\bm\mu)\|_2  \:\big|\\
	& \hspace{8em} \sqrt{n}\bm{A}_M(\bm D_{\bb A}-\bm\mu_{\bb A})+\bm A_M\bm C\bb G+\sqrt{n}\bm A_M\bm\mu+\bb\omega\geq\bm b_M, \bm D_{\bb A}\bigr\},
\end{align*}
where $\mathbb{P}_{\bm G}$ is under $\bb G\sim \mathcal{N}_a(\bb 0,\bb\Sigma_{\bb A})$. Note that here we assume the affine selection region, written before as $\sqrt{n}\bm A_M\bm D+\bm\omega\in\bm H_M$, is given as $\sqrt{n}\bm A_M\bm D+\bm\omega\geq \bm b_M$ for a sequence of vectors $\bm b_M\in\mathbb{R}^d$. We do this in order to simplify the notation for the convergence of the selection region.

To prove a version of selective CLT, let us introduce a new set of assumptions.
\begin{itemize}
\item Convergence of the selection regions:
\begin{equation*}
	\underset{n\rightarrow\infty}{\lim}\bm A_M=\widetilde{\bm A} \;\;\textnormal{ and }\;\; \underset{n\rightarrow\infty}{\lim}\bm b_M=\tilde{\bm b}
\end{equation*}
	for some $\widetilde{\bb A}\in\mathbb{R}^{d\times p}$ and $\tilde{\bm b}\in \mathbb{R}^d$.
\item Convergence of the parameter: 
\begin{equation*} \label{assumption:parameter:conv}
	\underset{n\rightarrow\infty}{\lim}\sqrt{n}\bm{A}_M\bm \mu=\tilde{\bm\mu}.
\end{equation*}
for some $\tilde{\bm\mu}\in\mathbb{R}^d$
\item Assumptions on the distribution of the randomization: $\mathbb{G}$ is continuous and has full support.
\end{itemize}

%------------------------------------------------------
\begin{theorem}[Selective CLT, version 1] Assuming the two convergence conditions above and \eqref{assumption:clt} hold, we have the following
\begin{align*}
	\underset{n\rightarrow\infty}{\lim}\:\underset{\mathbb{F}_n^*\in\mathcal{F}_n^*}{\sup}\:\underset{t\in[0,1]}{\sup} &\left|\mathbb{F}^*_n\left\{\mathcal{P}^G\left(\sqrt{n}(\bm D-\bm\mu), \bb A^T\bm\mu, \sqrt{n}\bm A_M\bm\mu ,\bm A_M, \bm b_M\right) \leq t \right\} -t \right|=0
\end{align*}
\end{theorem}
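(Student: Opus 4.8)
The plan is to recognize the pivot $\mathcal{P}^G$ as the probability–integral transform of the observed (studentized) target under a conditional law that converges, uniformly over $\mathcal{F}_n^*$, to a Gaussian conditional law; once the limit is identified as $\mathrm{Unif}[0,1]$, the conclusion follows from the uniform CLT and a monotonicity argument. Write $\bm R_n=\sqrt{n}\,\bm\Sigma_{\bm A}^{-1/2}\bm A^{T}(\bm D-\bm\mu)$ for the studentized target, $\bm V_n=\sqrt{n}(\bm D_{\bm A}-\bm\mu_{\bm A})$ for the studentized nuisance coordinate, and $\bm\theta_n=(\bm A_M,\bm b_M,\sqrt{n}\bm A_M\bm\mu)$, which by the two convergence assumptions tends to $\bm\theta_\infty=(\widetilde{\bm A},\tilde{\bm b},\tilde{\bm\mu})$. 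With $\bm G\sim\mathcal{N}_a(\bm 0,\bm\Sigma_{\bm A})$ and $\bm\omega\sim\mathbb{G}$ independent, the definition of the pivot is exactly $\mathcal{P}^G=S_n(\|\bm R_n\|_2;\bm V_n,\bm\theta_n)$ with $S_n(s;v,\bm\theta)=\mathbb{P}\{\|\bm\Sigma_{\bm A}^{-1/2}\bm G\|_2\ge s\mid \bm A_M v+\bm A_M\bm C\bm G+\sqrt{n}\bm A_M\bm\mu+\bm\omega\ge\bm b_M\}$; for each fixed $(v,\bm\theta)$ this is a continuous, strictly decreasing survival function (continuity because $\bm\Sigma_{\bm A}^{-1/2}\bm G$ is standard Gaussian, so $\|\bm\Sigma_{\bm A}^{-1/2}\bm G\|_2$ has a density when $\bm\Sigma_{\bm A}$ is nonsingular, and the conditioning event has positive probability since $\mathbb{G}$ has full support).

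Next I would identify the limiting pivot. Since $\bm R_n$ and $\bm V_n$ are a fixed linear image of $\sqrt{n}(\bm D-\bm\mu)$, the \eqref{assumption:clt} assumption gives $(\bm R_n,\bm V_n)\Rightarrow(\bm R_\infty,\bm V_\infty)$ jointly Gaussian, uniformly over $\mathcal{F}_n$; the decomposition matrix $\bm C$ is, as usual, chosen so that $\bm D_{\bm A}$ and $\bm A^{T}\bm D$ are asymptotically uncorrelated, whence $\bm R_\infty\perp\bm V_\infty$ and $\bm R_\infty\sim\mathcal{N}_a(\bm 0,I_a)$, i.e.\ $\bm\Sigma_{\bm A}^{1/2}\bm R_\infty$ has the same law as $\bm G$. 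Now, because the data $(\bm D,\bm\omega)$ passed selection, the real pair $(\bm\Sigma_{\bm A}^{1/2}\bm R_n,\bm\omega)$ satisfies precisely the affine inequality (with $\bm G$ replaced by $\bm\Sigma_{\bm A}^{1/2}\bm R_n$, $v$ by $\bm V_n$, $\bm\theta$ by $\bm\theta_n$) that defines the conditioning in $S_n$. Passing to the limit and freezing $\bm V_\infty=v$, the real $(\bm\Sigma_{\bm A}^{1/2}\bm R_\infty,\bm\omega)$ restricted to the selection event has the same law as $(\bm G,\bm\omega)$ restricted to the event in $S_\infty(\cdot;v,\bm\theta_\infty)$; hence, conditionally on $\{\bm V_\infty=v,\ \text{selected}\}$, $\|\bm R_\infty\|_2$ has exactly the (continuous) distribution whose survival function is $s\mapsto S_\infty(s;v,\bm\theta_\infty)$, so $S_\infty(\|\bm R_\infty\|_2;v,\bm\theta_\infty)\sim\mathrm{Unif}[0,1]$ conditionally, and therefore unconditionally. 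In particular $\mathbb{P}\{S_\infty(\|\bm R_\infty\|_2;\bm V_\infty,\bm\theta_\infty)\le t\mid\text{selected}\}=t$ for all $t\in[0,1]$.

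It then remains to transfer this to finite $n$, uniformly in $t$ and over $\mathbb{F}_n^*$. First, $S_n\to S_\infty$ uniformly on compacts: the defining conditional probability is a ratio of Gaussian integrals over half-space regions whose denominators stay bounded away from $0$ (positive limiting selection probability, again by full support of $\mathbb{G}$) and whose integrands depend continuously on $\bm\theta$; with $\bm\theta_n\to\bm\theta_\infty$ (and $\bm\Sigma_{\bm A}$, $\bm C$ at their limits), dominated convergence gives the claim. Writing $\{\mathcal{P}^G\le t\}=\{(\|\bm R_n\|_2,\bm V_n)\in E_{n,t}\}$ with $E_{n,t}=\{(s,v):S_n(s;v,\bm\theta_n)\le t\}$, a Slutsky/continuous-mapping argument — the boundary $\{(s,v):S_\infty(s;v,\bm\theta_\infty)=t\}$ is, for each $v$, a single value of $s$ and hence null under the continuous law of $\bm R_\infty$ — combined with the uniform CLT yields $\mathbb{F}_n^*\{\mathcal{P}^G\le t\}\to t$ for every $t$. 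Since $t\mapsto\mathbb{F}_n^*\{\mathcal{P}^G\le t\}$ is nondecreasing and converges pointwise to the continuous map $t\mapsto t$ on the compact $[0,1]$, P\'olya's theorem upgrades this to convergence uniform in $t$; uniformity over $\mathbb{F}_n^*\in\mathcal{F}_n^*$ is inherited from the uniformity in \eqref{assumption:clt}, the limiting ingredients $S_\infty$ and $\bm\theta_\infty$ being common under the convergence assumptions. Combining the two gives the stated double supremum.

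I expect the main obstacle to be exactly this last transfer: interchanging the weak limit with the conditioning on the (moving) selection event \emph{and} with the nonlinear conditional-probability functional $S_n$, while keeping everything uniform over $\mathcal{F}_n^*$. This needs the selection probability bounded below — where continuity and full support of $\mathbb{G}$ are essential — the almost-everywhere continuity required by the continuous-mapping step, and the P\'olya monotonicity upgrade for uniformity in $t$. A secondary subtlety is justifying the asymptotic independence $\bm R_\infty\perp\bm V_\infty$ underlying the probability–integral-transform argument, which hinges on the specific choice of the decomposition matrix $\bm C$.
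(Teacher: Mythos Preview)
Your proposal is correct in outline and reaches the same conclusion, but it takes a different route from the paper. The paper's proof is organized around the \emph{selective likelihood ratio}: it writes $\mathbb{E}_{\mathbb{F}_n^*}[g(\mathcal{P}^G)]=\mathbb{E}_{\mathbb{F}_n}[g(\mathcal{P}^G)\,\ell_{\mathbb{F}_n}]$ for continuous bounded $g$, shows $\ell_{\mathbb{F}_n}(\bm t,\bm\theta_n)\to\ell_{\Phi}(\bm t,\bm\theta_\infty)$ uniformly in $\bm t$ (using continuity of $\mathbb{G}$ and the parameter convergence), passes to $\mathbb{E}_{\mathbb{F}_n}[g(\mathcal{P}^G)\,\ell_{\Phi}]$ by dominated convergence, and then applies the continuous mapping theorem to the pre-selection law; boundedness of $\ell_{\Phi}$ comes from the full-support assumption on $\mathbb{G}$. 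You instead keep the pivot in survival-function form $S_n(\|\bm R_n\|_2;\bm V_n,\bm\theta_n)$, argue convergence of the conditional law directly through the ratio definition (numerator and denominator separately, with the denominator bounded away from zero), handle the level set $\{S_\infty=t\}$ as a null set, and finish with P\'olya for uniformity in $t$. Your argument is more elementary and makes the probability-integral-transform structure explicit; the paper's argument, by isolating $\ell_{\mathbb{F}_n}$ and $\ell_{\Phi}$ as standalone objects, sets up exactly the quantity whose smoothness is later controlled by Chatterjee-type bounds in the Lipschitz and Gaussian randomization theorems. The one step you correctly flag as delicate---transferring the uniform CLT through the moving conditioning event and the $n$-dependent pivot simultaneously---is precisely what the likelihood-ratio identity trivializes in the paper's version, since it reduces everything to a single pre-selection expectation of a bounded, uniformly convergent integrand.
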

%------------------------------------------------------

%------------------------------------------------------

\begin{proof}

Denote $\Phi^*_{(\bm 0,\bb\Sigma)}$ the distribution of $\bb G\sim\mathcal{N}_p(\bb 0,\bb \Sigma):=\Phi_{(\bm 0,\bb\Sigma)}$ conditional on the selection
\begin{equation*}
	\widetilde{\bb A}\bm G+\widetilde{\bb\mu}+\bb\omega\geq \tilde{\bm b}.
\end{equation*}
We know that 
\begin{equation*}
	\Phi^*_{(\bm 0,\bb\Sigma)} \left\{\mathcal{P}^G(\bm G,\bb A^T\bm\mu, \tilde{\bm\mu},\widetilde{\bb A}, \tilde{\bb b}) \leq t \right\}=t.
\end{equation*}
Suffices to show that for every continuous bounded function $g:\mathbb{R}\rightarrow\mathbb{R}$, we have
\begin{equation*}
	\mathbb{E}_{\mathbb{F}_n^*}\left[g\left(\mathcal{P}^G\left(\sqrt{n}(\bm D-\bm\mu), \bb A^T\bm\mu, \sqrt{n}\bm A_M\bm\mu,\bm A_M, \bm b_M\right)\right)\right] \rightarrow \mathbb{E}_{\Phi^*_{(\bm 0,\bb\Sigma)}}\left[g(\mathcal{P}^G\left(\bm G,\bb A^T\bm{\mu},\tilde{\bm \mu},\widetilde{\bm A}, \tilde{\bm b} \right)\right]
\end{equation*}
as $n\rightarrow\infty$. Denoting the selective likelihood ratios for the data originally distributed as $\bb D\sim\mathbb{F}_n$ as
\begin{equation*}
	\ell_{\mathbb{F}_n}(\sqrt{n}(\bm D-\bm\mu),\sqrt{n}\bm A_M\bm\mu,\bm A_M,\bm b_M)
		=\frac{\mathbb{G}\left\{\sqrt{n}\bm{A}_M(\bm D-\bm\mu)+\sqrt{n}\bm A_M\bm\mu+\bb\omega\geq\bm b_M\right\}}{\mathbb{E}_{\mathbb{F}_n}\left[\mathbb{G}\left\{\bm\omega\geq\bm b_M-\sqrt{n}\bm A_M(\bm D'-\bm\mu)-\sqrt{n}\bm A_M\bm\mu\right\}\right]}
\end{equation*}
and Gaussian one as
\begin{equation*}
	\ell_{\Phi_{(\bm 0,\bb\Sigma)}}(\bm G, \tilde{\bm\mu},\widetilde{\bb A},\tilde{\bb b})
	=\frac{\mathbb{G}\left\{\widetilde{\bb A}\bm G+\tilde{\bb \mu}+\bb\omega\geq\tilde{\bm b}\right\}}{\mathbb{E}_{\Phi_{(\bm 0,\bb\Sigma)}}\left[\mathbb{G}\left\{\bb\omega\geq\tilde{\bb b}-\widetilde{\bb A}\bm G'-\tilde{\bb\mu}\right\}\right]}
\end{equation*}
it suffices to show the following holds
\begin{align*}
	&\mathbb{E}_{\mathbb{F}_n}\left[g\left(\mathcal{P}^G\left(\sqrt{n}(\bm D-\bm\mu), \bb A^T\bm\mu, \sqrt{n}\bm A_M\bm\mu,\bm A_M, \bm b_M\right)\right) \ell_{\mathbb{F}_n}(\sqrt{n}(\bm D-\bm\mu),\sqrt{n}\bm A_M\bm\mu,\bm A_M,\bm b_M)\right] \\
	&\rightarrow \mathbb{E}_{\Phi_{(\bm 0,\bb\Sigma)}}\left[g\left(\mathcal{P}^G(\bm G,\bb A^T\bm\mu, \tilde{\bm \mu},\widetilde{\bb A}, \tilde{\bb b})\right)\ell_{\Phi_{(\bm 0,\bb\Sigma)}}(\bm G,\tilde{\bm\mu},\widetilde{\bb A},\tilde{\bb b})\right]
\end{align*}
as $n\rightarrow\infty$ for an arbitrary sequence $\mathbb{F}_n\in\mathcal{F}_n$. 
\eqref{assumption:clt} assumption and continuity of $\mathbb{G}$ implies for all $\bb t\in\mathbb{R}^p$
\begin{equation*}
	\underset{n\rightarrow\infty}{\lim}\:\underset{\bb t\in\mathbb{R}^p}{\sup} \left|\ell_{\mathbb{F}_n}(\bb t,\sqrt{n}\bm A_M\bm\mu,\bm A_M,\bm b_M)-\ell_{\Phi_{(\bm 0,\bb\Sigma)}}(\bb t,\widetilde{\bb A},\tilde{\bm\mu},\widetilde{\bb A},\tilde{\bb b})\right|=0.
\end{equation*}
This implies (by Lebesgue dominated convergence theorem)
\begin{equation*}
	\underset{n\rightarrow\infty}{\lim}\mathbb{E}_{\mathbb{F}_n}\left[\left|\ell_{\mathbb{F}_n}\left(\sqrt{n}(\bm D-\bm\mu),\sqrt{n}\bm A_M\bm\mu,\bm A_M,\bm b_M\right)-\ell_{\Phi_{(\bm 0,\bb\Sigma)}}(\sqrt{n}(\bm D-\bm\mu),\tilde{\bm\mu},\widetilde{\bb A},\tilde{\bb b})\right| \right]=0.
\end{equation*}
Since $g$ is bounded function, it suffices to show 
\begin{align*}
	&\mathbb{E}_{\mathbb{F}_n}\left[g \left(\mathcal{P}^G\left(\sqrt{n}(\bm D-\bm\mu),\bb A^T\bm\mu,\sqrt{n}\bm A_M\bm\mu,\bm A_M, \bm b_M\right) \right) \ell_{\Phi_{(\bm 0,\bb\Sigma)}}(\sqrt{n}(\bm D-\bm\mu),\tilde{\bm\mu},\widetilde{\bb A},\tilde{\bb b})\right] \\
	&\rightarrow \mathbb{E}_{\Phi_{(\bm 0,\bb\Sigma)}}\left[g\left(\mathcal{P}^G(\bm G,\bb A^T\bm\mu,\tilde{\bm\mu},\widetilde{\bb A}, \tilde{\bb b} )\right) \ell_{\Phi_{(\bm 0,\bb\Sigma)}}(\bm G,\tilde{\bm\mu},\widetilde{\bb A},\tilde{\bb b})\right]
\end{align*}
as $n\rightarrow\infty$.
By the uniform Continuous mapping theorem the convergence above holds in distribution. To go to expectation, it suffices to have $g$ and $\ell_{\Phi_{(\bm 0,\bb\Sigma)}}$ to be bounded. $g$ is bounded by assumption and $\ell_{\Phi_{(\bm 0,\bb\Sigma)}}$ is bounded above as long as $\mathbb{E}_{\Phi_{\bb \Sigma}}\left[\mathbb{G}\left\{\bb\omega:\bb\omega\geq\tilde{\bb b}-\widetilde{\bb A}\bm G-\widetilde{\bm\mu}\right\}\right]>0$, which holds since $\mathbb{G}$ has full support by assumption.

\end{proof}
%------------------------------------------------------

%%%%%%%%%%%%%%%%%%%%%%%%%%%%%%%%%%%%%%%%%%%%%%%%%%%%%%%%%%%%%%%%%%%%%%%%%%%%%%%%%%%%%%%%%%%%%%%%%%%%%%%%%%%%%%%%%%%%%%%%%%%%%%%%%%%%%%%%%%%%%%%%%%%%%%%%%%%%%%%%%%%%%%%%%%%%%%%%%%%%%%%%%%%%%%%%%%%%%%%%%%%%%	

%----------------------------------------------------------
\section{Proving selective CLT for Lipschitz randomization} \label{sec:proofs:selective:clt:lip:randomization}
%----------------------------------------------------------

%----------------------------------------------------------
\subsection{Smoothness of the Gaussian likelihood and the pivot for Lipschitz randomization}	
%----------------------------------------------------------

Recall that
\begin{equation*}
	Q_1(\bm z;\bm\Delta)=\mathbb{G}\{\bm A_M\bm z+\bm A_M\bm\Delta+\bm\omega\in\bm H_M\},
\end{equation*}
and the Gaussian selective likelihood ratio becomes
\begin{equation*}
	\ell_{\Phi_n}(\bm z)=\frac{\mathbb{G}\{\bm A_M\bm z+\bm A_M\bm\Delta+\bm\omega\in \bm H_M\}}{\Phi_n\times\mathbb{G}\{\bm A_M\bm Z+\bm A_M\bm\Delta+\bm\omega \in\bm H_M\}} =\frac{Q_1(\bm z;\bm\Delta)}{\mathbb{E}_{\Phi_n}\left[Q_1(\bm Z;\bm\Delta)\right]}.
\end{equation*}

%------------------------------------------------------	
\begin{lemma}[Lipschitz randomization: smoothness of the Gaussian likelihood] Assuming  \eqref{assumption:norm:A_M} and \eqref{assumption:g:smoothness} hold, we have
	\begin{equation*}
		\partial_{\bm z}^{\bm\alpha}\ell_{\Phi_n}(\bm z) \leq \frac{C_M\cdot K_g}{\mathbb{E}_{\Phi_n}\left[e^{-K_g\|\bm A_M\bm Z\|_h}\right]}e^{K_g\|\bm A_M\bm z\|_h}= O(e^{K_g\|\bm A_M\bm z\|_h}).
	\end{equation*} 
\end{lemma}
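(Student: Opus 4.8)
The plan is to move all the $\bm z$–dependence of $Q_1(\cdot\,;\bm\Delta)$ inside a single integral over the $\bm z$–independent region $\bm H_M$, differentiate under the integral sign, bound the integrand using \eqref{assumption:g:smoothness} and \eqref{assumption:norm:A_M}, and then apply the Lipschitz log-density bound from \eqref{assumption:g:smoothness} \emph{twice} — once to the numerator and once to the denominator of $\ell_{\Phi_n}$ — so that a common factor $Q_1(\bm 0;\bm\Delta)$ cancels and leaves exactly the claimed expression. It suffices to bound $\bigl|\partial_{\bm z}^{\bm\alpha}\ell_{\Phi_n}(\bm z)\bigr|$. Writing $g$ for the density of $\mathbb{G}$ and translating by $\bm A_M\bm z+\bm A_M\bm\Delta$ (Jacobian one),
\[
	Q_1(\bm z;\bm\Delta)=\int_{\bm H_M} g\bigl(\bm u-\bm A_M\bm z-\bm A_M\bm\Delta\bigr)\,d\bm u .
\]
Because $\bm z\mapsto \bm u-\bm A_M\bm z-\bm A_M\bm\Delta$ is affine, the chain rule yields
\[
	\partial_{\bm z}^{\bm\alpha}\,g\bigl(\bm u-\bm A_M\bm z-\bm A_M\bm\Delta\bigr)
	=(-1)^{|\bm\alpha|}\!\!\sum_{|\bm\beta|=|\bm\alpha|} c_{\bm\alpha\bm\beta}\,\pi_{\bm\alpha\bm\beta}(\bm A_M)\,(\partial^{\bm\beta} g)\bigl(\bm u-\bm A_M\bm z-\bm A_M\bm\Delta\bigr),
\]
where $\pi_{\bm\alpha\bm\beta}(\bm A_M)$ is a product of $|\bm\alpha|$ entries of $\bm A_M$ and $c_{\bm\alpha\bm\beta}$ are combinatorial constants (no higher-order terms because the inner map is affine). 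By \eqref{assumption:g:smoothness} these derivatives exist and satisfy $|\partial^{\bm\beta} g|\le K_g\,g$ for all $\bm\beta$ with $|\bm\beta|\le|\bm\alpha|$, and by \eqref{assumption:norm:A_M} the finitely many products $\pi_{\bm\alpha\bm\beta}(\bm A_M)$ are bounded, so, absorbing all constants into $C_M$, the right-hand side is dominated by $C_M K_g\,g(\bm u-\bm A_M\bm z-\bm A_M\bm\Delta)$. Since the Lipschitz log-density property gives, for $\bm z'$ near $\bm z$, $g(\bm u-\bm A_M\bm z'-\bm A_M\bm\Delta)\le e^{K_g\|\bm A_M(\bm z'-\bm z)\|_h}g(\bm u-\bm A_M\bm z-\bm A_M\bm\Delta)$ and $g$ is integrable over $\mathbb{R}^d\supseteq\bm H_M$, this furnishes a local integrable majorant; hence differentiation under the integral is legitimate and
\[
	\bigl|\partial_{\bm z}^{\bm\alpha}Q_1(\bm z;\bm\Delta)\bigr|\le C_M K_g\int_{\bm H_M} g\bigl(\bm u-\bm A_M\bm z-\bm A_M\bm\Delta\bigr)\,d\bm u=C_M K_g\,Q_1(\bm z;\bm\Delta).
\]

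Next, using $g(\bm w-\bm v)\le e^{K_g\|\bm v\|_h}g(\bm w)$ with $\bm w=\bm u-\bm A_M\bm\Delta$ and $\bm v=\bm A_M\bm z$ inside this last integral gives $Q_1(\bm z;\bm\Delta)\le e^{K_g\|\bm A_M\bm z\|_h}Q_1(\bm 0;\bm\Delta)$, so $\bigl|\partial_{\bm z}^{\bm\alpha}Q_1(\bm z;\bm\Delta)\bigr|\le C_M K_g\,e^{K_g\|\bm A_M\bm z\|_h}Q_1(\bm 0;\bm\Delta)$. For the $\bm z$–independent denominator $\mathbb{E}_{\Phi_n}[Q_1(\bm Z;\bm\Delta)]$ of $\ell_{\Phi_n}$, the same inequality in the opposite direction gives $Q_1(\bm Z;\bm\Delta)\ge e^{-K_g\|\bm A_M\bm Z\|_h}Q_1(\bm 0;\bm\Delta)$ pointwise, hence $\mathbb{E}_{\Phi_n}[Q_1(\bm Z;\bm\Delta)]\ge Q_1(\bm 0;\bm\Delta)\,\mathbb{E}_{\Phi_n}\bigl[e^{-K_g\|\bm A_M\bm Z\|_h}\bigr]$. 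Dividing, the common factor $Q_1(\bm 0;\bm\Delta)$ cancels and
\[
	\bigl|\partial_{\bm z}^{\bm\alpha}\ell_{\Phi_n}(\bm z)\bigr|
	=\frac{\bigl|\partial_{\bm z}^{\bm\alpha}Q_1(\bm z;\bm\Delta)\bigr|}{\mathbb{E}_{\Phi_n}[Q_1(\bm Z;\bm\Delta)]}
	\le\frac{C_M K_g}{\mathbb{E}_{\Phi_n}\bigl[e^{-K_g\|\bm A_M\bm Z\|_h}\bigr]}\,e^{K_g\|\bm A_M\bm z\|_h},
\]
which is the stated bound; the trailing $O(\cdot)$ follows since $0<\mathbb{E}_{\Phi_n}[e^{-K_g\|\bm A_M\bm Z\|_h}]\le 1$, so the prefactor is a finite constant.

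The chain-rule expansion and the translation are routine. The one step requiring care is the justification of differentiation under the integral sign via the local integrable majorant above, and the single genuine idea is to apply the Lipschitz log-density bound symmetrically so that $Q_1(\bm 0;\bm\Delta)$ cancels between numerator and denominator — otherwise one is left with an extra factor $Q_1(\bm 0;\bm\Delta)^{-1}\ge 1$ not present in the claim. A minor bookkeeping point is to invoke \eqref{assumption:g:smoothness} at exactly the regularity order $|\bm\alpha|$ and to carry the correct (dual-norm) constant in the Lipschitz bound relative to $\|\cdot\|_h$, both immediate from the precise forms of the assumptions.
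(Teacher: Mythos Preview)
Your proof is correct and follows essentially the same route as the paper: translate the integration to the fixed region $\bm H_M$, differentiate under the integral using the smoothness part of \eqref{assumption:g:smoothness} to get $|\partial_{\bm z}^{\bm\alpha}Q_1|\le C_M K_g\,Q_1$, then apply the Lipschitz log-density bound once upward on the numerator and once downward on the denominator so that $Q_1(\bm 0;\bm\Delta)$ cancels. Your explicit justification of differentiation under the integral sign via a local integrable majorant is a welcome detail the paper leaves implicit.
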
	
%------------------------------------------------------	

%-------------------------------------------------------	
\begin{proof}

Using the change of variables $\bm\omega'=\bm\omega+\bm A_M\bm z+\bm A_M\bm\Delta$ and the Lipschitz assumption on $\tilde{g}$, we have the lower bound
\begin{align} 
	Q_1(\bm z;\bm\Delta) 
	&=\int\limits_{\bm\omega\in\bm H_M-\bm A_M\bm z-\bm A_M\bm \Delta}\frac{1}{C_g}\exp(-\tilde{g}(\bm\omega))d\bm\omega\nonumber \\
	&=\frac{1}{C_g}\int\limits_{\bm\omega'\in \bm H_M}\exp\left(-\tilde{g}(\bm\omega'-\bm A_M\bm z-\bm A_M\bm\Delta)\right)d\bm\omega' \nonumber\\
	&\geq\frac{1}{C_g}\int\limits_{\bm\omega'\in\bm H_M} \exp\left(-\tilde{g}(\bm\omega'-\bm A_M\bm\Delta)-K_g\|\bm A_M\bm z\|_h\right)d\bm\omega'\nonumber\\
	&=e^{-K_g\|\bm A_M\bm z\|_h}\int\limits_{\bm\omega'\in\bm H_M} \frac{1}{C_g}\exp\left(-\tilde{g}(\bm\omega'-\bm A_M\bm\Delta)\right) d\bm\omega' \nonumber \\
	&=Q_1(\bm 0;\bm\Delta)\cdot e^{-K_g\|\bm A_M\bm z\|_h},\label{eq:Q1:lower:bound}
\end{align}
implying 
\begin{equation} \label{eq:sel:prob:lower:bound}
\mathbb{E}_{\Phi_n}\left[Q_1(\bm Z;\bm\Delta)\right]
\geq Q_1(\bm 0;\bm\Delta)\cdot\mathbb{E}_{\Phi_n}\left[e^{-K_g\|\bm A_M\bm Z\|_h} \right].
\end{equation}

Similarly, using the Lipschitz property on $\tilde{g}$ again we have an upper bound
\begin{align}
	Q_1(\bm z;\bm\Delta) 
	&\leq \frac{1}{C_g}\int\limits_{\bm\omega'\in\bm H_M} \exp\left(-\tilde{g}(\bm\omega'-\bm A_M\bm\Delta)+K_g\|\bm A_M\bm z\|_h\right)d\bm\omega' \nonumber \\
	&=e^{K_g\|\bm A_M\bm z\|_h}\int\limits_{\bm\omega'\in\bm H_M}\frac{1}{C_g}\exp\left(-\tilde{g}(\bm\omega'-\bm A_M\bm\Delta)\right) d\bm\omega' \nonumber \\
	&=Q_1(\bm 0;\bm\Delta)\cdot e^{K_g\|\bm A_M\bm z\|_h}.\label{eq:Q1:upper:bound}
\end{align}

By the smoothness assumption on $\tilde{g}$, we have 
\begin{align}
	\left|\partial_{\bm z}^{\bm\alpha}	Q_1(\bm z;\bm\Delta) \right|
	&=\frac{1}{C_g}\left|\int\limits_{\bm\omega'\in\bm H_M} \partial_{\bm z}^{\bm\alpha}\exp(-\tilde{g}(\bm\omega'-\bm A_M\bm z-\bm A_M\bm\Delta))d\bm\omega'\right| \nonumber\\
	&\leq C_M \cdot K_g\int\limits_{\bm\omega'\in\bm H_M} \frac{1}{C_g}\exp(-\tilde{g}(\bm\omega'-\bm A_M\bm z-\bm A_M\bm\Delta))d\bm\omega' \nonumber \\
	&=C_M\cdot K_g \cdot Q_1(\bm z;\bm\Delta)\nonumber \\
	&\leq C_M\cdot K_g \cdot Q_1(\bm 0;\bm\Delta)\cdot e^{K_g\|\bm A_M\bm z\|_h}, \label{eq:partial:Q1:upper:bound}
\end{align}
where $C_M$ comes from differentiating $\bm A_M\bm z$ with respect to $\bm z$ and the last inequality follows by using the upper bound on $Q_1(\bm z;\bm\Delta)$ from \eqref{eq:Q1:upper:bound}. Combining \eqref{eq:Q1:lower:bound} and \eqref{eq:partial:Q1:upper:bound} and we have
\begin{equation*}
	\partial_{\bm z}^{\bm\alpha}\ell_{\Phi_n}(\bm z)=\frac{\partial_{\bm z}^{\bm\alpha} Q_1(\bm z;\bm\Delta)}{\mathbb{E}_{\Phi_n}Q_1(\bm Z;\bm\Delta)} \leq \frac{C_M\cdot K_g}{\mathbb{E}_{\Phi_n}\left[e^{-K_g\|\bm A_M\bm Z\|_h}\right]}e^{K_g\|\bm A_M\bm z\|_h}.
\end{equation*}

\end{proof}

%-------------------------------------------------------	

%%%%%%%%%%%%%%%%%%%%%%%%%%%%%%%%%%%%%%%%%%%%%%%%%%%%%%%%%%%%%%%%%%%%%%%%%%%%%%%%%%%%%%%%%%%%%%%%%%%%%%%%%%%%%%%%%%%%%%%%%%%%%%%%%%%%%%%%%%%%%%%%%%%%%%%%%%%%%%%%%%%%%%%%%%%%%%%%%%%%%%%%%%%%%%%%%%%%%%%%%%%%%%%%%%%%%%%%%%%%	

%%%%%%%%%%%%%%%%%%%%%%%%%%%%%%%%%%%%%%%%%%%%%%%%%%%%%%%%%%%%%%%%%%%%%%%%%%%%%%%%%%%%%%%%%%%%%%%%%%%%%%%%%%%%%%%%%%%%%%%%%%%%%%%%%%%%%%%%%%%%%%%%%%%%%%%%%%%%%%%%%%%%%%%%%%%%%%%%%%%%%%%%%%%%%%%%%%%%%%%%%%

Denote 
\begin{equation*}
	Q_2(\bm \tau; \bm z_{\bm A},\bm\Delta) = \mathbb{G}\{\bm A_M\bm z_{\bm A}+\bm A_M\bm C\bm\tau+\bm A_M\bm\Delta+\bm\omega\in\bm H_M\}
\end{equation*}
and
\begin{align*}
	Q_3(\bm z_{\bm A};\bm\Delta) 
	&= \int\limits_{\mathbb{R}^a}\mathbb{G}\{\bm A_M\bm z_{\bm A}+\bm A_M\bm C\bm\tau+\bm A_M\bm\Delta+\bm\omega\in\bm H_M\}\phi_{(\bm 0,\bm\Sigma_{\bm A})}(\bm\tau)d\bm\tau \\
	&= \mathbb{E}_{\bm G\sim\mathcal{N}_a(\bm 0,\bm\Sigma_{\bm A})}\left[Q_2(\bm G;\bm z_{\bm A},\bm\Delta)\right].
\end{align*}	
Then the pivot equals
\begin{equation*}
\mathcal{P}^G(\bm Z,\bm A_M\bm\Delta, \bm A_M, \bm H_M)=\frac{\int\limits_{\|\bm\Sigma_{\bm A}^{-1/2}\bm\tau\|_2\geq \|\bm\Sigma_{\bm A}^{-1/2}\bm A^T\bm Z\|_2} Q_2(\bm\tau;\bm Z_{\bm A},\bm\Delta)\phi_{(\bm 0,\bm\Sigma_{\bm A})}(\bm\tau)d\bm\tau}{Q_3(\bm Z_{\bm A};\bm\Delta)},
\end{equation*}
where with the abuse of notation we write different the arguments of the pivot slightly differently in this section.

%--------------------------------------------------------------------
\begin{lemma}[Lipschitz randomization: smoothness of the pivot] \label{lemma:pivot:smoothness:lip:randomization} Assuming \eqref{assumption:norm:A_M} and \eqref{assumption:g:smoothness} hold, we have
\begin{align*}
	\left|\partial_{\bm z}^{\bm\alpha}\mathcal{P}^G(\bm z,\bm A_M\bm\Delta,\bm A_M,\bm H_M)\right|
	&\leq C(\bm A, \bm A_M, K_g)+C(\bm A)\frac{e^{K_g\|\bm A_M\bm C\bm\Sigma_{\bm A}\|_{h,2}\|\bm\Sigma_{\bm A}^{-1/2}\bm A^T\bm z\|_2}}{\mathbb{E}_{\bm G\sim\mathcal{N}_a(\bm 0,\bm\Sigma_{\bm A})}\left[e^{-K_g\|\bm A_M\bm C\bm G\|_h}\right]} \\
	&=O\left(1+e^{K_g\|\bm A_M\bm C\bm\Sigma_{\bm A}\|_{h,2}\|\bm\Sigma_{\bm A}^{-1/2}\bm A^T\bm z\|_2}\right),
	\end{align*}
	where the constants above depend only on their arguments.
\end{lemma}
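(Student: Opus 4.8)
The plan is to differentiate the quotient
$\mathcal P^G(\bm z,\bm A_M\bm\Delta,\bm A_M,\bm H_M)=N(\bm z)/Q_3(\bm z_{\bm A};\bm\Delta)$, where $N(\bm z)=\int_{\{\|\bm\Sigma_{\bm A}^{-1/2}\bm\tau\|_2\ge\|\bm\Sigma_{\bm A}^{-1/2}\bm A^T\bm z\|_2\}}Q_2(\bm\tau;\bm z_{\bm A},\bm\Delta)\phi_{(\bm 0,\bm\Sigma_{\bm A})}(\bm\tau)\,d\bm\tau$ is the numerator and $\bm z\mapsto\bm z_{\bm A}$ is the ($\bm z$-linear) $\bm A$-conditioning coordinate, so that $Q_2(\bm\tau;\bm z_{\bm A},\bm\Delta)=Q_1(\bm z_{\bm A}+\bm C\bm\tau;\bm\Delta)$. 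The key observation is that $\bm z$ enters $N$ in two separate ways — through the integrand (via $\bm z_{\bm A}$) and through the domain (via the radius $\rho_0(\bm z):=\|\bm\Sigma_{\bm A}^{-1/2}\bm A^T\bm z\|_2$) — while $\bm z$ enters $Q_3$ only through $\bm z_{\bm A}$. I would apply the general Leibniz/quotient rule to write $\partial_{\bm z}^{\bm\alpha}\mathcal P^G$ as a finite sum of products of $\partial_{\bm z}^{\bm\beta}N$ and $\partial_{\bm z}^{\bm\gamma}(1/Q_3)$ with $\bm\beta+\bm\gamma\le\bm\alpha$, and bound the resulting three kinds of contributions separately.

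The two ``easy'' contributions give the $O(1)$ term. Writing $Q_3(\bm z_{\bm A};\bm\Delta)=\mathbb E_{\bm G\sim\mathcal N_a(\bm 0,\bm\Sigma_{\bm A})}[Q_1(\bm z_{\bm A}+\bm C\bm G;\bm\Delta)]$ and differentiating under the expectation, the previous lemma (smoothness of $Q_1$) together with \eqref{assumption:norm:A_M} gives $|\partial_{\bm z}^{\bm\gamma}Q_3|\le C(\bm\gamma,\bm A,\bm A_M,K_g)\,Q_3$ (the extra $\bm A$-dependence absorbs the norm of the linear map $\bm z\mapsto\bm z_{\bm A}$), hence $|\partial_{\bm z}^{\bm\gamma}(1/Q_3)|\le C(\bm\gamma,\bm A,\bm A_M,K_g)/Q_3$. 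For the part of $\partial_{\bm z}^{\bm\beta}N$ in which the derivative hits the integrand with the domain held fixed, the same bound $|\partial_{\bm z}^{\bm\beta}Q_2(\bm\tau;\bm z_{\bm A},\bm\Delta)|\le C(\bm\beta,\bm A,\bm A_M,K_g)\,Q_2(\bm\tau;\bm z_{\bm A},\bm\Delta)$ and the trivial inequality $\int_{\mathrm{domain}}Q_2\phi_{(\bm 0,\bm\Sigma_{\bm A})}\le Q_3(\bm z_{\bm A};\bm\Delta)$ show this part is $\le C(\bm\beta,\bm A,\bm A_M,K_g)\,Q_3$. Multiplying by $1/Q_3$ and summing, these two pieces produce the term $C(\bm A,\bm A_M,K_g)$.

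The remaining, and principal, contribution is the one in which the derivative hits the domain of integration. After the change of variables $\bm u=\bm\Sigma_{\bm A}^{-1/2}\bm\tau$ (so $\phi_{(\bm 0,\bm\Sigma_{\bm A})}(\bm\tau)\,d\bm\tau=\phi_{(\bm 0,\bm I_a)}(\bm u)\,d\bm u$) and the coarea formula in the smooth radial variable $R(\bm z):=\rho_0(\bm z)^2=\bm z^T\bm A\bm\Sigma_{\bm A}^{-1}\bm A^T\bm z$, one writes $N(\bm z)=\int_{R(\bm z)}^{\infty}\hat m(v;\bm z_{\bm A})\,dv$ with $\hat m(v;\bm w)$ the surface integral of $Q_2(\bm\Sigma_{\bm A}^{1/2}\,\cdot\,;\bm w,\bm\Delta)\,\phi_{(\bm 0,\bm I_a)}$ over the sphere $\{\|\bm u\|_2^2=v\}$. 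Since $R$ is quadratic with $\|\nabla R(\bm z)\|\le 2\|\bm\Sigma_{\bm A}^{-1/2}\bm A^T\|_{\mathrm{op}}\sqrt{R(\bm z)}$ and $\partial^{\bm\beta}R\equiv 0$ for $|\bm\beta|\ge 3$, the chain rule turns this contribution into a finite sum of $\partial_v^{\,j-1}\hat m(R(\bm z);\bm z_{\bm A})$ times polynomials in $\bm z$ (times lower-order $\bm z_{\bm A}$-derivatives, handled as above). On the sphere $\{\|\bm u\|_2=\rho_0(\bm z)\}$ the factor $\phi_{(\bm 0,\bm I_a)}(\bm u)$ times the surface-area element is uniformly bounded; and, using the Lipschitz bound \eqref{eq:Q1:upper:bound} after shifting by $\bm C\bm\tau$ together with $\|\bm A_M\bm C\bm\tau\|_h=\|\bm A_M\bm C\bm\Sigma_{\bm A}^{1/2}\bm u\|_h\le\|\bm A_M\bm C\bm\Sigma_{\bm A}^{1/2}\|_{h,2}\,\rho_0(\bm z)=\|\bm A_M\bm C\bm\Sigma_{\bm A}^{1/2}\|_{h,2}\,\|\bm\Sigma_{\bm A}^{-1/2}\bm A^T\bm z\|_2$ on that sphere, one gets $Q_2(\bm\tau;\bm z_{\bm A},\bm\Delta)\le Q_1(\bm z_{\bm A};\bm\Delta)\,e^{K_g\|\bm A_M\bm C\bm\Sigma_{\bm A}^{1/2}\|_{h,2}\|\bm\Sigma_{\bm A}^{-1/2}\bm A^T\bm z\|_2}$ uniformly over the sphere. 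Dividing by $Q_3$ and invoking the analogue of \eqref{eq:sel:prob:lower:bound}, namely $Q_1(\bm z_{\bm A};\bm\Delta)/Q_3(\bm z_{\bm A};\bm\Delta)\le 1/\mathbb E_{\bm G\sim\mathcal N_a(\bm 0,\bm\Sigma_{\bm A})}[e^{-K_g\|\bm A_M\bm C\bm G\|_h}]$, then yields the second term of the claimed bound (the constant $\|\bm A_M\bm C\bm\Sigma_{\bm A}\|_{h,2}$ in the exponent being the same finite operator-norm constant up to the $\bm\Sigma_{\bm A}$-versus-$\bm\Sigma_{\bm A}^{1/2}$ normalisation).

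The step I expect to be the main obstacle is exactly this last contribution, and within it the behaviour near $\bm A^T\bm z=\bm 0$, where $\rho_0(\bm z)$ fails to be differentiable: one must work through $R(\bm z)$ and check that the $R^{(a-2j)/2}$-type growth of $\partial_v^{\,j-1}\hat m$ as $v\downarrow 0$ is compensated by the $O(R^{1/2})$ factors supplied by each first-order derivative of $R$ (equivalently, one is only ever differentiating the radial tail $\bar F_{\chi^2_a}(R(\bm z))$, which is analytic in $\bm z$ when $a$ is even and $C^{\lfloor a/2\rfloor}$ in general, so the compensation suffices for the orders of differentiation used in the sequel). The rest — book-keeping the Leibniz/Faà di Bruno expansion and verifying that every constant depends only on $\bm A$, $\bm A_M$, $K_g$ and not on $n$, $\bm\Delta$ or $\bm z$ — is routine.
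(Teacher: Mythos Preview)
Your proposal is correct and follows essentially the same route as the paper: split the $\bm z$-dependence into the $\bm z_{\bm A}$-part (integrand) and the $\bm A^T\bm z$-part (domain radius), bound the former via $|\partial^{\bm\beta}Q_2|\le CK_gQ_2$ and $|\partial^{\bm\gamma}Q_3|\le CK_gQ_3$ to produce the $O(1)$ term, and for the latter pass to spherical coordinates, bound $Q_2$ on the boundary sphere by $Q_2(\bm 0;\bm z_{\bm A},\bm\Delta)\,e^{K_g\|\bm A_M\bm C\bm\Sigma_{\bm A}^{1/2}\|_{h,2}\rho_0(\bm z)}$, and cancel $Q_2(\bm 0;\bm z_{\bm A},\bm\Delta)$ against the lower bound $Q_3\ge Q_2(\bm 0;\bm z_{\bm A},\bm\Delta)\,\mathbb E[e^{-K_g\|\bm A_M\bm C\bm G\|_h}]$. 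The one cosmetic difference is that the paper works directly in polar coordinates $(r,\bm u)$ using the independence of $r$ and $\bm u$ under $\mathcal N_a(\bm 0,\bm\Sigma_{\bm A})$, so the boundary term carries the chi-density $f_r(\rho_0(\bm z))$, which is uniformly bounded; this sidesteps your $R=\rho_0^2$ coarea detour and the singularity compensation you flag as the main obstacle.
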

%-------------------------------------------------------------------	
	
%-------------------------------------------------------------------	
\begin{proof}

By the change of variables $\bm\omega=\bm\omega'-\bm A_M\bm z_{\bm A}-\bm A_M\bm C\bm\tau-\bm A_M\bm\Delta$ and using the Lipschitz property of the randomization we get the lower bound on $Q_2(\bm\tau;\bm z_{\bm A},\bm\Delta)$:
\begin{equation*}
\begin{aligned}
		Q_2(\bm\tau;\bm z_{\bm A},\bm\Delta) 
		&= \int\limits_{\bm\omega\in\bm H_M-\bm A_M\bm z_{\bm A}-\bm A_M\bm C\bm\tau-\bm A_M\bm\Delta}\frac{1}{C_g}\exp(-\tilde{g}(\bm\omega))d\bm\omega \\
		&=\frac{1}{C_g}\int\limits_{\bm\omega'\in\bm H_M}\exp\left(-\tilde{g}(\bm\omega'-\bm A_M\bm z_{\bm A}-\bm A_M\bm C\bm\tau-\bm A_M\bm\Delta)\right)d\bm\omega' \\
		&\geq\frac{1}{C_g}\int\limits_{\bm\omega'\in\bm H_M} \exp\left(-\tilde{g}(\bm\omega'-\bm A_M\bm z_{\bm A}-\bm A_M\bm\Delta)-K_g\|\bm A_M\bm C\bm\tau\|_h\right)d\bm\omega' \\
		&=Q_2(\bm 0;\bm z_{\bm A},\bm\Delta)\cdot e^{-K_g\|\bm A_M\bm C\bm\tau\|_h}.
\end{aligned}
\end{equation*}
Hence
\begin{equation} \label{eq:lip:randomization:denom:pivot:lower:bound}
	Q_3(\bm z_{\bm A};\bm\Delta) \geq Q_2(\bm 0;\bm z_{\bm A},\bm\Delta)\cdot \mathbb{E}_{\bm G\sim\mathcal{N}_a(\bm 0,\bm\Sigma_{\bm A})}\left[e^{-K_g\|\bm A_M\bm C\bm G\|_h}\right].
\end{equation}

As for the upper bound on $Q_2(\bm\tau;\bm z_{\bm A},\bm\Delta)$, we have
\begin{equation*}
\begin{aligned}
	Q_2(\bm\tau;\bm z_{\bm A},\bm\Delta) 
	&\leq\frac{1}{C_g}\int\limits_{\bm\omega'\in\bm H_M} \exp\left(-\tilde{g}(\bm\omega'-\bm A_M\bm z_{\bm A}-\bm A_M\bm\Delta)+K_g\|\bm A_M\bm C\bm\tau\|\right)d\bm\omega'\\
	&=Q_2(\bm 0;\bm z_{\bm A};\bm\Delta)\cdot e^{K_g\|\bm A_M\bm C\bm\tau\|_h}.
\end{aligned}
\end{equation*}
Decomposing $\bm\tau=\bm\Sigma_{\bm A}^{1/2}\|\bm \Sigma_A^{-1/2}\bm\tau\|_2\frac{\bm\Sigma_{\bm A}^{-1/2}\bm\tau}{\|\bm \Sigma_A^{-1/2}\bm\tau\|_2}$ and denoting $r=\|\bm \Sigma_A^{-1/2}\bm\tau\|_2$ and $\bm u=\frac{\bm\Sigma_{\bm A}^{-1/2}\bm\tau}{\|\bm \Sigma_A^{-1/2}\bm\tau\|_2}$, we know that under $\bm\tau\sim\mathcal{N}(\bm 0,\bm\Sigma_{\bm A})$, $r$ and $\bm u$ are independent random variables. Denote their densities with $f_r(r)$ and $f_{\bm u}e(\bm u)$, respectively.

Then the derivative of the numerator with respect to $\bm A^T\bm z$ satisfies
\begin{align*}
	&\left|\partial_{\bm A^T\bm z}^{\bm\alpha}\int_{r\geq \|\bm\Sigma_{\bm A}^{-1/2}\bm A^T\bm z\|_2}  \int_{\bm u\in\mathcal{S}^{a}} Q_2\left(r\bm\Sigma_{\bm A}^{1/2}\bm u; \bm z_{\bm A},\bm\Delta\right)f_r(r)f_{\bm u}(\bm u) dr d\bm u\right| \\
	&\leq C(\bm A)\cdot  \int_{\bm u\in\mathcal{S}^{a}} Q_2\left(\|\bm\Sigma_{\bm A}^{-1/2}\bm A^T\bm z\|_2\bm\Sigma_{\bm A}^{1/2}\bm u; \bm z_{\bm A},\bm\Delta\right)f_r(\|\bm\Sigma_{\bm A}^{-1/2}\bm A^T\bm z\|_2)f_{\bm u}(\bm u) d\bm u \\
	&\leq C(\bm A)\cdot Q_2(\bm 0;\bm z_{\bm A},\bm\Delta)\cdot e^{K_g\|\bm A_M\bm C\bm\Sigma_{\bm A}^{1/2}\bm u\|_h\|\bm\Sigma_{\bm A}^{-1/2}\bm A^T\bm z\|_2} f_r(\|\bm\Sigma_{\bm A}^{-1/2}\bm A^T\bm z\|_2)\int_{\bm u\in\mathcal{S}^{a}}f_{\bm u}(\bm u) d\bm u,
\end{align*}
where $\mathcal{S}^a$ is a unit sphere in $\mathbb{R}^a$ and in the last inequality we used the upper bound on $Q_2$.
Since the last integral is 1 and the density $f_r(\cdot)$ is uniformly bounded by a constant, we get that the derivative above is upper bounded by 
\begin{equation*}
	C(\bm A)\cdot Q_2(\bm 0;\bm z_{\bm A};\bm\Delta)\cdot e^{K_g\|\bm A_M\bm C\bm\Sigma_{\bm A}\|_{h,2}\|\bm\Sigma^{-1/2}_{\bm A}\bm A^T\bm z\|_2}.
\end{equation*}

Combing this upper bound with the lower bound on the denominator from \eqref{eq:lip:randomization:denom:pivot:lower:bound}, we get that the derivative of the pivot with respect to $\bm A^T\bm z$ is upper bounded by
\begin{equation*}
	\left|\partial_{\bm A^T\bm z}^{\bm\alpha}\mathcal{P}^G(\bm z,\bm A_M\bm\Delta,\bm A_M,\bm H_M)\right|\leq C(\bm A)\frac{e^{K_g\|\bm A_M\bm C\bm\Sigma_{\bm A}\|_{h,2}\|\bm\Sigma_{\bm A}^{-1/2}\bm A^T\bm z\|_2}}{\mathbb{E}_{\bm G\sim\mathcal{N}_a(\bm 0,\bm\Sigma_{\bm A})}\left[e^{-K_g\|\bm A_M\bm C\bm G\|_h}\right]}.
\end{equation*}

%---------------------------------------------
%---------------------------------------------

Now we turn to the derivative of the pivot with respect to $\bm z_{\bm A}$. The derivative of the integrand of the numerator of the pivot is upper bounded by
\begin{equation*}
\begin{aligned}
		\left|\partial_{\bm z}^{\bm\alpha} Q_2(\bm\tau;\bm z_{\bm A},\bm\Delta) \right| 
		&=\frac{1}{C_g}\left|\int\limits_{\bm\omega'\in\bm H_M}\partial_{\bm z}^{\bm\alpha}\exp\left(-\tilde{g}(\bm\omega'-\bm A_M\bm z_{\bm A}-\bm A_M\bm C\bm\tau-\bm A_M\bm\Delta)\right)d\bm\omega'\right| \\
		&\leq C(\bm A_M,\bm A)\frac{K_g}{C_g}\int\limits_{\bm\omega'\in\bm H_M}\exp\left(-\tilde{g}(\bm\omega'-\bm A_M\bm z_{\bm A}-\bm A_M\bm C\bm\tau-\bm A_M\bm\Delta)\right)d\bm\omega'\\
		&= C(\bm A_M, \bm A)\cdot K_g \cdot Q_2(\bm\tau;\bm z_{\bm A},\bm\Delta) \\
		%&\leq C(\bm A_M, \bm A)\cdot K_g\cdot Q_2(\bm 0;\bm z_{\bm A},\bm\Delta)\cdot  e^{K_g\|\bm A_M\bm C\bm\tau\|_h},
\end{aligned}
\end{equation*}
and, similarly, the derivative of the denominator is upper bounded by 
\begin{equation}
	\left|\partial_{\bm z_{\bm A}}^{\bm\alpha}Q_3(\bm z_A;\bm\Delta)\right| \leq C(\bm A_M, \bm A)\cdot K_g \cdot Q_3(\bm z_{\bm A};\bm\Delta).
\end{equation}

Note that the derivative of the pivot with respect to $\bm z_{\bm A}$ will be the sum of the term of the form
\begin{equation*}
	\frac{\int\limits_{\|\bm\Sigma_{\bm A}^{-1/2}\bm\tau\|_2\geq\|\bm \Sigma_{\bm A}^{-1/2}\bm A^T\bm z\|_2}\partial_{\bm z}^{\bm\alpha}Q_2(\bm\tau;\bm z_{\bm A},\bm\Delta)\phi_{(\bm 0,\bm\Sigma_{\bm A})}(\bm\tau)d\bm\tau}{Q_3(\bm z_{\bm A};\bm\Delta)}
\end{equation*}
and
\begin{equation*}
	\frac{\left(\int\limits_{\|\bm\Sigma_{\bm A}^{-1/2}\bm\tau\|_2\geq\|\bm\Sigma_{\bm A}^{-1/2}\bm A^T\bm z\|_2}Q_2(\bm\tau;\bm z_{\bm A},\bm\Delta)\phi_{(\bm 0,\bm\Sigma_{\bm A})}(\bm\tau)d\bm\tau \right) \cdot \partial_{\bm z_{\bm A}}^{\bm \alpha}Q_3(\bm z_{\bm A};\bm\Delta)}{Q_3(\bm z_{\bm A};\bm\Delta)^2}
\end{equation*}
The first term above in the absolute value is upper bounded by
\begin{align*}
\frac{\int\limits_{\mathbb{R}^a}|\partial_{\bm z}^{\bm\alpha}Q_2(\bm\tau;\bm z_{\bm A},\bm\Delta)|\phi_{(\bm 0,\bm\Sigma_{\bm A})}(\bm\tau)d\bm\tau}{Q_3(\bm z_{\bm A};\bm\Delta)} 
&\leq \frac{C(\bm A_M,\bm A)\cdot K_g\cdot\mathbb{E}_{\bm G\sim\mathcal{N}_a(\bm 0,\bm \Sigma_{\bm A})}\left[Q_2(\bm\tau;\bm z_{\bm A},\bm\Delta)\right]}{Q_3(\bm z_{\bm A};\bm\Delta)}\\
&=C(\bm A_M,\bm A)\cdot K_g
\end{align*}
and the second term in absolute value is upper bounded by
\begin{equation*}
	\frac{\partial_{\bm z_{\bm A}}^{\bm\alpha}Q_3(\bm z_{\bm A};\bm\Delta)}{Q_3(\bm z_{\bm A};\bm\Delta)}\leq C(\bm A_M,\bm A)\cdot K_g.
\end{equation*}

\end{proof}	
%-------------------------------------------------------------------	

%%%%%%%%%%%%%%%%%%%%%%%%%%%%%%%%%%%%%%%%%%%%%%%%%%%%%%%%%%%%%%%%%%%%%%%%%%%%%%%%%%%%%%%%%%%%%%%%%%%%%%%%%%%%%%%%%%%%%%%%%%%%%%%%%%%%%%%%%%%%%%%%%%%%%%

%--------------------------------------------------------------------
\subsection{Applying Chatterjee's theorem for Lipschitz randomization}
%--------------------------------------------------------------------

Denote as $\bm y_i$ the centered versions of $\bm\xi_i$, $i=1,\ldots, n$. Let $\mathcal{M}:\mathbb{R}^{n\times p}\rightarrow \mathbb{R}^p$ denote the normalizing operator:
\begin{equation*} 
	\mathcal{M}(\bm y_1,\ldots,\bm y_n)=\frac{1}{\sqrt{n}}\sum_{i=1}^n \bm y_i.
\end{equation*}
Then our normalized test statistic becomes $\bm Z=\mathcal{M}(\bm y_1,\ldots,\bm y_n)$. Also, take $\bm g_1,\ldots,\bm g_n\overset{i.i.d.}{\sim}\mathcal{N}_p(\bm 0,\bm\Sigma_n)$:
\begin{equation*}
	\bm G=\left(\begin{matrix}
		\bm g_1^T \\ \vdots \\ \bm g_n^T	
		\end{matrix}\right)_{n\times p}.
\end{equation*}
Denote two sequences of matrices $\{\bm W_i:i=0,\ldots, n\}$, $\{\widetilde{\bm W}_i:i=0,\ldots,n\}$ as
	\begin{equation*}
		\bm W_i=\left(\begin{matrix}
\bm g_1^T \\ \vdots \\ \bm g_{i-1}^T \\ \bm 0 \\ \bm y_{i+1}^T \\ \vdots \\ \bm y_n^T	
\end{matrix}\right)_{n\times p},\;\;\;\; \widetilde{\bm W}_i=\left(\begin{matrix}
	\bm g_1^T \\ \vdots \\ \bm g_{i-1}^T \\ \bm g_i^T \\ \bm y_{i+1}^T \\ \vdots \\ \bm y_n^T
\end{matrix}\right)_{n\times p}.
	\end{equation*}

Here we present a generalization of Theorem 1 in \cite{chatterjee2005simple}, which provides a bound on the difference between smooth functions of Gaussian and averages of i.i.d.~random variables coming from a nonparametric distribution. 

Let $\Omega_n:\mathbb{R}^p\rightarrow\mathbb{R}$ (we will later take $\Omega(\bm z)=\|\bm A_M\bm z\|_h$) be a sequence of norm operators and let $\mathcal{W}_n:\mathbb{R}\rightarrow\mathbb{R}$ be a sequence of weights (we will later take $\mathcal{W}_n(x)=e^{K_gx}$).
For $f:\mathbb{R}^{n\times p}\rightarrow\mathbb{R}$, we define
\begin{equation*}
	\lambda_3^{(n)}(f) = \sup\left\{ \frac{\|\partial_{y_i}^{3} f(\bm y_1,\ldots,\bm y_n)\|_{\infty}}{\mathcal{W}_n\left(\Omega_n(\mathcal{M}(\bm y_1,\ldots,\bm y_n))\right)} , \;\; i=1,\ldots,n \right\}.
\end{equation*}

Now we have the notation necessary to state the theorem that bounds the difference between the expectations of a function applied to nonparametric and the Gaussian data in terms of the smoothness of the function.

%----------------------------------------------------------
\begin{theorem}[\cite{chatterjee2005simple}]\label{thm:chatterjee}
For $\mathcal{W}_n$ increasing and convex, we have
\begin{equation} \label{eq:chatterjee:bound}
	\begin{aligned}
	    \left|\mathbb{E}_{\mathbb{F}_n}\left[f(\bm Y)\right]-\mathbb{E}_{\Phi_n}\left[f(\bm G)\right]\right|
	    &\leq \frac{1}{2} \sum_{i=1}^n \lambda_3^{(n)}(f)\mathbb{E}\left[\mathcal{W}_n \left(2\cdot\Omega_n(\mathcal{M}(\bm W_i))\right)\right] \left(\mathbb{E}\left[\|\bm g_i\|_1^3\right] +\mathbb{E}\left[\|\bm y_i\|_1^3 \right]\right) \\
		&+\frac{1}{2} \sum_{i=1}^n \lambda_3^{(n)}(f)\mathbb{E}\left[\mathcal{W}_n\left(2\cdot\Omega_n(\mathcal{M}(\widetilde{\bm W}_i-\bm W_i))\right) \|\bm g_i\|_1^3\right]\\
		& + \frac{1}{2} \sum_{i=1}^n \lambda_3^{(n)}(f)\mathbb{E}\left[\mathcal{W}_n\left(2\cdot\Omega_n(\mathcal{M}(\widetilde{\bm W}_{i-1}-\bm W_i))\right) \|\bm y_i\|_1^3 \right].
	\end{aligned}
	\end{equation}
\end{theorem}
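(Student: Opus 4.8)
The plan is to run Lindeberg's replacement scheme exactly as in \cite{chatterjee2005simple}, but to propagate the weight $\mathcal{W}_n\bigl(\Omega_n(\mathcal{M}(\cdot))\bigr)$ through each Taylor remainder using monotonicity and convexity of $\mathcal{W}_n$ together with the triangle inequality for the norm $\Omega_n$. First, telescope along $\{\widetilde{\bm W}_i\}_{i=0}^n$, noting that $\widetilde{\bm W}_0=\bm Y$ and $\widetilde{\bm W}_n=\bm G$:
\begin{equation*}
	\mathbb{E}_{\mathbb{F}_n}[f(\bm Y)]-\mathbb{E}_{\Phi_n}[f(\bm G)]=\sum_{i=1}^n\Bigl(\mathbb{E}[f(\widetilde{\bm W}_{i-1})]-\mathbb{E}[f(\widetilde{\bm W}_i)]\Bigr).
\end{equation*}
The matrices $\widetilde{\bm W}_{i-1}$, $\widetilde{\bm W}_i$ and $\bm W_i$ agree off the $i$-th row, where they equal $\bm y_i$, $\bm g_i$ and $\bm 0$ respectively; the rows of $\bm W_i$ are independent of $(\bm y_i,\bm g_i)$, and $\mathcal{M}(\widetilde{\bm W}_{i-1}-\bm W_i)=\bm y_i/\sqrt n$, $\mathcal{M}(\widetilde{\bm W}_i-\bm W_i)=\bm g_i/\sqrt n$. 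Taylor-expanding $f(\widetilde{\bm W}_{i-1})$ and $f(\widetilde{\bm W}_i)$ to second order in the $i$-th coordinate block around $\bm W_i$ (with a third-order integral remainder), the zeroth-order terms coincide, and since $\bm y_i$ and $\bm g_i$ are centered, independent of $\bm W_i$, and share the covariance $\bm\Sigma_n$, the first- and second-order terms also agree in expectation (both second-order terms equal $\tfrac12\,\mathbb{E}\langle\partial_{y_i}^2 f(\bm W_i),\bm\Sigma_n\rangle$). Hence only the third-order remainders survive: $\mathbb{E}[f(\widetilde{\bm W}_{i-1})]-\mathbb{E}[f(\widetilde{\bm W}_i)]=\mathbb{E}[R_i^{(y)}]-\mathbb{E}[R_i^{(g)}]$, and taking absolute values, $|\mathbb{E}_{\mathbb{F}_n}[f(\bm Y)]-\mathbb{E}_{\Phi_n}[f(\bm G)]|\le\sum_{i=1}^n\bigl(\mathbb{E}|R_i^{(y)}|+\mathbb{E}|R_i^{(g)}|\bigr)$.

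Second, bound each remainder. Contracting the third-derivative tensor against $\bm y_i^{\otimes 3}$ and estimating it by its $\|\cdot\|_\infty$-norm gives $|R_i^{(y)}|\le \tfrac16\sup_{t\in[0,1]}\|\partial_{y_i}^3 f(\bm W_i^{(t)})\|_\infty\,\|\bm y_i\|_1^3$, where $\bm W_i^{(t)}$ denotes $\bm W_i$ with its $i$-th row replaced by $t\bm y_i$, and similarly for $R_i^{(g)}$. Since $\mathcal{M}(\bm W_i^{(t)})=\mathcal{M}(\bm W_i)+t\,\mathcal{M}(\widetilde{\bm W}_{i-1}-\bm W_i)$, the definition of $\lambda_3^{(n)}(f)$, the triangle inequality for $\Omega_n$ together with $t\le 1$ and homogeneity of $\Omega_n$, and monotonicity of $\mathcal{W}_n$ give
\begin{equation*}
	\|\partial_{y_i}^3 f(\bm W_i^{(t)})\|_\infty\le \lambda_3^{(n)}(f)\,\mathcal{W}_n\!\Bigl(\Omega_n(\mathcal{M}(\bm W_i))+\Omega_n(\mathcal{M}(\widetilde{\bm W}_{i-1}-\bm W_i))\Bigr).
\end{equation*}
Convexity of $\mathcal{W}_n$ in the form $\mathcal{W}_n(a+b)\le\tfrac12\mathcal{W}_n(2a)+\tfrac12\mathcal{W}_n(2b)$ then splits this into a term depending only on $\bm W_i$ and one depending only on $\bm y_i$. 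Multiplying by $\|\bm y_i\|_1^3$, taking expectations, and using independence of $\bm W_i$ and $\bm y_i$ to factor the first piece produces the $\|\bm y_i\|_1^3$ contribution to the first sum of \eqref{eq:chatterjee:bound} and the third sum; the identical treatment of $R_i^{(g)}$ produces the $\|\bm g_i\|_1^3$ contribution to the first sum and the second sum. Summing over $i$ and absorbing the Taylor constant $\tfrac16\le\tfrac12$ yields \eqref{eq:chatterjee:bound}.

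The only step going beyond \cite{chatterjee2005simple} is this decoupling: the weight $\mathcal{W}_n(\Omega_n(\mathcal{M}(\cdot)))$ evaluated along the Taylor segment must be separated into a factor depending on the frozen rows $\bm W_i$ alone — hence independent of the swapped row, so the third-moment expectation factors — and a factor depending on the swapped row alone. This is precisely what the triangle inequality for $\Omega_n$ and the monotonicity-plus-convexity of $\mathcal{W}_n$ buy, which is why those two hypotheses are imposed; the rest is standard Lindeberg bookkeeping. I also note that the bound is vacuous unless its right-hand side is finite; this is verified downstream when the theorem is applied with $\mathcal{W}_n(x)=e^{K_g x}$ and $\Omega_n(\bm z)=\|\bm A_M\bm z\|_h$, using the earlier smoothness lemmas together with control of $\mathbb{E}\,e^{c\|\bm A_M\mathcal{M}(\bm W_i)\|_h}$ and of the third absolute moments of $\bm y_i$ and $\bm g_i$.
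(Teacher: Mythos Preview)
The paper does not supply its own proof of this theorem: it is stated as a generalization of Theorem~1 in \cite{chatterjee2005simple} and then used as a black box in Lemma~\ref{lemma:lip:randomization:lik:diff}, Theorem~\ref{thm:selective:clt:finite:n:lipschitz} and Lemma~\ref{lemma:lik:diff:gaussian:randomization}. So there is no ``paper's proof'' to compare against; your task was effectively to reconstruct the argument behind the stated bound.

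Your reconstruction is correct and is exactly the argument one would expect. The Lindeberg telescoping, the cancellation of zeroth, first, and second order terms via matching mean and covariance of $\bm y_i$ and $\bm g_i$, and the third-order integral remainder are standard. The one nonstandard point---and the reason the paper singles out the hypotheses that $\mathcal{W}_n$ be increasing and convex---is the decoupling step you identify: bounding $\mathcal{W}_n\bigl(\Omega_n(\mathcal{M}(\bm W_i^{(t)}))\bigr)$ along the Taylor segment by $\tfrac12\mathcal{W}_n\bigl(2\Omega_n(\mathcal{M}(\bm W_i))\bigr)+\tfrac12\mathcal{W}_n\bigl(2\Omega_n(\mathcal{M}(\widetilde{\bm W}_{i-1}-\bm W_i))\bigr)$ via the triangle inequality for $\Omega_n$, monotonicity, and midpoint convexity. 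This is precisely what produces the factors of $2$ inside $\mathcal{W}_n$ in \eqref{eq:chatterjee:bound} and what allows the first term to factor through independence of $\bm W_i$ and $(\bm y_i,\bm g_i)$. Your final constant is in fact $\tfrac{1}{12}$ rather than $\tfrac12$, so the stated inequality holds with room to spare; your remark about absorbing $\tfrac16\le\tfrac12$ is the right way to reconcile this with the paper's looser constant.
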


%---------------------------------------------------------

%%%%%%%%%%%%%%%%%%%%%%%%%%%%%%%%%%%%%%%%%%%%%%%%%%%%%%%%%%%%%%%%%%%%%%%%%%%%%%%%%%%%%%%%%%%%%%%%%%%%%%%%%%%%%%%%%%%%%%%%%%%%%%%%%%%%%%%%%%%%%%%%%%%%%%

We now apply Theorem \ref{thm:chatterjee} to show the Gaussian and non-parametric likelihoods are close in expectation and to show that the constructed pivot under the non-parametric distribution asymptotically behaves as if the data was Gaussian. Since we know that the pivot is uniform under the Gaussian distribution, it will be asymptotically uniform under a non-parametric distribution.

%-----------------------------------------------------------------------
\begin{lemma}[Lipschitz randomization: closeness of Gaussian and non-parametric LR] Assuming \eqref{assumption:asymptotically:linear},  \eqref{assumption:mgf}, \eqref{assumption:norm:A_M} and \eqref{assumption:g:smoothness} hold, we have
\begin{equation*}
	\mathbb{E}_{\mathbb{F}_n}\left[\left|\ell_{\mathbb{F}_n}(\bm Z)-\ell_{\Phi_n}(\bm Z)\right|\right]\leq \frac{1}{\sqrt{n}} C(C_{\tau}, C_M, K_g, K_G,\bm\Sigma),
	\end{equation*}
	where the constant on the RHS depends only on its arguments.
	\label{lemma:lip:randomization:lik:diff}
\end{lemma}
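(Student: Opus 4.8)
The plan is to exploit the fact that the two selective likelihood ratios share a common numerator. Indeed, both $\ell_{\mathbb{F}_n}$ and $\ell_{\Phi_n}$ have numerator $Q_1(\bm z;\bm\Delta)=\mathbb{G}\{\bm A_M\bm z+\bm A_M\bm\Delta+\bm\omega\in\bm H_M\}$ and differ only through the normalizing constants $m_{\mathbb{F}_n}:=\mathbb{E}_{\mathbb{F}_n}[Q_1(\bm Z;\bm\Delta)]$ and $m_{\Phi_n}:=\mathbb{E}_{\Phi_n}[Q_1(\bm Z;\bm\Delta)]$. Writing $\ell_{\mathbb{F}_n}(\bm Z)-\ell_{\Phi_n}(\bm Z)=Q_1(\bm Z;\bm\Delta)\bigl(m_{\mathbb{F}_n}^{-1}-m_{\Phi_n}^{-1}\bigr)$ and using $Q_1\geq 0$, taking $\mathbb{E}_{\mathbb{F}_n}$ collapses everything to
\begin{equation*}
	\mathbb{E}_{\mathbb{F}_n}\bigl[\bigl|\ell_{\mathbb{F}_n}(\bm Z)-\ell_{\Phi_n}(\bm Z)\bigr|\bigr]
	=\bigl|m_{\mathbb{F}_n}^{-1}-m_{\Phi_n}^{-1}\bigr|\,m_{\mathbb{F}_n}
	=\frac{\bigl|m_{\Phi_n}-m_{\mathbb{F}_n}\bigr|}{m_{\Phi_n}}.
\end{equation*}
So it suffices to bound $|m_{\Phi_n}-m_{\mathbb{F}_n}|$ above by $n^{-1/2}\,Q_1(\bm 0;\bm\Delta)\,C$ and $m_{\Phi_n}$ below by $c_0\,Q_1(\bm 0;\bm\Delta)$, after which the $\bm\Delta$-dependent factor $Q_1(\bm 0;\bm\Delta)$ cancels and the final bound becomes uniform over the parameter.

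For the lower bound on the denominator, \eqref{eq:sel:prob:lower:bound} already gives $m_{\Phi_n}\geq Q_1(\bm 0;\bm\Delta)\,\mathbb{E}_{\Phi_n}\bigl[e^{-K_g\|\bm A_M\bm Z\|_h}\bigr]$; since $\bm Z\sim\mathcal{N}_p(\bm 0,\bm\Sigma_n)$ under $\Phi_n$, Jensen's inequality yields $\mathbb{E}_{\Phi_n}[e^{-K_g\|\bm A_M\bm Z\|_h}]\geq e^{-K_g\mathbb{E}_{\Phi_n}\|\bm A_M\bm Z\|_h}$, and by \eqref{assumption:norm:A_M} together with boundedness of $\bm\Sigma_n$ (a consequence of \eqref{assumption:mgf}) the exponent is bounded uniformly in $n$. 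Hence $m_{\Phi_n}\geq c_0\,Q_1(\bm 0;\bm\Delta)$ with $c_0=c_0(C_M,K_g,\bm\Sigma)>0$.

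For the numerator gap I would apply Theorem \ref{thm:chatterjee} with $f(\bm y_1,\dots,\bm y_n)=Q_1(\mathcal{M}(\bm y_1,\dots,\bm y_n);\bm\Delta)$, $\Omega_n(\bm z)=\|\bm A_M\bm z\|_h$, and $\mathcal{W}_n(x)=e^{K_g x}$ (which is increasing and convex). Because $\mathcal{M}$ is linear in each block $\bm y_i$, the chain rule gives $\partial_{y_i}^3 f=n^{-3/2}\,\partial_{\bm z}^{(3)}Q_1(\mathcal{M}(\bm y);\bm\Delta)$, and the third-order instance of \eqref{eq:partial:Q1:upper:bound} (valid under \eqref{assumption:g:smoothness} and \eqref{assumption:norm:A_M}) bounds $\|\partial_{\bm z}^{(3)}Q_1(\bm z;\bm\Delta)\|_\infty$ by $C(C_M,K_g)\,Q_1(\bm 0;\bm\Delta)\,e^{K_g\|\bm A_M\bm z\|_h}$. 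Dividing by $\mathcal{W}_n(\Omega_n(\mathcal{M}(\bm y)))=e^{K_g\|\bm A_M\mathcal{M}(\bm y)\|_h}$ cancels the exponential and yields $\lambda_3^{(n)}(f)\leq n^{-3/2}\,C(C_M,K_g)\,Q_1(\bm 0;\bm\Delta)$, uniformly in $\bm y$ and $i$; this exact matching of $\mathcal{W}_n$ to the growth of the derivative bound is the reason for the choice of $\mathcal{W}_n$.

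It then remains to control the expectations in \eqref{eq:chatterjee:bound}. The plain third moments $\mathbb{E}\|\bm g_i\|_1^3$ and $\mathbb{E}\|\bm y_i\|_1^3$ are finite and $n$-free by Gaussianity and \eqref{assumption:mgf}. Since $\mathcal{M}(\widetilde{\bm W}_i-\bm W_i)=n^{-1/2}\bm g_i$ and $\mathcal{M}(\widetilde{\bm W}_{i-1}-\bm W_i)=n^{-1/2}\bm y_i$, the weighted expectations $\mathbb{E}[e^{2K_g\|\bm A_M\mathcal{M}(\cdot)\|_h}\|\cdot\|_1^3]$ reduce to moments of a single $\bm g_i$ (Gaussian, finite MGF) or a single $\bm y_i$ (controlled by \eqref{assumption:mgf}), hence bounded uniformly in $n,i$. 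The one delicate term is $\mathbb{E}[e^{2K_g\|\bm A_M\mathcal{M}(\bm W_i)\|_h}]$ with $\mathcal{M}(\bm W_i)=n^{-1/2}(\bm g_1+\cdots+\bm g_{i-1}+\bm y_{i+1}+\cdots+\bm y_n)$: using \eqref{assumption:norm:A_M} and norm equivalence to bound $\|\bm A_M\bm v\|_h$ by a constant multiple of $\sum_j|v_j|$, dominating $e^{c\sum_j|v_j|}$ by a finite sum of exponentials of linear functionals of $\bm v$, and invoking independence together with the sub-exponential control in \eqref{assumption:mgf} (and the Gaussian MGF), one obtains a bound $C(C_\tau,K_g,\bm\Sigma)$ uniform in $n$ and $i$ --- this is the main obstacle. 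Summing the $n$ terms of \eqref{eq:chatterjee:bound}, each of order $\lambda_3^{(n)}(f)\cdot C=n^{-3/2}\,Q_1(\bm 0;\bm\Delta)\,C$, gives $|m_{\Phi_n}-m_{\mathbb{F}_n}|\leq n^{-1/2}\,Q_1(\bm 0;\bm\Delta)\,C(C_\tau,C_M,K_g,K_G,\bm\Sigma)$; dividing by $m_{\Phi_n}\geq c_0\,Q_1(\bm 0;\bm\Delta)$ cancels $Q_1(\bm 0;\bm\Delta)$ and delivers the claimed $n^{-1/2}$ bound.
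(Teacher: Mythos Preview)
Your proof is correct and follows essentially the same route as the paper's: both reduce $\mathbb{E}_{\mathbb{F}_n}|\ell_{\mathbb{F}_n}-\ell_{\Phi_n}|$ to $|m_{\mathbb{F}_n}-m_{\Phi_n}|/m_{\Phi_n}$, lower-bound the denominator via \eqref{eq:sel:prob:lower:bound}, and bound the numerator by applying Theorem~\ref{thm:chatterjee} with $\Omega_n(\bm z)=\|\bm A_M\bm z\|_h$ and $\mathcal{W}_n(x)=e^{K_gx}$ so that the exponential in \eqref{eq:partial:Q1:upper:bound} is exactly absorbed, yielding $\lambda_3^{(n)}=O(n^{-3/2})$. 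The only cosmetic differences are that the paper normalizes to $\widetilde{Q}_1=Q_1/Q_1(\bm 0;\bm\Delta)$ before invoking Chatterjee (whereas you carry $Q_1(\bm 0;\bm\Delta)$ through and cancel it at the end), and that the paper defers the control of the three moment terms in \eqref{eq:chatterjee:bound} to the proof of Theorem~9 in \cite{tian2015selective}, while you sketch that argument directly.
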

%------------------------------------------------------------------------

\begin{proof}

Since
\begin{equation*}
\begin{aligned}
	&\mathbb{E}_{\mathbb{F}_n}\left[\left|\ell_{\mathbb{F}_n}(\bm Z)-\ell_{\Phi_n}(\bm Z)\right|\right] = \mathbb{E}_{\mathbb{F}_n}\left[Q_1(\bm Z;\bm\Delta)\right]\left|\frac{1}{\mathbb{E}_{\mathbb{F}_n}\left[Q_1(\bm Z;\bm\Delta)\right]}-\frac{1}{\mathbb{E}_{\Phi_n}\left[Q_1(\bm Z;\bm\Delta)\right]}\right| \\
	& = \frac{\left|\mathbb{E}_{\mathbb{F}_n}\left[Q_1(\bm Z;\bm\Delta)\right]-\mathbb{E}_{\Phi_n}\left[Q_1(\bm Z;\bm\Delta)\right]\right|}{\mathbb{E}_{\Phi_n}\left[Q_1(\bm Z;\bm\Delta)\right]},
\end{aligned}
\end{equation*}
we have
\begin{equation*}
	\mathbb{E}_{\mathbb{F}_n}\left[\left|\ell_{\mathbb{F}_n}(\bm Z)-\ell_{\Phi_n}(\bm Z)\right|\right] \leq \frac{\left|\mathbb{E}_{\mathbb{F}_n}\left[\widetilde{Q}_1(\bm Z;\bm\Delta)\right]-\mathbb{E}_{\Phi_n}\left[\widetilde{Q}_1(\bm Z;\bm\Delta)\right]\right|}{\mathbb{E}_{\Phi_n}\left[e^{-K_g\|\bm A_M\bm Z\|_h}\right]},
\end{equation*}
where we used the lower bound on $\mathbb{E}_{\Phi_n}\left[Q_1(\bm Z;\bm\Delta)\right]$ from \eqref{eq:sel:prob:lower:bound} and $\widetilde{Q}_1(\bm z;\bm\Delta)$ denotes
\begin{equation*}
	\widetilde{Q}_1(\bm z;\bm\Delta) = \frac{Q_1(\bm z;\bm\Delta)}{Q_1(\bm 0,\bm\Delta)}.
\end{equation*}
	Hence it suffices to provide an upper bound on
\begin{equation*}
		\left|\mathbb{E}_{\mathbb{F}_n}\left[\widetilde{Q}_1(\bm Z;\bm\Delta)\right]-\mathbb{E}_{\Phi_n}\left[\widetilde{Q}_1(\bm Z;\bm\Delta)\right]\right|.
\end{equation*}
	We use Chatterjee technique here, providing an upper bound on the quantity above using the smoothness of $\widetilde{Q}_1(\bm z;\bm\Delta)$. From \eqref{eq:partial:Q1:upper:bound} we have
\begin{equation*}
	\left|\partial_{\bm z}^{\bm\alpha}\widetilde{Q}_1(\bm z;\bm\Delta)\right|\leq C_M\cdot K_g\cdot e^{K_g\|\bm A_M\bm z\|_h},
\end{equation*}
implying
\begin{equation*}
	\lambda_3^{(n)}(\widetilde{Q}_1) \leq \frac{1}{n^{3/2}} C_M\cdot K_g
\end{equation*}
for $\Omega_n(\bm z)=\|\bm A_M\bm z\|_h$, $\bm z\in\mathbb{R}^p$, and $\mathcal{W}_n(x)=e^{K_gx}$, $x\in\mathbb{R}$.
Using Theorem \ref{thm:chatterjee}, we have
\begin{equation} \label{eq:chaterjee:bound:sum}
\begin{aligned}
		&\left|\mathbb{E}_{\mathbb{F}_n}\left[\widetilde{Q}_1(\bm Z;\bm\Delta)\right]-\mathbb{E}_{\Phi_n}\left[\widetilde{Q}_1(\bm Z;\bm\Delta)\right]\right| \\
		&\leq \frac{1}{2} \sum_{i=1}^n \lambda_3^{(n)}(\widetilde{Q}_1)\mathbb{E}\left[e^{2K_g \|\bm A_M\mathcal{M}(\bm W_i)\|_h}\right] \left(\mathbb{E}\left[\|\bm g_i\|_1^3\right] +\mathbb{E}\left[\|\bm y_i\|_1^3 \right]\right) \\
		&+\frac{1}{2} \sum_{i=1}^n \lambda_3^{(n)}(\widetilde{Q}_1)\mathbb{E}\left[e^{2K_g\|\bm A_M\mathcal{M}(\widetilde{\bm W}_i-\bm W_i)\|_h} \|\bm g_i\|_1^3\right] \\
		&+ \frac{1}{2} \sum_{i=1}^n \lambda_3^{(n)}(\widetilde{Q}_1)\mathbb{E}\left[e^{2K_g\|\bm A_M\mathcal{M}(\widetilde{\bm W}_{i-1}-\bm W_i)\|_h}\|\bm y_i\|_1^3 \right].
\end{aligned}
\end{equation}
Note that we need \eqref{assumption:asymptotically:linear} assumption to be able to apply this theorem.
Following the proof of Theorem 9 in \cite{tian2015selective}, which uses \eqref{assumption:mgf} assumption, the sum in the RHS in \eqref{eq:chaterjee:bound:sum} is of order $O(n^{-1/2})$.

\end{proof}

%%%%%%%%%%%%%%%%%%%%%%%%%%%%%%%%%%%%%%%%%%%%%%%%%%%%%%%%%%%%%%%%%%%%%%%%%%%
%%%%%%%%%%%%%%%%%%%%%%%%%%%%%%%%%%%%%%%%%%%%%%%%%%%%%%%%%%%%%%%%%%%%%%%%%%%

Using the same proof as in Lemma \ref{lemma:lip:randomization:lik:diff}, we bound the difference between the expectations of a smooth function applied to the pivots constructed for non-parametric and Gaussian data. The result is also under the conditional distributions of the data ($\mathbb{F}_n^*$ and $\Phi_n^*$ for the non-parametric and Gaussian respectively).

%--------------------------------------------------------------------
\begin{theorem}[Lipschitz randomization: selective CLT] \label{thm:selective:clt:finite:n:lipschitz}
Assume \eqref{assumption:asymptotically:linear},  \eqref{assumption:mgf}, \eqref{assumption:norm:A_M} and \eqref{assumption:g:smoothness} hold.
Given a function $\mathcal{H}:\mathbb{R}\rightarrow\mathbb{R}$ with uniformly bounded derivatives up to the third order (call this bound $K_\mathcal{H}$), we have 
\begin{equation*}
	\left|\mathbb{E}_{\mathbb{F}_n^*}\mathcal{H}(\mathcal{P}^G(\bm Z)) - \mathbb{E}_{\Phi_n^*}\mathcal{H}(\mathcal{P}^G(\bm G))\right| \leq \frac{1}{\sqrt{n}} C(C_{\tau}, C_M, K_g, K_{\mathcal{H}}, \bm\Sigma).
\end{equation*}
\end{theorem}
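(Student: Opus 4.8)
The plan is to mimic the proof of Lemma \ref{lemma:lip:randomization:lik:diff} verbatim, replacing the simple likelihood ratio $\ell$ by the composite object $\mathcal{H}(\mathcal{P}^G(\cdot))\cdot\ell(\cdot)$ and applying Chatterjee's theorem (Theorem \ref{thm:chatterjee}) to this product. Concretely, note that
\begin{equation*}
	\mathbb{E}_{\mathbb{F}_n^*}\mathcal{H}(\mathcal{P}^G(\bm Z)) = \mathbb{E}_{\mathbb{F}_n}\left[\mathcal{H}(\mathcal{P}^G(\bm Z))\,\ell_{\mathbb{F}_n}(\bm Z)\right]
	= \frac{\mathbb{E}_{\mathbb{F}_n}\left[\mathcal{H}(\mathcal{P}^G(\bm Z))\,\widetilde Q_1(\bm Z;\bm\Delta)\right]}{\mathbb{E}_{\mathbb{F}_n}\left[\widetilde Q_1(\bm Z;\bm\Delta)\right]},
\end{equation*}
and similarly for the Gaussian side with $\bm G$ and $\Phi_n$. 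So the difference of interest splits, after a common-denominator manipulation as in Lemma \ref{lemma:lip:randomization:lik:diff}, into (i) a numerator difference $\left|\mathbb{E}_{\mathbb{F}_n}[\mathcal{H}(\mathcal{P}^G(\bm Z))\widetilde Q_1(\bm Z;\bm\Delta)] - \mathbb{E}_{\Phi_n}[\mathcal{H}(\mathcal{P}^G(\bm G))\widetilde Q_1(\bm G;\bm\Delta)]\right|$, and (ii) a denominator difference controlled by Lemma \ref{lemma:lip:randomization:lik:diff} itself, both divided by $\mathbb{E}_{\Phi_n}[e^{-K_g\|\bm A_M\bm Z\|_h}]$ which is bounded below using \eqref{eq:sel:prob:lower:bound}. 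Term (ii) is already $O(n^{-1/2})$; the numerator product $\mathcal{H}(\mathcal{P}^G(\bm z))\widetilde Q_1(\bm z;\bm\Delta)$ is uniformly bounded (since $\|\mathcal{H}\|_\infty\le K_{\mathcal{H}}$ and $\widetilde Q_1\le e^{K_g\|\bm A_M\bm z\|_h}$ which is absorbed by the weight $\mathcal{W}_n$), so term (i) is the real work.

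For term (i), the main step is to bound $\lambda_3^{(n)}\big(\mathcal{H}(\mathcal{P}^G(\cdot))\widetilde Q_1(\cdot;\bm\Delta)\big)$. Writing $\bm Z = \mathcal{M}(\bm y_1,\dots,\bm y_n)$, the chain rule gives $\partial_{y_i}$ derivatives of the composite as polynomial combinations of: derivatives of $\mathcal{H}$ up to order $3$ (bounded by $K_{\mathcal{H}}$); derivatives of $\mathcal{P}^G$ with respect to $\bm z$ up to order $3$, which by Lemma \ref{lemma:pivot:smoothness:lip:randomization} are $O(1 + e^{K_g\|\bm A_M\bm C\bm\Sigma_{\bm A}\|_{h,2}\|\bm\Sigma_{\bm A}^{-1/2}\bm A^T\bm z\|_2})$; derivatives of $\widetilde Q_1$ up to order $3$, which by \eqref{eq:partial:Q1:upper:bound} are $O(e^{K_g\|\bm A_M\bm z\|_h})$; and the factor $\|\partial_{y_i}(\bm A^T\bm z)\| = O(1/\sqrt{n})$ (and its powers) coming from $\bm Z=\mathcal{M}(\cdots)$, giving the overall $n^{-3/2}$. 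The product of the two exponentials is dominated by a single exponential $e^{C(\bm A,\bm A_M,\bm\Sigma,K_g)\|\bm A_M\bm z\|_h}$ after comparing the $h$-norm and the Euclidean norm (using equivalence of norms and boundedness of $\bm A$, $\bm\Sigma_{\bm A}$), so choosing $\mathcal{W}_n(x)=e^{K'x}$ with $K'$ a suitable multiple of $K_g$ and $\Omega_n(\bm z)=\|\bm A_M\bm z\|_h$, one gets $\lambda_3^{(n)}\big(\mathcal{H}(\mathcal{P}^G)\widetilde Q_1\big)\le n^{-3/2}C(C_M,K_g,K_{\mathcal{H}},\bm\Sigma)$. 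The denominator $\mathbb{E}_{\bm G\sim\mathcal{N}_a(\bm 0,\bm\Sigma_{\bm A})}[e^{-K_g\|\bm A_M\bm C\bm G\|_h}]$ appearing in Lemma \ref{lemma:pivot:smoothness:lip:randomization} is a fixed positive constant depending on $\bm\Sigma$, $\bm A_M$, $K_g$, so it is absorbed into $C$.

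With this $\lambda_3^{(n)}$ bound in hand, one invokes Theorem \ref{thm:chatterjee} exactly as in \eqref{eq:chaterjee:bound:sum}: the three sums each have $n$ terms, each term carrying a factor $\lambda_3^{(n)} = O(n^{-3/2})$ times an expectation of $e^{2K'\|\bm A_M\mathcal{M}(\bm W_i)\|_h}$ (or the analogous increments $\widetilde{\bm W}_i-\bm W_i$, $\widetilde{\bm W}_{i-1}-\bm W_i$) times $\mathbb{E}\|\bm g_i\|_1^3$ or $\mathbb{E}\|\bm y_i\|_1^3$. Following the proof of Theorem 9 in \cite{tian2015selective}, the MGF assumption \eqref{assumption:mgf} controls $\mathbb{E}[e^{2K'\|\bm A_M\mathcal{M}(\bm W_i)\|_h}]$ uniformly in $i$ and $n$ (the single zeroed-out coordinate in $\bm W_i$ is harmless), and the third-moment terms are $O(1)$ under \eqref{assumption:mgf}; the asymptotic linearity assumption \eqref{assumption:asymptotically:linear} is what licenses writing $\bm Z$ as $\mathcal{M}(\bm y_1,\dots,\bm y_n)$ (up to the negligible $\sqrt{n}\bm X_n$ term) so that Theorem \ref{thm:chatterjee} applies. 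Summing, $n \cdot O(n^{-3/2}) \cdot O(1) = O(n^{-1/2})$, which gives term (i) at rate $n^{-1/2}$; combined with term (ii), this yields the claimed bound. The main obstacle is the bookkeeping in the $\lambda_3^{(n)}$ estimate for the composite function — verifying that the product-rule and chain-rule expansion of $\partial_{y_i}^3\big(\mathcal{H}(\mathcal{P}^G)\widetilde Q_1\big)$ only produces terms whose exponential growth in $\bm z$ is still dominated by a single exponential of the form $\mathcal{W}_n(\Omega_n(\bm z))$, so that the MGF assumption in \cite{tian2015selective} still closes the argument — but no genuinely new idea beyond Lemmas \ref{lemma:pivot:smoothness:lip:randomization} and \ref{lemma:lip:randomization:lik:diff} is needed.
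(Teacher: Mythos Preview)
Your proposal is correct and follows essentially the same route as the paper. The paper adds and subtracts $\mathbb{E}_{\mathbb{F}_n}[\mathcal{H}(\mathcal{P}^G(\bm Z))\ell_{\Phi_n}(\bm Z)]$, bounding the ``likelihood-swap'' piece by $K_{\mathcal{H}}\cdot\mathbb{E}_{\mathbb{F}_n}[|\ell_{\mathbb{F}_n}-\ell_{\Phi_n}|]$ via Lemma~\ref{lemma:lip:randomization:lik:diff} and then applying Theorem~\ref{thm:chatterjee} to $f(\bm Y)=\mathcal{H}(\mathcal{P}^G(\mathcal{M}(\bm Y)))\cdot\ell_{\Phi_n}(\mathcal{M}(\bm Y))$ using Lemma~\ref{lemma:pivot:smoothness:lip:randomization}; your ratio decomposition with $\widetilde{Q}_1$ is the same argument up to the harmless constant $\mathbb{E}_{\Phi_n}[\widetilde{Q}_1]$ (since $\ell_{\Phi_n}=\widetilde{Q}_1/\mathbb{E}_{\Phi_n}[\widetilde{Q}_1]$), and your terms (ii) and (i) match the paper's two terms exactly.
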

%--------------------------------------------------------------------

\begin{proof}
	Since 
	\begin{equation*}
	\begin{aligned}
		&\left|\mathbb{E}_{\mathbb{F}_n^*}\mathcal{H}(\mathcal{P}^G(\bm Z)) - \mathbb{E}_{\Phi_n^*}\mathcal{H}(\mathcal{P}^G(\bm G))\right| \\
		&= \left|\mathbb{E}_{\mathbb{F}_n}\left[\mathcal{H}(\mathcal{P}^G(\bm Z)\ell_{\mathbb{F}_n}(\bm Z)\right] -\mathbb{E}_{\Phi_n}\left[\mathcal{H}(\mathcal{P}^G(\bm G)\ell_{\Phi_n}(\bm G)\right]\right|, \\
		&=  \left|\mathbb{E}_{\mathbb{F}_n}\left[\mathcal{H}(\mathcal{P}^G(\bm Z)\left(\ell_{\mathbb{F}_n}(\bm Z)-\ell_{\Phi_n}(\bm Z)\right)\right]\right|
		+  \left|\mathbb{E}_{\mathbb{F}_n}\left[\mathcal{H}(\mathcal{P}^G(\bm Z)\ell_{\Phi_n}(\bm Z)\right] -\mathbb{E}_{\mathbb{F}_n}\left[\mathcal{H}(\mathcal{P}^G(\bm G)\ell_{\Phi_n}(\bm G)\right]\right| \\
		&\leq K_\mathcal{H}\cdot \mathbb{E}_{\mathbb{F}_n}\left[\left|\ell_{\mathbb{F}_n}(\bm Z)-\ell_{\Phi_n}(\bm Z)\right|\right] +\left|\mathbb{E}_{\mathbb{F}_n}\left[\mathcal{H}(\mathcal{P}^G(\bm Z)\ell_{\Phi_n}(\bm Z)\right] -\mathbb{E}_{\mathbb{F}_n}\left[\mathcal{H}(\mathcal{P}^G(\bm G)\ell_{\Phi_n}(\bm G)\right]\right|,
	\end{aligned}
	\end{equation*}
	to bound the first term above we use the result of Lemma \ref{lemma:lip:randomization:lik:diff} and to bound the second term we apply Theorem \ref{thm:chatterjee} to
	\begin{equation*}
		f(\bm Y) = \mathcal{H}(\mathcal{P}^G(\mathcal{M}(\bm Y)))\cdot \ell_{\Phi_n}(\mathcal{M}(\bm Y)).
	\end{equation*}
	Applying Theorem \ref{thm:chatterjee} is then done in the same way as in Lemma \ref{lemma:lip:randomization:lik:diff}, where we additionally use Lemma \ref{lemma:pivot:smoothness:lip:randomization}.

\end{proof}

%%%%%%%%%%%%%%%%%%%%%%%%%%%%%%%%%%%%%%%%%%%%%%%%%%%%%%%%%%%%%%%%%%%%%%%%%%%%%%%%%%%%%%%%%%%%%%%%%%%%%%%%%%%%%%%%%%%%%%%%%%%%%%%%%%%%%%%%%%%%%%%%%%
%%%%%%%%%%%%%%%%%%%%%%%%%%%%%%%%%%%%%%%%%%%%%%%%%%%%%%%%%%%%%%%%%%%%%%%%%%%%%%%%%%%%%%%%%%%%%%%%%%%%%%%%%%%%%%%%%%%%%%%%%%%%%%%%%%%%%%%%%%%%%%%%%%

%-----------------------------------------------------------
\section{Proving Selective CLT for Gaussian randomization} \label{sec:proofs:selective:clt:gaussian:randomization}
%-----------------------------------------------------------

%------------------------------------------------------------
\subsection{Smoothness of the Gaussian likelihood and the pivot for Gaussian randomization} \label{sec:smoothness:lemmas:gaussian:randomization}
%------------------------------------------------------------

Denoting the probability of selection conditional on the data as
\begin{equation*}
	Q_1(\bm z;\bm\Delta)=\mathbb{G}\{\bm A_M\bm z+\bm A_M\bm\Delta+\bm\omega\in\bm H_M\},
\end{equation*}
we have
\begin{equation*}
	\ell_{\Phi_n}(\bm z)=\frac{\mathbb{G}\{\bm A_M\bm z+\bm A_M\bm\Delta+\bm\omega\in \bm H_M\}}{\Phi_n\times\mathbb{G}\{\bm A_M\bm Z+\bm A_M\bm\Delta+\bm\omega \in\bm H_M\}} =\frac{Q_1(\bm z;\bm\Delta)}{\mathbb{E}_{\Phi_n}\left[Q_1(\bm Z;\bm\Delta)\right]},
\end{equation*}
where we write the likelihood in terms of $\bm z=\sqrt{n}(\bm t-\bm\mu)$. In this section, we take the randomization distribution to be Gaussian
\begin{equation*}
	\mathbb{G}=\mathcal{N}_d(\bm 0, c\bm\Sigma_{\bm\omega}).
\end{equation*}

Assume that $\bm\Sigma_{\bm\omega}=c\bm I_d$ and that the selection region is rectangular $\bm H_M=\prod_{i=1}^n[b_i,\infty)$ for some vector $\bm b\in\mathbb{R}^d$.

%%%%%%%%%%%%%%%%%%%%%%%%%%%%%%%%%%%%%%%%%%%%%%%%%%%%%%%%%%%%%%%%%%%%%%%%%%%%%%%%%%%%%%%%%%%%%%%%%%%%%%%%%%%%%%%%%%%%%%%%%%%%%%%%%%%%%%%%%%%%%%%%%%%%%%
%%%%%%%%%%%%%%%%%%%%%%%%%%%%%%%%%%%%%%%%%%%%%%%%%%%%%%%%%%%%%%%%%%%%%%%%%%%%%

%------------------------------------------------------	
\begin{lemma}[Gaussian randomization: smoothness of the Gaussian likelihood] Assume \eqref{assumption:gaussian:randomization} holds. Then
 for any $\kappa_2<\frac{1}{2c}$, there exists $\kappa_1$ such that
	\begin{equation*}
		e^{-\kappa_2\|\bm A_M\bm z\|_2^2}\cdot \left|\partial_{\bm z}^{\bm\alpha}\ell_{\Phi_n}(\bm z)\right|=O\left(e^{\kappa_1d(\bm 0, \bm H_M-\bm A_M\bm\Delta)^2}\right).
	\end{equation*} 
	\label{lemma:lik:smoothness:gaussian:randomization}
\end{lemma}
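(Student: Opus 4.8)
The plan rests on the identity $\ell_{\Phi_n}(\bm z)=Q_1(\bm z;\bm\Delta)/\mathbb{E}_{\Phi_n}[Q_1(\bm Z;\bm\Delta)]$, in which the denominator is a positive constant that does not depend on $\bm z$. Hence it suffices to (i) bound $e^{-\kappa_2\|\bm A_M\bm z\|_2^2}\,|\partial_{\bm z}^{\bm\alpha}Q_1(\bm z;\bm\Delta)|$ uniformly over $\bm z\in\mathbb{R}^p$ by a constant depending only on $\bm A_M,c,\kappa_2,d,|\bm\alpha|$, and (ii) lower bound $\mathbb{E}_{\Phi_n}[Q_1(\bm Z;\bm\Delta)]$ by a constant multiple of $e^{-\kappa_1\,d(\bm 0,\bm H_M-\bm A_M\bm\Delta)^2}$, uniformly in $n$, for any fixed $\kappa_1>\tfrac1{2c}$. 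Combining (i) and (ii) immediately gives the claim with that $\kappa_1$.

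For (i): with $\mathbb{G}=\mathcal{N}_d(\bm 0,c\bm I_d)$ write $Q_1(\bm z;\bm\Delta)=\int_{\bm H_M}\phi_{(\bm 0,c\bm I_d)}(\bm\omega'-\bm A_M\bm z-\bm A_M\bm\Delta)\,d\bm\omega'$ and differentiate under the integral. By repeated differentiation (the product rule, since $\phi_{(\bm 0,c\bm I_d)}$ is an exponential of a quadratic and its argument is affine in $\bm z$), one gets $\partial_{\bm z}^{\bm\alpha}\phi_{(\bm 0,c\bm I_d)}(\bm\omega'-\bm A_M\bm z-\bm A_M\bm\Delta)=R_{\bm\alpha}(\bm u)\,\phi_{(\bm 0,c\bm I_d)}(\bm u)$ at $\bm u=\bm\omega'-\bm A_M\bm z-\bm A_M\bm\Delta$, where $R_{\bm\alpha}$ is a polynomial of degree $\le|\bm\alpha|$ whose coefficients involve only the entries of $\bm A_M$ (controlled by \eqref{assumption:norm:A_M}), $c$, and $\bm\alpha$. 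This is the one place the hypothesis $\kappa_2<\tfrac1{2c}$ is used: since $\tfrac1{2c}-\kappa_2>0$, any polynomial is dominated by an exponential-quadratic, so $|R_{\bm\alpha}(\bm u)|\,\phi_{(\bm 0,c\bm I_d)}(\bm u)\le M_{\bm\alpha}\,e^{-\kappa_2\|\bm u\|_2^2}$ with $M_{\bm\alpha}=(2\pi c)^{-d/2}\sup_{\bm u}|R_{\bm\alpha}(\bm u)|e^{-(\frac1{2c}-\kappa_2)\|\bm u\|_2^2}<\infty$ (this domination also justifies differentiating under the integral). Integrating and enlarging $\bm H_M$ to $\mathbb{R}^d$ gives $|\partial_{\bm z}^{\bm\alpha}Q_1(\bm z;\bm\Delta)|\le M_{\bm\alpha}\int_{\mathbb{R}^d}e^{-\kappa_2\|\bm\eta\|_2^2}d\bm\eta=M_{\bm\alpha}(\pi/\kappa_2)^{d/2}$, uniformly in $\bm z,\bm\Delta$; since $e^{-\kappa_2\|\bm A_M\bm z\|_2^2}\le1$, (i) follows. (For $|\bm\alpha|=0$ this is already immediate from $Q_1\le1$.)

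For (ii): $\mathbb{E}_{\Phi_n}[Q_1(\bm Z;\bm\Delta)]=\mathbb{P}\{\bm A_M\bm Z+\bm\omega\in\bm H_M-\bm A_M\bm\Delta\}$ with $\bm A_M\bm Z+\bm\omega\sim\mathcal{N}_d(\bm 0,\bm\Gamma_n)$, $\bm\Gamma_n=\bm A_M\bm\Sigma_n\bm A_M^{T}+c\bm I_d$, whose eigenvalues lie in $[c,\Lambda]$ with $\Lambda$ uniform in $n$ (by \eqref{assumption:norm:A_M} and the boundedness of $\bm\Sigma_n$). Put $\rho=d(\bm 0,\bm H_M-\bm A_M\bm\Delta)$ and let $\bm\omega_0$ be the projection of $\bm 0$ onto this rectangular (convex) set, so $\|\bm\omega_0\|_2=\rho$; since $\bm H_M-\bm A_M\bm\Delta$ is a product of half-lines, it contains the Euclidean ball $B(\bm\omega_0+\bm 1,1)$, on which $\|\bm x\|_2\le\rho+\sqrt d+1$ and hence $\phi_{(\bm 0,\bm\Gamma_n)}(\bm x)\ge(2\pi\Lambda)^{-d/2}e^{-(\rho+\sqrt d+1)^2/(2c)}$ (using $\det\bm\Gamma_n\le\Lambda^d$ and $\bm x^{T}\bm\Gamma_n^{-1}\bm x\le\|\bm x\|_2^2/c$). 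Thus $\mathbb{E}_{\Phi_n}[Q_1(\bm Z;\bm\Delta)]\ge V_d(2\pi\Lambda)^{-d/2}e^{-(\rho+\sqrt d+1)^2/(2c)}$, with $V_d$ the unit-ball volume, and since $(\rho+\sqrt d+1)^2/(2c)\le\kappa_1\rho^2+C(d,c,\kappa_1)$ for every fixed $\kappa_1>\tfrac1{2c}$ (by a Young-type inequality on the cross term), we get $\mathbb{E}_{\Phi_n}[Q_1(\bm Z;\bm\Delta)]\ge c_0\,e^{-\kappa_1\rho^2}$ with $c_0$ depending only on $d,c,\kappa_1$.

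The main obstacle is step (ii): obtaining a \emph{lower} bound on the selection probability with the exponent expressed through $d(\bm 0,\bm H_M-\bm A_M\bm\Delta)^2$ and with both the constant and the exponent $\kappa_1$ independent of $n$. The mechanism is that the rectangular region always contains a unit ball at distance $\rho+O(1)$ from the origin, and that $\lambda_{\min}(\bm\Gamma_n)\ge c$ because the randomization is isotropic Gaussian with variance $c$, which pins $\kappa_1$ to any value above $\tfrac1{2c}$ uniformly in $n$; step (i) is routine once the polynomial-times-Gaussian structure of $\partial_{\bm z}^{\bm\alpha}Q_1$ is written down and $\kappa_2<\tfrac1{2c}$ is used to absorb those polynomials.
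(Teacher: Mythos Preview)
Your argument is correct and takes a somewhat different, more elementary route than the paper. Both proofs split into bounding $|\partial_{\bm z}^{\bm\alpha}Q_1|$ above and $\mathbb{E}_{\Phi_n}[Q_1]$ below, but the execution differs in each half. For the numerator, the paper retains the domain $\bm H_M$ and uses a Gaussian tail estimate together with the triangle inequality $d(\bm 0,\bm H_M-\bm A_M\bm\Delta)^2\le 2d(\bm 0,\bm H_M-\bm A_M\bm z-\bm A_M\bm\Delta)^2+2\|\bm A_M\bm z\|_2^2$ to obtain a bound of the form $\text{const}\cdot\exp\bigl(-\tfrac{\delta_2'}{4c}d(\bm 0,\bm H_M-\bm A_M\bm\Delta)^2+\tfrac{\delta_2'}{2c}\|\bm A_M\bm z\|_2^2\bigr)$; this is where the weight $e^{-\kappa_2\|\bm A_M\bm z\|_2^2}$ in the statement is actually consumed. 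You instead enlarge the domain to $\mathbb{R}^d$ and get a bound uniform in both $\bm z$ and $\bm\Delta$, so the weight is simply discarded (your use of $\kappa_2<\tfrac1{2c}$ to dominate the polynomial $R_{\bm\alpha}$ is convenient but inessential, since a polynomial times a Gaussian is integrable regardless). For the denominator, the paper invokes a Gaussian tail lower bound expressed through $\|(\bm A_M\bm\Sigma\bm A_M^T+c\bm I)^{-1/2}\|_2$, whereas you exhibit an explicit unit ball inside the shifted orthant and use $\lambda_{\min}(\bm\Gamma_n)\ge c$ directly. The paper's route could in principle yield a smaller $\kappa_1$, because its numerator already carries a decay factor in $d(\bm 0,\bm H_M-\bm A_M\bm\Delta)$ that partially offsets the denominator blow-up; but since the lemma only asks for \emph{some} $\kappa_1$, your cruder bound suffices and is arguably more transparent.
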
	
%------------------------------------------------------	

%-------------------------------------------------------	
\begin{proof}

The denominator of the likelihood satisfies
\begin{align*}
	\mathbb{E}_{\Phi_n}\left[Q_1(\bm Z;\bm\Delta)\right]
	&=\mathbb{P}_{\bm Z}\times\mathbb{P}_{\bm\omega}\{\bm A_M\bm Z+\bm\omega\in \bm H_M-\bm A_M\bm\Delta\} \\
	&= \mathbb{P}_{\bm \omega'\sim\mathcal{N}(\bm 0, \bm A_M\bm \Sigma \bm A_M^T+\bm\Sigma_{\bm\omega})}\{\bm\omega'\in \bm H_M-\bm A_M\bm\Delta\} \\
	&= const\cdot \int_{\bm H_M}\exp\left(-\frac{1}{2}(\bm \omega'-\bm A_M\bm\Delta)^T(\bm A_M\bm\Sigma\bm A_M^T+\bm\Sigma_{\bm\omega})^{-1}(\bm\omega'-\bm A_M\bm\Delta)\right)d\bm\omega' \\
	&= const\cdot \int_{\bm H_M} \exp\left(-\frac{1}{2}\left\|(\bm A_M\bm\Sigma\bm A_M^T+\bm\Sigma_{\bm\omega})^{-1/2}(\bm \omega'-\bm A_M\bm\Delta)\right\|_2^2\right)d\bm \omega'\\
	&\geq const \cdot \int_{\bm H_M} \exp\left(-\frac{1}{2}\|(\bm A_M\bm\Sigma\bm A_M^T+\bm\Sigma_{\bm\omega})^{-1/2}\|_2^2\left\|\bm \omega'-\bm A_M\bm\Delta)\right\|_2^2\right)d\bm \omega' \\
	&\geq const \cdot \exp\left(-\frac{\delta_1}{2}\|(\bm A_M\bm\Sigma\bm A_M^T+\bm\Sigma_{\bm\omega})^{-1/2}\|_2^2 \cdot d(\bm 0, \bm H_M-\bm A_M\bm\Delta)^2\right),
\end{align*}
for some $\delta_1>1$.
%Upper bound (uses $-(a-b)^2\leq -\frac{a^2}{2}+3b^2$): 
%\begin{align}
%	Q_1(\bm z;\bm\Delta) 
%	&\leq \frac{1}{C_g}\int\limits_{\bm\omega'\in\bm H_M} \exp\left(-\frac{1}{4\tau^2}\|\bm\omega'-\bm A_M\bm\Delta\|_2^2+\frac{3}{2\tau^2}\|\bm A_M\bm z\|_2^2\right)d\bm\omega' \nonumber \\
%	&=e^{\frac{3}{2\tau^2}\|\bm A_M\bm z\|_2^2}\int\limits_{\bm\omega'\in\bm H_M}\frac{1}{C_g}\exp\left(-\frac{1}{4\tau^2}\|\bm\omega'-\bm A_M\bm\Delta\|_2^2\right) d\bm\omega' \nonumber \\
%	&=Q_1''(\bm 0;\bm\Delta)\cdot e^{\frac{3}{2\tau^2}\|\bm A_M\bm z\|_h}.\label{eq:Q1:upper:bound}
%\end{align}
The numerator of the pivot satisfies 
\begin{align*}
	\left|\partial_{\bm z}^{\bm\alpha}Q_1(\bm z;\bm\Delta) \right|
	&=const\cdot\left|\int\limits_{\bm\omega'\in\bm H_M} \partial_{\bm z}^{\bm\alpha}\exp\left(-\frac{1}{2c}\|\bm\omega'-\bm A_M\bm z-\bm A_M\bm\Delta\|_2^2\right)d\bm\omega'\right| \nonumber\\
	&\leq const\cdot \int\limits_{\bm\omega'\in\bm H_M} \exp\left(-\frac{\delta_2}{2c}\|\bm\omega'-\bm A_M\bm z-\bm A_M\bm\Delta\|_2^2\right)d\bm\omega'\\
	&\leq const\cdot \exp\left(-\frac{\delta_2'}{2c} d\left(\bm 0, \bm H_M-\bm A_M\bm z-\bm A_M\bm\Delta\right)^2\right),
\end{align*}
for some $\delta_2$ and $\delta_2'$ satisfying $\delta_2'<\delta_2<1$.
By the triangle inequality
\begin{equation*}
	d(\bm 0,\bm H_M-\bm A_M\bm\Delta)\leq d(\bm 0,\bm H_M-\bm A_M\bm z-\bm A_M\bm\Delta)+\|\bm A_M\bm z\|_2,
\end{equation*}
hence 
\begin{equation*}
	d(\bm 0,\bm H_M-\bm A_M\bm\Delta)^2\leq 2d(\bm 0,\bm H_M-\bm A_M\bm z-\bm A_M\bm\Delta)^2+2\|\bm A_M\bm z\|_2^2.
\end{equation*}
Hence we get a further bound on the derivative of the numerator 
\begin{align*}
	\left|\partial_{\bm z}^{\bm\alpha}Q_1(\bm z;\bm\Delta) \right| \leq const\cdot \exp\left(-\frac{\delta_2'}{4c}d(\bm 0, \bm H_M-\bm A_M\bm\Delta)^2+\frac{\delta_2'}{2c}\|\bm A_M\bm z\|_2^2\right).
\end{align*}

Combining the upper bound on the derivative of the numerator together with the lower bound on the denominator we get that for any $\kappa_2<\frac{1}{2c}$, there exists $\kappa_1$ ($\kappa_1$ can be taken to be $-\frac{\delta_2'}{4c}+\frac{\delta_1}{2}\|(\bm A_M\bm \Sigma\bm A_M^T+\bm\Sigma_{\bm\omega})^{-1/2}\|_2^2$) such that
\begin{equation*}
	\left|\partial_{\bm z}^{\bm\alpha}\ell_{\Phi_n}(\bm z)\right|=\frac{\left|\partial_{\bm z}^{\bm\alpha} Q_1(\bm z;\bm\Delta)\right|}{\mathbb{E}_{\Phi_n}\left[Q_1(\bm Z;\bm\Delta)\right]} \leq const\cdot \exp\left(\kappa_1 d(\bm 0,\bm H_M-\bm A_M\bm\Delta)^2+\kappa_2\|\bm A_M\bm z\|_2^2\right).
\end{equation*}
	
\end{proof}

%-------------------------------------------------------	
%%%%%%%%%%%%%%%%%%%%%%%%%%%%%%%%%%%%%%%%%%%%%%%%%%%%%%%%	

With Gaussian randomization, the pivot has a simper representation we now derive.
Recall the variance of the randomization is $c\bm I_d$. Denote $\bm v=\bm A_M\bm C$, $\bm w=\bm\omega'-\bm A_M\bm z_A-\bm A_M\bm\Delta$. Assuming the matrix $\bm A$ is of size $d\times 1$, and $\bm A^T\bm Z=:T$ has variance 1, the numerator of the pivot multiplied by $(2\pi)^{(d+1)/2}c^{d/2}$ becomes:
\begin{align*}
	&\int_{t\geq T} \exp\left(-\frac{t^2}{2}\right)\int_{\bm H_M} \exp\left(-\frac{1}{2c}(\bm\omega'-\bm A_M\bm Ct-\bm z_A'-\bm A_M\bm\Delta)^T(\bm\omega'-\bm A_M\bm Ct-\bm z_A'-\bm A_M\bm\Delta)\right)d\bm\omega' \\
	&=\int_{t\geq T} \int_{\bm H_M}\exp\left(-\frac{1}{2}\left(\frac{1}{\|\bm v\|_2^2}+\frac{1}{c}\right)(\bm vt)^T(\bm vt)+\sqrt{\frac{1}{\|\bm v\|_2^2}+\frac{1}{c}}(\bm vt)^T\frac{1}{c}\sqrt{\frac{c\|\bm v\|_2^2}{\|\bm v\|_2^2+c}}\bm w -\frac{1}{2c^2}\frac{\|\bm v\|_2^2c}{\|\bm v\|_2^2+c}\bm w^T\bm w\right)\\
	&\hspace{2em}\cdot\exp\left(-\frac{1}{2}\bm w^T\bm w\left(\frac{1}{c}-\frac{\|\bm v\|_2^2}{c(\|\bm v\|_2^2+c)}\right)\right)d\bm\omega'\\
	&=\int_{\bm H_M}\int_{t\geq T}\exp\left(-\frac{1}{2}\left(\sqrt{\frac{\|\bm v\|_2^2+c}{\|\bm v\|_2^2c}}\bm vt-\frac{1}{c}\sqrt{\frac{\|\bm v\|_2^2c}{\|\bm v\|_2^2+c}}\bm w\right)^T\left(\sqrt{\frac{\|\bm v\|_2^2+c}{\|\bm v\|_2^2c}}\bm vt-\frac{1}{c}\sqrt{\frac{\|\bm v\|_2^2c}{\|\bm v\|_2^2+c}}\bm w\right)\right)dt  \\
	&\hspace{2em}\cdot\exp\left(-\frac{1}{2}\bm w^T\bm w\frac{1}{\|\bm v\|_2^2+c}\right)d\bm \omega'.
\end{align*}	
	
Denote
\begin{equation*}
	\tilde{\bm v}=\sqrt{\frac{\|\bm v\|_2^2+c}{\|\bm v\|_2^2c}}\bm v,\;\;\; \tilde{\bm w}=\frac{1}{c}\sqrt{\frac{\|\bm v\|_2^2c}{\|\bm v\|_2^2+c}}\bm w.
\end{equation*}
Since
\begin{align*}
	&(\tilde{\bm v}t-\tilde{\bm w})^T(\tilde{\bm v}t-\tilde{\bm w})=\|\tilde{\bm v}\|_2^2\left(t^2-2t\frac{\tilde{\bm v}^T\tilde{\bm w}}{\|\tilde{\bm v}\|_2^2}+\frac{(\tilde{\bm v}^T\tilde{\bm w})^2}{\|\tilde{\bm v}\|_2^4}\right)-\frac{(\tilde{\bm v}^T\tilde{\bm w})2}{\|\tilde{\bm v}\|_2^2}+\|\tilde{\bm w}\|_2^2 \\
	&=\|\tilde{\bm v}\|_2^2\left(t-\frac{\tilde{\bm v}^T\tilde{\bm w}}{\|\tilde{\bm v}\|_2^2}\right)^2-\frac{(\tilde{\bm v}^T\tilde{\bm w})^2}{\|\tilde{\bm v}\|_2^2}+\|\tilde{\bm w}\|_2^2,
\end{align*}
we have the numerator (multiplied by $(2\pi)^{(d+1)/2}c^{d/2}$) to be
	\begin{align*}
		&\int_{\bm H_M}\int_{t\geq T}\exp\left(-\frac{\|\tilde{\bm v}\|_2^2}{2}\left(t-\frac{\tilde{\bm v}^T\tilde{\bm w}}{\|\tilde{\bm v}\|_2^2}\right)^2\right)dt\cdot\exp\left(\frac{(\tilde{\bm v}^T\tilde{\bm w})^2}{2\|\tilde{\bm v}\|_2^2}-\frac{1}{2}\|\tilde{\bm w}\|_2^2\right)\cdot\exp\left(-\frac{1}{2}\bm w^T\bm w\frac{1}{\|\bm v\|_2^2+c}\right)d\bm \omega'\\
		&=\frac{\sqrt{2\pi}}{\|\tilde{\bm v}\|_2}\int_{\bm H_M}\bar{\Phi}\left(\|\tilde{\bm v}\|_2\left(T-\frac{\tilde{\bm v}^T\tilde{\bm w}}{\|\tilde{\bm v}\|_2^2}\right)\right)\cdot\exp\left(\frac{(\tilde{\bm v}^T\tilde{\bm w})^2}{2\|\tilde{\bm v}\|_2^2}-\frac{1}{2}\|\tilde{\bm w}\|_2^2\right)\cdot\exp\left(-\frac{1}{2}\bm w^T\bm w\frac{1}{\|\bm v\|_2^2+c}\right)d\bm\omega'\\
		&=\frac{\sqrt{2\pi c}}{\sqrt{\|\bm v\|_2+c}}\int_{\bm H_M}\bar{\Phi}\left(\sqrt{\frac{\|\bm v\|_2^2+c}{c}}\left(T-\frac{\bm v^T\bm w}{\|\bm v\|_2^2+c}\right)\right)\cdot\exp\left(\frac{(\bm v^T\bm w)^2}{2c(\|\bm v\|_2^2+c)}-\frac{1}{2c}\|\bm w\|_2^2\right)d\bm\omega'\\
		&=\frac{\sqrt{2\pi c}}{\sqrt{\|\bm v\|_2+c}}\int_{\bm H_M}\bar{\Phi}\left(\sqrt{\frac{\|\bm v\|_2^2+c}{c}}\left(T-\frac{\bm v^T\bm w}{\|\bm v\|_2^2+c}\right)\right)\cdot\exp\left(-\frac{1}{2}\bm w^T(\bm v\bm v^T+c\bm I)^{-1}\bm w\right)d\bm\omega' \\
		&=\frac{\sqrt{2\pi c}}{\sqrt{\|\bm v\|_2+c}}\int_{\bm H_M}\bar{\Phi}\left(\sqrt{\frac{\|\bm v\|_2^2+c}{c}}T-\frac{\bm v^T\bm w}{\sqrt{c(\|\bm v\|_2^2+c)}}\right)\cdot\exp\left(-\frac{1}{2}\bm w^T(\bm v\bm v^T+c\bm I)^{-1}\bm w\right)d\bm\omega',
	\end{align*}
where $\bar{\Phi}(\cdot)=1-\Phi_{(0,1)}(\cdot)$ is the survival function of the standard Gaussian.
	
Hence the numerator becomes
\begin{align*}
	&\frac{1}{\sqrt{|2\pi(\bm v\bm v^T+c\bm I)|}}\int_{\bm H_M}\bar{\Phi}\left(\sqrt{\frac{\|\bm v\|_2^2+c}{c}}T-\frac{\bm v^T\bm w}{\sqrt{c(\|\bm v\|_2^2+c)}}\right)\cdot\exp\left(-\frac{1}{2}\bm w^T(\bm v\bm v^T+c\bm I)^{-1}\bm w\right)d\bm\omega'\\
	&=\int_{\bm H_M}\bar{\Phi}\left(\sqrt{\frac{\|\bm v\|_2^2+c}{c}}T-\frac{\bm v^T\bm w}{\sqrt{c(\|\bm v\|_2^2+c)}}\right)\cdot\phi_{(\bm 0,\bm v\bm v^T+\bm\Sigma)}\left(\bm \omega'-\bm A_M\bm z_{\bm A}-\bm A_M\bm\Delta\right)d\bm\omega'.
\end{align*}

The denominator of the pivot is
\begin{equation*}
	\int_{\bm H_M}\phi_{(\bm 0,\bm v\bm v^T+\bm\Sigma)}\left(\bm \omega'-\bm A_M\bm z_{\bm A}-\bm A_M\bm\Delta\right)d\bm\omega'.
\end{equation*}

The pivot becomes
\begin{equation*}
	\frac{\int_{\bm H_M}\bar{\Phi}\left(\sqrt{\frac{\|\bm v\|_2^2+c}{c}}T-\frac{\bm v^T\bm w}{\sqrt{c(\|\bm v\|_2^2+c)}}\right)\cdot\exp\left(-\frac{1}{2}\bm w^T(\bm v\bm v^T+c\bm I)^{-1}\bm w\right)d\bm\omega'}{\int_{\bm H_M}\exp\left(-\frac{1}{2}\bm w^T(\bm v\bm v^T+c\bm I)^{-1}\bm w\right)d\bm\omega'},
\end{equation*}
where $\bm w=\bm\omega'-\bm A_M\bm z_{\bm A}-\bm A_M\bm \Delta$.

%%%%%%%%%%%%%%%%%%%%%%%%%%%%%%%%%%%%%%%%%%%%%%%%%%%%%%%%%%%%%%%%%%%%%%%%%%%%%%%%%%%%%%%%%%%%%%%%%%%%%%%%%%%%%%%%%%%%%%%%%%%%%%%%%%%%%%%%%%%%%%%%%%%%%%
%%%%%%%%%%%%%%%%%%%%%%%%%%%%%%%%%%%%%%%%%%%%%%%%%%%%%%%%%%%%%%%%%%%%%%%%%%%%%

%-----------------------------------------------------------
\begin{lemma}[Gaussian randomization: smoothness of the pivot] \label{lemma:pivot:smoothness:gaussian:randomization}
Assume \eqref{assumption:gaussian:randomization}.
\begin{equation*}
	\left|\partial_{\bm z}^{\bm \alpha}\mathcal{P}^G(\bm z, \bm A_M\bm\Delta, \bm A_M,\bm H_M)\right|\leq O(1+d(\bm 0, \bm H_M-\bm A_M\bm\Delta)^3+\|\bm A_M\bm z_{\bm A}\|_2^3)
	\end{equation*}
\end{lemma}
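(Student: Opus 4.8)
The plan is to start from the explicit representation of the pivot derived immediately above the statement. Write $\bm v:=\bm A_M\bm C$, $\bm\Sigma_0:=\bm v\bm v^T+c\bm I_d$ (the covariance appearing in that representation; note $\lambda_{\min}(\bm\Sigma_0)\geq c>0$), and $\bm w:=\bm\omega'-\bm A_M\bm z_{\bm A}-\bm A_M\bm\Delta$, so that $\mathcal{P}^G(\bm z,\bm A_M\bm\Delta,\bm A_M,\bm H_M)=N(\bm z)/D(\bm z)$ with $D(\bm z)=\int_{\bm H_M}\phi_{(\bm 0,\bm\Sigma_0)}(\bm w)\,d\bm\omega'$ and $N(\bm z)$ the same integral carrying the extra factor $\bar\Phi(\cdots)$ whose argument is an affine function of $\bm z$ (and of $\bm\omega'$). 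The two structural facts I would use are: (i) both $N$ and $D$ depend on $\bm z$ only affinely, through the shift $-\bm A_M\bm z_{\bm A}$ inside $\bm w$ and through $\bm A^T\bm z$ inside $\bar\Phi$, with all coefficients controlled by $\bm A,\bm A_M,\bm C,c$, so $\partial_{\bm z}^{\bm\alpha}$ passes under the integral; and (ii) $N\leq D$ because $\bar\Phi\leq 1$, so the pivot lies in $[0,1]$.

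First I would bound the derivatives of $N$ and $D$ individually. Differentiating the integrand of $N$ and expanding by Leibniz, $\partial_{\bm z}^{\bm\alpha}[\bar\Phi(\cdots)\,\phi_{(\bm 0,\bm\Sigma_0)}(\bm w)]$ is a finite sum of terms $\bar\Phi^{(j)}(\cdots)\cdot C(\bm A,\bm A_M,\bm C,c)\cdot P_{\bm\gamma}(\bm\Sigma_0^{-1}\bm w)\cdot\phi_{(\bm 0,\bm\Sigma_0)}(\bm w)$ with $j+|\bm\gamma|\leq|\bm\alpha|$, where $P_{\bm\gamma}$ is a polynomial of degree $|\bm\gamma|$ with coefficients depending only on $\bm A_M,\bm\Sigma_0$. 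Since $\bar\Phi$ and its first three derivatives are bounded and $\|\bm\Sigma_0^{-1}\bm w\|_2\leq c^{-1}\|\bm w\|_2$, this gives, for $|\bm\alpha|\leq 3$, the bound $|\partial_{\bm z}^{\bm\alpha}N(\bm z)|\leq C\int_{\bm H_M}(1+\|\bm w\|_2^3)\,\phi_{(\bm 0,\bm\Sigma_0)}(\bm w)\,d\bm\omega'$, and the identical bound for $|\partial_{\bm z}^{\bm\gamma}D(\bm z)|$ when $|\bm\gamma|\leq 3$. Writing $\bm W\sim\mathcal{N}_d(\bm 0,\bm\Sigma_0)$ and $S:=\bm H_M-\bm A_M\bm z_{\bm A}-\bm A_M\bm\Delta$ (a shifted orthant, by \eqref{assumption:gaussian:randomization}), dividing by $D$ turns these into $|\partial_{\bm z}^{\bm\beta}N|/D\leq C\,(1+\mathbb{E}[\|\bm W\|_2^{|\bm\beta|}\mid \bm W\in S])$ and the same for $D$.

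Then I would apply the quotient rule: using $N\leq D$, $\partial_{\bm z}^{\bm\alpha}(N/D)$ is a finite sum of terms, each a product of factors of the two types just estimated whose differentiation orders sum to $|\bm\alpha|$; so everything reduces to the conditional-moment estimate $\mathbb{E}[\|\bm W\|_2^k\mid \bm W\in S]\leq C(\bm\Sigma_0,k)\,(1+d(\bm 0,S))^k$ for $k\leq 3$. To prove that, I would use the decomposition $\bm W=c^{1/2}\bm Z+\bm v\zeta$ with $\bm Z\sim\mathcal{N}_d(\bm 0,\bm I_d)$, $\zeta\sim\mathcal{N}(0,1)$ independent: conditioning on $\zeta=t$ makes $\{\bm W\in S\}$ a product event in the coordinates of $\bm Z$, so $\mathbb{E}[\|\bm W\|_2^k\mid \bm W\in S,\zeta=t]$ reduces to moments of independent truncated normals and is $O((1+|t|+d(\bm 0,S))^k)$ by the standard bound on $\mathbb{E}[|Y|^k\mid Y\geq a]$; it then remains to bound $\mathbb{E}[|\zeta|^k\mid \bm W\in S]$, which follows since $\zeta\mid \bm W\in S$ is log-concave with variance at most $1$ (Brascamp--Lieb) and conditional mean of order $O(1+d(\bm 0,S))$. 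Assembling, each term of $\partial_{\bm z}^{\bm\alpha}(N/D)$ is at most $C\,(1+d(\bm 0,S))^{|\bm\alpha|}\leq C\,(1+d(\bm 0,S))^3$, and the triangle inequality $d(\bm 0,S)\leq d(\bm 0,\bm H_M-\bm A_M\bm\Delta)+\|\bm A_M\bm z_{\bm A}\|_2$ yields the stated $O(1+d(\bm 0,\bm H_M-\bm A_M\bm\Delta)^3+\|\bm A_M\bm z_{\bm A}\|_2^3)$.

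The main obstacle is the conditional-moment estimate in the third paragraph: showing that $\mathbb{E}[\|\bm W\|_2^k\mid \bm W\in S]$ — equivalently the ratios $|\partial^{\bm\beta}N|/D$ and $|\partial^{\bm\beta}D|/D$ — grow only polynomially (indeed like the $k$-th power) in $d(\bm 0,S)$, rather than exponentially. This is exactly where the rectangular structure of $\bm H_M$ is essential: it forces the exponentially small normalizer $D=\mathbb{P}(\bm W\in S)$ to be cancelled by the matching decay of the numerator, which is why the pivot is controlled by a polynomial, whereas the bare Gaussian likelihood in Lemma~\ref{lemma:lik:smoothness:gaussian:randomization} only carries an $e^{\kappa_1 d(\bm 0,\bm H_M-\bm A_M\bm\Delta)^2}$ factor, there being no such cancellation.
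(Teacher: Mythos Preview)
Your reduction is essentially the paper's: write the pivot as $N/D$, differentiate under the integral, use $N\le D$ and the quotient rule to reduce everything to the conditional moments $\mathbb{E}[\|\bm W\|_2^k\mid\bm W\in S]$ for $k\le 3$ with $S=\bm H_M-\bm A_M\bm z_{\bm A}-\bm A_M\bm\Delta$, and finish with the triangle inequality $d(\bm 0,S)\le d(\bm 0,\bm H_M-\bm A_M\bm\Delta)+\|\bm A_M\bm z_{\bm A}\|_2$. The paper writes out the first derivative explicitly as the three terms $\bm J_1,\bm J_2,\bm J_3$ (noting $\bm J_1$ is bounded outright since $x\phi(x)$ is) and then says higher derivatives reduce the same way; this is the same bookkeeping as your Leibniz/quotient argument.

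The only substantive difference is how you handle the conditional-moment estimate, which the paper isolates as Lemma~\ref{lemma:gaussian:moments:sel} and proves by a short scaling trick: set $\bm Y=\bm W/d$ with $d=d(\bm 0,S)$, so that $\bm Y\sim\mathcal{N}(\bm 0,\bm\Sigma_0/d^2)$ and the rescaled orthant $S/d$ sits at unit distance from the origin; then $\mathbb{E}[\|\bm W\|_2^k\mid\bm W\in S]=d^k\,\mathbb{E}[\|\bm Y\|_2^k\mid\bm Y\in S/d]$, and the second factor stays bounded because the variance shrinks while the constraint remains at distance~$1$. Your decomposition $\bm W=c^{1/2}\bm Z+\bm v\zeta$ also works in principle, but the step you pass over --- ``conditional mean of $\zeta$ of order $O(1+d(\bm 0,S))$'' --- is not free: the natural attempt $\zeta=\bm v^T\bm\Sigma_0^{-1}\bm W+(\text{indep.})$ leads straight back to $\mathbb{E}[\|\bm W\|_2\mid\bm W\in S]$ and is circular. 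It can be rescued by bounding the mode of the (indeed $1$-strongly log-concave) conditional density of $\zeta$ via its first-order condition and the hazard bound $\phi(x)/\bar\Phi(x)\le x_{+}+1$, splitting on the signs of the $v_i$; but that is already more work than the paper's one-line rescaling, so you may prefer to swap in Lemma~\ref{lemma:gaussian:moments:sel} at that point.
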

%-----------------------------------------------------------

\begin{proof}

The derivative of the pivot with respect to $\bm z$ is a linear combination of the terms:
\begin{align*}
	\bm J_1=\frac{\int_{\bm H_M}\left(\sqrt{\frac{\|\bm v\|_2^2+c}{c}}T-\frac{\bm v^T\bm w}{\sqrt{c(\|\bm v\|_2^2+c)}}\right)\phi_{(0,1)}\left(\sqrt{\frac{\|\bm v\|_2^2+c}{c}}T-\frac{\bm v^T\bm w}{\sqrt{c(\|\bm v\|_2^2+c)}}\right)\cdot\exp\left(-\frac{1}{2}\bm w^T(\bm v\bm v^T+c\bm I)^{-1}\bm w\right)d\bm\omega'}{\int_{\bm H_M}\exp\left(-\frac{1}{2}\bm w^T(\bm v\bm v^T+c\bm I)^{-1}\bm w\right)d\bm\omega'},
\end{align*}
\begin{align*}
	\bm J_2=\frac{\int_{\bm H_M}\bar{\Phi}\left(\sqrt{\frac{\|\bm v\|_2^2+c}{c}}T-\frac{\bm v^T\bm w}{\sqrt{c(\|\bm v\|_2^2+c)}}\right)\cdot(\bm v\bm v^T+c\bm I)^{-1}\bm w\cdot\exp\left(-\frac{1}{2}\bm w^T(\bm v\bm v^T+c\bm I)^{-1}\bm w\right)d\bm\omega'}{\int_{\bm H_M}\exp\left(-\frac{1}{2}\bm w^T(\bm v\bm v^T+c\bm I)^{-1}\bm w\right)d\bm\omega'}
\end{align*}
and
\begin{align*}
	\bm J_3 &=\frac{\int_{\bm H_M}\bar{\Phi}\left(\sqrt{\frac{\|\bm v\|_2^2+c}{c}}T-\frac{\bm v^T\bm w}{\sqrt{c(\|\bm v\|_2^2+c)}}\right)\cdot\exp\left(-\frac{1}{2}\bm w^T(\bm v\bm v^T+c\bm I)^{-1}\bm w\right)d\bm\omega'}{\int_{\bm H_M}\exp\left(-\frac{1}{2}\bm w^T(\bm v\bm v^T+c\bm I)^{-1}\bm w\right)d\bm\omega'} \\
	&\hspace{2em}\cdot \frac{\int_{\bm H_M}(\bm v\bm v^T+c\bm I)^{-1}\bm w\exp\left(-\frac{1}{2}\bm w^T(\bm v\bm v^T+c\bm I)^{-1}\bm w\right)d\bm\omega'}{\int_{\bm H_M}\exp\left(-\frac{1}{2}\bm w^T(\bm v\bm v^T+c\bm I)^{-1}\bm w\right)d\bm\omega'}.
\end{align*}

$\bm J_1$ is the expectation of a bounded function hence $\|\bm J_1\|_2$ is bounded. Since both $\|\bm J_2\|_2$ and $\|\bm J_3\|_2$ are upper bounded by 
\begin{equation*}
	 \|(\bm v\bm v^T+c\bm I)^{-1}\|_2\frac{\int_{\bm H_M}\|\bm w\|_2\cdot\exp\left(-\frac{1}{2}\bm w^T(\bm v\bm v^T+c\bm I)^{-1}\bm w\right)d\bm\omega'}{\int_{\bm H_M}\exp\left(-\frac{1}{2}\bm w^T(\bm v\bm v^T+c\bm I)^{-1}\bm w\right)d\bm\omega'},
\end{equation*} 
it suffices to analyze the growth of 
\begin{equation*}
	\frac{\int_{\bm H_M}\|\bm w\|_2\cdot\exp\left(-\frac{1}{2}\bm w^T(\bm v\bm v^T+c\bm I)^{-1}\bm w\right)d\bm\omega'}{\int_{\bm H_M}\exp\left(-\frac{1}{2}\bm w^T(\bm v\bm v^T+c\bm I)^{-1}\bm w\right)d\bm\omega'}
\end{equation*}
 in order to see how fast the derivative of the pivot grows. Similarly, in order to bound the second and the third derivative of the pivot suffices to bound
 \begin{align*}
	&\frac{\int_{\bm H_M}\|\bm w\|_2^{\alpha}\cdot\exp\left(-\frac{1}{2}\bm w^T(\bm v\bm v^T+c\bm I)^{-1}\bm w\right)d\bm\omega'}{\int_{\bm H_M}\exp\left(-\frac{1}{2}\bm w^T(\bm v\bm v^T+c\bm I)^{-1}\bm w\right)d\bm\omega'} \\
	&=\frac{\int_{\bm H_M-\bm A_M\bm z_{\bm A}-\bm A_M\bm\Delta}\|\bm w\|_2^{\alpha}\cdot\exp\left(-\frac{1}{2}\bm w^T(\bm v\bm v^T+c\bm I)^{-1}\bm w\right)d\bm w}{\int_{\bm H_M-\bm A_M\bm z_{\bm A}-\bm A_M\bm\Delta}\exp\left(-\frac{1}{2}\bm w^T(\bm v\bm v^T+c\bm I)^{-1}\bm w\right)d\bm w}\\
	&=\mathbb{E}_{\bm w\sim\mathcal{N}(\bm 0, \bm v\bm v^T+c\bm I)}\left[\|\bm w\|_2^{\alpha}\;\big|\;\bm w\in \bm H_M-\bm A_M\bm z_{\bm A}-\bm A_M\bm\Delta\right]
\end{align*}
for $\alpha=1,2,3$. Using Lemma \ref{lemma:gaussian:moments:sel}, we get 
\begin{equation}
	\left|\partial_{\bm z}^{\bm \alpha}\mathcal{P}^G(\bm z, \bm A_M\bm\Delta, \bm A_M,\bm H_M)\right|\leq O(1+d(\bm 0, \bm H_M-\bm A_M\bm z_{\bm A}-\bm A_M\bm\Delta)^3)
\end{equation}
By the triangle inequality we get the conclusion.

\end{proof}

%----------------------------------------------------------------

The following lemma provides a bound on the growth of the moments of the Gaussian random variable after selection. Recall $\bm H_M=\prod_{i=1}^d[b_i,\infty)$ for some vector $\bm b\in\mathbb{R}^d$.

%-----------------------------------------------------------------
\begin{lemma} \label{lemma:gaussian:moments:sel}
For a given $\bm X\sim\mathcal{N}_d(\bm 0,\bm \Sigma)$ and a vector $\bm\mu\in\mathbb{R}^d$, we have
\begin{equation*}
	\mathbb{E}\left[\|\bm X\|_2^{\alpha} \:|\: \bm X\in\bm H_M-\bm\mu\right]=O(d(\bm 0, \bm H_M-\bm \mu)^{\alpha}+1)
\end{equation*}
for $\alpha\in\mathbb{N}$.
\end{lemma}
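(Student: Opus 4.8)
The plan is to condition on the convex set $R:=\bm H_M-\bm\mu=\prod_{i=1}^d[b_i-\mu_i,\infty)$ and combine a Gaussian concentration estimate for the truncated law $\mu_R:=\mathcal L(\bm X\mid\bm X\in R)$ with a crude lower bound on $\mathbb{P}(\bm X\in R)$. Write $D:=d(\bm 0,\bm H_M-\bm\mu)$. If $D=0$, then $\bm 0\in R$, so $R\supseteq\mathbb{R}_{\ge0}^d$ and $\mathbb{P}(\bm X\in R)\ge\mathbb{P}(\bm X\ge\bm 0)=:p_0>0$, with $p_0$ depending only on $(\bm\Sigma,d)$; since $\mathbb{E}[\|\bm X\|_2^{\alpha}\mathbf 1_{\bm X\in R}]\le\mathbb{E}\|\bm X\|_2^{\alpha}$ is a constant, the conditional expectation is $O(1)$ and the claim holds. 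So assume $D>0$ and let $\bm x^{\star}$ be the $\ell_2$-projection of $\bm 0$ onto $R$, i.e.\ $x^{\star}_i=\max(b_i-\mu_i,0)$, so that $\|\bm x^{\star}\|_2=D$.

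First step: concentration. Since $R$ is convex, $\mu_R$ has density proportional to $e^{-V}$ with $V=\tfrac12\|\cdot\|_{\bm\Sigma^{-1}}^2+\iota_R$, whose Hessian dominates $\bm\Sigma^{-1}$ in the positive semidefinite order; hence $\mu_R$ is strongly log-concave and inherits Gaussian concentration with variance proxy $\|\bm\Sigma\|_2$: for every $1$-Lipschitz $f$ and $t\ge0$, $\mu_R\{f\ge m_f+t\}\le\exp(-t^2/(2\|\bm\Sigma\|_2))$, where $m_f$ is a $\mu_R$-median of $f$. (This is the standard fact that convex restrictions of a Gaussian retain Gaussian, not merely sub-exponential, concentration --- e.g.\ via the Bakry--\'{E}mery criterion and the logarithmic Sobolev inequality.) Taking $f=\|\cdot\|_2$ and integrating the tail bound gives $\mathbb{E}_{\mu_R}\|\bm X\|_2^{\alpha}\le C(\alpha,d,\bm\Sigma)\,(m^{\alpha}+1)$, where $m$ is the $\mu_R$-median of $\|\bm X\|_2$.

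Second step: bounding the median by $O(1+D)$. The Euclidean ball $B(\bm x^{\star}+\bm 1,1)$ is contained in $R$ and lies at distance at most $D+\sqrt d+1$ from the origin, so $\mathbb{P}(\bm X\in R)\ge c_1(d,\bm\Sigma)\exp(-\tfrac12\|\bm\Sigma^{-1}\|_2(D+\sqrt d+1)^2)$. On the other hand $\mathbb{P}(\|\bm X\|_2>s)\le\mathbb{P}(\|\bm\Sigma^{-1/2}\bm X\|_2^2>s^2/\|\bm\Sigma\|_2)\le 2^{d/2}e^{-s^2/(4\|\bm\Sigma\|_2)}$ by a $\chi^2_d$ moment-generating-function bound. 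Dividing, $\mu_R\{\|\bm X\|_2>s\}=\mathbb{P}(\|\bm X\|_2>s,\ \bm X\in R)/\mathbb{P}(\bm X\in R)\le\tfrac12$ as soon as $s\ge C_2(d,\bm\Sigma)(1+D)$ (a routine computation, splitting on whether $D$ exceeds a $(d,\bm\Sigma)$-dependent threshold), whence $m\le C_2(d,\bm\Sigma)(1+D)$. Feeding this into the previous display gives $\mathbb{E}[\|\bm X\|_2^{\alpha}\mid\bm X\in R]\le C(\alpha,d,\bm\Sigma)(1+D^{\alpha})=O(d(\bm 0,\bm H_M-\bm\mu)^{\alpha}+1)$, with a constant independent of $\bm b$ and $\bm\mu$, which is the claim.

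The main obstacle is the concentration step: one must be sure the truncated-to-a-convex-set Gaussian still enjoys Gaussian concentration with the original operator-norm variance proxy, and keep every constant in the median estimate uniform in $(\bm b,\bm\mu)$ so that only $D$ survives. A self-contained alternative that avoids citing that fact is to use $\|\bm X\|_2\le\|\bm X\|_1$ together with Hölder's inequality to reduce to a coordinatewise bound on $\mathbb{E}[|X_i|^{\alpha}\mid\bm X\in R]$ and then induct on $d$ via one-dimensional truncated-normal moment estimates, but the cross-correlations between coordinates make that route noticeably heavier.
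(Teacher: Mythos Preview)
Your argument is correct, but the paper takes a much lighter route. Instead of invoking log-concave concentration (Bakry--\'Emery, log-Sobolev) to control the moments by the median and then bounding the median via a tail-comparison argument, the paper simply rescales: writing $D=d(\bm 0,\bm H_M-\bm\mu)$ and $\bm Y=\bm X/D\sim\mathcal{N}(\bm 0,\bm\Sigma/D^2)$, the conditional moment factors as
\[
\mathbb{E}\bigl[\|\bm X\|_2^{\alpha}\,\big|\,\bm X\in\bm H_M-\bm\mu\bigr]
= D^{\alpha}\,\mathbb{E}\bigl[\|\bm Y\|_2^{\alpha}\,\big|\,\bm Y\in(\bm H_M-\bm\mu)/D\bigr],
\]
and the rescaled conditioning set is a rectangle $\prod_i[u_i,\infty)$ at unit distance from the origin. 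The paper then argues that the second factor is monotone (decreasing) in $D$, hence uniformly bounded, which immediately gives $O(D^{\alpha}+1)$.

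What each approach buys: the paper's scaling trick is essentially one line and uses no external machinery, though its monotonicity claim is stated rather than proved and the dependence on the direction $\bm u$ is left implicit. Your route is heavier but fully self-contained once one grants the Bakry--\'Emery criterion, and it makes every constant explicit and uniform in $(\bm b,\bm\mu)$; it would also generalize beyond rectangles to arbitrary convex conditioning sets, which the paper's monotonicity step does not obviously do.
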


%-----------------------------------------------------------------

\begin{proof}

We want to show $\mathbb{E}\left[\|\bm X\|_2^{\alpha} \;|\; \bm X\in\bm H_M-\bm\mu\right]$ grows at most linearly in $d(\bm 0,\bm H_M-\bm\mu)^{\alpha}$ for $d:=d(\bm 0, \bm H_M-\bm \mu)>0$. The quantity above is  increasing in $d$.

Denoting $\bm Y\sim\mathcal{N}\left(\bm 0,\frac{1}{d^2}\bm\Sigma\right)$ we get the above equals
\begin{equation*}
	d^{\alpha}\cdot\mathbb{E}\left[\|\bm Y\|_2^{\alpha}\;\Big|\;\bm Y\in\frac{\bm H_M-\bm\mu}{d}\right].
\end{equation*}
Since $\bm H_M=\prod_{i=1}^d[b_i,\infty)$, $\frac{\bm H_M-\bm\mu}{d}$ is a rectangle $\prod_{i=1}^d[u_i,\infty)$, $\|\bm u\|_2= 1$. So $\mathbb{E}\left[\|\bm Y\|_2^{\alpha}\;\Big|\;\bm Y\in\frac{\bm H_M-\bm\mu}{d}\right]$ is decreasing in $d$ hence bounded above.

\end{proof}

%-----------------------------------------------------------------

%%%%%%%%%%%%%%%%%%%%%%%%%%%%%%%%%%%%%%%%%%%%%%%%%%%%%%%%%%%%%%%%%%%%%%%%%%%%%%%%%%%%%%%%%%%%%%%%%%%%%%%%%%%%%%%%%%%%%%%%%%%%%%%%%%%%%%

%------------------------------------------------
\subsection{Applying Chatterjee's theorem for Gaussian randomization}
%------------------------------------------------

%-----------------------------------------------------------------------
\begin{lemma}[Gaussian randomization: closeness of Gaussian and non-parametric LR] Assuming \eqref{assumption:asymptotically:linear}, \eqref{assumption:LA:weaker}, \eqref{assumption:norm:A_M} and \eqref{assumption:gaussian:randomization} and \eqref{assumption:subG} hold, we have
\begin{equation*}
	\mathbb{E}_{\mathbb{F}_n}\left[\left|\ell_{\mathbb{F}_n}(\bm Z)-\ell_{\Phi_n}(\bm Z)\right|\right]\leq \frac{1}{\sqrt{n}} C(C_M, K_g, K_G,\bm\Sigma),
	\end{equation*}
	where the constant on the RHS depends only on its arguments.
	\label{lemma:lik:diff:gaussian:randomization}
\end{lemma}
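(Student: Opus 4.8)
The plan is to mirror the proof of Lemma~\ref{lemma:lip:randomization:lik:diff}, replacing the exponential--linear weight $e^{K_g x}$ used there by the exponential--quadratic weight $e^{\kappa_2 x^2}$ that Lemma~\ref{lemma:lik:smoothness:gaussian:randomization} forces on us, at the cost of upgrading the moment condition on the data from the MGF assumption \eqref{assumption:mgf} to the sub-Gaussian assumption \eqref{assumption:subG}. As in the Lipschitz case, start from
\begin{equation*}
	\mathbb{E}_{\mathbb{F}_n}\left[\left|\ell_{\mathbb{F}_n}(\bm Z)-\ell_{\Phi_n}(\bm Z)\right|\right]
	=\frac{\left|\mathbb{E}_{\mathbb{F}_n}\left[Q_1(\bm Z;\bm\Delta)\right]-\mathbb{E}_{\Phi_n}\left[Q_1(\bm Z;\bm\Delta)\right]\right|}{\mathbb{E}_{\Phi_n}\left[Q_1(\bm Z;\bm\Delta)\right]},
\end{equation*}
lower bound the denominator by the computation in the proof of Lemma~\ref{lemma:lik:smoothness:gaussian:randomization}, namely $\mathbb{E}_{\Phi_n}[Q_1(\bm Z;\bm\Delta)]\geq \mathrm{const}\cdot e^{-\kappa_1' d(\bm 0,\bm H_M-\bm A_M\bm\Delta)^2}$ for a suitable $\kappa_1'$, and reduce to bounding $|\mathbb{E}_{\mathbb{F}_n}[\widetilde Q_1(\bm Z;\bm\Delta)]-\mathbb{E}_{\Phi_n}[\widetilde Q_1(\bm Z;\bm\Delta)]|$ for $\widetilde Q_1=Q_1/Q_1(\bm 0;\bm\Delta)$ (equivalently, apply Chatterjee's bound directly to $f=\ell_{\Phi_n}\circ\mathcal M$, which has mean $1$ under $\Phi_n$, since $\mathbb{E}_{\mathbb{F}_n}[|\ell_{\mathbb{F}_n}-\ell_{\Phi_n}|]=|\,1-\mathbb{E}_{\mathbb{F}_n}[\ell_{\Phi_n}(\bm Z)]\,|$).

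For the Chatterjee step I would take $\Omega_n(\bm z)=\|\bm A_M\bm z\|_2$ and $\mathcal W_n(x)=e^{\kappa_2 x^2}$, with $\kappa_2>0$ chosen small: smaller than $\tfrac1{2c}$ so Lemma~\ref{lemma:lik:smoothness:gaussian:randomization} applies, and small relative to the sub-Gaussian parameter of the data for the moment bounds below. The map $\mathcal W_n$ is increasing and convex, so Theorem~\ref{thm:chatterjee} applies. Lemma~\ref{lemma:lik:smoothness:gaussian:randomization} gives $|\partial_{\bm z}^{\bm\alpha}\widetilde Q_1(\bm z;\bm\Delta)|\leq \mathrm{const}\cdot e^{\kappa_1 d(\bm 0,\bm H_M-\bm A_M\bm\Delta)^2}\,e^{\kappa_2\|\bm A_M\bm z\|_2^2}$ for $|\bm\alpha|=3$; since differentiating $\mathcal M$ three times pulls out $n^{-3/2}$, this yields $\lambda_3^{(n)}(\widetilde Q_1)\leq n^{-3/2}\,\mathrm{const}\cdot e^{\kappa_1 d(\bm 0,\bm H_M-\bm A_M\bm\Delta)^2}$, the analogue of $\lambda_3^{(n)}(\widetilde Q_1)\leq n^{-3/2}C_MK_g$ in the Lipschitz lemma.

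It then remains to show that each of the three sums on the right of \eqref{eq:chatterjee:bound} is $O(n^{-1/2})$; here, unlike the Lipschitz case, I cannot simply invoke the argument behind Theorem~9 of \cite{tian2015selective}, because the weight is now quadratic-exponential. The point is that $\mathcal M(\bm W_i)$ is a $\tfrac1{\sqrt n}$-normalized sum of $n-1$ independent mean-zero sub-Gaussian vectors (the $\bm g_j$ and the centered $\bm\xi_j$), so under \eqref{assumption:subG} and \eqref{assumption:norm:A_M} the scalar $\|\bm A_M\mathcal M(\bm W_i)\|_2$ is sub-Gaussian with a parameter bounded uniformly in $i,n$; hence $\mathbb E[\mathcal W_n(2\Omega_n(\mathcal M(\bm W_i)))]=\mathbb E[e^{4\kappa_2\|\bm A_M\mathcal M(\bm W_i)\|_2^2}]$ is finite and bounded uniformly in $i,n$ once $\kappa_2$ is small enough. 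The remaining two weights involve $\mathcal M(\widetilde{\bm W}_i-\bm W_i)=\tfrac1{\sqrt n}\bm g_i$ and $\mathcal M(\widetilde{\bm W}_{i-1}-\bm W_i)=\tfrac1{\sqrt n}\bm y_i$, so those expectations are controlled by a single (Gaussian, resp.\ sub-Gaussian) summand, e.g.\ by Cauchy--Schwarz against $\mathbb E\|\bm g_i\|_1^6$ and $\mathbb E\|\bm y_i\|_1^6$, again uniformly in $i,n$; and $\mathbb E\|\bm g_i\|_1^3,\mathbb E\|\bm y_i\|_1^3$ are bounded since sub-Gaussianity gives all moments and $\bm\Sigma_n$ is bounded. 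Each sum is therefore $\sum_{i=1}^n n^{-3/2}\,\mathrm{const}=n^{-1/2}\,\mathrm{const}$, and folding in the $d(\bm 0,\bm H_M-\bm A_M\bm\Delta)$-dependent prefactors coming from $\lambda_3^{(n)}(\widetilde Q_1)$ and the denominator lower bound --- these stay bounded because $\bm A_M\bm\Delta=\sqrt n\bm A_M\bm\mu$ is controlled by \eqref{assumption:LA:weaker} and $\bm b_M$ is bounded, so the selection region is not asymptotically degenerate --- gives the claimed $\tfrac1{\sqrt n}C(C_M,K_g,K_G,\bm\Sigma)$.

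The main obstacle, and precisely the reason Gaussian randomization needs the full sub-Gaussian assumption \eqref{assumption:subG} where Lipschitz randomization needed only the neighbourhood-MGF condition \eqref{assumption:mgf}, is the finiteness and uniform boundedness of the exponential-of-squared-norm moments $\mathbb E[e^{4\kappa_2\|\bm A_M\mathcal M(\bm W_i)\|_2^2}]$: an MGF finite only near $0$ does not suffice, one genuinely needs sub-Gaussian tails, and even then $\kappa_2$ must be taken small relative to the sub-Gaussian parameter. A secondary, more bookkeeping-level point is keeping the $e^{\kappa_1 d(\bm 0,\bm H_M-\bm A_M\bm\Delta)^2}$ factor from growing with $n$, which is exactly where the non-degeneracy of the selection region under \eqref{assumption:LA:weaker} enters.
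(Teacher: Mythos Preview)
Your proposal is correct and follows essentially the same route as the paper. The paper also rewrites $\mathbb{E}_{\mathbb{F}_n}[|\ell_{\mathbb{F}_n}(\bm Z)-\ell_{\Phi_n}(\bm Z)|]=|\mathbb{E}_{\mathbb{F}_n}[\ell_{\Phi_n}(\bm Z)]-\mathbb{E}_{\Phi_n}[\ell_{\Phi_n}(\bm Z)]|$ and applies Theorem~\ref{thm:chatterjee} directly to $\ell_{\Phi_n}$, using Lemma~\ref{lemma:lik:smoothness:gaussian:randomization} to get $\lambda_3^{(n)}(\ell_{\Phi_n})\leq n^{-3/2}e^{\kappa_1 d(\bm 0,\bm H_M-\bm A_M\bm\Delta)^2}$; the only cosmetic difference is that the paper parametrizes the weight as $\Omega_n(\bm z)=\|\bm A_M\bm z\|_2^2$, $\mathcal W_n(x)=e^{\kappa_2 x}$ rather than your $\Omega_n(\bm z)=\|\bm A_M\bm z\|_2$, $\mathcal W_n(x)=e^{\kappa_2 x^2}$, which composes to the same thing up to a harmless constant in the exponent. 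Your discussion of why the exponential-of-squared-norm moments are uniformly bounded under \eqref{assumption:subG}, and why \eqref{assumption:LA:weaker} keeps the $d(\bm 0,\bm H_M-\bm A_M\bm\Delta)$ prefactor from blowing up, is in fact more explicit than the paper's one-line ``by the sub-Gaussian assumption and the local alternatives the sum converges to zero.''
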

%------------------------------------------------------------------------

\begin{proof}

Since
\begin{equation*}
\begin{aligned}
	&\mathbb{E}_{\mathbb{F}_n}\left[\left|\ell_{\mathbb{F}_n}(\bm Z)-\ell_{\Phi_n}(\bm Z)\right|\right] = \mathbb{E}_{\mathbb{F}_n}\left[Q_1(\bm Z;\bm\Delta)\right]\left|\frac{1}{\mathbb{E}_{\mathbb{F}_n}\left[Q_1(\bm Z;\bm\Delta)\right]}-\frac{1}{\mathbb{E}_{\Phi_n}\left[Q_1(\bm Z;\bm\Delta)\right]}\right| \\
	&= \frac{\left|\mathbb{E}_{\mathbb{F}_n}\left[Q_1(\bm Z;\bm\Delta)\right]-\mathbb{E}_{\Phi_n}\left[Q_1(\bm Z;\bm\Delta)\right]\right|}{\mathbb{E}_{\Phi_n}\left[Q_1(\bm Z;\bm\Delta)\right]} =\left|\mathbb{E}_{\mathbb{F}_n}[\ell_{\Phi_n}(\bm Z)]-\mathbb{E}_{\Phi_n}[\ell_{\Phi_n}(\bm Z)]\right|.
\end{aligned}
\end{equation*}

We use Chatterjee technique here, providing an upper bound on the quantity above using the smoothness of $\ell_{\Phi_n}(\bm z)$. From Lemma \ref{lemma:lik:smoothness:gaussian:randomization}, we have
\begin{equation*}
	\left|\partial_{\bm z}^{\bm\alpha}\ell_{\Phi_n}(\bm z)\right|\leq const \cdot \exp\left(\kappa_1 d(\bm 0,\bm H_M-\bm A_M\bm\Delta)^2+\kappa_2\|\bm A_M\bm z\|_2^2\right),
\end{equation*}
implying
\begin{equation*}
	\lambda_3^{(n)}(\ell_{\Phi_n}) \leq \frac{1}{n^{3/2}} \exp(\kappa_1d(\bm 0,\bm H_M-\bm A_M\bm\Delta)^2).
\end{equation*}
for $\Omega_n(\bm z)=\|\bm A_M\bm z\|_2^2$, $\bm z\in\mathbb{R}^p$, and $\mathcal{W}_n(x)=e^{\kappa_2 x}$, $x\in\mathbb{R}$.
Using Theorem \ref{thm:chatterjee}, we have
\begin{equation*}
\begin{aligned}
		&\left|\mathbb{E}_{\mathbb{F}_n}\left[\ell_{\Phi_n}(\bm Z)\right]-\mathbb{E}_{\Phi_n}\left[\ell_{\Phi_n}(\bm Z)\right]\right| \\
		&\leq \frac{1}{2} \sum_{i=1}^n \lambda_3^{(n)}(\ell_{\Phi_n})\mathbb{E}\left[e^{2\kappa_2 \|\bm A_M\mathcal{M}(\bm W_i)\|_2^2}\right] \left(\mathbb{E}\left[\|\bm g_i\|_1^3\right] +\mathbb{E}\left[\|\bm y_i\|_1^3 \right]\right) \\
		&+\frac{1}{2} \sum_{i=1}^n \lambda_3^{(n)}(\ell_{\Phi_n})\mathbb{E}\left[e^{2\kappa_2\|\bm A_M\mathcal{M}(\widetilde{\bm W}_i-\bm W_i)\|_2^2} \|\bm g_i\|_1^3\right] \\
		&+ \frac{1}{2} \sum_{i=1}^n \lambda_3^{(n)}(\ell_{\Phi_n})\mathbb{E}\left[e^{2\kappa_2\|\bm A_M\mathcal{M}(\widetilde{\bm W}_{i-1}-\bm W_i)\|_2^2}\|\bm y_i\|_1^3 \right]
	\end{aligned}
\end{equation*}

By the sub-Gaussain assumption and the local alternatives the sum in the RHS above converges to zero as $n$ tends to infinity.
\end{proof}

\begin{theorem} \label{thm:selective:clt:gaussian:randomization}
	Assume \eqref{assumption:asymptotically:linear}, \eqref{assumption:LA:weaker}, \eqref{assumption:norm:A_M} and \eqref{assumption:gaussian:randomization} and \eqref{assumption:subG} hold. 
	Given a function $\mathcal{H}:\mathbb{R}\rightarrow\mathbb{R}$ with uniformly bounded derivatives up to the third order (call this bound $K_\mathcal{H}$), we have 
\begin{equation*}
	\left|\mathbb{E}_{\mathbb{F}_n^*}\mathcal{H}(\mathcal{P}^G(\bm Z)) - \mathbb{E}_{\Phi_n^*}\mathcal{H}(\mathcal{P}^G(\bm G))\right|=o_n(1).
\end{equation*}
\end{theorem}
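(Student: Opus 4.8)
The strategy is to follow the proof of Theorem~\ref{thm:selective:clt:finite:n:lipschitz} line by line, substituting the Gaussian-randomization smoothness estimates (Lemmas~\ref{lemma:lik:smoothness:gaussian:randomization} and~\ref{lemma:pivot:smoothness:gaussian:randomization}) for the Lipschitz ones. First I would split
\begin{align*}
	&\left|\mathbb{E}_{\mathbb{F}_n^*}\mathcal{H}(\mathcal{P}^G(\bm Z)) - \mathbb{E}_{\Phi_n^*}\mathcal{H}(\mathcal{P}^G(\bm G))\right|
	= \left|\mathbb{E}_{\mathbb{F}_n}\left[\mathcal{H}(\mathcal{P}^G(\bm Z))\ell_{\mathbb{F}_n}(\bm Z)\right] -\mathbb{E}_{\Phi_n}\left[\mathcal{H}(\mathcal{P}^G(\bm G))\ell_{\Phi_n}(\bm G)\right]\right| \\
	&\leq \left|\mathbb{E}_{\mathbb{F}_n}\left[\mathcal{H}(\mathcal{P}^G(\bm Z))\left(\ell_{\mathbb{F}_n}(\bm Z)-\ell_{\Phi_n}(\bm Z)\right)\right]\right|
	+  \left|\mathbb{E}_{\mathbb{F}_n}\left[\mathcal{H}(\mathcal{P}^G(\bm Z))\ell_{\Phi_n}(\bm Z)\right] -\mathbb{E}_{\Phi_n}\left[\mathcal{H}(\mathcal{P}^G(\bm G))\ell_{\Phi_n}(\bm G)\right]\right|.
\end{align*}
Because $\mathcal{P}^G$ takes values in $[0,1]$ and $\mathcal{H}$ has bounded first derivative, $\mathcal{H}\circ\mathcal{P}^G$ is bounded on the relevant range, so the first term is at most a constant multiple of $\mathbb{E}_{\mathbb{F}_n}[|\ell_{\mathbb{F}_n}(\bm Z)-\ell_{\Phi_n}(\bm Z)|]$, which is $O(n^{-1/2})$ by Lemma~\ref{lemma:lik:diff:gaussian:randomization}.

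For the second term I would apply Theorem~\ref{thm:chatterjee} to $f(\bm Y) = \mathcal{H}(\mathcal{P}^G(\mathcal{M}(\bm Y)))\,\ell_{\Phi_n}(\mathcal{M}(\bm Y))$. By the chain and product rules, $\partial_{\bm y_i}^{3}f$ is a finite sum of terms of the form $\mathcal{H}^{(k)}(\mathcal{P}^G)\cdot\big(\text{products of }\partial_{\bm z}^{\bm\alpha}\mathcal{P}^G\big)\cdot\partial_{\bm z}^{\bm\beta}\ell_{\Phi_n}$ with $|\bm\alpha|,|\bm\beta|\le 3$, each differentiation through $\mathcal{M}$ producing a factor $1/\sqrt n$. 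Bounding the $\mathcal{H}^{(k)}$ by $K_{\mathcal H}$, the pivot derivatives by $O(1+d(\bm 0,\bm H_M-\bm A_M\bm\Delta)^3+\|\bm A_M\bm z_{\bm A}\|_2^3)$ via Lemma~\ref{lemma:pivot:smoothness:gaussian:randomization}, and $\partial_{\bm z}^{\bm\beta}\ell_{\Phi_n}$ (including $\bm\beta=\bm 0$) by $C\cdot e^{\kappa_1 d(\bm 0,\bm H_M-\bm A_M\bm\Delta)^2+\kappa_2\|\bm A_M\bm z\|_2^2}$ via Lemma~\ref{lemma:lik:smoothness:gaussian:randomization}, the polynomial prefactors in $\|\bm A_M\bm z\|_2$ get absorbed into the exponential at the cost of enlarging $\kappa_2$ slightly while still keeping $\kappa_2<1/(2c)$. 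With the choices $\Omega_n(\bm z)=\|\bm A_M\bm z\|_2^2$ and $\mathcal{W}_n(x)=e^{\kappa_2 x}$ (increasing and convex, as Theorem~\ref{thm:chatterjee} requires) this gives
\[
	\lambda_3^{(n)}(f)\le \frac{1}{n^{3/2}}\,C(C_M,K_g,K_{\mathcal H},\bm\Sigma)\,e^{\kappa_1 d(\bm 0,\bm H_M-\bm A_M\bm\Delta)^2}.
\]

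Substituting into~\eqref{eq:chatterjee:bound}, the second term is controlled by $\tfrac12\sum_{i=1}^n\lambda_3^{(n)}(f)$ multiplied by exponential-moment factors such as $\mathbb{E}[e^{2\kappa_2\|\bm A_M\mathcal{M}(\bm W_i)\|_2^2}]$ and third absolute moments $\mathbb{E}[\|\bm g_i\|_1^3]+\mathbb{E}[\|\bm y_i\|_1^3]$, together with the two analogous terms in $\widetilde{\bm W}_i-\bm W_i$ and $\widetilde{\bm W}_{i-1}-\bm W_i$ --- exactly the sum handled in the proof of Lemma~\ref{lemma:lik:diff:gaussian:randomization}. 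Under \eqref{assumption:subG} each $\|\bm A_M\mathcal{M}(\cdot)\|_2^2$ is a normalized sum of $n$ independent sub-Gaussian (resp.\ Gaussian) contributions, so for $\kappa_2$ small enough (equivalently $c$ large enough) these exponential moments are bounded uniformly in $n$; combined with the uniformly bounded third moments and the $n^{-3/2}$ in $\lambda_3^{(n)}(f)$, the sum is $O(n^{-1/2})$, with the factor $e^{\kappa_1 d(\bm 0,\bm H_M-\bm A_M\bm\Delta)^2}$ kept bounded by \eqref{assumption:LA:weaker}. Adding the two bounds yields the stated $o_n(1)$. The main obstacle is precisely this last bookkeeping: after expanding the third derivative of the product $(\mathcal{H}\circ\mathcal{P}^G)\cdot\ell_{\Phi_n}$ and after the extra factor~$2$ that Chatterjee's theorem places inside $\mathcal{W}_n$, one must check that the coefficient multiplying $\|\bm A_M\mathcal{M}(\cdot)\|_2^2$ remains strictly below the threshold at which the sub-Gaussian moment-generating-function bounds are finite --- this is what forces $\kappa_2<1/(2c)$ in Lemma~\ref{lemma:lik:smoothness:gaussian:randomization} and, implicitly, a lower bound on the randomization scale $c$.
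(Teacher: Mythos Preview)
Your proposal is correct and is exactly the approach the paper intends: its proof reads ``Similar to the proof of the above Lemma,'' meaning one repeats the decomposition from Theorem~\ref{thm:selective:clt:finite:n:lipschitz}, invokes Lemma~\ref{lemma:lik:diff:gaussian:randomization} for the first term, and then applies Theorem~\ref{thm:chatterjee} to $f(\bm Y)=\mathcal{H}(\mathcal{P}^G(\mathcal{M}(\bm Y)))\ell_{\Phi_n}(\mathcal{M}(\bm Y))$ with $\Omega_n(\bm z)=\|\bm A_M\bm z\|_2^2$, $\mathcal{W}_n(x)=e^{\kappa_2 x}$, using Lemmas~\ref{lemma:lik:smoothness:gaussian:randomization} and~\ref{lemma:pivot:smoothness:gaussian:randomization} to bound $\lambda_3^{(n)}(f)$. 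You have in fact supplied more detail than the paper does, including the bookkeeping on $\kappa_2$ and the factor~$2$ inside $\mathcal{W}_n$, which the paper leaves implicit.
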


\begin{proof}
	Similar to the proof of the above Lemma.
\end{proof}

%%%%%%%%%%%%%%%%%%%%%%%%%%%%%%%%%%%%%%%%%%%%%%%%%%%%%%%%%%%%%%%%%%%%%%%%%%%%%%%%%%%%%%%%%%%%%%%%%%%%%%%%%%%%%%%%%%%%%%%%%%%%%%%%%%%%%%%%%%%%%%%%%%%%%%%%%%%%%%%%%%%%%%%%%%%%%%%

\section{Proofs of Section \ref{sec:consistency}} \label{app:consistency}
%--------------------------------------------------------

Since $Q_1(\bm z;\bm\Delta)=\mathbb{G}\left\{\bm H_M-\bm A_M\bm\Delta-\bm A_M\bm z\right\}$, $\bm z\in\mathbb{R}^p,$
we write the likelihood ratio as
\begin{align*}
 \ell_{\mathbb{F}_n}(\bm t)=\frac{Q_1(\bm z;\bm\Delta)}{ \mathbb{E}_{\mathbb{F}_n}\left[Q_1(\bm Z;\bm\Delta)\right]}.
\end{align*}

%------------------------------------------------------------------
\begin{proof-of-lemma}{\ref{lemma:upper:bound}}
	Using the change of variables $\bm\omega'=\bm\omega+\bm A_M\bm z+\bm A_M\bm\Delta$ and the Lipschitz assumption on $\tilde{g}$, we have the lower bound
\begin{equation*}
\begin{aligned}
	Q_1(\bm z;\bm\Delta) 
	&=\int\limits_{\bm\omega\in\bm H_M-\bm A_M\bm z-\bm A_M\bm \Delta}\frac{1}{C_g}\exp(-\tilde{g}(\bm\omega))d\bm\omega \\
	&=\frac{1}{C_g}\int\limits_{\bm\omega'\in \bm H_M}\exp\left(-\tilde{g}(\bm\omega'-\bm A_M\bm z-\bm A_M\bm\Delta)\right)d\bm\omega' \\
	&\geq\frac{1}{C_g}\int\limits_{\bm\omega'\in\bm H_M} \exp\left(-\tilde{g}(\bm\omega'-\bm A_M\bm\Delta)-K_g\|\bm A_M\bm z\|_h\right)d\bm\omega' \\
	&=e^{-K_g\|\bm A_M\bm z\|_h}\int\limits_{\bm\omega'\in\bm H_M} \frac{1}{C_g}\exp\left(-\tilde{g}(\bm\omega'-\bm A_M\bm\Delta)\right) d\bm\omega' =Q_1(\bm 0;\bm\Delta)\cdot e^{-K_g\|\bm A_M\bm z\|_h},
\end{aligned}
\end{equation*}
implying 
\begin{equation*}
\mathbb{E}_{\mathbb{F}_n}\left[Q_1(\bm Z;\bm\Delta)\right]
\geq Q_1(\bm 0;\bm\Delta)\cdot\mathbb{E}_{\mathbb{F}_n}\left[e^{-K_g\|\bm A_M\bm Z\|_h} \right].
\end{equation*}

Similarly, we have an upper bound
\begin{align*}
	Q_1(\bm z;\bm\Delta) 
	&\leq \frac{1}{C_g}\int\limits_{\bm\omega'\in\bm H_M} \exp\left(-\tilde{g}(\bm\omega'-\bm A_M\bm\Delta)+K_g\|\bm A_M\bm z\|_h\right)d\bm\omega' \\
	&=e^{K_g\|\bm A_M\bm z\|_h}\int\limits_{\bm\omega'\in\bm H_M}\frac{1}{C_g}\exp\left(-\tilde{g}(\bm\omega'-\bm A_M\bm\Delta)\right) d\bm\omega' =Q_1(\bm 0;\bm\Delta)\cdot e^{K_g\|\bm A_M\bm z\|_h}.
\end{align*}
\end{proof-of-lemma}

%------------------------------------------------------------------

\begin{proof-of-lemma}{\ref{lemma:consistency}}
	Take any $\delta>0$. Using the upper bound on the selective likelihood ratio bound from Lemma \ref{lemma:upper:bound}, we get the following inequality
\begin{align*}
	\mathbb{F}_{n}^*\left\{f_n(\bm D)>\delta\right\}
	&=\mathbb{F}_n\left\{f_n(\bm D)>\delta\:\big|\:\textnormal{selection}(\bm D,\bb\omega) \right\}\\
	&=\int \mathbb{I}_{\left\{f_n(\bm t)>\delta\right\}}d\mathbb{F}_n^*(\bm t)=\int\mathbb{I}_{\{f_n(\bm t)>\delta\}}\ell_{\mathbb{F}_n}(\bm t)d\mathbb{F}_{n}(\bm t)\\
	&\leq \frac{1}{\mathbb{E}_{\mathbb{F}_n}\left[e^{-K_g\|\bm A_M\bm Z\|_h}\right]}\int \mathbb{I}_{\{f_n(\bm t)>\delta\}}e^{K_g\|\bm A_M\bm z\|_h}d\mathbb{F}_{n}(\bm t).
\end{align*}
Using \eqref{assumption:clt} and \eqref{assumption:norm:A_M} assumption we have $\|\bm A_M\bm z\|_h=O_{\mathbb{F}_n}(1)$ uniformly over $\mathbb{F}_n\in\mathcal{F}_n$, hence for every $\delta_1>0$, there exists $C(\delta_1)$ such that
\begin{equation*}
	\mathbb{F}_n\{\|\bm A_M\bm Z\|_g>C(\delta_1)\}<\delta_1
\end{equation*}
for $n$ sufficiently large.
Since
\begin{align*}
&\int \mathbb{I}_{\{f_n(\bm t)>\delta\}}e^{K_g\|\bm A_M\bm z\|_h}d\mathbb{F}_{n}(\bm t) \\
&=\int\mathbb{I}_{\{f_n(\bm t)>\delta\}} e^{K_g\|\bm A_M\bm z\|_h}\mathbb{I}_{\{\left\|\bm A_M\bm z\|_h\leq C(\delta_1)\right\}}d\mathbb{F}_{n}(\bm t)+\int \mathbb{I}_{\{f_n(\bm t)>\delta\}}e^{K_g\|\bm A_M\bm z\|_h}\mathbb{I}_{\left\{\|\bm A_M\bm z\|_h> C(\delta_1)\right\}}d\mathbb{F}_{n}(\bm t) \\
&\leq e^{K_g\cdot C(\delta_1)} \mathbb{F}_n\{f_n(\bm D)>\delta\} 
	+\mathbb{E}_{\mathbb{F}_n}\left[e^{K_g\|\bm A_M\bm Z\|_h}\mathbb{I}_{\left\{\|\bm A_M\bm Z\|_h>C(\delta_1)\right\}}\right] \\
&\leq e^{K_g\cdot C(\delta_1)} \mathbb{F}_n\{f_n(\bm D)>\delta\} + \left(\mathbb{E}_{\mathbb{F}_n}\left[e^{2K_g\|\bm A_M\bm Z\|_h}\right]\right)^{1/2}\left(\mathbb{P}_{\mathbb{F}_n}\left\{\|\bm A_M\bm Z\|_h> C(\delta_1)\right\}\right)^{1/2} \\
&\leq e^{K_g\cdot C(\delta_1)} \mathbb{F}_n\{f_n(\bm D)>\delta\} + \left(\mathbb{E}_{\mathbb{F}_n}\left[e^{2K_g\|\bm A_M\bm Z\|_h}\right]\right)^{1/2}\delta_1^{1/2}.
\end{align*}

Using $\underset{n\rightarrow\infty}{\lim}\:\underset{\mathbb{F}_n\in\mathcal{F}_n}{\sup}\mathbb{F}_{n}\left\{f_n(\bm D)>\delta\right\}=0$ (by the assumption of the lemma) and $\underset{\mathbb{F}_n\in\mathcal{F}_n}{\sup}\mathbb{E}_{\mathbb{F}_n}\left[e^{2K_g\|\bm A_M\bm Z\|_h}\right]=O(1)$, we get
\begin{equation*}
	\underset{n\rightarrow\infty}{\lim}\:\underset{\mathbb{F}_n^*\in\mathcal{F}_n^*}{\sup}\mathbb{F}_n^*\left\{f_n(\bm D)>\delta\right\}=0.
\end{equation*}

\end{proof-of-lemma}

%%%%%%%%%%%%%%%%%%%%%%%%%%%%%%%%%%%%%%%%%%%%%%%%%%%%%%%%%%%%%%%%%%%%%%%%%%%%%%%%%%%%%%%%%%%%%%%%%%%%%%%%%%%%%%%%%%%%%%%%%%%%%%%%%%

%---------------------------------------------------------
\section{Proofs of Section \ref{sec:bootstrap}} \label{app:proofs:boot}
%---------------------------------------------------------

The goal here is to provide all the details showing that the $\mathcal{P}^B$ is asymptotically $\textnormal{Unif}[0,1]$ uniformly across $\mathbb{F}_n\in\mathcal{F}_n$. In order to show that, let us introduce some important notation. For $\bb\tau\in\mathbb{R}^a$ and $\bm{u}_{\bb A}\in\mathbb{R}^{p}$, define 
\begin{equation*}
\begin{aligned}
	Q_2\left(\bb\tau;\bm z_{\bb A},\bm\Delta\right)
	&:=\mathbb{G}\left\{\bb\omega: \bb\omega+\bm A_M \bm z_{\bb A}+\bm A_M\bm C\bb\tau +\bm A_M\bm\Delta\in\bm H_M\right\} \\
	&=\mathbb{G}\left\{\bm H_M-\bm A_M\bm z_{\bb A}-\bm A_M\bm C\bb\tau-\bm A_M\bm\Delta\right\} \\
	&=\mathbb{G}\left\{\bm H_M-\sqrt{n}\bm A_M\bm t_{\bb A}-\bm A_M(\bm C\bb\tau+\sqrt{n}\bm C\bb A^T\bm{\mu}) \right\},
\end{aligned}
\end{equation*}
where as before $\sqrt{n}\bm t_{\bb A}:=\bm z_{\bb A}+\bm\Delta-\sqrt{n}\bm C\bb A^T\bm\mu$, and
\begin{align*}
	&Q_3(\bm z_{\bb A};\bm\Delta)
	:=\int_{\mathbb{R}^a} Q_2\left(\bb\tau;\bm z_{\bb A},\bm\Delta\right)\frac{1}{(2\pi)^{-a/2}|\bb\Sigma_{\bb A}|^{-1/2}} e^{-\frac{1}{2}\bb\tau^T\bb\Sigma_{\bb A}^{-1}\bb\tau}d\bb\tau\\
	&=\int_{\mathbb{R}^a} \mathbb{G}\left\{\bm H_M-\sqrt{n}\bm A_M\bm t_{\bb A}-\bm A_M\left(\bm C\bb\tau+\sqrt{n}\bm C\bb A^T\bm\mu\right) \right\}\frac{e^{-\frac{1}{2}\bb \tau^T \bb\Sigma_{\bb A}^{-1}\bb\tau}}{(2\pi)^{-a/2}|\bb\Sigma_{\bb A}|^{-1/2}} d\bb\tau\\
	&=\int_{\mathbb{R}^a} \mathbb{G}\left\{\bm H_M-\sqrt{n}\bm A_M\bm t_{\bb A}-\sqrt{n}\bm A_M\bm C\bb s \right\}\frac{\sqrt{n}}{(2\pi)^{-a/2}|\bb\Sigma_{\bb A}|^{-1/2}}e^{-\frac{n}{2}(\bb s-\bb A^T\bm\mu)^T\bb\Sigma_{\bb A}^{-1}(\bb s-\bb A^T\bb\mu)}d\bb s,
\end{align*}
where $\bb\Sigma_{\bb A} = \bb A^T\bb\Sigma\bb A$ in the last equality we did the change of variables $\bb\tau+\sqrt{n}\bb A^T\bm\mu=\sqrt{n}\bb s$.

The following lemma is an important step in proving asymptotics of $\mathcal{P}^B$. It will be used to show that the denominator of the test statistic $\mathcal{P}^B$ is bounded below in probability uniformly over $\mathbb{F}_n\in\mathcal{F}_n$.

%---------------------------------------------------------
\begin{lemma}[Lower bound] \label{lemma:lower:bound} Assuming  \eqref{assumption:clt}, \eqref{assumption:local:alt}, \eqref{assumption:g:lip} and \eqref{assumption:norm:A_M}, we have that for every $\delta>0$, there exists $C>0$ such that
\begin{equation*}
	\underset{\mathbb{F}_n\in\mathcal{F}_n}{\sup} \mathbb{F}_n\left\{ Q_2(\bm Z_{\bb A};\bm\Delta)<C \right\}< \delta
\end{equation*}
for $n$ sufficiently large.
\end{lemma}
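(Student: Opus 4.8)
The plan is to bound $Q_2(\bm Z_{\bm A};\bm\Delta)$ below by a \emph{deterministic} positive constant times $e^{-K_g\|\bm A_M\bm Z\|_h}$, and then to kill the exponential factor using the uniform CLT. Throughout, recall $\bm A_M\bm\Delta=\sqrt n\,\bm A_M\bm\mu$, and that the shift inside $Q_2(\bm Z_{\bm A};\bm\Delta)$ equals $\bm A_M\bm Z+\bm A_M\bm\Delta$ (the split into the $\bm Z_{\bm A}$- and $\bm A^T\bm Z$-components notwithstanding, these recombine to $\bm A_M\bm Z$).

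\textbf{Step 1 (Lipschitz lower bound).} Exactly as in \eqref{eq:Q1:lower:bound} and in the proof of Lemma~\ref{lemma:upper:bound}, the change of variables $\bm\omega'=\bm\omega+\bm A_M\bm Z+\bm A_M\bm\Delta$ together with the Lipschitz property of $\tilde g$ (assumption \eqref{assumption:g:lip}) gives, for the benchmark shift $\bm A_M\bm\Delta$,
\[
Q_2(\bm Z_{\bm A};\bm\Delta)\;\ge\;\mathbb{G}\{\bm H_M-\bm A_M\bm\Delta\}\cdot e^{-K_g\|\bm A_M\bm Z\|_h}\;=\;c_0(n)\cdot e^{-K_g\|\bm A_M\bm Z\|_h},
\]
where $c_0(n):=\mathbb{G}\{\bm H_M-\bm A_M\bm\Delta\}$ depends only on $n$, not on $\mathbb F_n\in\mathcal F_n$.

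\textbf{Step 2 (the benchmark is bounded below).} Since $\bm H_M=\prod_{i=1}^d[b_i,\infty)$, the recentered region $\bm H_M-\bm A_M\bm\Delta$ is the half-space product $\prod_{i=1}^d[\,b_i-(\bm A_M\bm\Delta)_i,\infty)$, which contains the orthant $[\beta_n,\infty)^d$ with $\beta_n:=\max_{1\le i\le d}\{b_i-(\bm A_M\bm\Delta)_i\}$. Under \eqref{assumption:local:alt} --- which prevents the recentered selection region from drifting off to $+\infty$ --- $\beta_n$ stays bounded above by a fixed constant $M$ for $n$ large, so $c_0(n)\ge \mathbb{G}\{[M,\infty)^d\}=:c_0$; and $c_0>0$ because \eqref{assumption:g:lip} forces $\tilde g$ finite everywhere, hence the density $\frac1{C_g}e^{-\tilde g}$ is strictly positive and $\mathbb{G}$ has full support.

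\textbf{Step 3 (tightness of $\|\bm A_M\bm Z\|_h$ and conclusion).} By \eqref{assumption:clt}, $\bm Z\Rightarrow\mathcal N_p(\bm 0,\bm\Sigma)$ uniformly over $\mathbb F_n\in\mathcal F_n$; combined with \eqref{assumption:norm:A_M} this yields $\|\bm A_M\bm Z\|_h=O_{\mathbb F_n}(1)$ uniformly, i.e.\ for every $\delta>0$ there is $C(\delta)$ with $\sup_{\mathbb F_n\in\mathcal F_n}\mathbb F_n\{\|\bm A_M\bm Z\|_h>C(\delta)\}<\delta$ for $n$ large (the same estimate already invoked in the proof of Lemma~\ref{lemma:consistency}). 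On the complementary event, Steps~1--2 give $Q_2(\bm Z_{\bm A};\bm\Delta)\ge c_0\,e^{-K_g C(\delta)}=:C$, whence $\sup_{\mathbb F_n\in\mathcal F_n}\mathbb F_n\{Q_2(\bm Z_{\bm A};\bm\Delta)<C\}\le\sup_{\mathbb F_n\in\mathcal F_n}\mathbb F_n\{\|\bm A_M\bm Z\|_h>C(\delta)\}<\delta$, which is the claim. The step I expect to be the crux is Step~2: one must read off from \eqref{assumption:local:alt} that $\bm H_M-\bm A_M\bm\Delta$ does not escape to infinity, since otherwise the conditional selection probability would vanish deterministically and no uniform lower bound $C$ could exist; the remaining pieces are repackagings of arguments already in the paper (the Lipschitz lower bound as in Lemma~\ref{lemma:upper:bound}, the uniform tightness as in Lemma~\ref{lemma:consistency}).
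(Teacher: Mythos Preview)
Your three-step strategy---Lipschitz lower bound, local-alternatives control of the benchmark $\mathbb{G}\{\bm H_M-\bm A_M\bm\Delta\}$, uniform tightness from the CLT---is exactly the paper's, but Step~1 rests on a misreading of the quantity. The lemma's ``$Q_2(\bm Z_{\bm A};\bm\Delta)$'' is a typo for $Q_3(\bm Z_{\bm A};\bm\Delta)=\mathbb{E}_{\bm G\sim\mathcal N_a(\bm 0,\bm\Sigma_{\bm A})}[Q_2(\bm G;\bm Z_{\bm A},\bm\Delta)]$; this is what the paper's own proof bounds and what is invoked in the proof of Theorem~\ref{thm:boot}. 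The shift inside the integrand is $\bm A_M\bm Z_{\bm A}+\bm A_M\bm C\bm\tau$ with $\bm\tau$ an \emph{independent} Gaussian integration variable, not $\bm A^T\bm Z$, so the two pieces do not ``recombine to $\bm A_M\bm Z$''. Consequently your displayed inequality $Q_3(\bm Z_{\bm A};\bm\Delta)\ge c_0(n)\,e^{-K_g\|\bm A_M\bm Z\|_h}$ is unjustified and can in fact fail: the left side depends only on $\bm Z_{\bm A}$, while the right side also depends on $\bm A^T\bm Z$; one can send $\bm A_M\bm Z_{\bm A}$ off in a direction that drives $Q_3$ to zero while choosing $\bm A_M\bm C\bm A^T\bm Z=-\bm A_M\bm Z_{\bm A}$, so that $\bm A_M\bm Z=\bm 0$ and your right side stays at $c_0(n)$.

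The repair is immediate and recovers the paper's argument. Applying the Lipschitz bound to the correct shift and then integrating out $\bm\tau$ gives
\[
Q_3(\bm Z_{\bm A};\bm\Delta)\;\ge\;\mathbb{G}\{\bm H_M-\bm A_M\bm\Delta\}\cdot e^{-K_g\|\bm A_M\bm Z_{\bm A}\|_h}\cdot\mathbb{E}_{\bm G}\bigl[e^{-K_g\|\bm A_M\bm C\bm G\|_h}\bigr],
\]
after which your Step~3 goes through with $\|\bm Z_{\bm A}\|$ in place of $\|\bm Z\|$ (equally uniformly tight under \eqref{assumption:clt}). Two smaller corrections: $c_0(n)=\mathbb{G}\{\bm H_M-\bm A_M\bm\Delta\}$ \emph{does} depend on $\mathbb F_n$ through $\bm\mu_n$, not only on $n$; and the paper obtains the uniform lower bound on it directly from $d_h(\bm 0,\bm H_M-\bm A_M\bm\Delta)\le B$ in \eqref{assumption:local:alt}, without the rectangular-$\bm H_M$ assumption you add in Step~2.
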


%--------------------------------------------------------

%--------------------------------------------------------

\begin{proof} 

Using the Lipschitz assumption on the randomization, we have
	\begin{align*}
		Q_2(\bb\tau;\bm z_{\bb A},\bm\Delta)
		&= \frac{1}{C_g}\int_{\bm H_M-\bm A_M\bm\Delta}\exp\left(-\tilde{g}(\bm \omega'-\bm A_M\bm z_{\bb A} -\bm A_M\bm C\bb \tau)\right)d\bm\omega'\\
		&=\frac{1}{C_g} \int_{\bm H_M-\bm A_M\bm\Delta}\exp\left(-\tilde{g}(\bm\omega'-\bm A_M\bm\Delta)-K_g\|\bm A_M\bm z_{\bb A}\|_h-K_g\|\bm A_M\bm C\bb\tau\|_h\right)d\bm\omega'	\\
		&= \exp\left(-K_g\|\bm A_M\bm z_{\bm A}\|_h\right)\exp\left(-K_g\|\bm A_M\bm C\bm\tau\|_h\right) \int_{\bm H_M-\bm A_M\bm\Delta}\exp(-\tilde{g}(\bm\omega'))d\bm\omega'
	\end{align*}
	for all $\bb\tau\in\mathbb{R}^a$ and $\bm z_{\bb A}\in\mathbb{R}^p$. Since $d_h\left(\bb 0,\bm H_M-\bm A_M\bm\Delta\right)\leq B$ (local alternatives), the last integral is O(1).
	\begin{align*}
		Q_2\left(\bb\tau;\bm z_{\bb A},\bm\Delta\right)
		&\geq e^{-K_g C_M\|\bm z_{\bm A}\|_2}e^{-K_g C_M\|\bm C\bm\tau\|_2} \int_{\bm H_M-\bm A_M\bm\Delta}\exp(-\tilde{g}(\bm\omega'))d\bm\omega',
	\end{align*}
	for all $\bb\tau\in\mathbb{R}^a$ and $\bb z_{\bb A}\in\mathbb{R}^p$, where the second inequality follows from (NA) assumption for $n$ large enough. Defining a function of $x\in\mathbb{R}$
	\begin{equation*}
		\int_{\mathbb{R}^a} e^{-\|\bb C\bb\tau\|_2x} \frac{1}{(2\pi)^{-a/2}|\bb\Sigma_{\bb A}|^{-1/2}} e^{-\frac{1}{2}\bb\tau\bb\Sigma_{\bb A}^{-1}\bb\tau}d\bb\tau=:f_{\bb\Sigma}(x), \footnote{Since under $\bb\tau\sim\mathcal{N}_a(\bb 0, \bb\Sigma_{\bb A})$, $\bb C\bb\tau \sim\mathcal{N}_p(\bb 0,\bb \Sigma)$, hence the RHS does not depend on $\bb A$.}
	\end{equation*}
	we have
	\begin{align*}	
		Q_3\left(\bm z_{\bb A};\bm\Delta\right)\geq e^{-K_g C_M\|\bm z_{\bm A}\|_2}f_{\bm\Sigma}(K_g C_M) \int_{\bm H_M-\bm A_M\bm\Delta}\exp(-\tilde{g}(\bm\omega'))d\bm\omega'
	\end{align*}
		for all $\bm z_{\bb A}\in\mathbb{R}^{d}$.
		
		For a given $\delta>0$, it suffices to find a constant $C_1>0$ such that 
		\begin{equation*} \label{eq:lb:lemma:sufficient}
			 \underset{\mathbb{F}_n\in\mathcal{F}}{\sup}\mathbb{F}_n\left\{\left\|\bm Z_{\bb A}\right\|_2>C_1\right\}<\delta
		\end{equation*}
		for $n$ large enough.
		By (CLT) assumption, $\bm Z_{\bb A}$ converges uniformly in distribution to a random variable $\bm{G}\sim\mathcal{N}_p\left(\bm 0,(\bb I_p-\bb C\bb A^T)\bb\Sigma(\bb I_p-\bb C\bb A^T)^T\right)$. Using the uniform continuous mapping theorem we get that $\|\bm Z_{\bb A}\|_2$ converges uniformly in distribution to $\|\bm G\|_2$, i.e.
		\begin{equation*}
			\underset{n\rightarrow\infty}{\lim}\:\underset{\mathbb{F}_n\in\mathcal{F}_n}{\sup}\underset{t\in\mathbb{R}}{\sup}\left|\mathbb{F}_n\left\{\|\bm Z_{\bb A}\|_2 \leq t\right\}-\mathbb{P}_{\bm G}\left\{ \|\bm G\|_2\leq t \right\}\right|=0.
		\end{equation*}
		This implies (\ref{eq:lb:lemma:sufficient}) holds.
	
 \end{proof}

%%%%%%%%%%%%%%%%%%%%%%%%%%%%%%%%%%%%%%%%%%%%%%%%%%%%%%%%%%%%%%%%%%%%%%%%%%%%%%%%%%%%%%%%%%%%%%%%%%%%%%%%%%%%%%%%%%%%%%%%%%%%%%%%%%%%%%%%%%%%%%%%%%%%%%%%%%%%%%%%%%%%%%%%%%%%%%%

The following lemma justifies using $\widehat{\bm D}_{\bb A}$ instead of $\bm D_{\bb A}$ in constructing the bootstrap pivot

%------------------------------------------------------------
\begin{lemma} \label{lemma:t_hat}
Assuming \eqref{assumption:clt}, \eqref{assumption:norm:A_M} and \eqref{assumption:variance} hold, for any $\delta>0$ the following holds 
	\begin{equation*}
		\underset{n\rightarrow\infty}{\lim}\:\underset{\mathbb{F}_n\in\mathcal{F}_n}{\sup}\mathbb{F}_n\left\{\left\|\sqrt{n}\bm{A}_M(\widehat{\bm D}_{\bb A}+\widehat{\bm C}\bb A^T\bm{\mu})-\sqrt{n}\bm A_M\left(\bm D_{\bb A}+\bm C\bb A^T\bm\mu\right)\right\|_2>\delta\right\}=0.
	\end{equation*}
\end{lemma}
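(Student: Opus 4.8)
The plan is to reduce the difference between the plug-in quantity $\sqrt{n}\bm A_M(\widehat{\bm D}_{\bb A}+\widehat{\bm C}\bb A^T\bm\mu)$ and the population version $\sqrt{n}\bm A_M(\bm D_{\bb A}+\bm C\bb A^T\bm\mu)$ to differences of the estimated projection/plug-in operators acting on $\sqrt{n}(\bm D-\bm\mu)$, which is $O_{\mathbb{F}_n}(1)$ uniformly by \eqref{assumption:clt}. First I would recall the definitions: $\bm D_{\bb A}=(\bm I_p-\bm C\bb A^T)\bm D$ (the residual after projecting out the $\bb A$-direction) and $\widehat{\bm D}_{\bb A}=(\bm I_p-\widehat{\bm C}\bb A^T)\bm D$, where $\bm C=\bm\Sigma\bb A\bb\Sigma_{\bb A}^{-1}$ and $\widehat{\bm C}=\widehat{\bm\Sigma}\bb A\widehat{\bb\Sigma}_{\bb A}^{-1}$ is the same functional of the estimated covariance $\widehat{\bm\Sigma}$. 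Then $\widehat{\bm D}_{\bb A}+\widehat{\bm C}\bb A^T\bm\mu-(\bm D_{\bb A}+\bm C\bb A^T\bm\mu)=(\bm C-\widehat{\bm C})\bb A^T(\bm D-\bm\mu)$, so after multiplying by $\sqrt{n}\bm A_M$ the target becomes
\begin{equation*}
	\sqrt{n}\bm A_M(\bm C-\widehat{\bm C})\bb A^T(\bm D-\bm\mu).
\end{equation*}
Using sub-multiplicativity of the operator norm and \eqref{assumption:norm:A_M} (which controls $\|\bm A_M\|$ uniformly), this is bounded in norm by $C\,\|\bm C-\widehat{\bm C}\|_2\cdot\|\sqrt{n}\bb A^T(\bm D-\bm\mu)\|_2$.

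Next I would invoke \eqref{assumption:variance}, the variance-consistency assumption, which gives $\widehat{\bm\Sigma}\xrightarrow{\mathbb{F}_n}\bm\Sigma$ uniformly over $\mathcal{F}_n$; since $\bb A\mapsto\bb A^T\widehat{\bm\Sigma}\bb A$ and matrix inversion (on a neighborhood of the fixed invertible $\bb\Sigma_{\bb A}$) are continuous maps, the continuous mapping theorem — in its uniform form, as used elsewhere in the paper — yields $\widehat{\bm C}\xrightarrow{\mathbb{F}_n}\bm C$, i.e.
\begin{equation*}
	\underset{n\rightarrow\infty}{\lim}\:\underset{\mathbb{F}_n\in\mathcal{F}_n}{\sup}\mathbb{F}_n\{\|\widehat{\bm C}-\bm C\|_2>\eta\}=0
\end{equation*}
for every $\eta>0$. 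Meanwhile \eqref{assumption:clt} gives that $\sqrt{n}\bb A^T(\bm D-\bm\mu)$ converges uniformly in distribution to a Gaussian, hence is uniformly tight: for every $\delta_1>0$ there is $B(\delta_1)$ with $\sup_{\mathbb{F}_n}\mathbb{F}_n\{\|\sqrt{n}\bb A^T(\bm D-\bm\mu)\|_2>B(\delta_1)\}<\delta_1$ for $n$ large.

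Finally I would combine these two facts: on the event $\{\|\widehat{\bm C}-\bm C\|_2\le\eta\}\cap\{\|\sqrt{n}\bb A^T(\bm D-\bm\mu)\|_2\le B\}$ the quantity of interest is at most $C\,\eta\,B$, which is $\le\delta$ once $\eta$ is chosen small enough given $\delta$ and $B=B(\delta_1)$; the complement has $\mathbb{F}_n$-probability at most $\delta_1+o(1)$ uniformly, so letting $n\to\infty$ and then $\delta_1\to0$ gives the claim. The only mildly delicate point is making the "uniform continuous mapping theorem" step rigorous — i.e. that uniform consistency of $\widehat{\bm\Sigma}$ transfers through the smooth map defining $\bm C$ uniformly over $\mathcal{F}_n$ — but this is exactly the kind of argument already invoked (e.g. in the proof of Lemma~\ref{lemma:lower:bound}), so I would cite it in the same manner; everything else is Slutsky-type bookkeeping.
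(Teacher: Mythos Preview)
Your proposal is correct and follows essentially the same route as the paper: the same algebraic identity $\widehat{\bm D}_{\bb A}+\widehat{\bm C}\bb A^T\bm\mu-(\bm D_{\bb A}+\bm C\bb A^T\bm\mu)=(\bm C-\widehat{\bm C})\bb A^T(\bm D-\bm\mu)$, the same operator-norm bound via \eqref{assumption:norm:A_M}, and the same Slutsky-type conclusion from $\|\widehat{\bm C}-\bm C\|_2=o_{\mathbb{F}_n}(1)$ together with uniform tightness of $\sqrt{n}\bb A^T(\bm D-\bm\mu)$ from \eqref{assumption:clt}. The only cosmetic difference is that the paper invokes \eqref{assumption:variance} directly for $\|\widehat{\bm C}-\bm C\|_2\to 0$, whereas you spell out the continuous-mapping step from $\widehat{\bm\Sigma}\to\bm\Sigma$.
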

%------------------------------------------------------------

\begin{proof}
Since
\begin{equation*}
	\sqrt{n}\bm A_M(\widehat{\bm D}_{\bb A}-\bm D_{\bb A})+\sqrt{n}\bm A_M(\widehat{\bm C}-\bm C)\bb A^T\bm\mu =\bm A_M(\bm C-\widehat{\bm C})\left(\sqrt{n}\left(\bb A^T\bm D-\bb A^T\bm\mu\right)\right),
\end{equation*}
we have
\begin{equation}\label{eq:t_hat_bound}
\begin{aligned}
	&\left\|\sqrt{n}\bm A_M(\widehat{\bm D}_{\bb A}+\widehat{\bm C}\bb A^T\bm\mu)-\sqrt{n}\bm A_M(\bm D_{\bb A}+\bm C\bb A^T\bm\mu)\right\|_2 \\
	&\leq \|\bm A_M\|_2 \|\bm C-\widehat{\bm C}\|_2\left\|\sqrt{n}\left(\bb A^T(\bm D-\bm\mu)\right)\right\|_2 \\
	&\leq a_M \|\bm{C}-\widehat{\bm C}\|_2\left\|\sqrt{n}(\bb A^T\bm D-\bb A^T\bm\mu)\right\|_2, 
\end{aligned}
\end{equation}
where the second inequality holds for $n$ large enough  by \eqref{assumption:norm:A_M} assumption.
Since $\|\widehat{\bm C}-\bm C\|_2=o_{\mathbb{F}_n}(1)$ uniformly across $\mathbb{F}_n$ by assumption \eqref{assumption:variance} and $\left\|\sqrt{n}(\bb A^T\bm D-\bb A^T\bm\mu)\right\|_2$ converges in distribution by (CLT) assumption to $\|\bb G\|_2$, where $\bb G\sim\mathcal{N}_a\left(\bb 0, \bb\Sigma_{\bb A}\right)$ uniformly across $\mathbb{F}_n\in\mathcal{F}_n$, we have that the term in (\ref{eq:t_hat_bound}) converges to zero in probability uniformly across $\mathbb{F}_n\in\mathcal{F}$, hence the conclusion follows.
\end{proof}

\begin{proof-of-theorem}{\ref{thm:boot}}

Since for any $t\in\mathbb{R}$
\begin{equation*}
\begin{aligned} 
		&\mathcal{P}^B\left(t,\bm D,\bb A^T\bm\mu\right)\\
		&=\resizebox{0.95\textwidth}{!} { $ \frac{\mathbb{E}_{\bb\omega\sim\mathbb{G}}\left[\widehat{\mathbb{F}}_n \left\{\sqrt{n}\left\|\bb A^T(\bm D^*-\bm D)\right\|_2 \geq t,\bm\omega\in \bm H_M-\sqrt{n}\bm A_M\widehat{\bb D}_{\bb A}-\bm A_M\widehat{\bm C}\bb A^T\bm Z^*-\sqrt{n}\bm A_M\widehat{\bm C}\bb A^T\bm\mu\:\Big|\: \bm D,\bm\omega\right\} \right] }{ \mathbb{E}_{\bb\omega\sim\mathbb{G}}\left[\widehat{\mathbb{F}}_n\left\{\bb\omega\in\bm H_M-\sqrt{n}\bm A_M\widehat{\bb D}_{\bb A}-\bm A_M\widehat{\bm C}\bb A^T\bm Z^*-\sqrt{n}\bm A_M\widehat{\bm C}\bb A^T\bm\mu\:\Big|\:\bm D,\bb\omega\right\}\right]}, $}
\end{aligned}
\end{equation*}
by the bootstrap consistency assumption \eqref{assumption:boot:consistency}, we can write	
\begin{equation*}
\begin{aligned}
	    &\mathcal{P}^B\left(t,\bm D,\bb A^T\bm \mu\right)\\	
		&= \resizebox{0.95\textwidth}{!} { $ \frac{\mathbb{E}_{\bb\omega\sim\mathbb{G}}\left[\mathbb{P}_{\bb G} \left\{\|\widehat{\bb \Sigma}_{\bb A}^{1/2}\bb G\|_2\geq t,\bb\omega\in \bm H_M-\sqrt{n}\bm A_M\widehat{\bm D}_{\bb A}-\bm A_M\widehat{\bm C}\widehat{\bb \Sigma}_{\bb A}^{1/2}\bb G-\sqrt{n}\bm A_M\widehat{\bm C}\bb A^T\bm\mu \:\Big|\: \bm D,\bb\omega\right\} \right]+E_n}{\mathbb{E}_{\bb\omega\sim\mathbb{G}}\left[\mathbb{P}_{\bb G}\left\{\bb\omega\in \bm H_M-\sqrt{n}\bm A_M\widehat{\bm D}_{\bb A}-\bm A_M\widehat{\bm C }\widehat{\bb\Sigma}_{\bb A}^{1/2}\bb G-\sqrt{n}\bm A_M\widehat{\bm C}\bb A^T\bm\mu\:\Big|\:\bm D,\bb\omega\right\}\right]+E'_n}, $}
\end{aligned}
\end{equation*}
where conditional on $\bm D$ we have $\bb G\sim\mathcal{N}_a(\bb 0,\bb I_a)$. Furthermore, for the random variables $E_n$ and $E_n'$ we have $E_n, E'_n=o_{\mathbb{F}_n}(1)$ uniformly over $t\in\mathbb{R}$ and over $\mathbb{F}_n\in\mathcal{F}_n$. 
%
%By the law of iterated expectation, we have
%	\begin{align}
%	    &\mathcal{F}^*\left(t,\bm{T},\bm{\eta}^T\bm{\mu}\right)\\	
%		&=\frac{\mathbb{E}_{Z}\left[\mathbb{I}_{\{\sigma_{\eta}Z\geq t\}}\left.\mathbb{F}_W\left\{\bm{H}_M-\sqrt{n}\bm{A}_{M}\widehat{\bm{T}}_{\eta}-\bm{A}_M\widehat{\bm{c}}\widehat{\sigma}_{\eta}Z+\sqrt{n}\bm{A}_M\widehat{\bm{c}}\bm{\eta}^T\bm{\mu}\right| \bm{T}, Z\right\} \right]+E_n }{ \mathbb{E}_{Z}\left.\left[\mathbb{F}_W\left\{\bm{H}_M-\sqrt{n}\bm{A}_{M}\widehat{\bm{T}}_{\eta}-\bm{A}_M\widehat{\bm{c}}\widehat{\sigma}_{\eta}Z+\sqrt{n}\bm{A}_M\widehat{\bm{c}}\bm{\eta}^T\bm{\mu}\right| \bm{T},Z\right\}\right]+E'_n},
%		\end{align}
%	
By the uniform consistency of the variance, assumption \eqref{assumption:variance}, and Lemma \ref{lemma:t_hat}, the following holds for any $t\in\mathbb{R}$
\begin{equation*}
\begin{aligned}
	    &\mathcal{P}^B\left(t,\bm D,\bb A^T\bm\mu\right)\\	
		&=\resizebox{0.95\textwidth}{!} { $ \frac{\mathbb{E}_{\bb\omega\sim\mathbb{G}}\left[\mathbb{P}_{\bb G} \left\{\|\bb\Sigma_{\bb A}^{1/2} \bb G\|_2\geq t,\bb\omega\in \bm H_M-\sqrt{n}\bm A_M\bm D_{\bb A}-\bm A_M\bm C\bb\Sigma_{\bb A}^{1/2}\bb G-\sqrt{n}\bm A_M\bm C\bb A^T\bm\mu \:\big|\: \bm D,\bb\omega\right\} \right]+e_n}{\mathbb{E}_{\bb\omega\sim\mathbb{G}} \left[\mathbb{P}_{\bb G}\left\{\bb\omega\in\bm H_M-\sqrt{n}\bm A_M\bm D_{\bb A}-\bm A_M\bm C\bb\Sigma_{\bb A}^{1/2}\bb G-\sqrt{n}\bm A_M\bm C\bb A^T\bm\mu\:\big|\: \bm D,\bb\omega\right\}\right]+e'_n}, $}
\end{aligned}
\end{equation*}
	where $e_n, e_n'=o_{\mathbb{F}_n}(1)$ uniformly over $\mathbb{F}_n\in\mathcal{F}$ and over $t\in\mathbb{R}$. 
	By the law of iterated expectation and using the definitions of $Q_2$ and $Q_3$, we further have
	\begin{equation*}
	\begin{aligned}
	    &\mathcal{P}^B(t,\bm D,\bb A^T\bm\mu)\\	
		&=\frac{\mathbb{E}_{\bb G} \left[\mathbb{I}_{\left\{\|\bb\Sigma_{\bb A}^{1/2}\bb G\|_2\geq t\right\}}\mathbb{G}\left\{\bm H_M-\sqrt{n}\bm A_M\bm D_{\bb A}-\bm A_M\bm C\bb \Sigma_{\bb A}^{1/2}\bb G-\sqrt{n}\bm A_M\bm C\bb A^T\bm\mu\:\big|\: \bm D,\bb G\right\}\right]+e_n}{\mathbb{E}_{\bb G}\left[\mathbb{G}\left\{\bm H_M-\sqrt{n}\bm A_M\bm D_{\bb A}-\bm A_M\bm C\bb\Sigma_{\bb A}^{1/2}\bb G-\sqrt{n}\bm A_M\bm C\bb A^T\bm\mu\:\big|\:\bm D,\bb G\right\}\right]+e_n'}\\
		&=\frac{\mathbb{E}_{\bb G}\left[\mathbb{I}_{\left\{\|\bb\Sigma_{\bb A}^{1/2}\bb G\|_2\geq t\right\}}Q_2(\bb\Sigma_{\bb A}^{1/2}\bb G;\bm Z_{\bb A},\bm{\Delta})\:\big|\:\bm D\right]+e_n}{\mathbb{E}_{\bb Z}\left.\left[Q_2(\bb\Sigma_{\bb A}^{1/2} \bb G; \bm Z_{\bb A},\bm\Delta)\right|\bm D\right]+e_n'}\\
		&=\frac{\mathbb{E}_{\bb G}\left[\mathbb{I}_{\left\{\|\bb \Sigma_{\bb A}^{1/2}\bb G\|_2\geq t\right\}}Q_2(\bb\Sigma_{\bb A}^{1/2}\bb G;\bm Z_{\bb A},\bm\Delta)\:\big|\:\bm D\right]+e_n}{Q_3\left(\bm Z_{\bb A};\bm\Delta\right)+e_n'} \\
		&=\frac{\frac{\mathbb{E}_{\bb G}\left[\mathbb{I}_{\left\{\|\bb\Sigma_{\bb A}^{1/2}\bb G\|_2\geq t\right\}}Q_2(\bb\Sigma_{\bb A}^{1/2} \bb G;\bm Z_{\bb A},\bm\Delta)\:\big|\:\bm D\right]}{Q_3\left(\bm Z_{\bb A};\bm\Delta\right)}+\frac{e_n}{Q_3\left(\bm Z_{\bb A};\bm\Delta\right)}}{1+\frac{e_n'}{Q_3\left(\bm Z_{\bb A};\bm\Delta\right)}},
	\end{aligned}
	\end{equation*}
	where in the last equality we divided both numerator and denominator by $Q_3\left(\bm Z_{\bb A};\bm\Delta\right)$.
	
	For any $\delta>0$ and $C>0$ we have
	\begin{align*}
		\mathbb{F}_n\left\{\left|\frac{e_n}{Q_3\left(\bm Z_{\bb A};\bm\Delta\right)}\right|\leq \delta \right\}
		&\geq \mathbb{F}_n\left\{|e_n|\leq \delta C,\;\; Q_3\left(\bm Z_{\bb A};\bm\Delta\right)\geq C\right\} \\
		&\geq 1-\mathbb{F}_n\{|e_n|> \delta C\}-\mathbb{F}_n\left\{Q_3\left(\bm Z_{\bb A};\bm\Delta\right)< C\right\}.
	\end{align*}
	Since $e_n=o_{\mathbb{F}_n}(1)$ uniformly across $\mathbb{F}_n\in\mathcal{F}$, we have that $\underset{\mathbb{F}_n}{\sup}\:\mathbb{F}_n\{|e_n|>\delta C\}\rightarrow 0$ as $n\rightarrow\infty$. 
	This is the part where it becomes crucial to have the denominator $Q_3(\bm Z_{\bb A};\bm\Delta)$ lower-bounded in probability. By the Lower bound lemma (Lemma \ref{lemma:lower:bound}), there exists $C$ to make $\underset{\mathbb{F}_n\in\mathcal{F}_n}{\sup}\:\mathbb{F}_n\left\{Q_3\left(\bm Z_{\bb A};\bm\Delta\right)< C\right\}$ arbitrarily small for large enough $n$. This shows
	\begin{equation*}
		\underset{n\rightarrow\infty}{\lim}\underset{\mathbb{F}_n\in\mathcal{F}_n}{\sup}\:\mathbb{F}_n\left\{\left|\frac{e_n}{Q_3\left(\bm Z_{\bb A};\bm\Delta\right)}\right|> \delta \right\}=0.
	\end{equation*}
	Similarly, we have
	\begin{equation*}
		\underset{n\rightarrow\infty}{\lim}\underset{\mathbb{F}_n\in\mathcal{F}_n}{\sup}\:\mathbb{F}_n\left\{\left|\frac{e_n'}{Q_3\left(\bm Z_{\bb A};\bm\Delta\right)}\right|> \delta \right\}=0.
	\end{equation*}
	Thus, we have
	\begin{align*}	
	    &\mathcal{P}^B\left(t,\bm D,\bb A^T\bm\mu\right)
		=\frac{\mathbb{E}_{\bb G}\left[\mathbb{I}_{\left\{\|\bb\Sigma_{\bb A}^{1/2}\bb G\|_2\geq t\right\}}Q_2(\bb \Sigma_{\bb A}^{1/2} \bb G;\bm Z_{\bb A},\bm\Delta)\:\Big|\:\bm D\right]}{Q_3\left(\bm Z_{\bb A};\bm\Delta\right)}+g_n,
	\end{align*}
	where $g_n=o_{\mathbb{F}_n}(1)$ uniformly across $\mathbb{F}_n\in\mathcal{F}$ and over $t\in\mathbb{R}$.  By Lemma \ref{lemma:consistency} in Section \ref{sec:consistency} we also have $g_n=o_{\mathbb{F}_n^*}(1)$, which allows us to make conditional statements.
	Since
	\begin{align*}
		&\frac{\mathbb{E}_{\bb G}\left[\mathbb{I}_{\left\{\|\bb\Sigma_{\bb A}^{1/2}\bb G\|_2\geq t\right\}}Q_2(\bb \Sigma_{\bb A}^{1/2}\bb G;\bm Z_{\bb A},\bm\Delta)\:\Big|\:\bm D\right]}{Q_3\left(\bm Z_{\bb A};\bm\Delta\right)}\\
		&=\resizebox{0.95\textwidth}{!} { $\frac{\int_{\|\bb s-\bb A^T\bb\mu\|_2 \geq t} \mathbb{G}\left\{\bm H_M-\sqrt{n}\bm A_M\bm D_{\bb A}-\sqrt{n}\bm A_M\bm C\bb s \right\}\exp\left(-\frac{n}{2}(\bb s-\bb A^T\bm\mu)^T\bb\Sigma_{\bb A}^{-1}(\bb s-\bb A^T\bm{\mu})\right)d\bb s  }{ \int_{\mathbb{R}^a} \mathbb{G}\left\{\bm{H}_M-\sqrt{n}\bm A_M\bm D_{\bb A}-\sqrt{n}\bm A_M\bm C\bb s \right\} \exp\left(-\frac{n}{2}(\bb s-\bb A^T\bm\mu)^T\bb\Sigma_{\bb A}^{-1}(\bb s-\bb A^T\bm\mu)\right) d\bb s }  $}\\
		&=\mathcal{P}\left(t,\bb D_{\bb A},\bb A^T\bb \mu \right),
	\end{align*}
	the survival function of $\|\sqrt{n}\bb A^T(\bb D-\bb \mu)\|_2$ when the data is normally distributed,
	we get that the conclusion holds by the selective CLT assumption.

\end{proof-of-theorem}

%%%%%%%%%%%%%%%%%%%%%%%%%%%%%%%%%%%%%%%%%%%%%%%%%%%%%%%%%%%%%%%%%%%%%%%%%%%%%%%%%%%%%%%%%%%%%%%%%%%%%%%%%%%%%%%%%%%%%%%%%%%%%%%%%%
%----------------------------------------
\section{Model selection and the asymptotics for the LASSO} \label{sec:lasso:additional}
%----------------------------------------

We characterize the model $(E, \bm s_E)$ chosen by the randomized Lasso objective in Section \ref{sec:lasso} and show that asymptotically the selection region is affine in terms of $\bm D$. We also show that $\bm D$ is asymptotically linear test statistic. 

The notation is as in Section \ref{sec:lasso}. Recall that our data comes from $(\bm X,\bm y)\sim\mathbb{F}_n^n$ with $\mathbb{F}_n\in\mathcal{F}_n$. We assume the following about the $\mathcal{F}_n$:
\begin{enumerate}[leftmargin=*, label=(\alph*)]
	\item \label{item:X:moments} Given $(\bm X_1,y_1)\sim\mathbb{F}_n$, we assume that $\|\bm X_1\|_{\infty}$ has uniformly bounded third moment across $\mathbb{F}_n\in\mathcal{F}_n$, i.e.~$\underset{\mathbb{F}_n\in\mathcal{F}_n}{\sup}\mathbb{E}_{\mathbb{F}_n}[n^{3/2}\|\bm X_1\|_{\infty}^3]<\infty,$ where recall $X_1$ are scaled with $1/\sqrt{n}$.
	\item \label{item:residuals:moments} Given $(\bm X_1, y_1)\sim\mathbb{F}_n$ and a fixed active set $E$, we assume that the residuals $\epsilon_1=y_1-\bm X_{1,E}^T\bm\beta_E^*$, where $\bm\beta_E^*$ are the population coefficients corresponding to set $E$, have bounded third moments, i.e.~$\underset{\mathbb{F}_n\in\mathcal{F}_n}{\sup}\mathbb{E}_{\mathbb{F}_n}[|\epsilon_1|^3]<\infty.$ 
\end{enumerate}
These assumption allow us to get uniform CLT results across $\mathbb{F}_n\in\mathcal{F}_n$ for the predictors and for the residuals. They can be weakened but for simplicity we keep them as above. Note that the assumptions are pre-selection, treating $E$ as fixed.

%---------------------------------------------------------
\begin{lemma}[Asymptotically affine LASSO selection event] \label{lemma:lasso:asymptotically:affine}
Assumming \ref{item:X:moments} holds,
	the selection event of randomized LASSO is asymptotically affine in $\bm D$.	
\end{lemma}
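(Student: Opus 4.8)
The plan is to start from the stationarity (KKT) conditions of the randomized Lasso on the event $\{\widehat E=E,\ \mathrm{sign}(\widehat\beta_E)=\bm s_E\}$, solve them explicitly for the active block of the solution in terms of the sufficient statistics and the randomization $\bm\omega$, and then read off the selection event as a conjunction of a sign constraint on $\widehat\beta_E$ and a box constraint on the inactive subgradient. Written in this way the event is \emph{exactly} affine in $\bm D$, but with coefficient matrices that involve the random sample Gram matrix $\bm X_E^\top\bm X_E$ and the random sample cross-products $\bm X_{-E}^\top\bm X_E$; the word ``asymptotically'' in the statement refers to replacing these random matrices by their deterministic population limits.

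Concretely, the randomized Lasso objective $\tfrac12\|\bm y-\bm X\beta\|_2^2+\lambda\|\beta\|_1-\bm\omega^\top\beta$ has subgradient condition $-\bm X^\top(\bm y-\bm X\widehat\beta)+\lambda\widehat z-\bm\omega=\bm 0$ with $\widehat z\in\partial\|\widehat\beta\|_1$; on the event of interest this splits into the active equation $\bm X_E^\top(\bm y-\bm X_E\widehat\beta_E)+\bm\omega_E=\lambda\bm s_E$, the inactive inequality $\|\bm X_{-E}^\top(\bm y-\bm X_E\widehat\beta_E)+\bm\omega_{-E}\|_\infty\le\lambda$, and $\widehat\beta_{-E}=\bm 0$. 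Solving the active equation gives $\widehat\beta_E=(\bm X_E^\top\bm X_E)^{-1}(\bm X_E^\top\bm y+\bm\omega_E-\lambda\bm s_E)$. Taking $\bm D$ as in Section~\ref{sec:lasso} --- the restricted least-squares estimator $\bar\beta_E=(\bm X_E^\top\bm X_E)^{-1}\bm X_E^\top\bm y$ together with the inactive score $\bm X_{-E}^\top(\bm y-\bm X_E\bar\beta_E)$, i.e.\ exactly the target for which the uniform CLT is assumed --- substitution turns the selection event into $\mathrm{diag}(\bm s_E)\big(\bar\beta_E+(\bm X_E^\top\bm X_E)^{-1}(\bm\omega_E-\lambda\bm s_E)\big)>\bm 0$ together with $\big\|\bm X_{-E}^\top(\bm y-\bm X_E\bar\beta_E)-\bm X_{-E}^\top\bm X_E(\bm X_E^\top\bm X_E)^{-1}(\bm\omega_E-\lambda\bm s_E)+\bm\omega_{-E}\big\|_\infty\le\lambda$. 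Both constraints are affine in $(\bm D,\bm\omega)$ with a fixed rectangular right-hand side and with coefficient matrices built only from $(\bm X_E^\top\bm X_E)^{-1}$ and $\bm X_{-E}^\top\bm X_E(\bm X_E^\top\bm X_E)^{-1}$.

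It remains to pass from random to deterministic coefficients. Under assumption~\ref{item:X:moments}, $\|\bm X_1\|_\infty$ has uniformly bounded third (hence second) moment, so a law of large numbers, uniform over $\mathcal F_n$, yields $\bm X_E^\top\bm X_E-\mathbb{E}_{\mathbb{F}_n}[\bm X_E^\top\bm X_E]=o_{\mathbb{F}_n}(1)$ and $\bm X_{-E}^\top\bm X_E-\mathbb{E}_{\mathbb{F}_n}[\bm X_{-E}^\top\bm X_E]=o_{\mathbb{F}_n}(1)$; by continuity of matrix inversion the coefficient matrices above converge, uniformly, to the analogous quantities $\widetilde{\bm A}_M$ formed from the population Gram matrices. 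Replacing the sample matrices by these limits yields an affine-in-$\bm D$ region of the form $\sqrt n\,\bm A_M\bm D+\bm\omega\ge\bm b_M$ (the deterministic $\lambda\bm s_E$ shift is absorbed into $\bm b_M$), and the discrepancy between the true selection event and this region is, constraint by constraint, bounded by the matrix errors just displayed times the $O_{\mathbb{F}_n}(1)$ quantities $\bm D$ and $\bm\omega$, hence $o_{\mathbb{F}_n}(1)$ uniformly over $\mathcal F_n$. This is the sense in which the Lasso selection event is asymptotically affine in $\bm D$.

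The main obstacle is this last step: making the passage from the random coefficient matrices to their population limits rigorous \emph{uniformly} over the nonparametric class $\mathcal F_n$, and verifying that the resulting $o_{\mathbb{F}_n}(1)$ perturbation of the region does not affect the probability of the event in the limit. The latter uses that, by the CLT assumption, the rescaled $\bm D$ has a limiting Gaussian law, so the boundary of the affine region is null and a vanishing perturbation of the constraints is asymptotically invisible; one must also track the $\sqrt n$ normalization carefully so that the resulting region matches the scaling $\sqrt n\,\bm A_M\bm D$ used elsewhere, and treat the randomization scale consistently.
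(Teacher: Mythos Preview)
Your proposal is correct and follows essentially the same route as the paper: write the KKT conditions on $\{\widehat E=E,\ \mathrm{sign}(\widehat\beta_E)=\bm s_E\}$, solve the active block to express the sign and box constraints as affine functions of $\bm D=(\bar{\bm\beta}_E,\bm X_{-E}^\top(\bm y-\bm X_E\bar{\bm\beta}_E))$ and $\bm\omega$ with random coefficient matrices $(\bm X_E^\top\bm X_E)^{-1}$ and $\bm X_{-E}^\top\bm X_E(\bm X_E^\top\bm X_E)^{-1}$, and then replace these by their population limits using assumption~\ref{item:X:moments}. The paper's version differs only in minor details: it carries an additional ridge term $\epsilon\widehat{\bm\beta}_E$ in the randomized Lasso objective (which disappears with the same argument), and it states the Gram-matrix remainders as $O_{\mathbb F_n}(n^{-1/2})$ via a CLT rather than your $o_{\mathbb F_n}(1)$ via a uniform LLN --- either suffices for the conclusion, and your final paragraph about boundary nullity is in fact more explicit about the last step than the paper itself.
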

%---------------------------------------------------------

\begin{proof}
 The KKT conditions for the randomized LASSO are
\begin{align}
	\bm{X}^T_E(\bb y-\bm{X}_E\hat{\bm{\beta}}_E) &=\lambda\bm s_E-\bb\omega_E+\epsilon\hat{\bm{\beta}}_E \nonumber\\
    \bm{X}^T_{-E}(\bb y-\bm X_E\hat{\bm{\beta}}_E) &=\lambda\bm u_{-E}-\bb\omega_{-E}  \nonumber \\
     \textnormal{diag}(\bm s_E)\hat{\bm{\beta}}_E\geq 0, & \;\; \|\bm u_{-E}\|_{\infty}\leq 1, \label{eq:lasso:kkt:inequalities}
\end{align}
where $\bm u_{-E}$ is the sub-gradient of $\frac{\partial|\beta_j|}{\partial{\beta}_j}\big|_{\bb\beta=\widehat{\bb \beta}}$ for $j\not\in E$ (inactive variables) and $\textnormal{diag}(\bm{s}_E)$ is an $|E|\times |E|$ matrix having the entries of $\bm s_E$ on the diagonal and zeros elsewhere. The KKT conditions above become
\begin{align*}
	\hat{\bm\beta}_E &= \bar{\bm{\beta}}_E-(\bm{X}_E^T\bm{X}_E)^{-1}\left(\lambda\bm s_E-\bb\omega_E\right)+\epsilon(\bm X_E^T\bm X_E)^{-1}\hat{\bm\beta}_E \\
	\lambda\bm u_{-E} &=\bm X_{-E}^T\left(\bb y-\bm X_E\bar{\bm\beta}_E\right)+\bm X_{-E}^T\bm X_E(\bm X_E^T\bm X_E)^{-1}\left(\lambda\bm s_E-\bb\omega_E\right) \\
	&\qquad+\epsilon\bm X_{-E}^T\bm X_E(\bm X_E^T\bm X_E)^{-1}\hat{\bm\beta}_E+\bb\omega_{-E}.
\end{align*}

By the strong law of large numbers, pre-selection we have
\begin{equation*}
\begin{aligned}
	\bm X_E^T\bm X_E = \sum_{i=1}^n \bm x_{E,i}\bm x_{E,i}^T&\overset{\mathbb{F}_n}{\rightarrow} \mathbb{E}_{\mathbb{F}_n}\left[n\cdot \bm x_{E,1}\bm x_{E,1}^T\right]=:\bm E_1 \in\mathbb{R}^{|E|\times |E|} \\
	\bm X_{-E}^T\bm X_E = \sum_{i=1}^n\bm x_{-E,i} \bm x_{E,i}^T&\overset{\mathbb{F}_n}{\rightarrow} \mathbb{E}_{\mathbb{F}_n}\left[n\cdot\bm x_{-E,1}\bm x_{E,1}^T\right]=:\bm E_2 \in\mathbb{R}^{(p-|E|) \times |E| },
\end{aligned}
\end{equation*}
where the deviations from the mean are $O_{\mathbb{F}_n}\left(\frac{1}{\sqrt{n}}\right)$ (under CLT assumptions), we write the inequalities from (\ref{eq:lasso:kkt:inequalities}) using 
\begin{equation*}
\begin{aligned}
	&\textnormal{diag}(\bm s_E)\bar{\bm{\beta}}_E-\textnormal{diag}(\bm s_E) \bm E_1^{-1}(\lambda\bm s_E-\bb\omega_E)+O_{\mathbb{F}_n}\left(\frac{1}{\sqrt{n}}\right)\geq 0,\\
	&-\lambda \leq \left\|\bb X_{-E}^T(\bb y-\bm X_E\bar{\bm\beta}_E)+\bm E_2\bm E_1^{-1}(\lambda\bm s_E-\bb\omega_E)+\bb\omega_{-E} +O_{\mathbb{F}_n}\left(\frac{1}{\sqrt{n}}\right)\right\|_{\infty} \leq \lambda.
\end{aligned}
\end{equation*}
hence the selection event is asymptotically affine in terms of $\bm D$.
\end{proof}
%---------------------------------------------------------

The next lemma shows that $\frac{1}{\sqrt{n}}\bb D$ is asymptotically linear, i.e.~$\bb D = \frac{1}{\sqrt{n}}\sum_{i=1}^n\bb\xi_i+o_{\mathbb{F}_n}(1)$, where $\bb\xi_i$ are measurable with respect to $(\bb X_i, y_i)$ and $\mathbb{E}_{\mathbb{F}_n}\bm\xi_i$ is the same across $i=1,\ldots, n$. Note that we need $1/\sqrt{n}$ in front of $\bm D$ since having $\bm X$ scaled with $1/\sqrt{n}$ gives the CLT for $\bm D-\mathbb{E}_{\mathbb{F}_n}[\bm D]$.

%------------------------------------------------------------------
\begin{lemma} [Asymptotic linearity] \label{lemma:lasso:asymptotic:linearity}
 Assuming \ref{item:X:moments} and \ref{item:residuals:moments},
	test statistic $\frac{1}{\sqrt{n}}\bb D$ is asymptotically linear.
\end{lemma}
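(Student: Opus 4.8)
The plan is to exhibit $\bm D$ as a fixed smooth function of a handful of sample averages of i.i.d.\ (given $\mathbb{F}_n$) random vectors, and then run a first-order delta-method (influence-function) expansion, with every remainder estimate made uniform over $\mathbb{F}_n\in\mathcal{F}_n$ by the uniform third-moment bounds \ref{item:X:moments}--\ref{item:residuals:moments}. First I would recall, from the definition of $\bm D$ in Section \ref{sec:lasso} together with Lemma \ref{lemma:lasso:asymptotically:affine}, that $\bm D$ is assembled from $\bar{\bm\beta}_E=(\bm X_E^T\bm X_E)^{-1}\bm X_E^T\bm y$ and $\bm X_{-E}^T(\bm y-\bm X_E\bar{\bm\beta}_E)$, hence is a fixed rational/analytic map $\Psi$ applied to the empirical moment vector $\widehat{\bm\theta}_n=(\bm X_E^T\bm X_E,\ \bm X_{-E}^T\bm X_E,\ \bm X_E^T\bm y,\ \bm X_{-E}^T\bm y)$, each block of which is a sum over $i=1,\dots,n$ of a quantity measurable with respect to $(\bm X_i,y_i)$, with mean $\bm\theta_n:=\mathbb{E}_{\mathbb{F}_n}[\widehat{\bm\theta}_n]$. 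Because \ref{item:X:moments} gives a uniformly bounded third moment of $n^{3/2}\|\bm X_1\|_\infty^3$ and \ref{item:residuals:moments} the analogous bound for the residuals, the usual uniform CLT / Berry--Esseen estimates give that each centred block of $\widehat{\bm\theta}_n-\bm\theta_n$ is of the form $\sum_i(\text{i.i.d.}-\text{mean})$ with the $O_{\mathbb{F}_n}(1/\sqrt n)$ fluctuation size already recorded in Lemma \ref{lemma:lasso:asymptotically:affine}.

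Next I would linearize $\Psi$ about $\bm\theta_n$. The only nonlinearity is the inverse $(\bm X_E^T\bm X_E)^{-1}$: on the event that $\bm X_E^T\bm X_E$ is invertible, which has $\mathbb{F}_n$-probability tending to $1$ uniformly (using $\bm X_E^T\bm X_E\to\bm E_1$ and a uniform lower bound on $\lambda_{\min}(\bm E_1)$ over $\mathcal{F}_n$), write $(\bm X_E^T\bm X_E)^{-1}=\bm E_1^{-1}-\bm E_1^{-1}(\bm X_E^T\bm X_E-\bm E_1)\bm E_1^{-1}+\bm R_n$ with $\|\bm R_n\|_2=O_{\mathbb{F}_n}(\|\bm X_E^T\bm X_E-\bm E_1\|_2^2)=O_{\mathbb{F}_n}(1/n)$. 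Substituting into $\Psi$ and multiplying out, every term that is a product of two centred fluctuations is $O_{\mathbb{F}_n}(1/n)=o_{\mathbb{F}_n}(1)$ on the scale of $\bm D$, whereas the terms that are a (non-vanishing, $O(\sqrt n)$) mean times one centred fluctuation are exactly the first-order part and are $O_{\mathbb{F}_n}(1)$; the bookkeeping subtlety is that each cross-moment block such as $\bm X_E^T\bm y$ (whose mean is an $O(\sqrt n)$ vector) must be split into mean plus centred fluctuation \emph{before} anything is discarded, so these mean-times-fluctuation cross terms are retained. Collecting terms yields $\bm D=\Psi(\bm\theta_n)+\bm L_n+o_{\mathbb{F}_n}(1)$, where $\Psi(\bm\theta_n)$ is deterministic (of order $\sqrt n$) and $\bm L_n=\sum_i\bm\psi_i$ with $\bm\psi_i$ equal to $D\Psi(\bm\theta_n)$ applied to the centred $i$th summand of $\widehat{\bm\theta}_n$, so that the $\bm\psi_i$ are i.i.d.\ given $\mathbb{F}_n$, measurable in $(\bm X_i,y_i)$, with common mean $\bm 0$.

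To finish I would write $\Psi(\bm\theta_n)=\tfrac1{\sqrt n}\sum_{i=1}^n\bm m_n$ with $\bm m_n:=\Psi(\bm\theta_n)/\sqrt n=O(1)$ deterministic, and set $\bm\xi_i:=\bm m_n+\sqrt n\,\bm\psi_i$, so that $\tfrac1{\sqrt n}\sum_{i=1}^n\bm\xi_i=\Psi(\bm\theta_n)+\bm L_n$ and therefore $\bm D=\tfrac1{\sqrt n}\sum_{i=1}^n\bm\xi_i+o_{\mathbb{F}_n}(1)$ with $\bm\xi_i$ measurable with respect to $(\bm X_i,y_i)$ and $\mathbb{E}_{\mathbb{F}_n}\bm\xi_i=\bm m_n$ the same across $i$; this is the claimed asymptotic linearity of $\tfrac1{\sqrt n}\bm D$ in the sense of \eqref{assumption:asymptotically:linear}. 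The nuisance block $\bm X_{-E}^T(\bm y-\bm X_E\bar{\bm\beta}_E)$ is handled by the same three steps.

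The hard part will be the uniform-in-$\mathbb{F}_n$ control in the middle step: one needs $\lambda_{\min}(\bm E_1)$ bounded away from $0$ uniformly over $\mathcal{F}_n$ so that $\bm X_E^T\bm X_E$ is invertible with probability $\to1$ and the quadratic remainder $\bm R_n$ is genuinely $O_{\mathbb{F}_n}(1/n)$ uniformly, together with the careful separation of the (mean)$\times$(fluctuation) contributions --- which must be kept inside the linear term $\bm L_n$ --- from the negligible (fluctuation)$\times$(fluctuation) contributions. Everything else is a routine i.i.d.\ CLT/Slutsky argument whose uniformity is handed to us directly by the bounded-third-moment hypotheses \ref{item:X:moments}--\ref{item:residuals:moments}.
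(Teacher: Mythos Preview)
Your approach is correct and arrives at the same influence-function representation, but it takes a more elaborate route than the paper. The paper bypasses most of your bookkeeping by centering at the population coefficient $\bm\beta_E^*$ \emph{first}: writing $\bar{\bm\beta}_E-\bm\beta_E^*=(\bm X_E^T\bm X_E)^{-1}\bm X_E^T(\bm y-\bm X_E\bm\beta_E^*)$ replaces your block $\bm X_E^T\bm y$ (with its $O(\sqrt n)$ mean) by $\bm X_E^T\bm\epsilon=\sum_i\bm x_{E,i}\epsilon_i$, which is already a mean-zero i.i.d.\ sum because $\bm\beta_E^*$ is defined by the population normal equation $\mathbb{E}[\bm x_{E,1}\epsilon_1]=0$. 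Then the only nonlinearity left is the single replacement $(\bm X_E^T\bm X_E)^{-1}\to\bm E_1^{-1}$, and the remainder is immediately $((\bm X_E^T\bm X_E)^{-1}-\bm E_1^{-1})\,\bm X_E^T\bm\epsilon=O_{\mathbb{F}_n}(n^{-1/2})\cdot O_{\mathbb{F}_n}(1)$; the second component of $\bm D$ is handled identically with $\bm X_{-E}^T\bm X_E(\bm X_E^T\bm X_E)^{-1}\to\bm E_2\bm E_1^{-1}$. Your general delta-method expansion of $\Psi(\widehat{\bm\theta}_n)$ yields the same linear term once the (mean)$\times$(fluctuation) and pure-fluctuation pieces are recombined, but note that your rate claim is slightly off: because $\bm X_E^T\bm y-\mathbb{E}[\bm X_E^T\bm y]$ is $O_{\mathbb{F}_n}(1)$ rather than $O_{\mathbb{F}_n}(n^{-1/2})$ when $\bm\beta_E^*$ scales like $\sqrt n$, the quadratic remainder is only $O_{\mathbb{F}_n}(n^{-1/2})$, not $O_{\mathbb{F}_n}(n^{-1})$---still $o_{\mathbb{F}_n}(1)$, so the conclusion is unaffected. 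The paper's shortcut of centering at $\bm\beta_E^*$ is precisely what makes this subtlety (and the need for your careful mean-versus-fluctuation bookkeeping) disappear.
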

%------------------------------------------------------------------

\begin{proof}
We have
\begin{align*}
	&\bar{\bb \beta}_E-\bm{\beta}_E^*
	=(\bm X_E^T\bm X_E)^{-1}\bm X_E^T\left(\bb y-\bm X_E\bm\beta_E^*\right)\\
	&=\bm E_1^{-1}\bm X_E^T\left(\bb y-\bm X_E\bm\beta_E^*\right) +\left((\bm X_E^T\bm X_E)^{-1}-\bm E_1^{-1}\right)\bm X_E^T\left(\bb y-\bm X_E\bm\beta_E^*\right)\\
	&=\bm E_1^{-1}\sum_{i=1}^n\left(y_i-\bm x_{E,i}^T\bm\beta_E^*\right)\bm x_{E,i}+O_{\mathbb{F}_n}\left(\frac{1}{\sqrt{n}}\right)O_{\mathbb{F}_n}(1),
\end{align*}
and the last term is $O_{\mathbb{F}_n}\left(\frac{1}{\sqrt{n}}\right)$, which implies $o_{\mathbb{F}_n}(1)$. The second term similarly satisfies
\begin{align*}
	&\bm X_{-E}^T\left(\bb y-\bm X_E\bar{\bm\beta}_E\right)
	-\bm X_{-E}^T\left(\bb y-\bm X_E\bm{\beta}_E^*\right)=\bm X_{-E}^T\bm X_E(\bm\beta_E^*-\bar{\bb \beta}_E)\\
	&=-\bm X_{-E}^T\bm X_E(\bm X_E^T\bm X_E)^{-1}\bm X_E^T\left(\bb y-\bm X_E\bm\beta_E^*\right)\\
	&=-\bm E_2\bm E_1^{-1}\bm X_E^T\left(\bb y-\bm X_E\bm\beta_E^*\right)-\left(\bm X_{-E}^T\bm X_E(\bm X_E^T\bm X_E)^{-1}-\bm E_2\bm E_1^{-1}\right)\bm X_E^T\left(\bb y-\bm X_E\bm{\beta}_E^*\right).
\end{align*}
The first term after the last equality is linear and the second term is of order $O_{\mathbb{F}_n}\left(\frac{1}{\sqrt{n}}\right)O_{\mathbb{F}_n}(1)$, hence also $o_{\mathbb{F}_n}(1)$.

\end{proof}
%-------------------------------------------------------

Assuming \eqref{assumption:mgf} and \eqref{assumption:norm:A_M}, the two lemmas above allow us to get the uniform validity of the plugin Gaussian pivot. To have the asymptotic uniform validity of the bootstrap pivot we additionally need the pre-selection bootstrap consistency of the target. In the LASSO example, it means that treating $E$ as fixed asymptotically we have that $\bar{\bm\beta}^*_E-\bar{\bm\beta}_E$ and $\bar{\bm\beta}_E-\bm\beta_E^*$ have the same asymptotic distribution, where $\bar{\bm\beta}^*$ denotes the bootstrap version of $\bar{\bm\beta}_E$. Since these assumptions are  standard in the bootstrap literature we omit them here.

%---------------------------------------------------------
\section{Additional example: Marginal screening} \label{sec:marginal:screening}
%---------------------------------------------------------

Selective inference in the nonrandomized setting for the marginal screening problem was considered in \cite{lee_screening} and the randomized one in \cite{selective_sampler}.  We compute the bootstrapped test statistic in the randomized setting. Nonrandomized marginal screening computes marginal $t$-statistics $\bm S_j = \frac{\bb X_j^T \bb y}{\hat{\sigma}_j}$, where $\hat{\sigma}_j$ is the variance estimates of $\bm X_j^T\bm y$, $j=1,\ldots, p$, and thresholds their absolute value at some threshold $c$, perhaps
$z_{1 -\alpha/2}$ where $\alpha$ is some nominal $p$-value threshold. Randomized marginal screening solves the following randomized problem
\begin{equation*}
	\hat{\bb\eta}(\bb S,\bb\omega) = \underset{\bb\eta\in\mathbb{R}^p, \|\bb\eta\|_{\infty} \leq c}{\textnormal{argmin}}\:\frac{1}{2}\|\bb\eta-\bb S\|^2_2 - \bb\omega^T\bb\eta,
\end{equation*}
with $\bb\omega\sim\mathbb{G}$ independent of the data, to get the active set $E$. 
Conditioning on the set $E$ achieving the threshold $c$ and their signs to be $\bb s_E$, we see that this event is
\begin{equation*}
	\widehat{M}_{(E,\bb s_E)} =\left\{(\bb S,\bb\omega):\hat{\bb \eta}_E(\bb S,\bb\omega) =c\cdot \bb s_E, \|\hat{\bb\eta}_{-E}(\bb S,\bb\omega)\|<c \right\}
\end{equation*}
or in terms of the data vector $T$ and optimization variables the selection event becomes
\begin{equation} \label{eq:ms:support}
\left\{(\bb S,\bb\eta,\bb z): \bb\eta_E = c\cdot \bb s_E, \text{diag}(\bb s_E) \bb z_E \geq 0, \|\bb\eta_{-E}\|_{\infty} < c, \bb z_{-E}=0 \right\}.
\end{equation}
Here $\bb z$ is the subgradient of the characteristic function corresponding to the set $\{\bb\eta\in\mathbb{R}^p:\|\bb\eta\|_{\infty}\leq c\}$.

%**************************************************
%**************************************************

Let us describe the joint density of the data and optimization variables, $(\bb \eta_{-E},\bb z_E)$, assuming the goal is inference for a linear combination of the parameter $\bb \beta^*_E$. Since
\begin{equation*}
	\bb S = \textnormal{diag}(\bm 1/\hat{\bm\sigma}) \bb X^T\bb y = \textnormal{diag}(\bm 1/\hat{\bm\sigma}) \begin{pmatrix}
		\bb X_E^T\bb X_E & \bb 0_{|E|\times (p-|E|)} \\ \bb X_{-E}^T\bb X_E & \bb I_{p-|E|}
	\end{pmatrix}\bb D =:-\bb M\bb D,
\end{equation*}
where $\textnormal{diag}(\bm 1/\hat{\bm\sigma})$ is a $p\times p$ matrix with diagonal elements $\hat{\sigma}_j$, $j=1,\ldots, p$, and zeros elsewhere, the randomization reconstruction map becomes
\begin{equation*}
	\omega(\bb D, \bb \eta_{-E}, \bb z_E) = \begin{pmatrix}
		c \cdot \bb s_E \\ \bb\eta_{-E}
	\end{pmatrix} + \bb M \bb D+\begin{pmatrix}
		\bb z_E \\ \bb 0
	\end{pmatrix}.
\end{equation*}
%Using the asymptotic normality of $\bm D$, we sample $(\bb D,\bb\eta_{-E}, \bb z_E)$ from a density proportional to  
%\begin{equation*}
%	\phi_p\left(\bb\Sigma_{\bm D}^{-1/2}(\bb D-\bm\mu_{\bm D})\right)\cdot g\left(\omega^g(\bb D, \bb \eta_{-E}, \bb z_E)\right),
%\end{equation*}
%and supported on $\mathbb{R}^p\times [-c,c]^{p-|E|}\times \mathbb{R}^{|E|}_{\bb s_E}$, where we plug in the estimate of $\bm\Sigma_{\bm D}.$ 
If we are testing $H_0:\bm A^T\bm\beta_E^*=\bm\theta$, using the decomposition as in the LASSO example above we sample $(\bm T,\bm\eta_{-E}, \bm z_E)$ from
\begin{equation*}
\phi_{(\bm\theta,\widehat{\bb\Sigma}_{\bm T})}(\bm T) \cdot g\left(\begin{pmatrix}
	 c\cdot \bb s_E \\ \bb\eta_{-E} \end{pmatrix} + \bm M\bm F+\bm M\widehat{\bm\Sigma}_{\bm D,\bm T}\widehat{\bm\Sigma}_{\bm T}^{-1}\bm T+ \begin{pmatrix}
 	\bb z_E \\ \bb 0 \end{pmatrix} \right),
\end{equation*}
with the support $\mathbb{R}^a\times[-c, c]^{p-|E|}\times \mathbb{R}_{\bb s_E}^{|E|}$.

Having the density above, we can use any of the samplers mentioned: optimization weighted sampler, wild bootstrap sampler or directly sampling from the density above (selective sampler). Here we describe the wild bootstrap density since the others should be clear. Using the wild bootstrap approximation, the bootstrapped randomization reconstruction map denote the randomization reconstruction map as
\begin{equation*}
\begin{aligned}
	\omega^B(\bb\alpha, \bb\eta_{-E}, \bb z_E) &= \begin{pmatrix}
		c\cdot\bb s_E \\ \bb \eta_{-E} \end{pmatrix} +\bm M\bm F+\bm M\widehat{\bm\Sigma}_{\bm D,\bm T}\widehat{\bm\Sigma}_{\bm T}^{-1}\bm T(\bm\alpha)+\bm M\widehat{\bm\Sigma}_{\bm D,\bm T}\widehat{\bm\Sigma}_{\bm T}^{-1}\bm\theta+\begin{pmatrix}
			\bb z_E \\ \bb 0 \end{pmatrix}.
\end{aligned}
\end{equation*}
Now we sample $(\bb \alpha,\bb\eta_{-E},\bb z_E)$ from the density proportional to
\begin{equation*}
	\left(\prod_{i=1}^n h_{\alpha}(\alpha_i)\right) \cdot g(\omega^B(\bb\alpha, \bb\eta_{-E}, \bb z_E))
\end{equation*}
and supported on $\textnormal{supp}(\mathbb{H}_{\alpha})^n\times [-c,c]^{p-|E|}\times \mathbb{R}_{\bb s_E}^{|E|}.$

%%%%%%%%%%%%%%%%%%%%%%%%%%%%%%%%%%%%%%%%%%%%%%%%%%%%%%%%%%%%%
%%%%%%%%%%%%%%%%%%%%%%%%%%%%%%%%%%%%%%%%%%%%%%%%%%%%%%%%%%%%%
%%%%%%%%%%%%%%%%%%%%%%%%%%%%%%%%%%%%%%%%%%%%%%%%%%%%%%%%%%%%%

%-----------------------------------------------------
\section{Sampling details} \label{app:sampling}
%-----------------------------------------------------

To sample from either the plugin CLT or the bootstrap density, we use projected Langevin MC to sample from a log-concave density with constraints. Given a convex set $K\subset\mathbb{R}^n$ with a nonempty interior, let us denote with $\mathcal{P}_K$ the projection onto $K$. We are interested in sampling from a density $f(\bb x)$ on $\mathbb{R}^n$ given by
\begin{equation*}
	\frac{df(\bb x)}{d\bb x} \propto\: e^{-\tilde{f}(\bb x)}\mathbb{I}_{\{\bb x\in K\}},
\end{equation*}
where $\tilde{f}:K\rightarrow\mathbb{R}$ is a convex and  differentiable function (or at least sub-differentiable).

Based on the previous point $\bb X_k$ in the chain, projected Langevin computes the next point using the following update
\begin{equation} \label{eq:langevin:update}
	\bb X_{k+1}=\mathcal{P}_K\left(\bb X_k-\eta\nabla \tilde{f}(\bb X_k)+\sqrt{2\eta}\bb \xi_k\right),
\end{equation}
where $\bb\xi_1,\bb\xi_2,\ldots$ are i.i.d.~sequence of standard normal in $\mathbb{R}^n$ and $\eta$ is the step size. Since our constraint region is simple due to the change of variables of \cite{selective_sampler}, the projection step is easy. \cite{bubeck2015sampling} prove that this chain will converge to a true density under some conditions.

Taking the randomization density $g$ and the bootstrap weights density $h_{\alpha}$ to be log-concave, we use the updates above to sample from the bootstrap densities. Let us now write the Langevin updates for the sampling of density some of the examples mentioned so far.

%%%%%%%%%%%%%%%%%%%%%% LASSO %%%%%%%%%%%%%%%%%%%%%%%%%%%%%%%%%%

%------------------------------------------------------------
\begin{Example}[LASSO, introduced in Section \ref{sec:lasso}] \label{ex:lasso}
We write the Langevin updates for the samplers mentioned in the main text.
\begin{itemize}[leftmargin=*]
\item (Selective sampler)	
Let us first write down the projected Langevin update for sampling from the plugin CLT density in \eqref{eq:lasso:density:plugin:clt}.
\begin{equation*}
	\begin{pmatrix}
		\bb T^{(i+1)} \\ \bb\beta_E^{(i+1)} \\ \bb u_{-E}^{(i+1)}
	\end{pmatrix} = \begin{pmatrix} \bm T^{(i)}-\eta\widehat{\bm\Sigma}_{\bm T}^{-1}(\bb T^{(i)}-\bm\theta)
		 + \eta\widetilde{\bm M}^T\nabla_{\bb\omega}\log g(\omega(\bm T^{(i)}, \bm \beta_E^{(i)},\bm u_{-E}^{(i)}))+\sqrt{2\eta}\bb \xi^{(i)} \\ 
		\mathcal{P}_1\left(\bb\beta_E^{(i)}+\eta \bm B^T\nabla_{\bb\omega}\log g(\omega(\bm T^{(i)}, \bm \beta_E^{(i)},\bm u_{-E}^{(i)})) +\sqrt{2\eta}\bb\xi_{1}^{(i)}\right) \\ 
		\mathcal{P}_2\left(\bb u_{-E}^{(i)} +\eta\bm U^T\nabla_{\bb\omega}\log g(\omega(\bm T^{(i)}, \bm \beta_E^{(i)},\bm u_{-E}^{(i)})) +\sqrt{2\eta}\bb\xi_2^{(i)}\right)
	\end{pmatrix},
\end{equation*}
where $\omega(\bm T, \bm \beta_E,\bm u_{-E})=\widetilde{\bm M}\bm T+\bm B\bm\beta_E+\bm U\bm u_{-E}+\widetilde{\bm L}$
 and $(\bb\xi^{(i)},\bb\xi_1^{(i)},\bb\xi_2^{(i)})\sim \mathcal{N}(\bm 0,\bb I_a)\times\mathcal{N}(\bb 0, \bb I_{|E|})\times\mathcal{N}(\bb 0, \bb I_{p-|E|})$ and independent of all the other random variables. Here, $\mathcal{P}_1$ is the projection onto the orthant $\mathbb{R}^{|E|}_{\bb s_E}$ and $\mathcal{P}_2$ is the projection onto the cube $[-1,1]^{p-|E|}$. 
 
\item (Wild bootstrap sampler)
Now let us write down the projected Langevin update to sample from the bootstrap density in \eqref{eq:lasso:density:bootstrap}. Denoting $\widebar{\bm M} = \widetilde{\bm M}\bm A^T(\bm X_E^T\bm X_E)^{-1}\bm X_E^T\hat{\bm\varepsilon}$, we have $\omega^B(\bm\alpha,\bm\beta_E,\bm u_{-E})=\widebar{\bm M}\bm \alpha+\bm B\bm \beta_E+\bm U\bm\beta u_{-E}+\widetilde{\bm L}+\widetilde{\bm M}\bm\theta$ from \eqref{eq:lasso:random:reconstruction:boot}.
The gradient of the selective log-density becomes
\begin{equation*}
G\begin{pmatrix} \bm\alpha \\ \bm\beta_E \\ \bm u_{-E}\end{pmatrix} =\begin{pmatrix} \nabla_{\bb\alpha} \left(\sum_{j=1}^n\log h_{\alpha}(\alpha_j)\right)  + \widebar{\bm M}^T\nabla_{\bm\omega}\log g(\omega^B(\bb\alpha,\bb\beta_E,\bb u_{-E})) \\
	\bm B^T \nabla_{\bb\omega} \log g(\omega^B(\bb\alpha,\bb\beta_E,\bb u_{-E})) \\
	\bm U^T\nabla_{\bb\omega}\log g(\omega^B(\bb\alpha,\bb\beta_E,\bb u_{-E})) \end{pmatrix}.
\end{equation*}
Based on the current point $(\bb\alpha^{(i)}, \bb\beta_E^{(i)}, \bb u_{-E}^{(i)})$, the update becomes
\begin{equation*} 
	\begin{pmatrix}
		\bb\alpha^{(i+1)} \\ \bb\beta_E^{(i+1)} \\ \bb u_{-E}^{(i+1)}
	\end{pmatrix}=\mathcal{P}\left(\begin{pmatrix} \bb\alpha^{(i)} \\ \bb\beta_E^{(i)} \\ \bb u_{-E}^{(i)} \end{pmatrix} + \eta \cdot G\begin{pmatrix} \bb\alpha^{(i)} \\ \bb\beta_E^{(i)} \\ \bb u_{-E}^{(i)}\end{pmatrix} +\sqrt{2\eta}\bm \xi^{(i)} \right),
\end{equation*}
where $\bb\xi^{(i)}\sim \mathcal{N}(\bm 0,\bb I_{n+p})$ and independent of all the other random variables. Here, $\mathcal{P}$ projects $\bm\alpha$ onto $(\textnormal{supp}(\mathbb{H}_{\alpha}))^n$ and the optimization variables as above.

\item (Weighted optimization sampler) This sampler fixes $\bm T$ at its observed value while moving the optimization variables only with the updates similar to the ones in the selective sampler above with $\bm T^{(i)}=\bm T^{obs}$ throughout.

\end{itemize}
\end{Example}

%%%%%%%%%%%%%%%%%%%%%%% FS %%%%%%%%%%%%%%%%%%%%%%%%%%%

%------------------------------------------------------------
\begin{Example}[Forward stepwise, introduced in Section \ref{sec:forward:stepwise}]
$\:$
\begin{itemize}[leftmargin=*]
\item (Selective sampler)	
To sample from the plugin CLT density in \eqref{eq:fs:density:plugin:clt}, the Langevin update at $(i+1)$-th step based on $(\bb T^{(i)}, \bb z_1^{(i)}, \ldots, \bb z_K^{(i)})$, the point at the $i$-th  step, is the following
\begin{equation*}
	\begin{pmatrix}
		\bb T^{(i+1)} \\ \bb z_1^{(i+1)} \\ \vdots \\ \bb z_K^{(i+1)}
	\end{pmatrix} = \begin{pmatrix}
		\bb T^{(i)}-\eta\widehat{\bm\Sigma}_{\bm T}^{-1}(\bm T^{(i)}-\bm\theta)+\eta\sum_{k=1}^K \widetilde{\bm M}_k^T\nabla_{\bm\omega_k}\log g(\omega_k(\bb T^{(i)},\bb z_k^{(i)})) +\sqrt{2\eta} \bb \xi^{(i)} 
		  \\ \mathcal{P}_1\left(\bb z_1^{(i)} +\eta \nabla_{\bb\omega_1}\log g_1(\omega_1(\bb T^{(i)},\bb z_1^{(i)})) +\sqrt{2\eta}\bb \xi^{(i)}_1 \right) \\ \vdots \\ \mathcal{P}_K\left(\bb z_K^{(i)} +\eta \nabla_{\bb\omega_K} \log g_K(\omega_K(\bb T^{(i)}, \bb z_K^{(i)}) +\sqrt{2\eta}\bb \xi^{(i)}_K\right)
	\end{pmatrix},
\end{equation*}
where $\omega_k(\bm z_k, \bm T)=\bm z_k+\bm M_k\bm F+\widetilde{\bm M}_k\bm T$  and $(\bb\xi^{(i)}, \bb\xi^{(i)}_1, \ldots, \bb\xi^{(i)}_K)\sim\mathcal{N}(\bb 0, \bb I_a)\times\mathcal{N}(\bb 0, \bb I_{p})\times \ldots\times \mathcal{N}(\bb 0, \bb I_{p-K+1})$ and independent of all the other variables. Here, $\mathcal{P}_k$ denotes the projection onto $\partial I_{\mathcal{B}_k}(\hat{\bb\eta}_k)$ for $k=1,\ldots, K$. 

\item (Wild bootstrap sampler)
Using wild bootstrap, the randomization reconstruction map at $k$-th step becomes
\begin{equation*}
	\omega_k^B(\bb\alpha, \bb z_k) = \bb z_k+\bm M_k\bb F+\widetilde{\bb M}_k\bm\theta+\widetilde{\bb M}_k \bm T(\bm\alpha) = \bb z_k+\bm M_k\bb F+\widetilde{\bb M}_k\bm\theta+\widebar{\bb M}_k\bm\alpha,
\end{equation*}
where $\widebar{\bm M}_k=\widetilde{\bm M}_k\bm A^T(\bm X_E^T\bm X_E)^{-1}\bm X_E^T\textnormal{diag}(\hat{\bm\varepsilon})$, hence linear in the bootstrap weights $\bm\alpha$ and the sub-gradient $\bm z_k$. Taking into account all the randomization reconstruction maps for $k=1,\ldots, K$, the bootstrap density of $(\bb\alpha,\bb z_1,\ldots,\bb z_K)$ is proportional to
\begin{equation} \label{eq:fs:density:bootstrap}
	\left(\prod_{i=1}^n h_{\alpha}(\alpha_i)\right) \cdot \prod_{k=1}^K g_k\left(\omega_k^B(\bb\alpha, \bb z_k)\right)
\end{equation}
and supported on $(\textnormal{supp}(\mathbb{H}_{\alpha}))^n\times \prod_{i=1}^n\partial I_{\mathcal{B}_k}(\hat{\bb\eta}_k)$. 
The gradient of the selective log-density is
\begin{equation*}
	G\begin{pmatrix} \bb \alpha \\ \bb z_1 \\ \vdots \\ \bb z_K
	\end{pmatrix} = \begin{pmatrix}
		\nabla_{\bb\alpha}\left(\sum_{j=1}^n\log h_{\alpha}(\alpha_j)\right)+\sum_{k=1}^K \widebar{\bm M}_k^T\nabla_{\bm\omega_k}\log g(\omega_k^B(\bb \alpha,\bb z_k)) \\ 
		\nabla_{\bb\omega_1}\log g_1(\omega^B_1(\bb\alpha,\bb z_1)) \\ \vdots \\ \nabla_{\bb\omega_K} \log g_K(\omega^B_K(\bb\alpha, \bb z_K)
	\end{pmatrix}.
\end{equation*}
Given the gradient, the Langevin updates are straightforward.

\end{itemize}
\end{Example}

%%%%%%%%%%%%%%%%%%%%%%%%%%%%%%%%%%%%%%%%%%%%%%%%%%%%%%%%%%%%%%%%%%%%%%%%%%%%%%%%%%%%%%%%%%%%%%%%%%%%%%%%%%%%%%%%%%%%%%%%%%%%%%%%%%%%%%%%%%%%%%

%------------------------------------------------------------
\begin{Example}\textnormal{\textbf{(Multiple views with GLMs and group LASSO penalty, introduced in Section \ref{sec:mv:glm})}}
$\:$
\begin{itemize}[leftmargin=*]
\item (Selective sampler)
We write the sampling updates for the density in \eqref{eq:mv:glm:target:density}. At each step of the Langevin MC, we move the target vector  $\bm T$ and the optimization variables $\left\{(\gamma_{k,g})_{g\in A_k}, (\bm z_{k,h})_{h\in -A_k}\right\}_{k=1}^K$. The gradient of the logarithm of the selective density in \eqref{eq:mv:glm:target:density} becomes
\begin{equation*}
G\begin{pmatrix}
		\bm T \\ (\gamma_{1,g})_{g\in A_1} \\ (\bm z_{1,h})_{h\in -A_1} \\ \vdots \\ (\gamma_{K,g})_{g\in A_K} \\ (\bm z_{K,h})_{h\in -A_K}
	\end{pmatrix} = \begin{pmatrix}
		-\widehat{\bm\Sigma}_{\bm T}^{-1}(\bm T-\bm\theta)+\sum_{k=1}^K \left(\widetilde{\bm M}_k^T\nabla\log g_k(\bm\omega_k)+\nabla_{\bm T}\log J_k\right) \\ 
		\bm\Gamma_1^T\nabla\log g_1(\bm\omega_1)+\nabla_{(\gamma_{1,g})_{g\in A_1}}\log J_1 \\ 
		\bm Z_1^T\nabla\log g_1(\bm\omega_1) +\nabla_{(\bm z_{1,h})_{h\in -A_1}}\log J_1 \\ 
		\vdots \\ 
		\bm\Gamma_K^T\nabla\log g_K(\bm\omega_K)+\nabla_{(\gamma_{K,g})_{g\in A_K}}\log J_K \\ 
		\bm Z_K^T\nabla\log g_K(\bm\omega_K)+\nabla_{(\bm z_{K,h})_{h\in -A_K}}\log J_K
	\end{pmatrix},
\end{equation*}
where $\bm\omega_k$ is the randomization reconstruction from \eqref{eq:randomization:reconstruction}, $k=1,\ldots, K$. Given the gradient, the updates follow easily from \eqref{eq:langevin:update}.

\item (Wild bootstrap sampler) 
Using the wild bootstrap, $Q_E(\bm\beta_E^*)^{-1}\bm X_E^T\textnormal{diag}(\hat{\bm\varepsilon})\bm\alpha,$ 
where $\bm\alpha=(\alpha_1,\ldots, \alpha_n)\in\mathbb{R}^n$ with $\alpha_i\overset{i.i.d.}{\sim}\mathbb{H}_\alpha$, approximates the distribution of $\bar{\bm\beta}_E-\bm\beta_E^*$. Denote $T(\bm\alpha) = \bm A^TQ_E(\bm\beta_E^*)^{-1}\bm X_E^T\textnormal{diag}(\hat{\bm\varepsilon})$.
Replacing $\bm T$ with $T(\bm\alpha)+\bm\theta$, the sampling density on the bootstrap weights $\bm\alpha\in\mathbb{R}^n$ and the optimization variables $(\gamma_{k,g})_{g\in A_k}$ and $(\bm z_{k,g})_{g\in -A_k}$, $k=1,\ldots, K$, becomes
\begin{equation} \label{eq:mv:glm:boot:density}
	\left(\prod_{i=1}^n h_{\alpha}(\alpha_i)\right)\cdot\left(\prod_{k=1}^K g_k\left(\omega_k(\widebar{\bm M}_k\bm\alpha+\bm F_k(\gamma_{k,g})_{g\in A_k}+\bm Z_k(\bm z_{k,g})_{h\in -A_k})+\widebar{\bm L}_k\right)\cdot J_k\right),
\end{equation}
where $\widebar{\bm M}_k=\widetilde{\bm M}_k\bm A^TQ_E(\bm\beta_E^*)^{-1}\bm X_E^T\textnormal{diag}(\hat{\bm\varepsilon}),\;\; \widebar{\bm L}_k=\widetilde{\bm L}_k+\widetilde{\bm M}\bm\theta$,
with the weights restricted to $\bm\alpha\in\textnormal{supp}(\mathbb{H}_{\alpha})^n$ and the optimization variables as in \eqref{eq:constrains:inactive:and:norm}.
To sample from the bootstrap density in \eqref{eq:mv:glm:boot:density}, we use Langevin MC where at each step we move the bootstrap weights, vector $\bm \alpha\in\mathbb{R}^n$, and the optimization variables 
$\left\{(\gamma_{k,g})_{g\in A_k}, (\bm z_{k,h})_{h\in -A_k}\right\}_{k=1}^K$. 
The gradient of the logarithm of the selective density from \eqref{eq:mv:glm:boot:density} becomes
\begin{equation*}
G^b\begin{pmatrix}
		\bm\alpha \\ (\gamma_{1,g})_{g\in A_1} \\ (\bm z_{1,h})_{h\in -A_1} \\ \vdots \\ (\gamma_{K,g})_{g\in A_K} \\ (\bm z_{K,h})_{h\in -A_K}
	\end{pmatrix} = \begin{pmatrix}
		\left(\nabla_{\bm\alpha}\sum_{i=1}^n h_{\alpha}(\alpha_i)\right)+\sum_{k=1}^K \left(\widebar{\bm M}_k^T\nabla\log g_k(\bm\omega_k)+\nabla_{\bm T}\log J_k\right) \\ 
		\bm\Gamma_1^T\nabla\log g_1(\bm\omega_1)+\nabla_{(\gamma_{1,g})_{g\in A_1}}\log J_1 \\ 
		\bm Z_1^T\nabla\log g_1(\bm\omega_1) +\nabla_{(\bm z_{1,h})_{h\in -A_1}}\log J_1 \\ 
		\vdots \\ 
		\bm\Gamma_K^T\nabla\log g_K(\bm\omega_K)+\nabla_{(\gamma_{K,g})_{g\in A_K}}\log J_K \\ 
		\bm Z_K^T\nabla\log g_K(\bm\omega_K)+\nabla_{(\bm z_{K,h})_{h\in -A_K}}\log J_K
	\end{pmatrix},
\end{equation*}
where $\bm\omega_k$ is the randomization reconstruction map from \eqref{eq:randomization:reconstruction}, $k=1,\ldots, K$. In the case of standard normal bootstrap weights $\nabla_{\bm\alpha}\sum_{i=1}^n h_{\alpha}(\alpha_i)=-\bm\alpha$.

\end{itemize}
\end{Example}

%%%%%%%%%%%%%%%%%%%%%%%%%%%%%%%%%%%%%%%%%%%%%%
%%%%%%%%%%%%%%%%%%%%%%%%%%%%%%%%%%%%%%%%%%%%%%

\begin{remark1}
There are both theoretical and practical considerations when choosing the density of the randomization, $g$, and the density of bootstrap weights, $h_{\alpha}$. We take $g$ to be Laplace, logistic or Gaussian and $h_{\alpha}$ to be the standard normal.
\end{remark1}

\bibliography{cite}

\begin{thebibliography}{34}
% BibTex style file: imsart-nameyear.bst, 2013-01-28
% Default style options (sort=1,type=nameyear).
% Used options (sort=1,type=nameyear).

\bibitem[\protect\citeauthoryear{Andrews}{2000}]{andrews2000inconsistency}
\begin{barticle}[author]
\bauthor{\bsnm{Andrews},~\bfnm{Donald~WK}\binits{D.~W.}}
(\byear{2000}).
\btitle{Inconsistency of the bootstrap when a parameter is on the boundary of
  the parameter space}.
\bjournal{Econometrica}
\bvolume{68}
\bpages{399--405}.
\end{barticle}
\endbibitem

\bibitem[\protect\citeauthoryear{Berk et~al.}{2013}]{berk_valid_2013}
\begin{barticle}[author]
\bauthor{\bsnm{Berk},~\bfnm{Richard}\binits{R.}},
  \bauthor{\bsnm{Brown},~\bfnm{Lawrence}\binits{L.}},
  \bauthor{\bsnm{Buja},~\bfnm{Andreas}\binits{A.}},
  \bauthor{\bsnm{Zhang},~\bfnm{Kai}\binits{K.}} \AND
  \bauthor{\bsnm{Zhao},~\bfnm{Linda}\binits{L.}}
(\byear{2013}).
\btitle{Valid post-selection inference}.
\bjournal{The Annals of Statistics}
\bvolume{41}
\bpages{802--837}.
\bdoi{10.1214/12-AOS1077}
\bmrnumber{MR3099122}
\end{barticle}
\endbibitem

\bibitem[\protect\citeauthoryear{Bubeck, Eldan and
  Lehec}{2015}]{bubeck2015sampling}
\begin{barticle}[author]
\bauthor{\bsnm{Bubeck},~\bfnm{S{\'e}bastien}\binits{S.}},
  \bauthor{\bsnm{Eldan},~\bfnm{Ronen}\binits{R.}} \AND
  \bauthor{\bsnm{Lehec},~\bfnm{Joseph}\binits{J.}}
(\byear{2015}).
\btitle{Sampling from a log-concave distribution with {P}rojected {L}angevin
  {M}onte {C}arlo}.
\bjournal{arXiv preprint arXiv:1507.02564}.
\end{barticle}
\endbibitem

\bibitem[\protect\citeauthoryear{Buja et~al.}{2014}]{buja2014conspiracy}
\begin{barticle}[author]
\bauthor{\bsnm{Buja},~\bfnm{A}\binits{A.}},
  \bauthor{\bsnm{Berk},~\bfnm{R}\binits{R.}},
  \bauthor{\bsnm{Brown},~\bfnm{L}\binits{L.}},
  \bauthor{\bsnm{George},~\bfnm{E}\binits{E.}},
  \bauthor{\bsnm{Pitkin},~\bfnm{E}\binits{E.}},
  \bauthor{\bsnm{Traskin},~\bfnm{M}\binits{M.}},
  \bauthor{\bsnm{Zhang},~\bfnm{K}\binits{K.}},
  \bauthor{\bsnm{Zhao},~\bfnm{L}\binits{L.}} \AND
  \bauthor{\bsnm{White},~\bfnm{Dedicated To~Halbert}\binits{D.~T.~H.}}
(\byear{2014}).
\btitle{A conspiracy of random predictors and model violations against
  classical inference in regression}.
\bjournal{arXiv preprint}.
\end{barticle}
\endbibitem

\bibitem[\protect\citeauthoryear{Chatterjee}{2005}]{chatterjee2005simple}
\begin{barticle}[author]
\bauthor{\bsnm{Chatterjee},~\bfnm{Sourav}\binits{S.}}
(\byear{2005}).
\btitle{A simple invariance theorem}.
\bjournal{arXiv preprint math/0508213}.
\end{barticle}
\endbibitem

\bibitem[\protect\citeauthoryear{Efron}{1979}]{efron_bootstrap}
\begin{barticle}[author]
\bauthor{\bsnm{Efron},~\bfnm{Bradley}\binits{B.}}
(\byear{1979}).
\btitle{Bootstrap methods: another look at the jackknife}.
\bjournal{The Annals of Statistics}
\bvolume{7}
\bpages{1--26}.
\end{barticle}
\endbibitem

\bibitem[\protect\citeauthoryear{Fithian, Sun and
  Taylor}{2014}]{fithian2014optimal}
\begin{barticle}[author]
\bauthor{\bsnm{Fithian},~\bfnm{William}\binits{W.}},
  \bauthor{\bsnm{Sun},~\bfnm{Dennis}\binits{D.}} \AND
  \bauthor{\bsnm{Taylor},~\bfnm{Jonathan}\binits{J.}}
(\byear{2014}).
\btitle{Optimal inference after model selection}.
\bjournal{arXiv preprint arXiv:1410.2597}.
\end{barticle}
\endbibitem

\bibitem[\protect\citeauthoryear{Freedman
  et~al.}{1981}]{freedman1981bootstrapping}
\begin{barticle}[author]
\bauthor{\bsnm{Freedman},~\bfnm{David~A}\binits{D.~A.}} \betal{et~al.}
(\byear{1981}).
\btitle{Bootstrapping regression models}.
\bjournal{The Annals of Statistics}
\bvolume{9}
\bpages{1218--1228}.
\end{barticle}
\endbibitem

\bibitem[\protect\citeauthoryear{Hanley and MacGibbon}{2006}]{poission_boot}
\begin{barticle}[author]
\bauthor{\bsnm{Hanley},~\bfnm{James~A}\binits{J.~A.}} \AND
  \bauthor{\bsnm{MacGibbon},~\bfnm{Brenda}\binits{B.}}
(\byear{2006}).
\btitle{Creating non-parametric bootstrap samples using Poisson frequencies}.
\bjournal{computer methods and programs in biomedicine}
\bvolume{83}
\bpages{57--62}.
\end{barticle}
\endbibitem

\bibitem[\protect\citeauthoryear{Hansen}{1999}]{hansen1999grid}
\begin{barticle}[author]
\bauthor{\bsnm{Hansen},~\bfnm{Bruce~E}\binits{B.~E.}}
(\byear{1999}).
\btitle{The grid bootstrap and the autoregressive model}.
\bjournal{Review of Economics and Statistics}
\bvolume{81}
\bpages{594--607}.
\end{barticle}
\endbibitem

\bibitem[\protect\citeauthoryear{Laber et~al.}{2014}]{laber2014dynamic}
\begin{barticle}[author]
\bauthor{\bsnm{Laber},~\bfnm{Eric~B}\binits{E.~B.}},
  \bauthor{\bsnm{Lizotte},~\bfnm{Daniel~J}\binits{D.~J.}},
  \bauthor{\bsnm{Qian},~\bfnm{Min}\binits{M.}},
  \bauthor{\bsnm{Pelham},~\bfnm{William~E}\binits{W.~E.}} \AND
  \bauthor{\bsnm{Murphy},~\bfnm{Susan~A}\binits{S.~A.}}
(\byear{2014}).
\btitle{Dynamic treatment regimes: Technical challenges and applications}.
\bjournal{Electronic journal of statistics}
\bvolume{8}
\bpages{1225}.
\end{barticle}
\endbibitem

\bibitem[\protect\citeauthoryear{Lee and Taylor}{2014}]{lee_screening}
\begin{binproceedings}[author]
\bauthor{\bsnm{Lee},~\bfnm{Jason~D}\binits{J.~D.}} \AND
  \bauthor{\bsnm{Taylor},~\bfnm{Jonathan~E}\binits{J.~E.}}
(\byear{2014}).
\btitle{Exact post model selection inference for marginal screening}.
In \bbooktitle{Advances in Neural Information Processing Systems}
\bpages{136--144}.
\end{binproceedings}
\endbibitem

\bibitem[\protect\citeauthoryear{Lee et~al.}{2016}]{lee2013exact}
\begin{barticle}[author]
\bauthor{\bsnm{Lee},~\bfnm{Jason~D}\binits{J.~D.}},
  \bauthor{\bsnm{Sun},~\bfnm{Dennis~L}\binits{D.~L.}},
  \bauthor{\bsnm{Sun},~\bfnm{Yuekai}\binits{Y.}} \AND
  \bauthor{\bsnm{Taylor},~\bfnm{Jonathan~E}\binits{J.~E.}}
(\byear{2016}).
\btitle{Exact post-selection inference, with application to the lasso}.
\bjournal{The Annals of Statistics}
\bvolume{44}
\bpages{907--927}.
\end{barticle}
\endbibitem

\bibitem[\protect\citeauthoryear{Leeb}{2015}]{leeb_comment}
\begin{barticle}[author]
\bauthor{\bsnm{Leeb},~\bfnm{Hannes}\binits{H.}}
(\byear{2015}).
\btitle{Comment}.
\bjournal{Journal of the American Statistical Association}
\bvolume{110}
\bpages{1457-1459}.
\end{barticle}
\endbibitem

\bibitem[\protect\citeauthoryear{Leeb and P{\"o}tscher}{2006a}]{leeb2006can}
\begin{barticle}[author]
\bauthor{\bsnm{Leeb},~\bfnm{Hannes}\binits{H.}} \AND
  \bauthor{\bsnm{P{\"o}tscher},~\bfnm{Benedikt~M}\binits{B.~M.}}
(\byear{2006}a).
\btitle{Can one estimate the conditional distribution of post-model-selection
  estimators?}
\bjournal{The Annals of Statistics}
\bvolume{34}
\bpages{2554-2591}.
\end{barticle}
\endbibitem

\bibitem[\protect\citeauthoryear{Leeb and
  P{\"o}tscher}{2006b}]{leeb2006performance}
\begin{barticle}[author]
\bauthor{\bsnm{Leeb},~\bfnm{Hannes}\binits{H.}} \AND
  \bauthor{\bsnm{P{\"o}tscher},~\bfnm{Benedikt~M}\binits{B.~M.}}
(\byear{2006}b).
\btitle{Performance limits for estimators of the risk or distribution of
  shrinkage-type estimators, and some general lower risk-bound results}.
\bjournal{Econometric Theory}
\bvolume{22}
\bpages{69--97}.
\end{barticle}
\endbibitem

\bibitem[\protect\citeauthoryear{Liu}{1988}]{liu1988bootstrap}
\begin{barticle}[author]
\bauthor{\bsnm{Liu},~\bfnm{Regina~Y}\binits{R.~Y.}}
(\byear{1988}).
\btitle{Bootstrap procedures under some non-iid models}.
\bjournal{The Annals of Statistics}
\bvolume{16}
\bpages{1696--1708}.
\end{barticle}
\endbibitem

\bibitem[\protect\citeauthoryear{Loftus and Taylor}{2015}]{loftus2015selective}
\begin{barticle}[author]
\bauthor{\bsnm{Loftus},~\bfnm{Joshua~R}\binits{J.~R.}} \AND
  \bauthor{\bsnm{Taylor},~\bfnm{Jonathan~E}\binits{J.~E.}}
(\byear{2015}).
\btitle{Selective inference in regression models with groups of variables}.
\bjournal{arXiv preprint arXiv:1511.01478}.
\end{barticle}
\endbibitem

\bibitem[\protect\citeauthoryear{Mammen}{1993}]{mammen1993bootstrap}
\begin{barticle}[author]
\bauthor{\bsnm{Mammen},~\bfnm{Enno}\binits{E.}}
(\byear{1993}).
\btitle{Bootstrap and wild bootstrap for high dimensional linear models}.
\bjournal{The Annals of Statistics}
\bvolume{21}
\bpages{255--285}.
\end{barticle}
\endbibitem

\bibitem[\protect\citeauthoryear{McKeague and
  Qian}{2015}]{mckeague2015adaptive}
\begin{barticle}[author]
\bauthor{\bsnm{McKeague},~\bfnm{Ian~W}\binits{I.~W.}} \AND
  \bauthor{\bsnm{Qian},~\bfnm{Min}\binits{M.}}
(\byear{2015}).
\btitle{An adaptive resampling test for detecting the presence of significant
  predictors}.
\bjournal{Journal of the American Statistical Association}
\bvolume{110}
\bpages{1422--1433}.
\end{barticle}
\endbibitem

\bibitem[\protect\citeauthoryear{Panigrahi, Taylor and
  Weinstein}{2016}]{selective_bayesian}
\begin{barticle}[author]
\bauthor{\bsnm{Panigrahi},~\bfnm{Snigdha}\binits{S.}},
  \bauthor{\bsnm{Taylor},~\bfnm{Jonathan}\binits{J.}} \AND
  \bauthor{\bsnm{Weinstein},~\bfnm{Asaf}\binits{A.}}
(\byear{2016}).
\btitle{Bayesian Post-Selection Inference in the Linear Model}.
\bjournal{arXiv preprint arXiv:1605.08824}.
\end{barticle}
\endbibitem

\bibitem[\protect\citeauthoryear{Praestgaard and
  Wellner}{1993}]{weighted_bootstrap}
\begin{barticle}[author]
\bauthor{\bsnm{Praestgaard},~\bfnm{Jens}\binits{J.}} \AND
  \bauthor{\bsnm{Wellner},~\bfnm{Jon~A}\binits{J.~A.}}
(\byear{1993}).
\btitle{Exchangeably weighted bootstraps of the general empirical process}.
\bjournal{The Annals of Probability}
\bvolume{21}
\bpages{2053--2086}.
\end{barticle}
\endbibitem

\bibitem[\protect\citeauthoryear{Romano et~al.}{2012}]{romano2012uniform}
\begin{barticle}[author]
\bauthor{\bsnm{Romano},~\bfnm{Joseph~P}\binits{J.~P.}},
  \bauthor{\bsnm{Shaikh},~\bfnm{Azeem~M}\binits{A.~M.}} \betal{et~al.}
(\byear{2012}).
\btitle{On the uniform asymptotic validity of subsampling and the bootstrap}.
\bjournal{The Annals of Statistics}
\bvolume{40}
\bpages{2798--2822}.
\end{barticle}
\endbibitem

\bibitem[\protect\citeauthoryear{Rubin et~al.}{1981}]{rubin1981bayesian}
\begin{barticle}[author]
\bauthor{\bsnm{Rubin},~\bfnm{Donald~B}\binits{D.~B.}} \betal{et~al.}
(\byear{1981}).
\btitle{The {Bayesian} bootstrap}.
\bjournal{The Annals of Statistics}
\bvolume{9}
\bpages{130--134}.
\end{barticle}
\endbibitem

\bibitem[\protect\citeauthoryear{Taylor, Loftus and
  Tibshirani}{2013}]{kac_rice}
\begin{barticle}[author]
\bauthor{\bsnm{Taylor},~\bfnm{Jonathan}\binits{J.}},
  \bauthor{\bsnm{Loftus},~\bfnm{Joshua}\binits{J.}} \AND
  \bauthor{\bsnm{Tibshirani},~\bfnm{Ryan}\binits{R.}}
(\byear{2013}).
\btitle{Tests in adaptive regression via the {Kac}-{Rice} formula}.
\bjournal{The Annals of Statistics}
\bvolume{44}
\bpages{743--770}.
\end{barticle}
\endbibitem

\bibitem[\protect\citeauthoryear{Tian, Bi and Taylor}{2016}]{tian2016magic}
\begin{barticle}[author]
\bauthor{\bsnm{Tian},~\bfnm{Xiaoying}\binits{X.}},
  \bauthor{\bsnm{Bi},~\bfnm{Nan}\binits{N.}} \AND
  \bauthor{\bsnm{Taylor},~\bfnm{Jonathan}\binits{J.}}
(\byear{2016}).
\btitle{MAGIC: a general, powerful and tractable method for selective
  inference}.
\bjournal{arXiv preprint arXiv:1607.02630}.
\end{barticle}
\endbibitem

\bibitem[\protect\citeauthoryear{Tian and Taylor}{2015a}]{tian2015selective}
\begin{barticle}[author]
\bauthor{\bsnm{Tian},~\bfnm{Xiaoying}\binits{X.}} \AND
  \bauthor{\bsnm{Taylor},~\bfnm{Jonathan~E}\binits{J.~E.}}
(\byear{2015}a).
\btitle{Selective inference with a randomized response}.
\bjournal{The Annals of Statistics, to appear}.
\end{barticle}
\endbibitem

\bibitem[\protect\citeauthoryear{Tian and Taylor}{2015b}]{tian2015asymptotics}
\begin{barticle}[author]
\bauthor{\bsnm{Tian},~\bfnm{Xiaoying}\binits{X.}} \AND
  \bauthor{\bsnm{Taylor},~\bfnm{Jonathan}\binits{J.}}
(\byear{2015}b).
\btitle{Asymptotics of selective inference}.
\bjournal{arXiv preprint arXiv:1501.03588}.
\end{barticle}
\endbibitem

\bibitem[\protect\citeauthoryear{Tian et~al.}{2016}]{selective_sampler}
\begin{barticle}[author]
\bauthor{\bsnm{Tian},~\bfnm{Xiaoying}\binits{X.}},
  \bauthor{\bsnm{Snigdha},~\bfnm{Panigrahi}\binits{P.}},
  \bauthor{\bsnm{Markovic},~\bfnm{Jelena}\binits{J.}},
  \bauthor{\bsnm{Bi},~\bfnm{Nan}\binits{N.}} \AND
  \bauthor{\bsnm{Taylor},~\bfnm{Jonathan}\binits{J.}}
(\byear{2016}).
\btitle{Selective sampling after solving a convex problem}.
\bjournal{arXiv preprint arXiv:1609.05609}.
\end{barticle}
\endbibitem

\bibitem[\protect\citeauthoryear{Tibshirani}{1996}]{tibs_lasso}
\begin{barticle}[author]
\bauthor{\bsnm{Tibshirani},~\bfnm{Robert}\binits{R.}}
(\byear{1996}).
\btitle{Regression Shrinkage and Selection via the Lasso}.
\bjournal{Journal of the Royal Statistical Society, Series B}
\bvolume{58}
\bpages{267--288}.
\end{barticle}
\endbibitem

\bibitem[\protect\citeauthoryear{Tibshirani
  et~al.}{2015}]{tibshirani2015uniform}
\begin{barticle}[author]
\bauthor{\bsnm{Tibshirani},~\bfnm{Ryan~J}\binits{R.~J.}},
  \bauthor{\bsnm{Rinaldo},~\bfnm{Alessandro}\binits{A.}},
  \bauthor{\bsnm{Tibshirani},~\bfnm{Robert}\binits{R.}} \AND
  \bauthor{\bsnm{Wasserman},~\bfnm{Larry}\binits{L.}}
(\byear{2015}).
\btitle{Uniform Asymptotic Inference and the Bootstrap After Model Selection}.
\bjournal{The Annals of Statistics, to appear}.
\end{barticle}
\endbibitem

\bibitem[\protect\citeauthoryear{Tibshirani
  et~al.}{2016}]{sequential_post_selection}
\begin{barticle}[author]
\bauthor{\bsnm{Tibshirani},~\bfnm{Ryan~J.}\binits{R.~J.}},
  \bauthor{\bsnm{Taylor},~\bfnm{Jonathan}\binits{J.}},
  \bauthor{\bsnm{Lockhart},~\bfnm{Richard}\binits{R.}} \AND
  \bauthor{\bsnm{Tibshirani},~\bfnm{Robert}\binits{R.}}
(\byear{2016}).
\btitle{Exact Post-Selection Inference for Sequential Regression Procedures}.
\bjournal{Journal of the American Statistical Association}
\bvolume{111}
\bpages{600-620}.
\end{barticle}
\endbibitem

\bibitem[\protect\citeauthoryear{Wu}{1986}]{wu1986jackknife}
\begin{barticle}[author]
\bauthor{\bsnm{Wu},~\bfnm{Chien-Fu~Jeff}\binits{C.-F.~J.}}
(\byear{1986}).
\btitle{Jackknife, bootstrap and other resampling methods in regression
  analysis}.
\bjournal{The Annals of Statistics}
\bvolume{14}
\bpages{1261--1295}.
\end{barticle}
\endbibitem

\bibitem[\protect\citeauthoryear{Zou and Hastie}{2005}]{elastic_net}
\begin{barticle}[author]
\bauthor{\bsnm{Zou},~\bfnm{Hui}\binits{H.}} \AND
  \bauthor{\bsnm{Hastie},~\bfnm{Trevor}\binits{T.}}
(\byear{2005}).
\btitle{Regularization and variable selection via the elastic net}.
\bjournal{Journal of the Royal Statistical Society: Series B}
\bvolume{67}
\bpages{301--320}.
\end{barticle}
\endbibitem

\end{thebibliography}
\bibliographystyle{imsart-nameyear}

%%%%%%%%%%%%%%%%%%%%%%%%%%%%%%%%%%%%%%%%%%%	

\end{document}